\renewcommand{\selectlanguage}[1]{}
	\definecolor{BLACK}{gray}{0}
	\definecolor{WHITE}{gray}{1}
	\definecolor{RED}{rgb}{1,0,0}
	\definecolor{GREEN}{rgb}{0,1,0}
	\definecolor{BLUE}{rgb}{0,0,1}
	\definecolor{CYAN}{cmyk}{1,0,0,0}
	\definecolor{MAGENTA}{cmyk}{0,1,0,0}
	\definecolor{YELLOW}{cmyk}{0,0,1,0}
\theoremstyle{plain}
\theoremstyle{plain}
\newenvironment{proof}[1][\protect\proofname]{\par
	\normalfont\topsep6\p@\@plus6\p@\relax
	\Trivlist
	\itemindent\parindent
	\item[\hskip\labelsep
	\scshape
	#1]\ignorespaces
}{%
	\endtrivlist\@endpefalse
}
\providecommand{\proofname}{\textbf{Proof}}
\theoremstyle{plain}
\renewenvironment{proof}[1][\proofname]{\noindent {\bfseries #1.} }{\qed}
\providecommand{\lemmaname}{Lemma}
\providecommand{\definitionname}{Definition}
\providecommand{\propositionname}{Proposition}
\definecolor{myurlcolor}{rgb}{0,0,0.7}
\renewcommand{\ket}[1]{\left| #1 \right\rangle}
\renewcommand{\ketbra}[2]{\left|#1\middle\rangle\!\middle\langle#2\right|}
\newcommand{\proj}[1]{\ketbra{#1}{#1}}
\newcommand{\St}[1]{\mathsf{St} (#1)}
\DeclareMathOperator{\N}{\mathcal{N}}
\newcommand{\haH}
\DeclareMathOperator{\CM}{\mathrm{CE}_ {\max}}
\DeclareMathOperator{\Cm}{\mathrm{CE}_ {\min}}
\newtheorem{theorem}{Theorem}
\newtheorem*{thm*}{Theorem}
\newtheorem{lemma}{Lemma}
\newtheorem*{lem*}{Lemma}
\newtheorem{corollary}{Corollary}
\newtheorem*{cor*}{Corollary}
\newtheorem{proposition}{Proposition}
\newtheorem*{prop*}{Proposition}
\newtheorem{definition}{Definition}
\def\H{\mathcal{H}}
\newcommand{%
  \immediate\write18{texcount -1 -sum -merge .tex > -words}%
  \input{-words}words%
}[1]{%
  \immediate\write18{texcount -1 -sum -merge #1.tex > #1-words}%
  \input{#1-words}words%
}
\begin{document}
\preprint{APS/123-QED}
\title{Maximum and minimum causal effects of physical processes}
\author{Giulio Chiribella}
\email[]{giulio@cs.hku.hk}
\affiliation{QICI Quantum Information and Computation Initiative,  School of Computing and Data Science,
The University of Hong Kong, Pokfulam Road, Hong Kong}	 
\affiliation{Department of Computer Science, Parks Road, Oxford, OX1 3QD, United Kingdom}	
\affiliation{Perimeter Institute for Theoretical Physics, Waterloo, Ontario N2L 2Y5, Canada}

\author{Kaumudibikash Goswami}\email{kaumudi@hku.hk}
\affiliation{QICI Quantum Information and Computation Initiative, School of Computing and Data Science, The University of Hong Kong, Pokfulam Road, Hong Kong}

\begin{abstract}

We introduce two quantitative measures of the strength of causal relations in quantum theory and more  general physical theories.  These two measures, called the maximum and minimum causal effect,    quantify  the maximum and minimum changes in the output of a quantum process induced by  changes in its input.  The maximum and minimum causal effect possess useful properties, such as continuity and data-processing inequality. In quantum theory, they have close connections with quantum information tasks. The maximum quantum causal effect can be used to detect quantum channels with nonzero capacity for transmitting classical information.  The minimum causal effect can be used to guarantee the recoverability of quantum information:  every quantum process with a high value of the minimum quantum causal effect can be approximately inverted. Moreover, we show that the quantum causal effects satisfy a duality relation: if the  minimum causal effect of a quantum system $A$ on another quantum system $B$ is high, then $A$ must have a low value of the maximum causal effect on any other quantum system $B'$ that is  spacelike separated with $B$.  This duality implies a monogamy relation for quantum causal effects, and represents a fundamental difference between quantum and classical causal relations. 

We illustrate the application of the maximum causal effect to the analysis of  two paradigmatic examples, the first involving a coherent  superposition of direct cause and common cause, and the second involving  a coherent  superposition of multiple quantum  processes. 
\end{abstract}

\maketitle
\section{Introduction}

Causal inference, the task of inferring cause-effect relations, is a key component of the scientific method. In the classical domain,  the theory of Bayesian networks~\cite{pearl_2009} provides a powerful tool for identifying the causal structure of a network of events.
  Recently, there has been increasing interest in the extension of causal inference to the quantum domain~\cite{costa2016, Spekkens_cause, barrett_2019, Barrett_2021}. This extension is motivated by the increasing degree of control on quantum systems, which enables the experimental study of causal relations at the quantum scale~\cite{Ried_2015,MacLean_2017, Carvacho2017, Chaves2017,  Agresti2022}, and by the variety of new causal structures emerging from quantum mechanics,  which includes   coherent quantum superpositions of common cause and direct cause~\cite{MacLean_2017, Feix_2017}, indefinite causal order~\cite{chiribella09,oreshkov12}, and indefinite input-output direction~\cite{chiribella_Indefinite2022}.

 A central task in quantum causal inference is to quantify the strength of causal relationships among quantum systems. 
The simplest instance of this scenario is to quantify the dependence between the inputs and outputs of a given quantum process~\cite{janzing_2013,henson2014theory, Chiribella_quantum_speedup_2019, chiribella2021fast,  perinotti_2021,escola_2022, Bai_2022, yi_2022, hutter2022quantifying}. The problem is illustrated in Fig. \ref{fig:settings}: an experimenter has black-box access to an uncharacterized quantum process $\N$, transforming an input quantum system $A$ into an output quantum system $B$.  The problem is to determine the strength of the causal influence of system $A$ on system $B$  by observing how the statistics of measurements on system $B$ is affected by changes in the state of system $A$.  Here, no assumption is made on the internal structure of the process $\N$: in particular, no assumption on  the structure of the interactions  producing the state of system $B$ from the state of system $A$, and on additional variables that may be involved in these interactions (later in the paper, we will extend the analysis to scenarios that involve additional variables, making connection between the black box setting described above and  causal modeling approaches  based on the structure of interactions~\cite{barrett_2019,perinotti_2021,perinotti_2023}.)

\begin{figure}
    \centering
    \includegraphics[width=0.8\columnwidth]{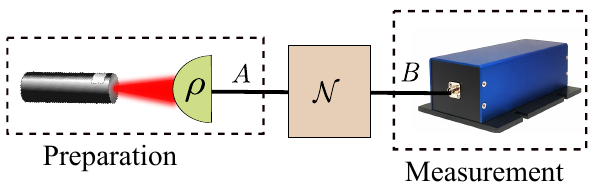}
    \caption{{\bf Quantifying  causal dependencies in the black box scenario.  } An experimenter has black box access to a physical process $\N$ transforming an input system $A$ into an output system $B$.  The process is probed by preparing states of system $A$ and performing measurements on system $B$.  The dependence of the measurement statistics  on the state preparation is used as an indicator of the causal influence of system $A$ on system $B$. 
    }
    \label{fig:settings}
\end{figure}

In classical causal inference, a commonly used measure of causal influence is the \emph{average causal effect} (ACE)~\cite{angrist_1996,balke_1997,pearl_2009}. Explicitly, the ACE quantifies the changes in the probability distribution of an output variable $B$  induced by changes in the value of an input variable $A$.
In the quantum case, an analogue of the ACE was put forward in Ref. \cite{hutter2022quantifying}, where the authors defined a ``quantum ACE''  as the average distance between two output states of system $B$ corresponding to two randomly chosen orthogonal pure states of system $A$. The randomization over the possible input states, however,   represents a departure from the original definition of ACE, which included a maximization over the  possible pairs of input states, rather than an average. As a consequence, the quantum ACE  of Ref. \cite{hutter2022quantifying}, while providing a valuable indicator of causal influence, does not reduce to the classical ACE when the process $\N$ is of the classical form ({\em i.e.} when $\N$ can be realized by   measuring system $A$ in an orthonormal basis and by re-preparing   system $B$ in states of another orthonormal basis.)

In this paper, we propose two   measures of the strength of causal relations, applicable to quantum theory and to more general probabilistic theories  \cite{hardy2001quantum, Barrett2007,chiribella2010probabilistic, Barnum2011,chiribella2016quantum,DAriano2016, Plvala2023}.   Our first measure, called the {\em maximum causal effect} $\CM$, is defined as  the maximum ratio  
  between the distance of two  output states and the distance of the corresponding input states, maximized over all possible pairs of input states.  This quantity provides  the direct extension of the ACE to quantum theory and other physical theories, and reduces to the classical ACE in the case of classical theory.   
   The second measure,  called the {\em minimum causal effect} $\Cm$, is 
   defined as the minimum ratio between the distance of output states and the distance of the  input states.  In quantum theory,   the minimum causal effect  witnesses causal relations  that cannot be realized by measuring the input system in an orthonormal basis.
Both quantities are continuous in their argument,  and satisfy a data-processing inequality, stating that the subsequent application of physical processes cannot increase the causal effect.

The maximum and minimum quantum causal effects have close  connections with quantum information tasks.   ${\CM}$ provides a lower bound on the classical capacity \cite{Holevo73, Holevo1998,Schumacher1997},  the rate at which the given process can transfer classical messages.  $\Cm$ is closely  connected with the problems of minimum error state discrimination \cite{helstrom_quantum_1976} and approximate recovery of quantum information \cite{fawzi2015quantum,Junge_2018}. Specifically, we prove that the minimum causal effect of a quantum channel is close to one if and only if the channel approximately preserves the distinguishability of quantum states, which in turn happens  if and only if the channel is approximately correctable. Our finding provides  the robust  version of a result by Blume-Kohout,  Ng, Poulin, and Viola \cite{Blume-kohut_IPS_2010}, who showed that a quantum channel is perfectly correctable if and only if it perfectly preserves the distinguishability of quantum states.  

Using the connection with recoverability, we then prove a duality between the minimum and maximum quantum causal effects:  if system $A$ has a high value of the minimum causal effect on quantum system $B$, then it must have a low value of the maximum causal effect  on every  quantum system $B'$ that is spacelike separated with $B$. In turn, this duality implies a monogamy property: a quantum system  cannot have a high value of the minimum causal effect on multiple systems simulateously.    Monogamy of the minimum  causal effect marks a fundamental difference between quantum and classical causal inference:  while the causal influence of a classical variable  can be arbitrarily high on multiple variables simultaneously, due to the possibility of broadcasting information, in quantum physics high values of the minimum causal effect on a system preclude causal effects on other spacelike systems.  This property is closely connected with the monogamy of entanglement  \cite{coffman2000distributed,osborne2006general},  the  no-cloning  \cite{wootters1982single,dieks1982communication}  and no-broadcasting \cite{barnum1996noncommuting,barnum2007generalized} theorems, and the information-disturbance tradeoff \cite{fuchs1996quantum,kretschmann2008information}. 

We then use our notions of causal effects to analyze  two paradigmatic examples of quantum causal relations.  In the first example, we use  the maximum and minimum quantum causal effect to quantify the amount of direct cause relation present in   a quantum setup  exhibiting  a coherent superposition of common cause and direct cause \cite{MacLean_2017}.      The second example involves a coherent superposition of multiple quantum processes \cite{aharonov1990superpositions,oi2003interference,aaberg2004operations,gisin2005error,Chiribella_2019,Abbott2020communication,witnessing_latent_time}. 
In this case, we provide an upper bound on  the  enhancement of the maximum causal effect due to the quantum superposition, and 
we show that the bound is tight when the original process has zero causal effect.  All channels with zero causal effect can be realized by discarding the input system and initializing the output system in a fixed state.  The set of channels with zero causal effect includes in particular completely depolarizing channels, which have been previously studied in the literature on quantum communication with superpositions of noisy channels \cite{Abbott2020communication,witnessing_latent_time}.

Finally, we develop a variational algorithm for estimating a channel's maximum quantum causal effect ${\CM}$. The algorithm optimizes over pairs of  input states for a process and over measurements performed at the process' output.    At every iteration, it   
outputs the settings for the next state preparation and measurement.    This approach can be used both for estimating  the maximum causal effect either through numerical computation or using  experimental data.  We test the performance of the algorithm for numerical calculation of $\CM$, benchmarking the numerical values with the analytical expressions obtained in the two examples of common cause/direct cause process and superposition of quantum processes.

The paper is organized as follows. In Section~\ref{sec:theory}, we introduce the maximum and minimum causal effects in the framework of general probabilistic theories. In Section~\ref{sec:quantum}, we specialize the notions to the case of quantum theory, making connections with quantum information tasks, and establishing the monogamy of quantum casal relations.  In  Section~\ref{sec:applications}, we apply the notions of maximum and minimum quantum causal effect to two examples of quantum causal relations. Finally, in Section~\ref{sec:algorithm}, we provide the variational algorithm for the estimation of the maximum quantum causal effect.  Finally, in Section~\ref{sec:discussion}, we provide the conclusions.

\section{Causal effects} \label{sec:theory}

\subsection{Average causal effect for classical random variables}

In classical causal inference, a popular  measure  of the strength of the causal influence is  the  \emph{average causal effect} (ACE)~\cite{Holland1988,pearl_2009}.  
For two binary variables $A$ and $B$,   with values $\{a_0,a_1\}$ and $\{b_0,b_1\}$, respectively,  the ACE is defined as
\begin{align}
    \mathrm{ACE}_{A\to B}: =   \abs{P\big(b_1|\mathrm{do}(a_1)\big)-P\big(b_1|\mathrm{do}(a_0)\big)}\, ,\label{Eq:ACE_define_classical}
\end{align}
where $p(b|{\rm do}(a))$  is the probability distribution of output variable $B$ when the input variable $A$ has been reset to the value $a$.  In this case, the ACE can be equivalently written as 
\begin{align}\label{ACEnorm}
    \mathrm{ACE}_{A\to B}    =  \frac{  \|   P_{{\rm do}  (a_1)}   -  P_{{\rm do}  (a_0)}   \|_1}2 \, , 
\end{align}
where $\|  P_1-  P_2\|_1  :  =  \sum_b  |  P_1(b)-  P_2(b)|$  is the total variation distance of two generic probability distributions  $P_1(b)$ and $P_2(b)$, and we used the notation $ P_{{\rm do}  (a)}  (b):  =  P(  b| {\rm do}(a))$.

For non-binary variables $A$ and $B$ with a finite set of possible values, a natural extension is to define the ACE as  
\begin{align}\label{ACEnormnonbinary}
    \mathrm{ACE}_{A\to B}:= \max_{a, a'}     \frac{ \| P_{{\rm do}  (a)}  -  P_{{\rm do}  (a')} \|_1   }2 \, ,
\end{align}
where the maximum is over all possible values $a$ and $a'$ of the random variable $A$ \cite{hutter2022quantifying}.    In other words, the ACE quantifies the statistical distinguishability of the probability distributions for variable $B$ obtained by varying the values of variable $A$.

We now provide an alternative expression of the ACE that facilitates the transition from classical theory to quantum theory and, more broadly, to general probabilistic theories.  First, we observe that the conditional probability distribution $p(b|  {\rm do}  (a))$  can be interpreted as a stochastic process that generates the value of variable $B$ from the value of variable $A$ set by the do-intervention.  We  denote this stochastic   process  by $\cal N$.
In general,  process $\N$ will transform probability distributions for  variable $A$ into probability distributions for  variable $B$: specifically, an input  probability distribution $P(a)$  will be transformed into an output  probability distribution  $Q(b)  $, defined by    
\begin{align}
\nonumber  Q(b)   & = \sum_a   P(b|{\rm do}  (a)  ) \,   P\big({\rm do} (a) \big)  \\
 &   =:\left[{\cal N}   (P)\right]\, (b)\, . 
\label{classicalchannel}
\end{align}

Using this notation, we can now provide an alternative expression of the ACE:  
\begin{proposition}\label{prop:classicalACE}
Let  $A$ and $B$ be two random variables with finite number of values, and let $\N$ be the process described by the conditional probability distribution $p(b|{\rm do}  (a))$.  Then,    the following equality holds
\begin{align}
\label{almostGPT}
{\rm ACE}_{A\to B}     =   \sup_{P \not  =    P'}  \frac{  \|   {\cal N}  (P)   -  {\cal N}   (P')\|_1}{\|  P  -  P'\|_1 \, ,}
\end{align}
where the supremum is over all pairs of probability distributions  $P$ and $P'$ for variable $A$,  and $\N  (P)$ and $\N(P')$ are  the corresponding probability distributions for variable $B$, as defined  in Eq. (\ref{classicalchannel}).    
\end{proposition}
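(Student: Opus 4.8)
The plan is to exploit the linearity of the classical process $\N$ and to recognize the right-hand side of Eq.~(\ref{almostGPT}) as an operator norm of $\N$ computed over the set of signed measures of total mass zero. First I would note that, by Eq.~(\ref{classicalchannel}), the output distribution depends linearly on the input distribution, so that $\N(P)-\N(P') = \N(P-P')$ for all $P,P'$; moreover $\N(\delta_a) = P_{{\rm do}(a)}$, where $\delta_a$ is the distribution deterministically concentrated on the value $a$, and $\|\delta_a-\delta_{a'}\|_1 = 2$ for $a\neq a'$. Hence the definition~(\ref{ACEnormnonbinary}) can be rewritten as $\mathrm{ACE}_{A\to B} = \max_{a\neq a'} \|\N(\delta_a - \delta_{a'})\|_1/\|\delta_a-\delta_{a'}\|_1$, namely as the supremum appearing in Eq.~(\ref{almostGPT}) restricted to the pairs $(P,P')=(\delta_a,\delta_{a'})$.

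The inequality ``$\ge$'' in Eq.~(\ref{almostGPT}) is then immediate: since the deterministic distributions $\delta_a$ and $\delta_{a'}$ realizing the maximum in Eq.~(\ref{ACEnormnonbinary}) are distinct probability distributions, the supremum on the right-hand side of Eq.~(\ref{almostGPT}) is at least the value of the ratio at $(\delta_a,\delta_{a'})$, which equals $\mathrm{ACE}_{A\to B}$.

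For the reverse inequality ``$\le$'', the crux of the argument --- and the step I expect to be the main obstacle --- is a decomposition of an arbitrary zero-mass signed measure as a convex combination of the ``elementary'' differences $\delta_a-\delta_{a'}$. Given distinct $P\neq P'$, set $v:=P-P'$ and write $v=v_+-v_-$ with $v_+(a)=\max\{v(a),0\}$ and $v_-(a)=\max\{-v(a),0\}$. Because $v$ has total mass zero and $v\neq 0$, we have $t:=\sum_a v_+(a)=\sum_a v_-(a)>0$ and $\|v\|_1=2t$; then $v_+/t$ and $v_-/t$ are probability distributions and a direct computation gives $v/t=\sum_{a,a'}\mu_{a,a'}\,(\delta_a-\delta_{a'})$ with $\mu_{a,a'}:=v_+(a)\,v_-(a')/t^2\ge 0$ and $\sum_{a,a'}\mu_{a,a'}=1$ (the diagonal terms $a=a'$ contribute $\delta_a-\delta_{a'}=0$ and may be discarded). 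Applying $\N$, using linearity, the triangle inequality for $\|\cdot\|_1$, and convexity, I obtain $\|\N(v)\|_1 = t\,\|\sum_{a,a'}\mu_{a,a'}\N(\delta_a-\delta_{a'})\|_1 \le t\max_{a\neq a'}\|\N(\delta_a-\delta_{a'})\|_1$. Dividing by $\|v\|_1=2t$ gives $\|\N(P)-\N(P')\|_1/\|P-P'\|_1 \le \tfrac12\max_{a\neq a'}\|\N(\delta_a-\delta_{a'})\|_1 = \mathrm{ACE}_{A\to B}$, and taking the supremum over all $P\neq P'$ yields ``$\le$''. Combining the two inequalities proves Eq.~(\ref{almostGPT}). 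The only points requiring care are the mass-balance identity $\sum_a v_+(a)=\sum_a v_-(a)$ (which is exactly what makes $v$ expressible through the $\delta_a-\delta_{a'}$) and the positivity $t>0$; everything else is a routine application of the triangle inequality.
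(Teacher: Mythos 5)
Your proof is correct. It rests on the same two pillars as the paper's argument---the Hahn decomposition $v=v_+-v_-$ of the zero-mass signed measure $P-P'$, with the mass-balance identity $\sum_a v_+(a)=\sum_a v_-(a)=t$ and $\|v\|_1=2t$, together with convexity of the norm---but it packages them differently. The paper works in two stages: Lemma~\ref{lem:orthogonal_distributions} first rewrites $P-P'$ as $\tfrac{\|P-P'\|_1}{2}(Q-Q')$ with $Q=v_+/t$ and $Q'=v_-/t$ of disjoint support, so that the supremum reduces to pairs at maximal total variation distance; it then appeals to convexity to restrict the maximization to extreme points of the simplex, i.e.\ to the deterministic distributions $\delta_a$. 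You collapse both stages into the single explicit convex decomposition $v/t=\sum_{a,a'}\mu_{a,a'}(\delta_a-\delta_{a'})$ with product weights $\mu_{a,a'}=v_+(a)v_-(a')/t^2$ (which is easily verified to reproduce $v/t$ and to have total weight one), after which one application of the triangle inequality gives $\|\N(v)\|_1/\|v\|_1\le\max_{a\ne a'}\tfrac12\|\N(\delta_a)-\N(\delta_{a'})\|_1$. Your version is somewhat more self-contained: it makes the extreme-point reduction constructive, whereas the paper's remark that ``convexity of the norm implies the maximization can be restricted to extreme points'' leaves the reader to check that the jointly convex function $(Q,Q')\mapsto\|\N(Q)-\N(Q')\|_1$ is maximized at extreme points of the product of simplices. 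What the paper's two-stage route buys in exchange is that the intermediate reduction to perfectly distinguishable states is isolated as a standalone lemma, which then serves as the template for the general-probabilistic-theory analogues in Propositions~\ref{prop:pdmax} and~\ref{prop:pdmin}, where no analogue of your elementary differences $\delta_a-\delta_{a'}$ is available.
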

The proof is provided in Appendix \ref{app:classicalexpression}.  The above proposition shows that the  ACE can be equivalently interpreted as a measure of a process' ability  to preserve the distinguishability of probability distributions. This interpretation establishes a link between causality and information theory, in particular state discrimination, communication, and error correction, as we will see in more detail later in the paper.   Moreover, Eq. (\ref{almostGPT}) directly suggests a way to generalize the causal effect to physical theories beyond classical theory, as we will do in the next section.

\subsection{From classical theory to general probabilistic theories}  

We now provide the extension of the ACE to general probabilistic theories.   We will replace the random variables $A$ and $B$ in  Eq. \eqref{almostGPT} with physical systems, and will  model the causal relation between $A$ and $B$ as a physical process $\cal N$, transforming states of system $A$ into states of system $B$ \cite{henson2014theory,Chiribella_quantum_speedup_2019}.   The distinguishability of states in a general probabilistic theory will be quantified a suitable notion of norm, which generalizes the total variation distance, and coincides with the trace norm in the quantum case. 

\subsubsection{Framework and notation}
In the framework of general probabilistic theories  \cite{hardy2001quantum, Barrett2007,chiribella2010probabilistic, Barnum2011,chiribella2016quantum,DAriano2016, Plvala2023}, the states of a physical system are  described by  elements of a convex set, called the {\em state space}. For a  generic system $S$, we will denote the state space by ${\sf St}  (S)$.   For simplicity, we will restrict our attention to systems with finite-dimensional state spaces and we will make the standard assumption that, for every system $S$,  
the state space ${\sf St}  (S)$ is compact.

Classical and quantum theory are canonical examples of general probabilistic theories. In classical theory,   physical systems are associated to  random variables, and the  state space of a system is  the convex set of all possible probability distributions for the corresponding random variable. In quantum theory,  physical systems are associated to Hilbert spaces, and the  state space of a system $S$ is the set of all possible density matrices on the corresponding Hilbert space $\H _S$.

In a general probabilistic theory, the composite system made of systems $A$ and $B$ will be denoted by $A\otimes B$.  For states $\rho  \in {\sf St} (A)$ and $\sigma  \in {\sf St} (B)$, the notation $\rho\otimes \sigma$ will be used to denote the state obtained by an  independent preparation of systems $A$ and $B$ in the states $\rho$ and $\sigma$, respectively.   In quantum theory, this notation coincides with the standard tensor product notation.

The set of physical processes transforming an input system $S$ into an output system $S'$ is represented by a set of affine maps  ${\cal P}:  {\sf St} (S) \to {\sf St} (S')$.  In classical theory, a process $\cal P$ is described by  a conditional probability distribution $P(s'|  s)$.     In quantum theory, a process $\cal P$ is described by a completely positive trace-preserving linear map transforming linear operators on the Hilbert space $  \H _S $ into linear operators on the Hilbert space ${\cal  H}_{S'}$.

Both in classical and quantum theory,  there is a unique way to discard physical systems.  In classical theory,  the discarding process is implemented by marginalization:  for a joint probability distribution $P_{AB}  (a,b)$ describing the state of a composite system $A\otimes B$, discarding system $B$ leaves system $A$ in the state described by the marginal probability distribution $P_A  (a):  =\sum_{b}P_{AB}  (a,b)$. In quantum theory, the discarding process is mathematically described by the partial trace:  for a composite system $A\otimes B$ in the state $\rho_{AB}$,  discarding system $B$  leaves system $A$ in the state described by the reduced density matrix  $\rho_A:  =  \Tr_B  [\rho_{AB}]$.  

In general probabilistic theories beyond classical and quantum theory,  the uniqueness of the discarding operation is a non-trivial axiom,  known as the Causality Axiom  \cite{chiribella2010probabilistic,chiribella2016quantum,DAriano2016}.    In the following we will restrict our attention to theories satisfying the Causality Axiom. For a generic system $B$,  we will denote the discarding process by $\Tr_B$, following the quantum notation.

\subsubsection{The operational norm}  
To extend  Eq. (\ref{almostGPT}) from classical theory to a general probabilistic theory we need a notion of distance that quantifies the distinguishability of states. Operationally, distinguishing among the states of a given system requires performing a measurement.
In a general probabilistic theory, a measurement  is described by a collection of effects \cite{hardy2001quantum, Barrett2007,chiribella2010probabilistic, Barnum2011,chiribella2016quantum,DAriano2016, Plvala2023}, that is, positive affine functionals  acting on the system's state space. For example, an $N$-outcome measurement on system $S$ is described by a collection $(e_i)_{i=1}^N$ where each $e_i  : {\sf St} (  S) \to \mathbb R$ is an affine functional and satisfies the positivity condition $e_i (\rho)  \ge 0  \, , \forall \rho  \in {\sf St}  (S)$.   When the measurement is performed, the probability of the $i$-th outcome is given by 
\begin{align}
p (i|\rho)  :=  e_i  (\rho)\, ,\qquad \forall i\in  \{1,\dots, N\}
\end{align}
where $\rho$ is the state of the system immediately before the measurement.  The normalization of the probability distribution  then amounts to the condition $\sum_i  e_i  ( \rho)  =1 \, ,\forall \rho \in  {\sf St}  (S)$.  

The distinguishability of two states $\rho$ and $\rho'$ can then be defined in terms of the distinguishability of the resulting probability distributions $p (i|\rho)$ and $p(i|\rho')$, optimized over all possible measurements allowed by the theory.  The resulting quantity

\begin{align} 
\| \rho  -  \rho'\|_1  :  =  \max_N\max_{  (e_i)_{i=1}^N}  \,  \sum_{i=1}^N  \big|   e_i  (\rho)  -  e_i  (\rho')\big|   \label{Eq:operational_norm2}
\end{align}
is known as the  {\em operational norm} \cite{chiribella2010probabilistic}.      It is easy to see that the operational norm satisfies the bounds $0\le \|  \rho -  \rho'\|_1 \le 2$. The equality $\|\rho-  \rho'\|_1  =  0$ holds if and only if $\rho =  \rho'$, while the equality   $\|\rho-  \rho'\|_1  =  2$ holds if and only if the states are perfectly distinguishable, that is, if there is a measurement that gives rise to probability distributions  $p(i|\rho)$ and $p(i|\rho')$ with disjoint supports.

As the name suggests, the operational norm is indeed a norm on the vector space ${\sf St}_{\mathbb R}  (S)$ consisting of linear combination of the form $\sum_j  c_j  \,  \rho_j$, with  $\rho_j\in{\sf St}  (S)$ and $c_j\in \mathbb R$. For a generic element of the vector space ${\sf St}_{\mathbb R}  (S)$, the operational norm is given by \cite{chiribella2010probabilistic}  
\begin{align}\label{opnorm}
\|   \delta  \|_1  :  =  \max_{e \in  {\sf Eff} (S)}  e  (\delta)  -  \min_{f  \in {\sf Eff}  (S)}   f(\delta)\, ,
\end{align}
where   ${\sf Eff} (S)$ is  the space of all effects for system $S$. 

In quantum theory,  the vector space $ {\sf St}_{\mathbb R}  (S)$ is the space of all possible (trace-class) Hermitian  operators on the system's Hilbert space, and  the set ${\sf Eff}  (S)$ consists of all functionals of the form $e  (  \rho): = {\sf Tr} [  P \rho]$, where $ P$  is an operator satisfying the condition $0\le P\le I$.    Hence, the operational norm of a Hermitian operator $\delta$ is  $\|  \delta\|_1    =  \max_{0\le P \le I}   \Tr [P \rho]  -  \min_{0\le Q \le I}   \Tr [Q \rho]$, and  coincides with the trace norm $\|  \delta\|_1 :  =  \Tr [\delta_+]  + \Tr[\delta_-]  $, where $\delta_+$ and $\delta_-$ are the positive and negative parts of the operator $\delta$.

\subsection{The maximum causal effect} \label{Sec:CE_GPT}

We are now ready to extend  the notion of ACE to general probabilistic theories. Our definition refers to the scenario where an experimenter has black-box access to a physical process $\N$, transforming an input system $A$ into an output system $B$, as in Fig. \ref{fig:settings}.    To quantify the strength of the causal influence induced by process $\N$, we use the direct extension of Eq. (\ref{almostGPT}):
\begin{definition}[Maximum causal effect]\label{def:GCEmax}
In a general probabilistic theory, the {\em maximum causal effect} induced by a process ${\cal N}:  {\sf St }  (A)  \to {\sf St}  (B)$ is 
\begin{align}\label{sup} 
      \CM  (\N):  = \sup_{\rho,\, \rho'  \in  {\sf St}(A)  ,\,  \rho \not =  \rho'} \frac{  \|  {\cal N}  (\rho)  -  {\cal N}  (\rho')  \|_1  }{  \|   \rho  -   \rho'  \|_1}   \,.
\end{align}     
\end{definition}
Later in this subsection, we will see that, under an assumption  satisfied both by classical and quantum theory,  the supremum is actually a maximum and can be obtained by maximizing over pairs of perfectly distinguishable pure states.  

Intuitively, $\CM$  quantifies the maximum change  in the state of system $B$ due to a change in the state of system $A$.   In the special case of classical theory,  $\CM$  reduces to the ACE, as one can see from Proposition \ref{prop:classicalACE}. 

An interpretation of the  maximum causal effect  in terms of do-operations, analogous  to the do-operations in classical causal inference, is provided in Appendix  \ref{app:foundations}. There,  we also show that $\CM$  is an upper bound  to the maximum ACE between a classical variable describing the preparation of system $A$ and a classical variable associated with the outcomes of a measurement on system $B$, with the preparation and measurement optimized over all possible settings permitted by the theory. For a class of theories, including  quantum theory, the equality sign holds, thereby providing an alternative characterization of $\CM$ in terms of the maximum ACE between state preparation variables on system $A$ and measurement outcome variables on system $B$.

A few elementary properties of $\CM$  are provided by the following lemma:  
\begin{proposition}\label{prop:propertiesmax}
 For every probabilistic theory satisfying the Causality Axiom, every pair of systems $A$ and $B$ and every process ${\cal N}:  {\sf St }  (A)  \to {\sf St}  (B)$, the maximum causal effect has the  following properties: 
 \begin{enumerate}
 \item {\em Range:}  $0\le  \CM   ({\cal N}) \le 1$,
 \item {\em Faithfulness:}  $\CM   ({\cal N})=0$ if and only if $\cal N$ is a discard-and-reprepare process, that is, if and only if  it is of the form ${\cal N}   =  \sigma_0 \Tr_A$ for some fixed state $\sigma_0 \in {\sf St}  (B)$, 
 
 \item {\em Convexity:}   $ \CM    (\sum_i \, p_i \, {\cal N}_i) \le  \sum_i \,  p_i  \, \CM    ( {\cal N}_i)$ for every collection of processes $({\cal N}_i)_i$ with input $A$ and output $B$, and for every probability distribution $(p_i)_i$.
 \item {\em Data-processing inequality:}   $  \CM    ( {\cal  B}  \circ {\cal N}   \circ {\cal A})  \le  \CM    (  {\cal N} )$ for arbitrary processes ${\cal A}  : {\sf St}  ( A')  \to  {\sf St}  ( A),$  ${\cal N}  : {\sf St}  ( A)  \to  {\sf St}  ( B)$,  and ${\cal B}  : {\sf St}  ( B)  \to  {\sf St}  ( B')$, and arbitrary systems $A,A',B,$ and $B'$. 
 \end{enumerate}
\end{proposition}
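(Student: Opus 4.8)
The plan is to reduce all four properties to one structural fact: \emph{contractivity of the operational norm under physical processes}, i.e.\ $\|\mathcal{P}(\delta)\|_1 \le \|\delta\|_1$ for every process $\mathcal{P}\colon {\sf St}(S)\to{\sf St}(S')$ and every $\delta\in{\sf St}_{\mathbb R}(S)$. I would prove this first, starting from the dual expression of the operational norm in Eq.~(\ref{opnorm}). For any effect $e\in{\sf Eff}(S')$, the functional $e\circ\mathcal{P}$ is affine on ${\sf St}(S)$, it is positive because $\mathcal{P}$ sends states to states and $e$ is positive, and it is dominated by the deterministic effect: here the Causality Axiom enters, guaranteeing a unique deterministic effect $\Tr_{S'}$ with $\Tr_{S'}\circ\mathcal{P}=\Tr_{S}$, so that $e\circ\mathcal{P}\le\Tr_{S'}\circ\mathcal{P}=\Tr_{S}$ and hence $e\circ\mathcal{P}\in{\sf Eff}(S)$. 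Substituting $e\circ\mathcal{P}$ for $e$ in $\max_{e}e(\mathcal{P}(\delta))$ and $f\circ\mathcal{P}$ for $f$ in $\min_{f}f(\mathcal{P}(\delta))$ shows that the corresponding extremum over ${\sf Eff}(S')$ is bounded by the extremum over ${\sf Eff}(S)$, giving $\|\mathcal{P}(\delta)\|_1\le\|\delta\|_1$.

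With contractivity in hand the four items follow quickly. For \textbf{Range}, non-negativity is immediate (a supremum of ratios of non-negative numbers), and the upper bound is contractivity applied to $\delta=\rho-\rho'$ with $\mathcal{P}=\mathcal{N}$, which makes every ratio in Eq.~(\ref{sup}) at most one. For \textbf{Faithfulness}, if $\mathcal{N}=\sigma_0\Tr_A$ then $\mathcal{N}(\rho)=\sigma_0$ for all $\rho$, so every numerator vanishes and $\CM(\mathcal{N})=0$; conversely $\CM(\mathcal{N})=0$ forces $\|\mathcal{N}(\rho)-\mathcal{N}(\rho')\|_1=0$ for every pair, and since $\|\cdot\|_1$ vanishes only at $0$ the map $\mathcal{N}$ is constant on ${\sf St}(A)$, say equal to some $\sigma_0$; affinity of $\mathcal{N}$ together with $\Tr_A(\rho)=1$ then identifies $\mathcal{N}$ with $\sigma_0\Tr_A$. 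For \textbf{Convexity}, fix $\rho\neq\rho'$ and use the triangle inequality: $\|\sum_i p_i(\mathcal{N}_i(\rho)-\mathcal{N}_i(\rho'))\|_1\le\sum_i p_i\|\mathcal{N}_i(\rho)-\mathcal{N}_i(\rho')\|_1\le\big(\sum_i p_i\,\CM(\mathcal{N}_i)\big)\|\rho-\rho'\|_1$; dividing by $\|\rho-\rho'\|_1$ and taking the supremum over $\rho,\rho'$ yields the bound. For the \textbf{Data-processing inequality}, I would chain three estimates for a fixed pair $\rho\neq\rho'$: contractivity of $\mathcal{B}$, then the definition of $\CM(\mathcal{N})$ applied to the pair $\mathcal{A}(\rho),\mathcal{A}(\rho')$ (trivial if they coincide, since then the numerator is zero), then contractivity of $\mathcal{A}$, obtaining $\|(\mathcal{B}\circ\mathcal{N}\circ\mathcal{A})(\rho)-(\mathcal{B}\circ\mathcal{N}\circ\mathcal{A})(\rho')\|_1\le\CM(\mathcal{N})\,\|\rho-\rho'\|_1$; taking the supremum finishes the proof.

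The only genuinely delicate step is the contractivity lemma, and within it the point where the Causality Axiom is invoked to certify that $e\circ\mathcal{P}$ is a legitimate effect: without a unique deterministic effect one cannot conclude $e\circ\mathcal{P}\le\Tr_S$, and the argument would break. Everything else is a routine manipulation of suprema and the triangle inequality. I would also record the harmless degenerate case in which ${\sf St}(A)$ is a single point: then the supremum in Eq.~(\ref{sup}) is over the empty set and is read as $0$, consistent with $\mathcal{N}$ being automatically of discard-and-reprepare form.
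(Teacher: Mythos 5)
Your proposal is correct and follows essentially the same route as the paper: every item is reduced to contractivity of the operational norm under physical processes plus the triangle inequality, with the same constancy argument for faithfulness. The only cosmetic differences are that the paper cites norm contractivity from the GPT literature rather than deriving it from Eq.~(\ref{opnorm}), and packages the data-processing step as the multiplicative bound $\CM({\cal B}\circ{\cal N}\circ{\cal A})\le\CM({\cal B})\,\CM({\cal N})\,\CM({\cal A})$ combined with the range property, which is the same chain of estimates you wrote out directly.
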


The proof is provided in Appendix \ref{app:basicproperties_Max}.   An immediate consequence of the data-processing inequality  property is that every correctable process has maximum causal effect equal to one:  
\begin{corollary}\label{cor:maxforcorrectable}
Let ${\cal N}  : {\sf St}  ( A)  \to  {\sf St}  ( B)$ be a correctable process, meaning that there exists a recovery process ${\cal R}  : {\sf St}  ( B)  \to  {\sf St}  ( A)$ such that $ {\cal R}  \circ {\cal N}   (\rho)  = \rho$ for every state $\rho \in  {\sf St}  (A)$.    Then,  $\CM   ({\cal N})  =1$.
\end{corollary}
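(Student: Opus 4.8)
The plan is to derive the corollary directly from the data-processing inequality in Proposition~\ref{prop:propertiesmax} together with the normalization bound $\CM(\mathcal{N}) \le 1$, so essentially no fresh calculation is needed. The key observation is that the identity process $\id_A : \mathsf{St}(A) \to \mathsf{St}(A)$ has maximum causal effect exactly $1$: indeed, for any two distinct states $\rho \ne \rho'$ the ratio $\|\id_A(\rho) - \id_A(\rho')\|_1 / \|\rho - \rho'\|_1$ equals $1$, so the supremum in Eq.~\eqref{sup} is $1$. (One should check this is legitimate: $\mathsf{St}(A)$ contains at least two distinct states whenever $A$ is nontrivial, and if $A$ is trivial the statement is vacuous or trivially true since all processes out of a one-point state space are discard-and-reprepare with $\CM = 0 \ne 1$ — so I would simply assume, as is standard, that the systems under discussion are nontrivial.)

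First I would write $\id_A = \mathcal{R} \circ \mathcal{N}$, which holds by the defining property of the recovery process $\mathcal{R}$. Then I would apply the data-processing inequality, taking $\mathcal{A} = \id_{A}$, the ``outer'' process to be $\mathcal{R}$, and the ``inner'' process to be $\mathcal{N}$; this gives
\begin{align}
1 = \CM(\id_A) = \CM(\mathcal{R} \circ \mathcal{N}) \le \CM(\mathcal{N}).
\end{align}
Combining this with the Range property $\CM(\mathcal{N}) \le 1$ from Proposition~\ref{prop:propertiesmax} forces $\CM(\mathcal{N}) = 1$.

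There is no real obstacle here; the only thing to be careful about is the edge case of trivial state spaces and making sure that the data-processing inequality is being invoked with the right typing of the three composed maps (here $A' = A$ and $B' = A$, with $\mathcal{A}$ the identity on $A$ and $\mathcal{B} = \mathcal{R}$). Everything else is immediate from the two cited properties.
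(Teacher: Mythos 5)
Your proof is correct and follows essentially the same route as the paper's: both establish $\CM(\mathcal{R}\circ\mathcal{N}) = \CM(\id_A) = 1$, apply the data-processing inequality to get $\CM(\mathcal{N}) \ge 1$, and conclude via the range bound $\CM(\mathcal{N}) \le 1$. Your explicit check that the identity process has unit maximum causal effect is a detail the paper leaves implicit, but it is the same argument.
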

{\bf Proof.} The correctability condition implies $\CM   ({\cal R}  \circ {\cal N})=1  $, while the data-processing inequality implies   $\CM   ({\cal R}  \circ {\cal N}) \le \CM   ( {\cal N})$. Hence, $\CM   ({\cal N})\ge 1$. Since the maximum value of $\CM$ is 1, the equality holds. \qed 

\medskip 
It is worth stressing that the converse of Corollary \ref{cor:maxforcorrectable} does not hold in general: the condition  $\CM  ({\cal N}) = 1$ only implies that there exists a pair  of states $\rho$ and $\rho'$ whose distinguishability is preserved by $\cal N$. This condition does not set any requirement on the action of $\N$ on the remaining  states, and therefore  is too weak to guarantee correctability.

Proposition \ref{prop:propertiesmax} and Corollary \ref{cor:maxforcorrectable}   hold generically, with minimal requirements on the probabilistic theory under consideration.  On the other hand, classical and quantum theory  exhibit much stronger properties, including the following property, which   will play a crucial role in the rest of this paper: 
\begin{definition}\label{def:generated}
A state space ${\sf St}  (S)$ is {\em generated by  perfectly distinguishable states}  if  every element $\delta$  of the vector space ${\sf St}_{\mathbb R}  (S)$ can be decomposed as  
\begin{align}
\delta  =   a   \,    \rho    -   b  \,    \rho'  \,,
\end{align}
where $a$ and $b$ are nonnegative coefficients, and $\rho$ and $\rho'$ are perfectly distinguishable states.   
\end{definition}
All state spaces in  classical and quantum theory are generated by perfectly distinguishable states.  Indeed, every  (trace-class)  Hermitian operator $H$ can be decomposed as $  H=  a  \rho  -  b  \rho'$, where  $\rho$ and $\rho'$ are density matrices with orthogonal support, while $a$ and $b$ are non-negative real numbers.  

For compact state spaces generated by distinguishable states, the supremum in Eq. (\ref{sup}) is guaranteed to be a maximum and we have the following: 
\begin{proposition}\label{prop:pdmax}
If the state space ${\sf St}  (A)$ is compact and  generated by perfectly distinguishable states, then one has the equality
\begin{align}\label{distinguishablerhorho'}
\CM   ({\cal N})     =  \max_{\rho \, {\rm p.d.} \, \rho'  }\frac {\|{\cal N}  (\rho)  -  {\cal N} (\rho')\|_1}{2} \,,
\end{align}
where the notation $\rho ~  {\rm p.d.} ~\rho'$ means that the states $\rho$ and $\rho'$ are perfectly distinguishable. Moreover, the maximization can be restricted without loss of generality to pure states, namely
\begin{align}\label{distinguishablepsipsi'}
\CM   ({\cal N})     =  \max_{\psi ~ {\rm p.d.} ~  \psi'}\frac {\|{\cal N}  (\psi)  -  {\cal N} (\psi')\|_1}{  2 }\,,
\end{align}
where the maximization is restricted to perfectly distinguishable pure states $\psi$ and $\psi'$.   
\end{proposition}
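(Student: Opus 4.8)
The plan has three parts: first, to show that every ratio occurring in the supremum (\ref{sup}) equals a ratio of the special form on the right-hand side of (\ref{distinguishablerhorho'}); second, to promote that supremum to a maximum by a compactness argument; and third, to reduce the maximization to pure states by decomposing an optimal pair into pure components.

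\emph{Step 1: reduction to perfectly distinguishable pairs.} Fix states $\rho \neq \rho'$ and set $\delta := \rho - \rho' \in {\sf St}_{\mathbb R}(A)$. Since ${\sf St}(A)$ is generated by perfectly distinguishable states, write $\delta = a\,\sigma - b\,\sigma'$ with $a,b \ge 0$ and $\sigma,\sigma'$ perfectly distinguishable. Applying the unit effect $u$ (which takes the value $1$ on every normalized state, so $u(\rho) = u(\rho') = 1$) to both sides gives $0 = u(\delta) = a - b$, hence $a = b =: t$, with $t > 0$ because $\delta \neq 0$. Thus $\rho - \rho' = t(\sigma - \sigma')$, equivalently $\rho + t\,\sigma' = \rho' + t\,\sigma$; dividing by $1 + t$ exhibits this as an identity between convex combinations of states, and applying the affine map ${\cal N}$ yields ${\cal N}(\rho) - {\cal N}(\rho') = t\big({\cal N}(\sigma) - {\cal N}(\sigma')\big)$. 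Using absolute homogeneity of the operational norm and $\|\sigma - \sigma'\|_1 = 2$ (valid since $\sigma,\sigma'$ are perfectly distinguishable), we get
\[
\frac{\|{\cal N}(\rho) - {\cal N}(\rho')\|_1}{\|\rho - \rho'\|_1} \;=\; \frac{t\,\|{\cal N}(\sigma) - {\cal N}(\sigma')\|_1}{2t} \;=\; \frac{\|{\cal N}(\sigma) - {\cal N}(\sigma')\|_1}{2}\,.
\]
Hence the supremum in (\ref{sup}) does not exceed the supremum of $\|{\cal N}(\sigma) - {\cal N}(\sigma')\|_1/2$ over perfectly distinguishable pairs; the reverse inequality is immediate, since every such pair is admissible in (\ref{sup}) with denominator $\|\sigma - \sigma'\|_1 = 2$. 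This gives (\ref{distinguishablerhorho'}) with the supremum in place of the maximum.

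\emph{Step 2: attainment and reduction to pure states.} The set $K := \{(\sigma,\sigma') \in {\sf St}(A) \times {\sf St}(A) : \|\sigma - \sigma'\|_1 = 2\}$ is the preimage of $\{2\}$ under the map $(\sigma,\sigma') \mapsto \|\sigma - \sigma'\|_1$, which is continuous because the operational norm is a norm on the finite-dimensional space ${\sf St}_{\mathbb R}(A)$; hence $K$ is closed and therefore compact, and it is nonempty as long as ${\sf St}(A)$ is not a single point. Since $(\sigma,\sigma') \mapsto \|{\cal N}(\sigma) - {\cal N}(\sigma')\|_1$ is continuous, it attains its maximum on $K$, say at $(\sigma_*,\sigma'_*)$, proving (\ref{distinguishablerhorho'}). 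For the pure-state refinement, choose an effect $e$ with $e(\sigma_*) = 1$ and $e(\sigma'_*) = 0$ (coarse-grain a measurement witnessing the perfect distinguishability of $\sigma_*$ and $\sigma'_*$ into two outcomes), and by Carath\'eodory's theorem write $\sigma_* = \sum_i p_i\,\psi_i$ and $\sigma'_* = \sum_j q_j\,\psi'_j$ as finite convex combinations of pure states with all weights strictly positive. From $1 = e(\sigma_*) = \sum_i p_i\,e(\psi_i)$ with $0 \le e(\psi_i) \le 1$ we conclude $e(\psi_i) = 1$ for all $i$, and similarly $e(\psi'_j) = 0$ for all $j$; hence each $\psi_i$ is perfectly distinguishable from each $\psi'_j$. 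Then ${\cal N}(\sigma_*) - {\cal N}(\sigma'_*) = \sum_{i,j} p_i q_j\big({\cal N}(\psi_i) - {\cal N}(\psi'_j)\big)$, and the triangle inequality gives $\|{\cal N}(\sigma_*) - {\cal N}(\sigma'_*)\|_1 \le \max_{i,j} \|{\cal N}(\psi_i) - {\cal N}(\psi'_j)\|_1$. Since pure perfectly distinguishable pairs form a subset of $K$, this yields (\ref{distinguishablepsipsi'}) with the maximum attained.

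\emph{Main obstacle.} The only hypothesis doing real work is that ${\sf St}(A)$ is generated by perfectly distinguishable states: this is what forces every difference of two states to be a nonnegative multiple of a difference of a perfectly distinguishable pair in Step 1, after which everything reduces to standard finite-dimensional convex analysis. The step I expect to need the most care is the pure-state reduction in Step 2 — namely, arranging the convex decompositions of $\sigma_*$ and $\sigma'_*$ so that \emph{all} pairs of pure components are perfectly distinguishable; this is precisely what the discriminating effect $e$ together with the elementary remark that a convex combination of numbers in $[0,1]$ equal to $1$ must have every term equal to $1$ provides.
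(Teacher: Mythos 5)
Your proof is correct and follows essentially the same route as the paper's: the decomposition $\rho-\rho' = a\,\sigma - b\,\sigma'$ with normalization forcing $a=b$, a compactness argument to turn the supremum into a maximum, and convex decomposition plus the triangle inequality for the pure-state refinement. The one place you go beyond the paper is the pure-state step: the paper simply asserts that every pure component of $\sigma_*$ is perfectly distinguishable from every pure component of $\sigma'_*$, whereas you justify it via a discriminating effect $e$ with $e(\sigma_*)=1$, $e(\sigma'_*)=0$ and the observation that a convex combination in $[0,1]$ summing to $1$ forces each term to equal $1$ --- a welcome extra detail, not a different method.
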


The proof is provided in Appendix \ref{app:pdstuff}.

 We conclude this subsection by observing that the decomposition into distinguishable states implies a  continuity property of the maximum causal effect:
  \begin{proposition}\label{prop:continuity_max}   Let  ${\sf St}  (A)$ be a  compact state space   generated by perfectly distinguishable states, and let $\cal  N$ and $\cal N'$ be two processes with input $A$ and output $B$. If  $\cal  N$ and $\cal N'$ are $\epsilon$-close on all possible input states, namely $ \|  {\cal N}  (\rho)   -  {\cal N'}  (\rho)\|_1 \le \epsilon \, , \forall \rho  \in  {\sf St}  (A)$, then  then their maximum causal effects are $\epsilon$-close, namely  $|\CM   ({\cal N})  -  \CM   ({\cal N'})|  \le \epsilon$   
  \end{proposition}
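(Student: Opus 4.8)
The plan is to reduce the continuity estimate to the characterization of $\CM$ in terms of perfectly distinguishable pure states provided by Proposition~\ref{prop:pdmax}. The key observation is that, although the ratio $\|\N(\rho)-\N(\rho')\|_1/\|\rho-\rho'\|_1$ appearing in Definition~\ref{def:GCEmax} could in principle amplify a small perturbation of the process when the denominator $\|\rho-\rho'\|_1$ is small, restricting the optimization to perfectly distinguishable pure states pins the denominator to the constant value $\|\psi-\psi'\|_1=2$. Hence a uniform bound on the output perturbation translates directly, with the same constant, into a bound on $\CM$.

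Concretely, I would first invoke Proposition~\ref{prop:pdmax} — applicable because ${\sf St}(A)$ is compact and generated by perfectly distinguishable states — to write $\CM(\N)=\max_{\psi\,{\rm p.d.}\,\psi'}\tfrac12\|\N(\psi)-\N(\psi')\|_1$, with the maximum attained on some pair of perfectly distinguishable pure states $(\psi,\psi')$, and likewise $\CM(\N')=\max_{\psi\,{\rm p.d.}\,\psi'}\tfrac12\|\N'(\psi)-\N'(\psi')\|_1$. Fixing a maximizing pair $(\psi,\psi')$ for $\N$, the triangle inequality for the operational norm together with the hypothesis $\|\N(\rho)-\N'(\rho)\|_1\le\epsilon$ (applied at $\rho=\psi$ and $\rho=\psi'$) gives
\begin{align}
\|\N(\psi)-\N(\psi')\|_1 \le \|\N'(\psi)-\N'(\psi')\|_1 + 2\epsilon .
\end{align}
Dividing by $2$ and noting that $(\psi,\psi')$ is a feasible pair in the optimization defining $\CM(\N')$, so that $\tfrac12\|\N'(\psi)-\N'(\psi')\|_1\le\CM(\N')$, one obtains $\CM(\N)\le\CM(\N')+\epsilon$. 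Since the hypothesis is symmetric under exchanging $\N$ and $\N'$, the same argument yields $\CM(\N')\le\CM(\N)+\epsilon$, and combining the two inequalities gives $|\CM(\N)-\CM(\N')|\le\epsilon$.

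I do not expect any substantive obstacle: the only point requiring care is to run the argument through the fixed-denominator formula of Proposition~\ref{prop:pdmax} rather than the raw ratio in Definition~\ref{def:GCEmax}, since a direct triangle-inequality bound on the latter would be spoiled by the possibility of an arbitrarily small denominator $\|\rho-\rho'\|_1$. (One should also make sure the triangle inequality is invoked for the operational norm of Eq.~(\ref{opnorm}), which is legitimate since it is genuinely a norm on ${\sf St}_{\mathbb R}(B)$, as noted in the text.)
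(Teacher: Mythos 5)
Your proof is correct and follows essentially the same route as the paper: the paper first proves a slightly more general lemma bounding $|\CM(\N)-\CM(\N')|$ by $2\epsilon$ divided by the minimum of the optimizing state distances (via the same triangle-inequality argument you use), and then specializes using Proposition~\ref{prop:pdmax} to note that this minimum equals $2$ for perfectly distinguishable states. Your observation that the fixed denominator $\|\psi-\psi'\|_1=2$ is the crucial point is exactly the content of the paper's argument.
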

The proof can be found in Appendix~\ref{app:CE_max_continuity}.

\subsection{The minimum causal effect}
While maximum causal effect quantifies the maximum influence that a change of  input   can have on the output of a physical process, for  certain applications it is useful to quantify the minimum influence: 

\begin{definition}[Minimum causal effect]
In a general probabilistic theory, the {\em minimum causal effect} induced by a process ${\cal N}:  {\sf St }  (A)  \to {\sf St}  (B)$ is 
 \begin{align}
 \Cm   ({\cal N})    &:  = \inf_{\rho , \,  \rho'  \in  {\sf St}  (A) ,\,  \rho \not = \rho'}   \,  \frac{  \|  {  \cal N}  (\rho)  -  {\cal N}  (\rho')  \|_1}{  \|  \rho -  \rho'\|_1} \, . \label{Eq:GCE_min}
 \end{align}   
\end{definition}
Later in this subsection, we will see that, under an assumption satisfied by both classical and quantum theory,  the infimum is actually a minimum, and can be attained by minimizing over perfectly distinguishable states.  Intuitively, $\Cm$ quantifies the minimum change in the state of system $B$ due to a change in the state of system $A$.  

Like the maximum causal effect, the minimum causal effect satisfies basic inequalities:

\begin{proposition}\label{prop:propertiesmin}
     For  every pair of systems $A$ and $B$, and every process ${\cal N}:  {\sf St }  (A)  \to {\sf St}  (B)$, the minimum causal effect
      has the  following properties: 
 \begin{enumerate}
 \item {\em Range:}  $0\le  \Cm   ({\cal N}) \le 1$,

 \item {\em Data-processing inequality:}   $  \Cm    ( {\cal  B}  \circ {\cal N} )  \le  \Cm    (  {\cal N} )$ for arbitrary processes  ${\cal N}  : {\sf St}  ( A)  \to  {\sf St}  ( B)$, and ${\cal B}  : {\sf St}  ( B)  \to  {\sf St}  ( B')$, and arbitrary systems $A,B,$ and $B'$. 
  \end{enumerate}
\end{proposition}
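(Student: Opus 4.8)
The plan is to prove the two items separately, in each case reducing the claim to the non-expansiveness of physical processes under the operational norm. Recall that, for any process ${\cal P}: {\sf St}(S) \to {\sf St}(S')$ and any $\delta \in {\sf St}_{\mathbb R}(S)$, one has $\|{\cal P}(\delta)\|_1 \le \|\delta\|_1$: this follows from Eq.~\eqref{opnorm} by noting that, if $e \in {\sf Eff}(S')$, then $e \circ {\cal P}$ is a positive affine functional on ${\sf St}(S)$ bounded above by $1$ on states, hence an effect in ${\sf Eff}(S)$; consequently the maximum (resp.\ minimum) over effects of $S'$ evaluated on ${\cal P}(\delta)$ is a maximum (resp.\ minimum) over a subset of the effects of $S$ evaluated on $\delta$, which can only decrease (resp.\ increase) it, so that the difference $\|{\cal P}(\delta)\|_1$ cannot exceed $\|\delta\|_1$.

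For the \emph{Range}, the bound $\Cm({\cal N}) \ge 0$ is immediate, since $\Cm({\cal N})$ is an infimum of ratios of non-negative quantities (throughout we assume, as usual, that ${\sf St}(A)$ is non-trivial, so that the set of pairs $\rho \neq \rho'$ over which the infimum is taken is non-empty). For the bound $\Cm({\cal N}) \le 1$, I would apply non-expansiveness with ${\cal P} = {\cal N}$ to conclude that every ratio $\|{\cal N}(\rho) - {\cal N}(\rho')\|_1 / \|\rho - \rho'\|_1$ appearing in Eq.~\eqref{Eq:GCE_min} is at most $1$, so its infimum is too. (Alternatively, one may simply observe $\Cm({\cal N}) \le \CM({\cal N}) \le 1$, the first inequality because an infimum cannot exceed the supremum over the same non-empty set, the second by Proposition~\ref{prop:propertiesmax}.)

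For the \emph{Data-processing inequality}, I would argue pointwise on pairs of input states and then pass to the infimum. Fix $\rho, \rho' \in {\sf St}(A)$ with $\rho \neq \rho'$. Applying non-expansiveness of ${\cal B}$ to $\delta = {\cal N}(\rho) - {\cal N}(\rho') \in {\sf St}_{\mathbb R}(B)$ gives $\|{\cal B}({\cal N}(\rho)) - {\cal B}({\cal N}(\rho'))\|_1 \le \|{\cal N}(\rho) - {\cal N}(\rho')\|_1$; dividing by $\|\rho - \rho'\|_1 > 0$,
\begin{align*}
\frac{\|({\cal B}\circ{\cal N})(\rho) - ({\cal B}\circ{\cal N})(\rho')\|_1}{\|\rho - \rho'\|_1} \;\le\; \frac{\|{\cal N}(\rho) - {\cal N}(\rho')\|_1}{\|\rho - \rho'\|_1}\, .
\end{align*}
Since this holds for every admissible pair, the infimum of the left-hand side is bounded above by the infimum of the right-hand side, i.e.\ $\Cm({\cal B}\circ{\cal N}) \le \Cm({\cal N})$. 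Note that, in contrast to the maximum causal effect, no pre-processing map ${\cal A}$ appears here: composing ${\cal N}$ with a process ${\cal A}$ on the input side can \emph{raise} the minimal ratio, so $\Cm$ does not in general satisfy $\Cm({\cal N}\circ{\cal A}) \le \Cm({\cal N})$, and the statement is deliberately confined to post-processing.

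I do not anticipate a genuine obstacle. The only step that uses anything beyond elementary manipulation of infima is the non-expansiveness of physical processes under the operational norm, which is where the structure of the theory — specifically, that the composition of an effect with a process is again an effect, a feature guaranteed in theories satisfying the Causality Axiom — is invoked; once that is in hand, both properties follow in a couple of lines.
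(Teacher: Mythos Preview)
Your proposal is correct and follows essentially the same approach as the paper: both arguments reduce each item to the non-expansiveness (data-processing inequality) of the operational norm under physical processes, with the lower bound in the range coming from non-negativity of the norm. You supply slightly more detail than the paper (spelling out why $e\circ{\cal P}\in{\sf Eff}(S)$ and hence why $\|{\cal P}(\delta)\|_1\le\|\delta\|_1$, and remarking on the absence of a pre-processing map), but the substance is identical.
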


The proof is provided in Appendix \ref{app:basicproperties_Min}.  As in the case of the maximum causal effect, an immediate consequence of the data-processing inequality  property is that every correctable process has minimum causal effect equal to 1. 
\begin{corollary}\label{cor:maxforcorrectable2}
Let ${\cal N}  : {\sf St}  ( A)  \to  {\sf St}  ( B)$ be a correctable process.    Then,  $\Cm   ({\cal N})  =1$.
\end{corollary}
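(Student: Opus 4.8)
The plan is to mimic exactly the argument used for Corollary \ref{cor:maxforcorrectable}, replacing $\CM$ by $\Cm$ throughout and invoking Proposition \ref{prop:propertiesmin} in place of Proposition \ref{prop:propertiesmax}. The key observation is that the defining ratio in Eq. (\ref{Eq:GCE_min}) is manifestly bounded above by $1$ when the process is the identity: for $\N = \id$ one has $\|\rho - \rho'\|_1 / \|\rho - \rho'\|_1 = 1$ for every pair $\rho \neq \rho'$, so $\Cm(\id) = 1$. More generally, any recovery of the identity on a subset of states forces the infimum in the definition to be at least $1$, hence exactly $1$ by the Range property.

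Concretely, I would proceed as follows. First, let ${\cal R} : {\sf St}(B) \to {\sf St}(A)$ be a recovery process with ${\cal R} \circ {\cal N}(\rho) = \rho$ for all $\rho \in {\sf St}(A)$; this exists by the hypothesis of correctability. Then ${\cal R} \circ {\cal N} = \id_A$, and directly from Eq. (\ref{Eq:GCE_min}) we get $\Cm({\cal R} \circ {\cal N}) = \inf_{\rho \neq \rho'} \|\rho - \rho'\|_1 / \|\rho - \rho'\|_1 = 1$. Second, apply the data-processing inequality of Proposition \ref{prop:propertiesmin}, with the roles ${\cal N} \rightsquigarrow {\cal N}$ and ${\cal B} \rightsquigarrow {\cal R}$, to conclude $\Cm({\cal R} \circ {\cal N}) \le \Cm({\cal N})$. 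Combining, $\Cm({\cal N}) \ge 1$. Third, invoke the Range property $\Cm({\cal N}) \le 1$ from the same proposition to obtain the equality $\Cm({\cal N}) = 1$.

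There is essentially no obstacle here: the proof is a two-line consequence of the general properties already established, structurally identical to the proof of Corollary \ref{cor:maxforcorrectable}. The only point worth a moment's care is that the data-processing inequality for $\Cm$ in Proposition \ref{prop:propertiesmin} is stated only for post-processing (composition with ${\cal B}$ on the output), not pre-processing — but that is exactly the form needed here, since we compose ${\cal N}$ with the recovery map ${\cal R}$ on the output side. So the statement goes through verbatim in the post-processing-only form. I would write the proof in one or two sentences and mark it with \qed.
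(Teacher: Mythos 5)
Your proof is correct and is essentially identical to the paper's own argument: the paper likewise notes that correctability gives $\Cm({\cal R}\circ{\cal N})=1$, applies the data-processing inequality of Proposition \ref{prop:propertiesmin} to get $\Cm({\cal N})\ge 1$, and concludes by the Range property. Your added remark that only the post-processing form of the inequality is available for $\Cm$, and that this is precisely the form required here, is accurate and worth keeping.
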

{\bf Proof.} The correctability condition implies $\Cm   ({\cal R}  \circ {\cal N})=1  $, while the data-processing inequality implies   $\Cm   ({\cal R}  \circ {\cal N}) \le \Cm   ( {\cal N})$. Hence, $\Cm   ({\cal N})\ge 1$. Since the maximum value of $\Cm$ is 1, the equality holds. \qed 

\smallskip

If state space of the input system is compact and  generated by perfectly distinguishable states, in the sense of  Definition \ref{def:generated}, then  the infimum in Eq. \eqref{Eq:GCE_min} is a minimum and the minimization can be  restricted to perfectly distinguishable states:  

\begin{proposition}\label{prop:pdmin}
If the state  space ${\sf St}  (A)$ is compact and  generated by perfectly distinguishable states, then one has the equality
\begin{align}\label{minimumondistinguishablestates}  
\Cm   ({\cal N})     =  \min_{\rho \, {\rm p.d.} \, \rho'  }\frac {\|{\cal N}  (\rho)  -  {\cal N} (\rho')\|_1}{2} \,,  
\end{align} 
where the notation $\rho ~  {\rm p.d.} ~\rho'$ means that the states $\rho$ and $\rho'$ are perfectly distinguishable.
\end{proposition}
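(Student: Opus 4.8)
The plan is to mirror the argument for the maximum causal effect (Proposition~\ref{prop:pdmax}) with the inequalities reversed. Write $c := \inf_{\rho\neq\rho'}\|\N(\rho)-\N(\rho')\|_1/\|\rho-\rho'\|_1$ for the infimum in Eq.~\eqref{Eq:GCE_min}, and $m := \min_{\rho\,\mathrm{p.d.}\,\rho'}\|\N(\rho)-\N(\rho')\|_1/2$ for the right-hand side of Eq.~\eqref{minimumondistinguishablestates}. One direction is immediate: since a pair of perfectly distinguishable states $\rho,\rho'$ satisfies $\|\rho-\rho'\|_1=2$, each such pair is a legitimate competitor in the infimum defining $c$, so $c\le m$. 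The real content is the reverse inequality $c\ge m$, together with the fact that the infimum is attained.

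For $c\ge m$, first I would reduce to showing that for \emph{every} pair $\rho\neq\rho'$ in ${\sf St}(A)$ one has $\|\N(\rho)-\N(\rho')\|_1/\|\rho-\rho'\|_1 \ge m$. Fix such a pair and consider the difference vector $\delta := \rho-\rho' \in {\sf St}_{\mathbb R}(A)$, which is nonzero. By the hypothesis that ${\sf St}(A)$ is generated by perfectly distinguishable states (Definition~\ref{def:generated}), write $\delta = a\,\sigma - b\,\sigma'$ with $a,b\ge 0$ and $\sigma\,\mathrm{p.d.}\,\sigma'$. Applying the discarding functional (Causality Axiom) to both sides: since states are normalized and $\Tr\delta = \Tr\rho-\Tr\rho' = 0$, we get $a = b =: t$, so $\delta = t(\sigma-\sigma')$ with $t>0$ (as $\delta\neq 0$). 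Now use that the operational norm is a genuine norm, hence homogeneous: $\|\delta\|_1 = t\,\|\sigma-\sigma'\|_1 = 2t$, and by linearity of $\N$ (affine maps act linearly on ${\sf St}_{\mathbb R}(A)$, since they preserve the zero-trace condition) $\|\N(\rho)-\N(\rho')\|_1 = \|\N(\delta)\|_1 = t\,\|\N(\sigma)-\N(\sigma')\|_1$. Therefore the ratio equals $\|\N(\sigma)-\N(\sigma')\|_1/2 \ge m$ by definition of $m$. Taking the infimum over all pairs gives $c\ge m$, and combined with $c\le m$ yields equality; moreover any minimizing p.d.\ pair for $m$ shows the value is attained, so the infimum in Eq.~\eqref{Eq:GCE_min} is a minimum.

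The main obstacle — really the only non-routine point — is justifying that the coefficients $a,b$ in the decomposition $\delta = a\sigma - b\sigma'$ can be taken equal when $\delta$ arises as a difference of two normalized states. This is where the Causality Axiom enters: the existence of a unique deterministic effect (the discarding map $\Tr_A$) guarantees $\Tr\sigma = \Tr\sigma' = 1$ for all states, so applying it to $\delta = a\sigma-b\sigma'$ forces $a-b = \Tr\delta = 0$. Without this normalization one could only bound the ratio by $\min(a,b)$-type quantities and the clean identity would fail. One should also note for completeness that $m$ itself is attained: the set of perfectly distinguishable pairs is a closed subset of the compact set ${\sf St}(A)\times{\sf St}(A)$ (the p.d.\ condition $\|\sigma-\sigma'\|_1 = 2$ is closed by continuity of the norm), and $\|\N(\sigma)-\N(\sigma')\|_1$ is continuous, so the minimum over p.d.\ pairs is genuinely achieved — this is what lets us write $\min$ rather than $\inf$ on the right-hand side of Eq.~\eqref{minimumondistinguishablestates}.
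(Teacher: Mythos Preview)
Your proposal is correct and follows essentially the same route as the paper's proof: the easy direction $c\le m$ by restricting to perfectly distinguishable pairs, then the decomposition $\rho-\rho' = a\sigma - b\sigma'$ with $a=b$ forced by applying the unique discarding effect, giving the ratio identity and hence $c\ge m$. Your justification that the minimum over p.d.\ pairs is attained (closedness of the condition $\|\sigma-\sigma'\|_1=2$ in the compact product) is in fact slightly more explicit than the paper's, which merely invokes the subsequence argument of Proposition~\ref{prop:supmax} without spelling out that the limit pair remains perfectly distinguishable.
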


The proof is provided in Appendix~\ref{app:property_min_perfectly_disting}.  Note that, unlike in the case of the maximum causal effect, the minimization in Eq.  (\ref{minimumondistinguishablestates}) cannot, in general,  be restricted to pure states.      For example, consider a classical  process $\N$ transforming a two-bit input system $A$ into a single-bit output system $B$ according to the following rules: 
\begin{itemize}
\item if the input bits are 00,  the output bit is $0$,
\item if the input bits are 11, the output bit $1$,  
\item if the input bits are 01, the output bit is $0$ with probability $1/3$ and $1$ with probability $2/3$,
\item if the input bits are 10, the output bit is $0$ with probability $2/3$ and $1$ with probability $1/3$.
\end{itemize}
For this process, the minimum causal effect is zero,  as one can see by choosing the states $\rho$  and $\rho'$ in Eq. (\ref{minimumondistinguishablestates}) to be a uniform mixture of $00$ and $11$, and a uniform mixture of $01$ and $10$, respectively.  Instead,  the ratio $\|  \N (\rho)  -  \N(\rho')\|_1$ is non-zero for every pair of distinct pure states $\rho$ and $\rho'$.

We conclude the section with  the following continuity property of $\rm{CE}_{\min}$:  

\begin{proposition}[Continuity of the minimum causal effect] \label{prop:continuity_minimum}
    If the state  space ${\sf St}  (A)$ is compact and  generated by perfectly distinguishable states,  then the condition $ \|  {\cal N}  (\rho)   -  {\cal N'}  (\rho)\|_1 \le \epsilon \, , \forall \rho  \in  {\sf St}  (A)$  implies  $|\Cm   ({\cal N})  -  \Cm   ({\cal N'})|  \le \epsilon$. 
\end{proposition}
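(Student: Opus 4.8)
The plan is to reduce the statement to the characterization of $\Cm$ as a \emph{minimum} over perfectly distinguishable input pairs provided by Proposition~\ref{prop:pdmin}, and then to run the same triangle-inequality argument already used for the maximum causal effect in Proposition~\ref{prop:continuity_max}.

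First I would invoke Proposition~\ref{prop:pdmin}: since $\mathsf{St}(A)$ is compact and generated by perfectly distinguishable states, both $\Cm(\mathcal{N})$ and $\Cm(\mathcal{N}')$ equal $\tfrac12$ times the minimum of $\|\mathcal{N}(\rho)-\mathcal{N}(\rho')\|_1$, respectively $\|\mathcal{N}'(\rho)-\mathcal{N}'(\rho')\|_1$, over all perfectly distinguishable pairs $(\rho,\rho')$ of states of $A$, and these minima are attained. Let $(\rho_\ast,\rho_\ast')$ be a perfectly distinguishable pair achieving the minimum for $\mathcal{N}'$, so that $\Cm(\mathcal{N}')=\tfrac12\|\mathcal{N}'(\rho_\ast)-\mathcal{N}'(\rho_\ast')\|_1$.

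Next I would apply the triangle inequality for the operational norm (which is a genuine norm, cf.\ the discussion around Eq.~(\ref{opnorm})), together with the hypothesis $\|\mathcal{N}(\rho)-\mathcal{N}'(\rho)\|_1\le\epsilon$ evaluated at $\rho=\rho_\ast$ and $\rho=\rho_\ast'$:
\begin{align}
\|\mathcal{N}(\rho_\ast)-\mathcal{N}(\rho_\ast')\|_1
&\le \|\mathcal{N}(\rho_\ast)-\mathcal{N}'(\rho_\ast)\|_1 + \|\mathcal{N}'(\rho_\ast)-\mathcal{N}'(\rho_\ast')\|_1 \nonumber\\
&\quad {}+ \|\mathcal{N}'(\rho_\ast')-\mathcal{N}(\rho_\ast')\|_1
\;\le\; 2\epsilon + \|\mathcal{N}'(\rho_\ast)-\mathcal{N}'(\rho_\ast')\|_1 . \nonumber
\end{align}
Dividing by $2$ and noting that $(\rho_\ast,\rho_\ast')$ is a perfectly distinguishable pair for $\mathcal{N}$ (not necessarily optimal), so that $\Cm(\mathcal{N})\le\tfrac12\|\mathcal{N}(\rho_\ast)-\mathcal{N}(\rho_\ast')\|_1$ by Proposition~\ref{prop:pdmin}, gives $\Cm(\mathcal{N})\le\Cm(\mathcal{N}')+\epsilon$. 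Since the hypothesis is symmetric in $\mathcal{N}$ and $\mathcal{N}'$, exchanging their roles yields $\Cm(\mathcal{N}')\le\Cm(\mathcal{N})+\epsilon$, and the two bounds together give $|\Cm(\mathcal{N})-\Cm(\mathcal{N}')|\le\epsilon$.

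I do not anticipate a real obstacle; the only delicate point, and the reason the hypotheses of the proposition are exactly these, is that the optimization must be restricted to perfectly distinguishable pairs, on which the denominator $\|\rho-\rho'\|_1$ takes the constant value $2$. On general pairs $\rho\neq\rho'$ the ratio defining $\Cm$ is not controlled by a triangle inequality on the numerators alone, so the argument would fail; hence Proposition~\ref{prop:pdmin}, together with the compactness assumption that guarantees the minimum is attained, is the essential input.
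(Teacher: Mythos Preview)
Your proof is correct and follows essentially the same approach as the paper's: reduce to perfectly distinguishable pairs via Proposition~\ref{prop:pdmin}, pick a minimizing pair, and apply the triangle inequality with the $\epsilon$-closeness hypothesis, then invoke symmetry. Your version is in fact more carefully stated, since you pick the minimizing pair for $\mathcal{N}'$ in order to bound $\Cm(\mathcal{N})$ from above; the paper's write-up picks the minimizer of $\mathcal{N}$ and then asserts $\|\mathcal{N}'(\rho'-\rho)\|_1\le 2\,\Cm(\mathcal{N}')$, which has the inequality the wrong way around for a minimum---your choice of minimizer avoids this slip.
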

The proof is provided in Appendix~\ref{app:CE_continuity_min}.

\subsection{Relation to minimum error state discrimination}
We now establish a relation between the minimum causal effect and   the problem of minimum error state discrimination.  Suppose that system $A$ is prepared in one of  two possible states $\rho$ and $\rho'$, with prior probabilities  $p$ and $1-p$.    In this setting, the minimum error probability in distinguishing between $\rho$ and $\rho'$ is given by \cite{chiribella2010probabilistic}  
\begin{align}\label{perr}
p_{\rm err}^{\min}  =  \frac{  1 -  \| p\rho  -(1-p) \,\rho' \|_1}2 \, .
\end{align}
In the special case of quantum theory, this expression coincides with  Helstrom's formula for the minimum error probability \cite{helstrom_quantum_1976}.

Eq. (\ref{perr}) shows that the operational norm  quantifies the deviation of the minimum error probability from the error probability of a random guess $p_{\rm err}^{\rm random}  =  1/2$.      Hence, the ratio  between the  norms $\|  p  \, {\cal N}  (\rho)  -  (1-p)  \,   {\cal N}   (\rho')\|_1$ and $\|  p  \, \rho  -  (1-p)  \,   \rho' \|_1$  is a measure of how much the process $\cal N$ preserves  the distinguishability of two input states provided with a given prior probabilities $p$ and $1-p$, respectively.    In  the worst case over all possible prior probabilities and over all possible states, we have the following quantity: 
\begin{definition}\label{def:dp}
The {\em  minimum distinguishability preservation} of channel $\cal N$ is  
\begin{align}
{\rm DP}_{\min}   ({\cal N}) :  =  \min_{0\le p\le 1}  ~\inf_{\rho, \,  \rho'  \in {\sf St}  (A), \, \rho\not =  \rho'}   \frac{ \|  p  \, {\cal N}  (\rho)  -  (1-p)  \,   {\cal N}   (\rho')\|_1   }{\|   \, p\, \rho  -    \, (1-p)\, \rho'\|_1} \, .       \end{align}
\end{definition}
In general, the data-processing inequality for  the operational norm   \cite{chiribella2010probabilistic} implies the inequality    ${\rm DP}_{\min}   ({\cal N})   \le 1$ for every process $\cal N$.  

Eq. (\ref{perr}) implies the following  observation: 
\begin{proposition}
 For every process $\N  :  {\sf St} (A) \to {\sf St} (B)$, every pair of states $\rho , \rho'  \in  {\sf St}  (A)$, and every probability $p\in  [0,1]$, the minimum error probability in distinguishing between the states $\N  (\rho)$ and $\N  (\rho')$, given with prior probabilities $p$ and $1-p$, respectively, is upper bounded as 
 \begin{align}
 p_{\rm err}^{\min}  \le  \frac{1  -  {\rm DP}_{\min}  \, \|  p\,  \rho  - (1-p)\,  \rho' \|_1}2 \, .
 \end{align}
\end{proposition}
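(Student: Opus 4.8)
The plan is to combine the closed-form expression for the minimum error probability in Eq.~(\ref{perr}) with the definition of the minimum distinguishability preservation. First, I would apply Eq.~(\ref{perr}) with the two generic states there replaced by the output states $\N(\rho)$ and $\N(\rho')$, which are legitimate states of system $B$; this gives the identity
\begin{align}
p_{\rm err}^{\min} = \frac{1 - \big\| p\,\N(\rho) - (1-p)\,\N(\rho') \big\|_1}{2}\, ,
\end{align}
valid in any theory satisfying the Causality Axiom.

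Next, assuming $\rho \neq \rho'$, I would invoke Definition~\ref{def:dp}: since ${\rm DP}_{\min}(\N)$ is an infimum over all probabilities and all pairs of distinct states of the ratio $\| q\,\N(\sigma) - (1-q)\,\N(\sigma') \|_1 / \| q\,\sigma - (1-q)\,\sigma' \|_1$, specializing to $q = p$, $\sigma = \rho$, $\sigma' = \rho'$ yields $\big\| p\,\N(\rho) - (1-p)\,\N(\rho') \big\|_1 \ge {\rm DP}_{\min}(\N)\, \big\| p\,\rho - (1-p)\,\rho' \big\|_1$. Substituting this lower bound into the displayed identity and using that $x \mapsto (1-x)/2$ is decreasing produces exactly the claimed inequality.

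Finally, I would treat the degenerate case $\rho = \rho'$ by hand, since Definition~\ref{def:dp} restricts its infimum to distinct states. In that case $\N(\rho) = \N(\rho')$, so $\big\| p\,\N(\rho) - (1-p)\,\N(\rho') \big\|_1 = |2p-1|$ and likewise $\big\| p\,\rho - (1-p)\,\rho' \big\|_1 = |2p-1|$, whence $p_{\rm err}^{\min} = \min\{p, 1-p\} = (1 - |2p-1|)/2$; since ${\rm DP}_{\min}(\N) \le 1$, the right-hand side of the claimed bound is at least this value, so the inequality still holds. There is essentially no serious obstacle: the only points requiring care are the strict-inequality constraint in the definition of ${\rm DP}_{\min}$ (which forces the short case split above) and applying the monotonicity of $x \mapsto (1-x)/2$ in the correct direction; everything else is a one-line substitution.
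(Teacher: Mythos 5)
Your proposal is correct and follows exactly the route the paper intends: the paper offers no separate proof, presenting the bound as an immediate consequence of Eq.~(\ref{perr}) applied to the output states together with Definition~\ref{def:dp}, which is precisely your substitution argument. Your extra handling of the degenerate case $\rho=\rho'$ (excluded from the infimum in Definition~\ref{def:dp}) is a careful and valid addition, not a deviation.
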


Preservation of distinguishability is an important property of physical processes. In quantum theory, it is  linked to the tasks of error correction and information recovery by the work of Blume-Kohout,  Ng, Poulin, and Viola \cite{Blume-kohut_IPS_2010}, who showed that a quantum process  is perfectly correctable if and only if it  preserves the  trace distance  for all pairs of states and for all values of the prior probabilities.  
In our terminology,  their result can be stated as:  a quantum process $\cal N$ is perfectly correctable if and only if ${\rm DP}_{\min}  ({\cal N})=1$.  
Later in this paper,  we will extend the result  of Ref. \cite{Blume-kohut_IPS_2010} to approximate correctability, showing  a quantitative relation between the approximate correctability of a quantum process and its  minimum distinguishability preservation. 

Crucially,  the minimum distinguishability preservation is lower bounded in terms  of the minimum causal effect: 
\begin{theorem}\label{theo:ddworst}
 If the state space ${\sf St}  (A)$ is  compact and generated by perfectly distinguishable states, then the bound
 \begin{align}
   {\rm DP}_{\min}   ({\cal  N})  \ge  2  \Cm  ({\cal N})  -1 
 \end{align}
 holds for   every process ${\cal N} :  {\sf St}  (A)  \to  {\sf St}  (B)$ and for every system $B$. 
\end{theorem}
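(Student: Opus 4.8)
The plan is to reduce the asymmetric quantity ${\rm DP}_{\min}$, which involves a prior $p$ and an \emph{arbitrary} pair of input states, to the symmetric quantity $\Cm$, which is fully determined by its values on \emph{perfectly distinguishable} input states (Proposition~\ref{prop:pdmin}). The bridge is the hypothesis that ${\sf St}(A)$ is generated by perfectly distinguishable states: given $p\in[0,1]$ and $\rho\neq\rho'$, I would set $\delta:=p\rho-(1-p)\rho'\in{\sf St}_{\mathbb R}(A)$ and, by Definition~\ref{def:generated}, write $\delta=a\sigma-b\sigma'$ with $a,b\geq 0$ and $\sigma,\sigma'$ perfectly distinguishable. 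The payoff of this rewriting is that the denominator of the ratio becomes a constant, $\|\sigma-\sigma'\|_1=2$, so that the defining inequality of $\Cm$ reads $\|{\cal N}(\sigma)-{\cal N}(\sigma')\|_1\geq 2\,\Cm({\cal N})$ --- a number strictly larger than $1$ precisely in the regime that matters.

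In detail, I would first dispose of the trivial case $\Cm({\cal N})\leq 1/2$, where $2\,\Cm({\cal N})-1\leq 0\leq{\rm DP}_{\min}({\cal N})$, and henceforth write $c:=\Cm({\cal N})\in(1/2,1]$. Then, for fixed $p$ and $\rho\neq\rho'$, I would form $\delta$ and its decomposition $\delta=a\sigma-b\sigma'$, noting that $\delta\neq 0$ (else $p=1/2$, $\rho=\rho'$), that $a+b\geq\|\delta\|_1$ by the triangle inequality, and that $\sigma\neq\sigma'$ because perfectly distinguishable states are distinct. Applying the (affine, hence linear on ${\sf St}_{\mathbb R}$) map ${\cal N}$ gives ${\cal N}(\delta)=p\,{\cal N}(\rho)-(1-p)\,{\cal N}(\rho')=a\,{\cal N}(\sigma)-b\,{\cal N}(\sigma')$. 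From $\|\sigma-\sigma'\|_1=2$ and the definition of $\Cm$ I get $\|{\cal N}(\sigma)-{\cal N}(\sigma')\|_1\geq 2c$, and from the expression~\eqref{opnorm} of the operational norm I would extract an effect $e\in{\sf Eff}(B)$ with $e\big({\cal N}(\sigma)\big)\geq c$ and $e\big({\cal N}(\sigma')\big)\leq 1-c$ (the maximizing effect in~\eqref{opnorm} works when it carries at least half of the gap $2c$; otherwise the complement $u-f$ of the minimizing effect does, where $u$ is the deterministic effect with $u(\tau)=1$ on every normalized state $\tau$). Testing $\eta:=a\,{\cal N}(\sigma)-b\,{\cal N}(\sigma')$ against the effects $e$ and $u-e$, and using $u({\cal N}(\sigma))=u({\cal N}(\sigma'))=1$, I obtain
\begin{align*}
\|{\cal N}(\delta)\|_1 &\;\geq\; e(\eta)-(u-e)(\eta) \;=\; 2\big(a\,e({\cal N}(\sigma))-b\,e({\cal N}(\sigma'))\big)-(a-b)\\
&\;\geq\; 2\big(ac-b(1-c)\big)-(a-b)\;=\;(a+b)(2c-1)\;\geq\;(2c-1)\,\|\delta\|_1 .
\end{align*}
Dividing by $\|\delta\|_1=\|p\rho-(1-p)\rho'\|_1$ and taking the infimum over all $p$, $\rho$ and $\rho'$ then yields ${\rm DP}_{\min}({\cal N})\geq 2c-1=2\,\Cm({\cal N})-1$.

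The hard part will be the single nontrivial step above: converting the \emph{scalar} bound $\|{\cal N}(\sigma)-{\cal N}(\sigma')\|_1\geq 2c$ into a \emph{single} effect $e$ that simultaneously sees ${\cal N}(\sigma)$ with value at least $c$ and ${\cal N}(\sigma')$ with value at most $1-c$. The subtlety is that in $\|\eta\|_1=\max_{e'}e'(\eta)-\min_{f'}f'(\eta)$ the gap of size $2c$ may be carried mostly by the $\max_{e'}$ term or mostly by the $\min_{f'}$ term, so one must select $e$ or $u-f'$ accordingly; once that effect is in hand, everything else is routine manipulation with the triangle inequality for the operational norm and the causal (trace-preserving, affine) structure of ${\cal N}$.
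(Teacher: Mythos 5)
Your proof is correct and follows essentially the same route as the paper's: your decomposition $p\rho-(1-p)\rho'=a\sigma-b\sigma'$ into perfectly distinguishable states is the content of the paper's Lemma~\ref{lem:A=tildeA}, and your extraction of a single effect $e$ with $e\big(\N(\sigma)\big)\ge c$ and $e\big(\N(\sigma')\big)\le 1-c$, tested together with its complement $u-e$, reproduces Lemmas~\ref{lem:oneeffect}--\ref{lem:withoutp} (which give $\|q\tau-(1-q)\tau'\|_1\ge\|\tau-\tau'\|_1-1$; note the symmetry argument there shows the maximizing effect always carries exactly half the gap, so your case split is not actually needed). The only cosmetic differences are that you keep the unnormalized weights $a,b$ rather than passing to $q=a/(a+b)$, and that you must (and correctly do) dispose of the case $\Cm(\N)\le 1/2$ separately, since your final step multiplies the inequality $a+b\ge\|\delta\|_1$ by $2c-1$.
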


The theorem is proven in Appendix~\ref{app:property_min_perfectly_disting}. In particular, the theorem implies that the distinguishability preservation is close to 1 whenever the minimum causal effect is close to 1: $\Cm  ({\cal N})  \ge 1-\epsilon$  implies      ${\rm DP}_{\min}   ({\cal  N}) \ge 1-2 \epsilon$.  Building on this result, we will later show that quantum processes with near-unit values of the minimum causal effect are approximately correctable.

\subsection{Other notions of causal effect}

We conclude this section by briefly mentioning other possible notions of causal effect in general probabilistic theories. 

One possibility is to consider the maximum and minimum of the ratio $\|  {\cal N}  (\rho)-  {\cal N}  (\rho')\|_1  / \|  \rho-  \rho'\|_1 $ over restricted sets of states, such as states satisfying  symmetries \cite{Marvian2013, Marvian2014, Marvian2014_modes, Marvian2016},   or states with  bounded  expectation value of the energy  \cite{winter2017energyconstrained, Shirokov2017, Sharma2018,Becker2021}. 
  
Moreover, instead of taking the maximum or minimum, one could consider taking the expectation value of the ratio $\|  {\cal N}  (\rho)-  {\cal N}  (\rho')\|_1  / \|  \rho-  \rho'\|_1 $ with respect to some prior probability distribution $\pi({\rm d} \rho, {\rm d} \rho')$ over the possible pairs of input states.  
   The resulting quantity, denoted by ${  \rm CE}_{\pi}  ({\cal N})$, is given by    
\begin{align}
\nonumber {  \rm CE}_{\pi}  ({\cal N})    & :  =  \int  \pi({\rm d} \rho, {\rm d} \rho')  \,    \,  \frac{   \|  {\cal N}  (\rho)   -  {\cal N}  (\rho') \|_1  } { \| \rho  -  \rho'\|_1  }   \, .    
 \end{align}
Notice that, by definition, the $\pi$-averaged causal effect is bounded as  
\begin{align}
\Cm  ({\cal N})  \le  {\rm CE}_{\pi}  ({\cal N})  \le  \CM  ({\cal N}) \, .
\end{align}
For quantum theory, a $\pi$-averaged causal effect was defined by Hutter {\em et al.} \cite{hutter2022quantifying}, who considered  the case where $\pi$ is the unitarily invariant  distribution over all pairs of pure orthogonal states.  

Finally, one can consider   weighted generalizations of the maximum and minimum  causal effects, defining
 \begin{align}\label{DPmaxp} \CM (\N  ,  p):=\sup_{\rho,  \, \rho'  \in  {\sf St}  (A), \, \rho\not  = \rho'}\frac{\norm{p\, \N(\rho)-  (1-p)\,  \N  ( \rho')}_1}{\norm{p\,   \rho-(1-p)\rho'}_1}  
\end{align} 
 and 
 \begin{align}\label{DPminp}\Cm(\N  ,  p):=\inf_{\rho,  \, \rho'  \in  {\sf St}  (A), \,  \rho\not  = \rho'}\frac{\norm{p\, \N(\rho)-  (1-p)\,  \N  ( \rho')}_1}{\norm{p\,   \rho-(1-p)\rho'}_1}   \, .
\end{align}  
These quantities  can be regarded as  a measure of how well a given process $\cal N$   preserves the   distinguishability of pairs of states given with prior probabilities $p$ and $1-p$, in the best and worst case scenarios, respectively.

In quantum theory, an analogue of $\CM(\N  ,  p)$ was defined by Hirche {\em et al.}  using a different distance metric known as the hockey-stick divergence \cite{hirche2023quantum}.

\subsection{Including additional input systems into the picture}  

Like the classical ACE,  the notions of causal effect defined in this paper use signaling as a  proxy for causal influence. A downside of the signaling-based approach is that certain processes may not allow for signaling,  even if there exists an actual  causal relation between the input and the output \cite{janzing_2013, Spekkens_cause}. A simple example of this situation arises from the  one-time-pad cryptographic protocol, in which the sender encrypts the value of a bit $a  \in \{0,1\}$ by computing its modulo-2 sum  with another bit $k\in  \{0,1\}$, serving as the key.  
 
The value of the key is chosen uniformly at random, which implies that  the encoded bit $b=  (a +  k ) \mod 2$ is also  uniformly random. Hence, the effective channel from $a$ to $b$ does not permit any signaling.   However, there is a clear causal dependence of $b$ on $a$, since $b$ is generated from $a$ by evaluating the deterministic function $f(a,k)=  (a +  k ) \mod 2$ \cite{schmid2019initial}.         

The one-time-pad  and similar examples indicate that, to capture causal relations present at  a more fundamental level, it is necessary to inspect the underlying physical mechanism that generates variable $B$ from variable $A$, rather than just considering the effective process from $A$ to $B$ ~\cite{barrett_2019,perinotti_2021,perinotti_2023}.   

One way to bridge the gap between signaling-based and interaction-based approaches is to consider physical processes involving not only the variables $A$ and $B$ of interest, but also additional variables that could in principle be accessed by the experimenter.  For example, consider a two-input  process $\N:  {\sf St} (A\otimes K) \to  {\sf St} (B)$, where $K$ is an additional physical system involved in the mechanism that generates $B$ from $A$ (for example, $K$ could be the key in the one-time pad example.)   In this case, the amount of signaling from $A$ to $B$ generally depends on the initial state of system $K$. It is then useful to consider  the amount of signaling from $A$ to $B$ in the best and worst case over all possible states $K$, which give rise to the   following variants of the minimum and maximum causal effects:  
\begin{align*}
{\rm CE}^{A\to B|  K}_{\max|\max}  (\N)  &=  \sup_{\sigma  \in  {\sf St}  (K)}   ~\sup_{\rho,  \rho'  \in  {\sf St} (A) \, , \rho \not  =  \rho'}   \frac{  \|  \N_\sigma (\rho)  -\N_\sigma  (\rho')\|_1}{\|  \rho  - \rho'\|_1}  \\
 {\rm CE}^{A\to B|  K}_{\min|\max}  (\N)    &=  \inf_{\sigma  \in  {\sf St}  (K)}   ~\sup_{\rho,  \rho'  \in  {\sf St} (A) \, , \rho \not  =  \rho'}   \frac{  \|  \N_\sigma (\rho)  -\N_\sigma  (\rho')\|_1}{\|  \rho  - \rho'\|_1}  \\
  {\rm CE}^{A\to B|  K}_{\max|\min}  (\N)    &=  \sup_{\sigma  \in  {\sf St}  (K)}  ~ \inf_{\rho,  \rho'  \in  {\sf St} (A) \, , \rho \not  =  \rho'}   \frac{  \|  \N_\sigma (\rho)  -\N_\sigma  (\rho')\|_1}{\|  \rho  - \rho'\|_1}  \\
   {\rm CE}^{A\to B|  K}_{\min|\min}  (\N)    &=  \inf_{\sigma  \in  {\sf St}  (K)}  ~ \inf_{\rho,  \rho'  \in  {\sf St} (A) \, , \rho \not  =  \rho'}   \frac{  \|  \N_\sigma (\rho)  -\N_\sigma  (\rho')\|_1}{\|  \rho  - \rho'\|_1} \, ,
\end{align*}
where $\N_\sigma  :  {\sf St}  (A) \to {\sf St}  (B)$ is the effective channel defined by 
\begin{align}
\N_\sigma  (\rho) :  =  \N (\rho \otimes \sigma) \qquad  \forall \rho  \in  {\sf St}  (A) \, ,\forall \sigma \in  {\sf St}  (K) \,.
\end{align}

In the one-time-pad example,  systems $A, B, $ and $K$ are classical and correspond to the input bit, the output bit, and the key, respectively. The overall channel $\N$ is described by the conditional probability distribution  $  p(  b|  a,k)   =  \delta_{b,  (a+  k)\mod 2}$, while the effective channel $\N_\sigma$  is described by the conditional probability distribution $p_\sigma (b|a)  =  \sum_k  \,  p(  b|  a,k)  \, \sigma(k)$, where $\sigma(k)$ is the probability distribution for the value of the key.     In the worst case over all probability distributions $\sigma(k)$, no causal effect can be detected through signaling from $A$ to $B$:   ${\rm CE}^{A\to B|  K}_{\max|\min}  (\N)  =  0$.  In contrast,    the best case over all probability distributions $\sigma(k)$ reveals the highest possible value of the causal effect: ${\rm CE}^{A\to B|  K}_{\min|\max}  (\N) =1$.  

The above notions can be extended to multipartite processes, thereby providing a way to characterize how a group of variables can control the visibility of causal influences between two or more systems.

\section{Causal effects in quantum theory}\label{sec:quantum}

From this section on, we will focus our attention to quantum theory, highlighting the  applications of the maximum and minimum causal effects to  quantum information tasks, such as quantum communication and  approximate information recovery.   All the general  properties established in Propositions \ref{prop:propertiesmax}  and \ref{prop:propertiesmin} will apply. Moreover,  since all state spaces in quantum theory are generated by perfectly distinguishable states,  Propositions \ref{prop:pdmax} and \ref{prop:pdmin} apply too, and the maximum and minimum causal effects can be computed by restricting the maximization and minimization to perfectly distinguishable states.

\subsection{Maximum and minimum quantum causal effects}

For finite-dimensional quantum systems,  the maximum causal effect  of a channel $\cal N:  {\sf St} (A) \to {\sf St} (B) $ is given by 
\begin{align}
 \nonumber  \CM   ({\cal N})   &=  \max_{\rho, \, \rho'  \in {\sf St (A)} \, ,\rho \not =  \rho'}  \frac{\norm{\mathcal{N}(\rho)-\mathcal{N}(\rho')}_1  }{\|  \rho -  \rho'  \|_1}  \\
 &\equiv  \max_{|\psi\rangle,{|\psi_{\perp}\rangle   \in  {\cal H}_A , \,   \langle \psi|  \psi_\perp\rangle  = 0 }  }\frac{\norm{\mathcal{N}(|\psi\rangle\langle \psi|)-\mathcal{N}(|\psi_\perp\rangle \langle \psi_\perp|)}_1  }{2} \,, \label{maxquantum}
  \end{align}
  where $\|  \cdot \|_1  $ is  now the trace distance, ${\cal H}_A$ is the Hilbert space of  system $A$, and the second equality follows from    Proposition \ref{prop:pdmax}.

Similarly,  the minimum quantum causal effect is
\begin{align}
 \nonumber  \Cm   ({\cal N})   &=  \min_{\rho , \, \rho' \in {\sf St}  (A), \, \rho \not =  \rho'}  \frac{\norm{\mathcal{N}(\rho)-\mathcal{N}(\rho')}_1  }{\|  \rho -  \rho'  \|_1}  \\
 &\equiv  \min_{\rho , \, \rho_\perp \in {\sf St}  (A), \,    \Tr[\rho\rho_\perp]  = 0 }  \frac{\norm{\mathcal{N}(\rho)-\mathcal{N}(\rho_\perp)}_1  }{2} \,,  \label{minquantum}
  \end{align}
  where the second equality follows from    Proposition \ref{prop:pdmin}.  When the input system is two-dimensional, the orthogonality condition $\Tr[\rho \rho']=  0$ implies that the density matrices $\rho$ and $\rho'$ are rank-one, and therefore  the minimization in Eq. (\ref{minquantum}) can be restricted to orthogonal pure states.    In general, however, a minimization over mixed states is needed.  

An interesting special case arises for quantum channels of the classical-to-classical form 
\begin{align}\label{classicalN}
\N  ( \rho )   =   \sum_{a=1}^{d_A} \sum_{b= 1}^{d_B}    q(  b|  a)  \,  \langle a| \rho  |a\rangle ~  |b  \rangle \langle b| \qquad\forall \rho  \in  {\sf St}  (A)   \, ,
\end{align}
where $\{|a\rangle\}_{  a=1}^{D_A}$ and $\{|b\rangle\}_{  b=1}^{d_B}$ are two fixed orthonormal bases for systems $A$ and $B$, respectively, and $q(b|a)$ is a conditional probability distribution. 
In this case, the maximum causal effect coincides with the ACE of the associated classical channel:
\begin{proposition}\label{prop:quantumeffectclassicalchan}
 For every quantum channel $\N$ of the classical-to-classical form (\ref{classicalN}), one has 
 \begin{align}\label{maxclass}
 \CM  (\N)   =  {\sf ACE}_{X\to Y} \,,
 \end{align}
 where $X$ and $Y$ are  random variables with values in the sets  $\{1,\dots, d_A  \}$ and  $\{1,\dots, d_B  \}$, respectively, and with do-probability $p(b|  {\rm do}  (a))  :  = q  (b|a)$. 
  \end{proposition}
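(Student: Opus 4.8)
Proof proposal. The plan is to reduce the quantum channel $\N$ of the form \eqref{classicalN} to the classical channel $\mathcal{N}_{\rm cl}$ acting on probability distributions over $\{1,\dots,d_A\}$ via $[\mathcal{N}_{\rm cl}(P)](b):=\sum_a q(b|a)\,P(a)$, and then to invoke Proposition \ref{prop:classicalACE} on $\mathcal{N}_{\rm cl}$. The central observation is that for any $\rho\in{\sf St}(A)$ the output $\N(\rho)$ is the diagonal density matrix with entries $[\mathcal{N}_{\rm cl}(P_\rho)](b)$, where $P_\rho(a):=\langle a|\rho|a\rangle$ is the distribution obtained by dephasing $\rho$ in the basis $\{|a\rangle\}$. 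Since the trace norm of a diagonal Hermitian matrix equals the $\ell_1$-norm of its diagonal, this yields the identity $\|\N(\rho)-\N(\rho')\|_1=\|\mathcal{N}_{\rm cl}(P_\rho)-\mathcal{N}_{\rm cl}(P_{\rho'})\|_1$ for all $\rho,\rho'$.

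For the upper bound $\CM(\N)\le{\sf ACE}_{X\to Y}$ I would combine this identity with two elementary facts: first, the dephasing map $\rho\mapsto\sum_a\langle a|\rho|a\rangle\,|a\rangle\langle a|$ is a quantum channel, hence non-increasing on the trace norm, so $\|P_\rho-P_{\rho'}\|_1\le\|\rho-\rho'\|_1$; second, Proposition \ref{prop:classicalACE} applied to $\mathcal{N}_{\rm cl}$ gives $\|\mathcal{N}_{\rm cl}(P)-\mathcal{N}_{\rm cl}(P')\|_1\le{\sf ACE}_{X\to Y}\,\|P-P'\|_1$ for all $P,P'$. Chaining these (and noting that when $P_\rho=P_{\rho'}$ the numerator $\|\N(\rho)-\N(\rho')\|_1$ already vanishes, so the ratio is zero) gives $\|\N(\rho)-\N(\rho')\|_1\le{\sf ACE}_{X\to Y}\,\|\rho-\rho'\|_1$ for every pair $\rho\neq\rho'$; taking the supremum over such pairs in Definition \ref{def:GCEmax} yields the inequality.

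For the matching lower bound I would simply exhibit a concrete input pair attaining the classical value. Let $a_\star,a'_\star$ be maximizers in the definition \eqref{ACEnormnonbinary} of ${\sf ACE}_{X\to Y}$; if ${\sf ACE}_{X\to Y}=0$ then $\N$ is a discard-and-reprepare process and $\CM(\N)=0$ by Faithfulness (Proposition \ref{prop:propertiesmax}), so we may assume $a_\star\neq a'_\star$. Feeding the orthogonal pure states $|a_\star\rangle$ and $|a'_\star\rangle$ into $\N$ produces $\N(|a_\star\rangle\langle a_\star|)=\sum_b q(b|a_\star)\,|b\rangle\langle b|$ and likewise for $a'_\star$, so their half trace distance equals $\tfrac12\sum_b|q(b|a_\star)-q(b|a'_\star)|={\sf ACE}_{X\to Y}$. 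By the pure-state formula for the maximum causal effect in Eq. \eqref{maxquantum} (which follows from Proposition \ref{prop:pdmax}), this forces $\CM(\N)\ge{\sf ACE}_{X\to Y}$, and together with the upper bound establishes the equality \eqref{maxclass}.

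I do not expect a genuine obstacle: the argument is essentially bookkeeping. The only point requiring a little care is the reduction step — confirming that the trace norm restricted to the commuting, diagonal outputs of $\N$ coincides with the total variation distance of the associated distributions, and that the contractivity of the dephasing channel is used in the correct direction, namely to shrink the \emph{denominator} $\|\rho-\rho'\|_1$ down to $\|P_\rho-P_{\rho'}\|_1$, which is exactly what is needed for the upper bound. Everything else is immediate from Propositions \ref{prop:classicalACE} and \ref{prop:pdmax}.
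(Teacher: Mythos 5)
Your proposal is correct and follows essentially the same route as the paper's proof in Appendix~\ref{app:quantumeffectclassicalchan}: both reduce $\N$ to the classical channel via dephasing, use the contractivity $\|\rho_{\rm diag}-\rho'_{\rm diag}\|_1\le\|\rho-\rho'\|_1$ together with Proposition~\ref{prop:classicalACE} for the upper bound, and obtain the lower bound by noting that diagonal (basis-state) inputs already realize the classical ACE. The only cosmetic difference is that you exhibit the explicit maximizing pair $|a_\star\rangle,|a'_\star\rangle$ where the paper argues more abstractly that the supremum over diagonal pairs is dominated by the supremum over all pairs.
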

The proof is provided in Appendix \ref{app:quantumeffectclassicalchan}.

Proposition \ref{prop:quantumeffectclassicalchan}  provides further evidence that the maximum causal effect is the correct quantum version of the ACE. On the other hand,  every channel of the classical form (\ref{classicalchannel}) has zero value of the minimum quantum causal effect.  More generally,  the minimum quantum causal effect is zero for every channel that decoheres the input  on a given orthonormal basis, that is, any channel $\N$ of the classical-to-quantum form 
\begin{align}\label{decoherence}
\N (\rho)   =  \sum_{a=1}^{d_A}   \,  \langle a|  \rho  |a\rangle\,  \,\rho_a
\end{align}
where $(\rho_a)_{a=1}^{d_A}$ is a set of density matrices for the output system. 

\begin{proposition}
For every channel $\N$ of the classical-to-quantum form (\ref{decoherence}), one has  \begin{align}\label{minclass}
 \Cm(\N)  =  0 \,.        
 \end{align}
\end{proposition}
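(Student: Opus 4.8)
The plan is to show that a channel of the form \eqref{decoherence} collapses a pair of distinct input states to the same output, which immediately forces the infimum defining $\Cm(\N)$ to be zero. The starting observation is that $\N$ in Eq.~(\ref{decoherence}) depends on its input $\rho$ only through the diagonal entries $\bra{a}\rho\ket{a}$ in the fixed orthonormal basis $\{\ket a\}_{a=1}^{d_A}$. Hence any two states sharing the same diagonal are mapped to the same density matrix of the output system, and since such pairs of distinct states exist whenever $d_A\ge 2$, the numerator $\norm{\N(\rho)-\N(\rho')}_1$ can be made to vanish while the denominator $\norm{\rho-\rho'}_1$ stays bounded away from zero.

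Concretely, assuming $d_A\ge 2$, I would take the two pure states $\ket{\psi_\pm}:=(\ket1\pm\ket2)/\sqrt2$. They are orthogonal, hence perfectly distinguishable, and both have diagonal $(1/2,1/2,0,\dots,0)$ in the chosen basis, so that
\begin{align}
\N\big(\proj{\psi_+}\big)=\frac{\rho_1+\rho_2}{2}=\N\big(\proj{\psi_-}\big)\,.
\end{align}
Therefore $\norm{\N(\proj{\psi_+})-\N(\proj{\psi_-})}_1=0$, while $\norm{\proj{\psi_+}-\proj{\psi_-}}_1=2$ because the two states are orthogonal pure states.

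Feeding this pair into the definition \eqref{Eq:GCE_min} — equivalently into Eq.~(\ref{minimumondistinguishablestates}) of Proposition~\ref{prop:pdmin}, which applies since the two states are perfectly distinguishable — gives $\Cm(\N)\le 0/2=0$, and together with the lower bound $\Cm(\N)\ge 0$ from Proposition~\ref{prop:propertiesmin} this yields $\Cm(\N)=0$. I do not expect any genuine obstacle in this argument; the only point deserving a line of comment is the degenerate case $d_A=1$, in which $\St(A)$ is a single point, no pair of distinct input states exists, and the channel is simply a discard-and-reprepare process.
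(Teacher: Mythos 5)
Your proof is correct and takes essentially the same approach as the paper: both arguments exhibit a pair of orthogonal pure states with identical diagonals in the fixed basis (so that $\N$ maps them to the same output), the paper using two Fourier-basis vectors $|e_1\rangle,|e_2\rangle$ and you using $(|1\rangle\pm|2\rangle)/\sqrt 2$. The choice of witness pair is immaterial; the argument goes through identically.
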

{\bf Proof.}   Let $\{  |e_k\rangle\}_{k=1}^{d_A}$ be the Fourier basis for system $A$, defined as $|e_k\rangle:  =  \sum_{x=1}^{d_A}   \exp  [  2\pi i kx/d_A] \,  |x\rangle/\sqrt {d_A}$ for every $k$.   Setting $  \rho :  =  |e_1\rangle \langle e_1|$ and $\rho'=  |e_2\rangle\langle e_2|$ we then have $\|  \rho-  \rho'\|_1  =  2$ and $\N (\rho)  = \N(\rho')  = \sum_a  \rho_a/{d_A}$.  Hence,  $\|\N (\rho)  - \N(\rho')\|_1  = 0 $. By definition of minimum causal effect, this proves Eq. (\ref{minclass}). 
\qed
  \medskip 
  
This result shows that the minimum quantum causal effect  witnesses a special type of causal influence which is absent in all channels of the classical-to-quantum form, and therefore also in all channels of the classical-to-classical form.  Later in this section, we will see that a given quantum channel has high value of the minimum quantum causal effect  if and only if it is approximately correctable, thereby linking the minimum quantum causal effect to the preservation of quantum information.

\subsection{Application to classical communication through quantum noisy channels}

The maximum causal effect provides a lower bound on the ability of quantum communication channels to transmit classical information.   For a quantum channel ${\cal N}  :{\sf St}   (A) \to {\sf St}   (B)$,  the classical capacity $C(\N)$ is the maximum rate at which information can be reliably transmitted per channel use, in the limit of asymptotically many uses   \cite{Holevo73,Schumacher1997,Holevo1998}.    
  The classical capacity can be lower bounded in terms of the maximum causal effect, as in the following:   

\begin{proposition}\label{lem:classical_capacity}
For every quantum channel ${\cal N}$, the classical capacity $C(\mathcal{N})$ satisfies the lower bound 
\begin{align}\label{capacitybound1}
C(\mathcal{N}) \ge 1 - h_2\left(\frac{1- \CM  (\cal N)  }2 \right)  \,, 
\end{align}
where  $h_2(x):=-x\log_2(x)-(1-x)\log_2(1-x)$ is the binary  entropy.    
\end{proposition}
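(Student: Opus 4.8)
The plan is to lower-bound the classical capacity by exhibiting an explicit coding strategy: use the channel to send one of two classical messages by preparing one of two perfectly distinguishable pure states at the input, and then bound the mutual information of the induced binary-input channel. The key fact is that, by Proposition~\ref{prop:pdmax}, there exist perfectly distinguishable pure states $\psi,\psi_\perp$ achieving $\|\mathcal{N}(\psi)-\mathcal{N}(\psi_\perp)\|_1 = 2\CM(\mathcal{N})$. Feeding these two states in with equal prior $1/2$ and performing the Helstrom measurement that optimally discriminates $\mathcal{N}(\psi)$ from $\mathcal{N}(\psi_\perp)$ produces a binary symmetric-type channel; one then invokes the Holevo--Schumacher--Westmoreland theorem (or simply the fact that one-shot accessible information lower-bounds the capacity) to conclude.

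First I would recall that $C(\mathcal{N}) \ge \chi(\mathcal{N}) \ge I(X:Y)$ for any choice of input ensemble $\{p_x, \rho_x\}$ and any decoding measurement, where $X$ is the message and $Y$ the measurement outcome; in fact $C(\mathcal{N})$ dominates the accessible information of any induced classical channel. Then I would take the ensemble $\{(1/2,\psi),(1/2,\psi_\perp)\}$ with $\psi,\psi_\perp$ the optimal pair from Proposition~\ref{prop:pdmax}. The minimum-error probability for discriminating the two equiprobable output states is, by the Helstrom bound / Eq.~(\ref{perr}) with $p=1/2$,
\begin{align}
p_{\rm err}^{\min} = \frac{1-\tfrac12\|\mathcal{N}(\psi)-\mathcal{N}(\psi_\perp)\|_1}{2} = \frac{1-\CM(\mathcal{N})}{2}\,.
\end{align}
Using the optimal two-outcome Helstrom measurement as the decoder gives a binary channel $X\to Y$ that is symmetric with crossover probability at most $p_{\rm err}^{\min}$; its mutual information with uniform input is $1 - h_2(p_{\rm err}^{\min})$. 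Since $h_2$ is increasing on $[0,1/2]$ and $p_{\rm err}^{\min}\le 1/2$, we get $I(X:Y) \ge 1 - h_2\!\big((1-\CM(\mathcal{N}))/2\big)$, which yields the claimed bound.

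The one point requiring a little care is the symmetry of the induced binary channel: the Helstrom measurement need not give exactly equal error probabilities on the two inputs, so the channel is in general a binary asymmetric channel whose two crossover probabilities $p_0, p_1$ satisfy $(p_0+p_1)/2 = p_{\rm err}^{\min}$. I would handle this either by noting that the mutual information of a binary-input channel with uniform prior and average error $\bar p = p_{\rm err}^{\min}$ is at least $1 - h_2(\bar p)$ (convexity of $h_2$ together with the data-processing/degradation argument comparing to the binary symmetric channel with crossover $\bar p$), or by symmetrizing the protocol: randomly relabel the two messages, which turns the effective channel into a genuine BSC with crossover $\bar p = p_{\rm err}^{\min}$ without changing the mutual information. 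Either way the bound $1 - h_2\big((1-\CM(\mathcal N))/2\big)$ follows. I expect this symmetrization/convexity step to be the only mildly nontrivial obstacle; the rest is a direct assembly of Proposition~\ref{prop:pdmax}, the Helstrom formula, and the HSW capacity lower bound.
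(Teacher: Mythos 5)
Your proposal is correct and follows essentially the same route as the paper: encode into the optimal pair of orthogonal pure states with uniform prior, decode with the Helstrom measurement so the error probability is $(1-\CM(\N))/2$, and lower-bound the capacity by the resulting mutual information. The only cosmetic difference is that the paper dispatches the asymmetric-crossover issue you flag at the end via Fano's inequality (which for a binary message gives $H(A|B)\le h_2(\bar p)$ directly), whereas you use convexity/symmetrization --- both are valid and equivalent here.
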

The lower bound (\ref{capacitybound1}), proven in Appendix~\ref{app:classical_capacity},  is obtained by considering communication protocols  that use the channel a single time and encode information into  two orthogonal states.  For this reason, the bound  (\ref{capacitybound1}) is generally not tight: in general, it is known that {\em (i)} achieving the capacity requires multiple uses of the channel \cite{Holevo73,Schumacher1997,Holevo1998},  {\em (ii)} orthogonal states may not be optimal for achieving the capacity \cite{Fuchs1997}, and {\em (iii)}  for quantum systems of dimension $d\ge 2$,  achieving the capacity generally requires more than two input states.    Despite the lack of tightness,  the bound (\ref{capacitybound1}) is useful because it provides a way to detect a non-zero classical capacity from a quantity that can be estimated from experiments.  

In addition, the maximum causal effect can be used to identify channels that are poor for the transmission of classical information.  From Proposition \ref{prop:propertiesmax}, we know that    $\CM   ({\cal N})=  0$ if and only if $\cal N$ is a discard-and-reprepare channel, that is, a channel that produces a fixed output state, independently of the input state.   In quantum theory,    this statement is valid also in the approximate case:    
\begin{proposition}[Faithfulness]\label{Lem:faithful}
    For every quantum channel ${\cal N} :  {\sf St}  (A)  \to  {\sf St}  (B)$,   there exists a channel ${\cal N}_0 :  {\sf St}  (A)  \to  {\sf St}  (B)$,  of the  discard-and-reprepare form $\mathcal{N}_0(\rho)  =  \sigma_0  {\sf Tr} $ for some state $\sigma_0 \in  {\sf St} (B)$, such that
    \begin{align}
     \norm{\mathcal{N}-\mathcal{N}_0}_\diamond \le  4  d_A  \,    \CM  ({\cal N}) \, ,  
    \end{align}
    where $d_A$ is the dimension of system $A$'s Hilbert space,  and $\|  \cdot \|_{\diamond}$ is the diamond norm, defined as $\|  {\cal M} \|_{\diamond}   :  =  \sup_E  \max_{\rho  \in {\sf St}  (  A \otimes  E)}    \|  ({\cal M} \otimes {\cal I}_E)   (\rho) \|_1$ for a Hermitian-preserving map ${\cal M}:  L(\H _A) \to  L(\H _B)$.    
\end{proposition}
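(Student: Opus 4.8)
The plan is to build $\mathcal{N}_0$ out of $\mathcal{N}$ itself and then estimate the diamond norm directly on pure bipartite inputs. Write $c:=\CM(\mathcal{N})$, fix any reference state $\rho_0\in{\sf St}(A)$ (for definiteness $\rho_0=I/d_A$), and set $\mathcal{N}_0(\rho):=\sigma_0\,\Tr[\rho]$ with $\sigma_0:=\mathcal{N}(\rho_0)$. Since $\|\rho-\rho'\|_1\le 2$ for every pair of states, the definition of $\CM$ in Eq.~\eqref{maxquantum} gives at once $\|\mathcal{N}(\rho)-\mathcal{N}(\rho')\|_1\le 2c$ for all $\rho,\rho'$, and in particular $\|\mathcal{N}(\rho)-\sigma_0\|_1\le 2c$ for all $\rho\in{\sf St}(A)$.

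The next step is to bound $\mathcal{N}$ on off-diagonal matrix units. Fix an orthonormal basis $\{\ket{k}\}_{k=1}^{d_A}$ of $\mathcal{H}_A$, and for $k\ne l$ split $\ketbra{k}{l}$ into its Hermitian and anti-Hermitian parts; each part is, up to a unimodular phase, a difference of two \emph{orthogonal} pure-state projectors, namely $\ketbra{k}{l}+\ketbra{l}{k}=\proj{+}-\proj{-}$ with $\ket{\pm}=(\ket{k}\pm\ket{l})/\sqrt{2}$, and $\ketbra{k}{l}-\ketbra{l}{k}=i\big(\proj{a}-\proj{b}\big)$ with $\ket{a},\ket{b}=(\ket{k}\pm i\ket{l})/\sqrt{2}$. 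Applying $\mathcal{N}$, taking trace norms and invoking Eq.~\eqref{maxquantum} for each pair gives
\begin{align}
\|\mathcal{N}(\ketbra{k}{l})\|_1
&\le \tfrac12\|\mathcal{N}(\proj{+})-\mathcal{N}(\proj{-})\|_1+\tfrac12\|\mathcal{N}(\proj{a})-\mathcal{N}(\proj{b})\|_1\le 2c\,. \nonumber
\end{align}
This is where the hypothesis on $\CM$ does genuine work for non-diagonal operators.

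To estimate the diamond norm, note first that by convexity of the trace norm it suffices to take a \emph{pure} input $\ket{\psi}$ on $A\otimes E$; such a state has Schmidt rank $r\le d_A$, say $\ket{\psi}=\sum_{k=1}^{r}\sqrt{\lambda_k}\,\ket{k}_A\ket{k}_E$, so that $\rho_E:=\Tr_A[\proj{\psi}]=\sum_k\lambda_k\proj{k}_E$. Then
\begin{align}
(\mathcal{N}\otimes\mathcal{I}_E)(\proj{\psi})-\sigma_0\otimes\rho_E
&=\sum_{k}\lambda_k\big(\mathcal{N}(\proj{k})-\sigma_0\big)\otimes\proj{k}_E \nonumber\\
&\quad+\sum_{k\ne l}\sqrt{\lambda_k\lambda_l}\;\mathcal{N}(\ketbra{k}{l})\otimes\ketbra{k}{l}_E\,, \nonumber
\end{align}
and the triangle inequality bounds the first sum by $2c\sum_k\lambda_k=2c$ and the second by $2c\sum_{k\ne l}\sqrt{\lambda_k\lambda_l}=2c\big[(\sum_k\sqrt{\lambda_k})^2-1\big]\le 2c(d_A-1)$, the last step being Cauchy--Schwarz, $(\sum_{k=1}^{r}\sqrt{\lambda_k})^2\le r\sum_k\lambda_k=r\le d_A$. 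Adding up and taking the supremum over $\ket{\psi}$ and $E$ yields $\|\mathcal{N}-\mathcal{N}_0\|_\diamond\le 2c\,d_A\le 4d_A\,\CM(\mathcal{N})$, as claimed (the argument in fact gives the sharper constant $2d_A$).

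The one genuinely delicate point is the off-diagonal estimate $\|\mathcal{N}(\ketbra{k}{l})\|_1\le 2c$ together with the Cauchy--Schwarz bookkeeping of the Schmidt weights: a more naive route — for instance bounding the Choi operator of $\mathcal{N}-\mathcal{N}_0$ block by block — spends a factor $d_A^2$ rather than $d_A$ and overshoots the bound. Rewriting each matrix unit through a pair of orthogonal pure states is precisely what converts the $\CM$ hypothesis into control of the off-diagonal blocks, and summing the $\sqrt{\lambda_k}$ via Cauchy--Schwarz is what keeps the overhead linear in $d_A$.
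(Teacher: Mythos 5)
Your proof is correct, and it reaches the stated conclusion with the sharper constant $2d_A$ in place of $4d_A$. The paper proves the same statement by factoring the estimate through two general-purpose lemmas: first $\|\mathcal M\|_\diamond\le d_A\|\mathcal M\|_1$ for Hermitian-preserving maps (proved, as in your argument, by a Schmidt decomposition of the purification together with the Cauchy--Schwarz bound $(\sum_k\sqrt{\lambda_k})^2\le d_A$), and second $\|\mathcal M\|_1\le 2\max_{\rho\in{\sf St}(A)}\|\mathcal M(\rho)\|_1$ (proved by splitting a general operator into Hermitian and anti-Hermitian parts and then into positive and negative parts). Applied to $\mathcal M=\mathcal N-\mathcal N_0$ with the same choice $\sigma_0=\mathcal N(\rho_0)$ and the same observation $\|\mathcal N(\rho)-\sigma_0\|_1\le 2\,\CM(\mathcal N)$, this chain yields $4d_A\,\CM(\mathcal N)$. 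You use the same raw ingredients but assemble them into a single direct computation on the Schmidt blocks, and the place where you gain is the off-diagonal estimate: by writing $\ketbra{k}{l}$ as a combination of two differences of \emph{orthogonal} pure projectors and feeding each difference directly into the definition of $\CM$, you get $\|\mathcal N(\ketbra{k}{l})\|_1\le 2\,\CM(\mathcal N)$, whereas the paper's generic induced-norm lemma only gives $2\max_\rho\|\mathcal M(\rho)\|_1\le 4\,\CM(\mathcal N)$ for the same block (it bounds the two positive parts separately rather than as a difference the hypothesis controls). The paper's route buys two reusable lemmas (which it also invokes in Corollary~\ref{cor:correctability}); your route buys a tighter constant for this particular proposition. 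One small presentational point: the matrix-unit bound should be stated for an arbitrary orthonormal pair $\ket{k},\ket{l}$, since the basis you ultimately apply it to is the Schmidt basis of the input $\ket\psi$, which varies with $\psi$ --- your derivation does in fact hold for any orthonormal pair, so nothing breaks, but the phrase ``fix an orthonormal basis'' suggests otherwise.
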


The proof is provided in Appendix~\ref{App:faithful_proof}. 
The above proposition implies that a channel with a small  $\CM$  (compared to the input dimension) is necessarily close to a discard-and-reprepare channel.  The converse is also true due to the continuity of $\CM$  (cf. Proposition \ref{prop:continuity_max}).

\subsection{Application to approximate recoverability of quantum information}

We now show that the minimum causal effect plays an important role the  approximate recovery of quantum information \cite{fawzi2015quantum,Junge_2018}.   The connection to recovery proceeds through the connection to minimum error state discrimination established in Theorem \ref{theo:ddworst}. There, we showed that the minimum causal effect provides a lower bound to to the minimum distinguishability preservation ${\rm DP}_{\min}  (\N)$,  a measure of the channel's ability to preserve the distinguishability of arbitrary pairs of states given with arbitrary prior probabilities.  
Now, we establish a relation between  minimum distinguishability preservation and the recovery of quantum information:

\begin{theorem}\label{theo:correctability}
    For every quantum channel $\N  : {\sf St}  (A)  \to  {\sf St}  (B)$,  there exists a recovery channel  $\cal R: {\sf St}  (B)  \to  {\sf St}  (A)$ such that the upper bound
    \begin{align}
  \nonumber  & \|  {\cal R}  \circ  \N  (\rho)    - \rho\,  \|_1 \\
   &  \qquad \le  4  \sqrt{\ln 2} \,       \sqrt{ \frac{1-{\rm DP}_{\min}  (\N)}2}    \,  \sqrt{    \frac {d_A}2  + \ln (d_A+1)  }  \label{prima}
   \end{align}
    holds for every state $\rho  \in {\sf St} (A)$.  
\end{theorem}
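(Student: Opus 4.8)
The plan is to reduce the claim to a statement about how much $\N$ can contract the relative entropy, and then to apply a universal recovery theorem. Throughout write $\epsilon:=1-{\rm DP}_{\min}(\N)$ and fix a reference state $\sigma$ on $\mathcal H_A$, the natural first choice being $\sigma=I_A/d_A$. The first step is to reformulate the hypothesis divergence-theoretically: applying Definition~\ref{def:dp} to the Hermitian operator $\rho-\gamma\sigma$ (rescaled to the form $p\tilde\rho-(1-p)\tilde\sigma$ with $\tilde\rho,\tilde\sigma$ states and $p\in[0,1]$) gives $\|\N(\rho-\gamma\sigma)\|_1\ge(1-\epsilon)\|\rho-\gamma\sigma\|_1$ for every state $\rho$ and every $\gamma\ge0$. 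Writing the hockey--stick divergence as $E_\gamma(\rho\|\sigma)=\operatorname{Tr}(\rho-\gamma\sigma)_+=\tfrac12\big(\|\rho-\gamma\sigma\|_1+1-\gamma\big)$ and using trace preservation, this is equivalent to $0\le E_\gamma(\rho\|\sigma)-E_\gamma(\N(\rho)\|\N(\sigma))\le\tfrac{\epsilon}{2}\|\rho-\gamma\sigma\|_1$, the lower bound being the ordinary data-processing inequality for the trace norm.

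Second, I would plug this into a standard integral representation of the relative entropy in terms of hockey--stick divergences --- schematically $D(\rho\|\sigma)=\int_0^\infty\gamma^{-1}\big(E_\gamma(\rho\|\sigma)-(1-\gamma)_+\big)\,\mathrm d\gamma$, the linear term securing convergence at $\gamma<1$. Subtracting the same identity for $\N(\rho),\N(\sigma)$, the linear corrections cancel (trace preservation again), so $D(\rho\|\sigma)-D(\N(\rho)\|\N(\sigma))=\int_0^\infty\gamma^{-1}\big(E_\gamma(\rho\|\sigma)-E_\gamma(\N(\rho)\|\N(\sigma))\big)\,\mathrm d\gamma$ is a convergent integral of a nonnegative integrand bounded by $\tfrac{\epsilon}{2\gamma}\|\rho-\gamma\sigma\|_1$ and, for $\gamma\le1$, also by $\tfrac1\gamma\operatorname{Tr}(\gamma\sigma-\rho)_+$. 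Estimating this integral --- split at $\gamma=1$; on $(1,\infty)$ the $\epsilon$-bound produces the linear-in-$d_A$ contribution, truncated at $\gamma=d_A\|\rho\|_\infty$ beyond which $E_\gamma(\rho\|\sigma)$ vanishes; on $(0,1)$ combine the two bounds --- should yield $D(\rho\|\sigma)-D(\N(\rho)\|\N(\sigma))\le\epsilon\big(\tfrac{d_A}{2}+\ln(d_A+1)\big)$, uniformly over all input states $\rho$.

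Third, I would invoke the universal recovery theorems of Fawzi--Renner and Junge \emph{et al.}~\cite{fawzi2015quantum,Junge_2018}: for the fixed reference $\sigma$ there is a recovery channel $\mathcal{R}:{\sf St}(B)\to{\sf St}(A)$ --- a suitable (rotated) Petz map built only from $\N$ and $\sigma$, hence independent of $\rho$ --- such that $-\log F\big(\rho,\mathcal{R}(\N(\rho))\big)\le D(\rho\|\sigma)-D(\N(\rho)\|\N(\sigma))$ for all $\rho$, with $F$ the fidelity. The Fuchs--van de Graaf inequality $\|\rho-\tau\|_1\le2\sqrt{1-F(\rho,\tau)^2}$ together with $1-F^2\le2\ln2\,(-\log F)$ then gives $\|\mathcal{R}(\N(\rho))-\rho\|_1\le2\sqrt{2\ln2}\,\sqrt{D(\rho\|\sigma)-D(\N(\rho)\|\N(\sigma))}$, and substituting the step-two bound produces exactly inequality~\eqref{prima}. (As $\epsilon\to0$ the relative-entropy contraction goes to $0$, so $\mathcal{R}$ becomes a perfect inverse, consistently with the exact result of Blume-Kohout \emph{et al.}~\cite{Blume-kohut_IPS_2010}.)

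The main obstacle is the uniform contraction estimate of the second step. The pointwise bound $\tfrac{\epsilon}{2\gamma}\|\rho-\gamma\sigma\|_1$ is not integrable at $\gamma=0$, so one must combine it with the trace-preservation bound and choose the reference state with care, so that the low-$\gamma$ part of the integral stays genuinely $O(\epsilon)$ with a \emph{dimension}-dependent prefactor (the $\ln(d_A+1)$) rather than the $\ln(1/\epsilon)$ that a crude crossover argument produces; this likely needs the full strength of ${\rm DP}_{\min}$ (all pairs of states, all priors), not just the pair $(\rho,\sigma)$, to tame the $\gamma\to0$ regime, perhaps with a case distinction according to whether $\epsilon$ lies above or below a threshold of order $d_A^{-2}$. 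Pinning down the numerical constants --- in particular tracing the factor $\sqrt{\ln2}$ through the bit/nat conventions in the recovery inequality and in Fuchs--van de Graaf --- is then routine. A fallback route, should the divergence estimate prove too lossy, bounds instead the distinguishability leaked to the environment, i.e.\ shows that the complementary channel of $\N$ is $O(\epsilon)$-close in diamond norm to a constant channel, and then applies the Stinespring-continuity bound of Kretschmann--Schlingemann--Werner~\cite{kretschmann2008information}; this also produces a $\sqrt{d_A}$-type factor, but obtaining the sharp constant appears harder.
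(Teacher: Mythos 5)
Your proposal follows the same overall architecture as the paper's proof: control the contraction of the relative entropy under $\N$ via an integral representation whose integrand is a difference of trace norms bounded by ${\rm DP}_{\min}$, then feed the result into the universal recovery bound of Junge \emph{et al.}~\cite{Junge_2018} and finish with Fuchs--van de Graaf. The gap is exactly where you flag it, and it is not a constant-chasing issue. With the reference $\sigma = I_A/d_A$ and $\rho$ arbitrary (say pure), the low-$\gamma$ part of your integral is genuinely of order $1$ rather than of order $\epsilon$: the only $\epsilon$-free control available is $E_\gamma(\rho\|\sigma) - E_\gamma(\N(\rho)\|\N(\sigma)) \le \Tr\big[(\gamma\sigma - \rho)_+\big]$, which for pure $\rho$ equals $\gamma(d_A-1)/d_A$, so the integrand $\gamma^{-1}(\cdots)$ sits near $1$ until the ${\rm DP}_{\min}$-bound $\epsilon/(2\gamma)\cdot\|\rho-\gamma\sigma\|_1 \approx \epsilon/(2\gamma)$ takes over at $\gamma \sim \epsilon$. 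The crossover then unavoidably yields $\epsilon\ln(1/\epsilon)$, which beats $\epsilon\ln(d_A+1)$ only when $\epsilon \gtrsim 1/d_A$; in the regime $\epsilon \ll 1/d_A$ your step two does not deliver the claimed uniform estimate, and the case distinction you gesture at does not supply the missing argument.

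The one idea absent from your proposal --- and the crux of the paper's proof --- is not to estimate $D(\rho\|\chi) - D(\N(\rho)\|\N(\chi))$ at all, but to run the whole argument on the \emph{mixed} pair $\sigma := (\rho + I_A/d_A)/2$ and $\chi := I_A/d_A$. These satisfy $\chi/2 \le \sigma \le \tfrac{d_A+1}{2}\,\chi$ for every $\rho$, so Jen\v cov\'a's integral representation (the operator-ordered form of your hockey-stick identity) runs only over $s\in[1/2,(d_A+1)/2]$; the singular small-$\gamma$ regime never arises, and the integral evaluates directly to $\tfrac{1-{\rm DP}_{\min}(\N)}{2}\big(\ln(d_A+1) + d_A/2\big)$ with no $\ln(1/\epsilon)$. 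One then uses the fact that the Junge \emph{et al.} recovery map fixes the reference, ${\cal R}\circ\N(\chi) = \chi$, so that ${\cal R}\circ\N(\sigma) - \sigma = \tfrac12\big({\cal R}\circ\N(\rho) - \rho\big)$ and the bound on the mixture converts into the bound on $\rho$ at the cost of a factor $2$. Your fallback via Kretschmann--Schlingemann--Werner would give a qualitatively similar statement but, as you anticipate, with worse dimension dependence.
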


The proof is provided in Appendix \ref{app:correctability}.  Theorem \ref{theo:correctability} shows that approximately preserving the distinguishability of {\em pairs} of states, given with arbitrary prior probabilities, is a sufficient condition for approximate correctability: 
\begin{corollary}\label{cor:approximatecor}
If channel $\N$ approximately preserves the distinguishability of quantum states with arbitrary prior probabilities, namely  ${\sf DP}_{\min}  (\N)  \ge 1-\epsilon$, then   it can be approximately corrected, up to an error  $2 \sqrt{3\ln 2  \,  \epsilon \, d_A}$.      \end{corollary}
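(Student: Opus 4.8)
The plan is to derive the corollary directly from Theorem~\ref{theo:correctability}, with no new ideas required beyond a short numerical simplification. First I would apply Theorem~\ref{theo:correctability} to the channel $\N$, obtaining a recovery channel ${\cal R}:{\sf St}(B)\to{\sf St}(A)$ for which the bound~(\ref{prima}) holds uniformly over all $\rho\in{\sf St}(A)$. Then I would substitute the hypothesis ${\rm DP}_{\min}(\N)\ge 1-\epsilon$, which gives $\tfrac{1-{\rm DP}_{\min}(\N)}{2}\le \tfrac{\epsilon}{2}$ and hence bounds the right-hand side of~(\ref{prima}) by $4\sqrt{\ln 2}\,\sqrt{\epsilon/2}\,\sqrt{d_A/2+\ln(d_A+1)}$.

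Next I would clean up the constants. The scalar prefactor simplifies as $4\sqrt{\ln 2}\,\sqrt{\epsilon/2}=2\sqrt{2\ln 2\,\epsilon}$, so it remains to control the dimension-dependent factor $\sqrt{d_A/2+\ln(d_A+1)}$. Here I would use the elementary inequality $\ln(d_A+1)\le d_A$, valid for every integer $d_A\ge 1$ (indeed $e^{d_A}\ge 1+d_A=d_A+1$ from $e^x\ge 1+x$), to conclude $d_A/2+\ln(d_A+1)\le 3d_A/2$ and therefore $\sqrt{d_A/2+\ln(d_A+1)}\le\sqrt{3d_A/2}$. Combining the two estimates yields, for every state $\rho$,
\begin{align}
\|{\cal R}\circ\N(\rho)-\rho\|_1\le 2\sqrt{2\ln 2\,\epsilon}\;\sqrt{\tfrac{3d_A}{2}}=2\sqrt{3\ln 2\;\epsilon\;d_A}\,,
\end{align}
which is precisely the claimed error bound.

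I do not expect any genuine obstacle: all the analytic work is already absorbed into the proof of Theorem~\ref{theo:correctability}, and what is left is bookkeeping. The only point worth stating carefully is the validity of $\ln(d_A+1)\le d_A$ over the whole relevant range of dimensions, which is what licenses packaging the various constants into the single clean factor $2\sqrt{3\ln 2}$; since this inequality holds for all $d_A\ge 1$, the simplification is legitimate and the bound is uniform in $\rho$, giving approximate correctability of $\N$ with error $2\sqrt{3\ln 2\,\epsilon\,d_A}$.
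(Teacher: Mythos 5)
Your proposal is correct and follows exactly the paper's own argument: apply Theorem~\ref{theo:correctability}, substitute ${\rm DP}_{\min}(\N)\ge 1-\epsilon$, and use $\ln(d_A+1)\le d_A$ to absorb the dimension factor into the constant $2\sqrt{3\ln 2}$. The arithmetic checks out, so nothing further is needed.
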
  
{\bf Proof.} Immediate from Theorem \ref{theo:correctability} and from the bound $\ln (d_A+1) \le d_A$, valid for every $d_A \ge 2$.  \qed 

\medskip 

Note that a converse result also holds:  
\begin{corollary}\label{cor:approximatecor2}
  If  channel  $\N$ is approximately correctable ({\em i.e.,} if there exists a channel ${\cal R}$ such that    $\|  {\cal R  \circ \N}  (\rho)  -  \rho\|_1  \le \epsilon$ for every state $\rho  \in {\sf St}  (A)$), then  it approximately preserves the distinguishability of states; quantitatively,  ${\sf DP}_{\min}  (\N)  \ge 1-2\epsilon$.   
\end{corollary}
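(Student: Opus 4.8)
\medskip
\noindent\emph{Proof proposal.} The plan is to use the only structural hypothesis available — that $\mathcal{R}\circ\N$ is within $\epsilon$ of the identity channel on \emph{every} state — and to promote it to a homogeneous bound on the linear error map $\mathcal{E}:=\mathcal{R}\circ\N-\mathrm{id}$, strong enough to control the ratio defining ${\rm DP}_{\min}(\N)$ uniformly, including when the denominator $\|p\rho-(1-p)\rho'\|_1$ is arbitrarily small. First I would fix a recovery channel $\mathcal{R}$ with $\|\mathcal{R}\circ\N(\rho)-\rho\|_1\le\epsilon$ for all $\rho\in{\sf St}(A)$ and note that $\mathcal{E}$ is linear, Hermitian-preserving and trace-annihilating.

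The crucial step is to show that $\|\mathcal{E}(X)\|_1\le\epsilon\,\|X\|_1$ for \emph{every} Hermitian operator $X$, not merely for density matrices. To prove this I would take the Jordan decomposition $X=X_+-X_-$ into orthogonal positive and negative parts, so that $\|X\|_1=\Tr[X_+]+\Tr[X_-]$, write $X=\Tr[X_+]\,\big(X_+/\Tr[X_+]\big)-\Tr[X_-]\,\big(X_-/\Tr[X_-]\big)$ as a combination of two density matrices (dropping a term if the corresponding part is zero), apply the hypothesis to each normalized part, and use the triangle inequality for $\|\cdot\|_1$. I expect this to be \emph{the} obstacle of the argument: a direct triangle-inequality estimate only yields $\|\N(X)\|_1\ge\|X\|_1-\epsilon$, which is vacuous once $\|X\|_1<\epsilon$, whereas the homogeneous estimate scales correctly with $\|X\|_1$ and neutralizes the small-denominator regime.

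Granting this, the rest is routine. For arbitrary $\rho\neq\rho'$ in ${\sf St}(A)$ and $p\in[0,1]$ I would set $X:=p\rho-(1-p)\rho'$ (a quick trace argument shows $X\neq0$), and then combine linearity of $\N$, data processing for the operational norm under the channel $\mathcal{R}$ (a channel never increases $\|\cdot\|_1$~\cite{chiribella2010probabilistic}), and the identity $\mathcal{R}\circ\N(X)=X+\mathcal{E}(X)$ to obtain
\begin{align*}
\|\N(X)\|_1 &\ge \|\mathcal{R}\big(\N(X)\big)\|_1 = \|X+\mathcal{E}(X)\|_1 \\
&\ge \|X\|_1-\|\mathcal{E}(X)\|_1 \ge (1-\epsilon)\,\|X\|_1 .
\end{align*}
Dividing by $\|X\|_1>0$, taking the infimum over $\rho\neq\rho'$ and the minimum over $p$ then gives ${\rm DP}_{\min}(\N)\ge 1-\epsilon$, which in particular implies the stated bound ${\rm DP}_{\min}(\N)\ge 1-2\epsilon$.
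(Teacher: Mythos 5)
Your proof is correct, but it takes a genuinely different route from the paper's. The paper argues at the level of the causal-effect functionals: it first applies the continuity of $\Cm$ (Proposition~\ref{prop:continuity_minimum}) to ${\cal R}\circ\N$ to get $\Cm({\cal R}\circ\N)\ge 1-\epsilon$, then the data-processing inequality (Proposition~\ref{prop:propertiesmin}) to get $\Cm(\N)\ge 1-\epsilon$, and finally invokes Theorem~\ref{theo:ddworst}, ${\rm DP}_{\min}(\N)\ge 2\,\Cm(\N)-1$, to land on $1-2\epsilon$; the factor of $2$ is an artifact of that last conversion, which rests on the estimate $\|p\rho-(1-p)\rho'\|_1\ge\|\rho-\rho'\|_1-1$ from Lemma~\ref{lem:withoutp}. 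You instead work directly with the definition of ${\rm DP}_{\min}$, and your key move — promoting the pointwise hypothesis $\|{\cal E}(\rho)\|_1\le\epsilon$ on states to the homogeneous bound $\|{\cal E}(X)\|_1\le\epsilon\|X\|_1$ on all Hermitian $X$ via the Jordan decomposition — is exactly what is needed to control the ratio when the denominator $\|p\rho-(1-p)\rho'\|_1$ is small (the same trick the paper uses elsewhere, in Lemma~\ref{lem:norm1}, but not in this proof). Combined with data processing for the trace norm under ${\cal R}$ and the reverse triangle inequality, this yields the sharper conclusion ${\rm DP}_{\min}(\N)\ge 1-\epsilon$, which of course implies the stated $1-2\epsilon$. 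What each approach buys: the paper's proof is a two-line corollary of machinery it has already built and, along the way, records the intermediate fact $\Cm(\N)\ge 1-\epsilon$ (reused in the next corollary); yours is self-contained, avoids the general-probabilistic-theory hypotheses behind Proposition~\ref{prop:continuity_minimum} and Theorem~\ref{theo:ddworst}, and improves the constant.
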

{\bf Proof.} The continuity of the minimum causal effect (Proposition \ref{prop:continuity_minimum}), applied to channel ${\cal R} \circ \N$ implies $\Cm( {\cal R} \circ \N)  \ge 1-\epsilon$.  Then, the data-processing inequality of the minimum causal effect (Proposition \ref{prop:propertiesmin})   implies $\Cm(  \N)  \ge 1-\epsilon$. Finally,  Theorem \ref{theo:ddworst} implies ${\rm DP}_{\min}  (\N)   \ge  1- 2\epsilon$. \qed 

\medskip

Corollaries \ref{cor:approximatecor}  and \ref{cor:approximatecor2}  provide the robust version of a result by Blume-Kohout, Ng, Poulin, and Viola \cite{Blume-kohut_IPS_2010}, who showed that a quantum channel can be perfectly corrected if and only if it perfectly preserves the distinguishability of arbitrary pairs of states, given with arbitrary prior probabities.

The above results can be equivalently seen as a relation between causal effects and approximate correctability:  
\begin{corollary}
 If channel $\N$ has a high value of the minimum causal effect, namely  $\Cm  (\N)  \ge 1-\epsilon$, then   it can be approximately corrected, up to an error  $2 \sqrt{3\ln 2  \,  \epsilon \, d_A}$.    Conversely, if channel $\N$ can be approximately corrected, up to error $\epsilon$, then it has a high value of the minimum causal effect; quantitatively, $\Cm (\N)  \ge  1-\epsilon$.  
\end{corollary}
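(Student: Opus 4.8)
This statement is a repackaging of results already established, so I would prove it by chaining them together rather than by any new argument, and the two implications are treated separately. It is worth noting at the outset that $\Cm(\N)$ is precisely the $p=1/2$ instance of the infimum defining ${\rm DP}_{\min}(\N)$, so that ${\rm DP}_{\min}(\N)\le\Cm(\N)$ holds trivially; the content we actually need for the forward direction is the opposite-sense bound supplied by Theorem~\ref{theo:ddworst}.

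For the forward direction I would start from the hypothesis $\Cm(\N)\ge 1-\epsilon$ and invoke Theorem~\ref{theo:ddworst} — legitimate since every quantum state space is compact and generated by perfectly distinguishable states — to obtain ${\rm DP}_{\min}(\N)\ge 2\Cm(\N)-1\ge 1-2\epsilon$. Feeding this into Theorem~\ref{theo:correctability} and using the elementary bound $\ln(d_A+1)\le d_A$, so that $\frac{d_A}{2}+\ln(d_A+1)\le\frac{3d_A}{2}$, yields a recovery channel ${\cal R}$ with $\|{\cal R}\circ\N(\rho)-\rho\|_1\le 2\sqrt{3\ln 2\,\epsilon\,d_A}$ for every $\rho\in{\sf St}(A)$ — this is exactly the computation behind Corollary~\ref{cor:approximatecor}, now run with error parameter $2\epsilon$ in place of $\epsilon$. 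The single delicate point is keeping the numerical constant straight as the factor-of-two loss of Theorem~\ref{theo:ddworst} is propagated through the dimension-dependent factor $\sqrt{d_A/2+\ln(d_A+1)}$.

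For the converse, suppose $\N$ is approximately correctable, i.e.\ there is a channel ${\cal R}$ with $\|{\cal R}\circ\N(\rho)-\rho\|_1\le\epsilon$ for all $\rho$. Since $\Cm(\id)=1$, the continuity property of Proposition~\ref{prop:continuity_minimum}, applied to the pair $({\cal R}\circ\N,\id)$, gives $|\Cm({\cal R}\circ\N)-1|\le\epsilon$ and hence $\Cm({\cal R}\circ\N)\ge 1-\epsilon$; the data-processing inequality of Proposition~\ref{prop:propertiesmin} then gives $\Cm({\cal R}\circ\N)\le\Cm(\N)$, so $\Cm(\N)\ge 1-\epsilon$. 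This reproduces the argument already used for Corollary~\ref{cor:approximatecor2} and introduces no new ideas. Overall the corollary presents no conceptual obstacle; the only thing requiring attention is the bookkeeping of constants noted above.
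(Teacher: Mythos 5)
Your proof follows exactly the paper's route: the forward direction chains Theorem~\ref{theo:ddworst} into Theorem~\ref{theo:correctability} with $\ln(d_A+1)\le d_A$, and the converse repeats verbatim the continuity-plus-data-processing argument already given for Corollary~\ref{cor:approximatecor2}. One remark on the ``delicate point'' you flag but do not resolve: running the computation of Corollary~\ref{cor:approximatecor} with $2\epsilon$ in place of $\epsilon$ (which is what Theorem~\ref{theo:ddworst} forces) actually yields $2\sqrt{6\ln 2\,\epsilon\,d_A}=\sqrt{2}\cdot 2\sqrt{3\ln 2\,\epsilon\,d_A}$, so the constant in the statement is off by a factor of $\sqrt{2}$ --- but this discrepancy is inherited from the paper's own proof, which cites the same two theorems without tracking the extra factor, so it is not a defect of your argument relative to the paper's.
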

\begin{proof}  
The direct part follows from Theorems \ref{theo:correctability}
 and \ref{theo:ddworst} and using $\ln(d_A+1)<d_A$.  The converse part was already proven in  the proof of Corollary \ref{cor:approximatecor2}. \end{proof}

 \medskip  

The above results can be summarized in a single qualitative statement:  
\begin{corollary}
 For a quantum channel $\N$, the following are equivalent: 
 \begin{enumerate}
 \item $\N$ has high value of the minimum causal effect,
 \item $\N$ approximately preserves the distinguishability of quantum states,
 \item $\N$ is approximately correctable. 
 \end{enumerate}
 \end{corollary}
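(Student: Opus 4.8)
The plan is to prove the equivalence as a short cycle of implications, each obtained by reading a quantitative estimate already established in the paper at the qualitative level used in the statement. Concretely, I would first fix the dictionary: ``$\N$ has high value of the minimum causal effect'' means $\Cm(\N)\ge 1-\epsilon$; ``$\N$ approximately preserves the distinguishability of quantum states'' means ${\rm DP}_{\min}(\N)\ge 1-\epsilon'$; ``$\N$ is approximately correctable'' means there is a recovery channel $\mathcal{R}\colon {\sf St}(B)\to{\sf St}(A)$ with $\|\mathcal{R}\circ\N(\rho)-\rho\|_1\le \epsilon''$ for all $\rho\in{\sf St}(A)$. The content of the corollary is that these three smallness conditions imply one another, with explicit (dimension-dependent) relations among $\epsilon,\epsilon',\epsilon''$.

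First, $(1)\Rightarrow(2)$: Theorem~\ref{theo:ddworst} gives ${\rm DP}_{\min}(\N)\ge 2\Cm(\N)-1$, so $\Cm(\N)\ge 1-\epsilon$ yields ${\rm DP}_{\min}(\N)\ge 1-2\epsilon$ immediately. Second, $(2)\Rightarrow(3)$: this is exactly Theorem~\ref{theo:correctability} (equivalently Corollary~\ref{cor:approximatecor}), which produces a recovery channel $\mathcal{R}$ with $\|\mathcal{R}\circ\N(\rho)-\rho\|_1$ bounded by $4\sqrt{\ln 2}\,\sqrt{(1-{\rm DP}_{\min}(\N))/2}\,\sqrt{d_A/2+\ln(d_A+1)}$, a quantity that tends to $0$ as ${\rm DP}_{\min}(\N)\to 1$ for each fixed input dimension $d_A$. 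Third, $(3)\Rightarrow(1)$ (and, a fortiori, $(3)\Rightarrow(2)$): this is the argument of Corollary~\ref{cor:approximatecor2}. Since the identity channel satisfies $\Cm(\id)=1$, applying the continuity of the minimum causal effect (Proposition~\ref{prop:continuity_minimum}) to $\mathcal{R}\circ\N$ gives $\Cm(\mathcal{R}\circ\N)\ge 1-\epsilon''$, and the data-processing inequality (Proposition~\ref{prop:propertiesmin}) then gives $\Cm(\N)\ge\Cm(\mathcal{R}\circ\N)\ge 1-\epsilon''$, which is $(1)$; feeding this back into Theorem~\ref{theo:ddworst} recovers $(2)$ with ${\rm DP}_{\min}(\N)\ge 1-2\epsilon''$. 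The cycle $(1)\Rightarrow(2)\Rightarrow(3)\Rightarrow(1)$ closes the equivalence.

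At this point in the paper there is no remaining obstacle: the corollary is a packaging of results whose proofs are deferred to the appendices, and essentially all the genuine analytic work lives in Theorem~\ref{theo:correctability}, where the distinguishability-preservation hypothesis is converted into an explicit recovery map by a decoupling/recovery argument generalizing the Blume-Kohout, Ng, Poulin and Viola result to the approximate regime. The one point I would be careful to flag in the statement itself is that the bridge $(2)\Rightarrow(3)$ carries a $\sqrt{d_A}$ factor, so the equivalence is inherently qualitative: the numerical translation between ``high'' and ``approximately'' worsens with the dimension of the input system, and one should not expect a dimension-free sharpening of the cycle.
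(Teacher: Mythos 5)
Your proposal is correct and matches the paper's intent exactly: the paper presents this corollary as a qualitative summary of the preceding quantitative results, and the cycle you describe — $(1)\Rightarrow(2)$ via Theorem~\ref{theo:ddworst}, $(2)\Rightarrow(3)$ via Theorem~\ref{theo:correctability}, and $(3)\Rightarrow(1)$ via the continuity-plus-data-processing argument of Corollary~\ref{cor:approximatecor2} — is precisely the intended packaging. Your remark that the $(2)\Rightarrow(3)$ bridge carries a dimension-dependent factor, so the equivalence is only qualitative, is also consistent with how the paper frames the statement.
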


 We conclude this subsection by observing that all the above results can be extended to the case in which  channel $\N$ acts locally  on entangled input states involving an additional quantum system $E$. For example, Theorem \ref{theo:correctability}  becomes:  
 \begin{corollary} \label{cor:correctability}
    For every quantum channel $\N  : {\sf St}  (A)  \to  {\sf St}  (B)$,  there exists a recovery channel  $\cal R$ such that the upper bound
    \begin{align}
 \nonumber    & \|  {\cal R}  \circ  \N      - {\cal I}_A   \|_{\diamond} \\
   &  \qquad \le  8  \sqrt{\ln 2} \,       \sqrt{ \frac{1-{\rm DP}_{\min}  (\N )}2}    \,d_A\,   \sqrt{    \frac {d_A}2 +  \ln (d_A+1) } \label{cor:prima}
    \end{align}  
    holds for every state $\rho  \in  {\sf St}  (A\otimes E)$ and for every quantum system $E$. 
 \end{corollary}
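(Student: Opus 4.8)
The plan is to bootstrap from Theorem~\ref{theo:correctability}, which already supplies a recovery channel $\cal R$ performing well on every input state of $A$ alone, and then to upgrade the estimate from the ordinary trace norm to the diamond norm at the cost of a dimensional factor $d_A$. Write $C:=4\sqrt{\ln 2}\,\sqrt{(1-{\rm DP}_{\min}(\N))/2}\,\sqrt{d_A/2+\ln(d_A+1)}$ for the right-hand side of~\eqref{prima}, and let $\cal R$ be the recovery channel furnished by Theorem~\ref{theo:correctability}, so that the Hermitian-preserving map ${\cal M}:={\cal R}\circ\N-{\cal I}_A$ obeys $\|{\cal M}(\rho)\|_1\le C$ for every $\rho\in{\sf St}(A)$.

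First I would promote this estimate from states to arbitrary Hermitian inputs: if $X=X^\dagger$ with $\|X\|_1\le 1$, decompose $X=X_+-X_-$ into its positive and negative parts, observe that $X_\pm$ are subnormalised states with $\Tr X_+ +\Tr X_-\le 1$, and apply the triangle inequality together with convexity to get $\|{\cal M}(X)\|_1\le C$. Hence the induced trace norm $\|{\cal M}\|_{1\to 1}:=\sup\{\,\|{\cal M}(X)\|_1:\,X=X^\dagger,\,\|X\|_1\le 1\,\}$ satisfies $\|{\cal M}\|_{1\to 1}\le C$.

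Next I would bound $\|{\cal M}\|_\diamond$ in terms of $\|{\cal M}\|_{1\to 1}$. By convexity of the trace norm the supremum defining the diamond norm is attained on a pure input $\ket{\psi}\in{\cal H}_A\otimes{\cal H}_E$, and purification lets one take $\dim{\cal H}_E\le d_A$. Writing the Schmidt decomposition $\ket{\psi}=\sum_k\sqrt{p_k}\,\ket{k}_A\otimes\ket{k}_E$, one has $({\cal M}\otimes{\cal I}_E)(\proj{\psi})=\sum_{k,l}\sqrt{p_k p_l}\,{\cal M}(\ketbra{k}{l})\otimes\ketbra{k}{l}_E$, so the triangle inequality gives
\begin{align}
\big\|({\cal M}\otimes{\cal I}_E)(\proj{\psi})\big\|_1 \le \sum_{k,l}\sqrt{p_k p_l}\,\big\|{\cal M}(\ketbra{k}{l})\big\|_1 .
\end{align}
Splitting each $\ketbra{k}{l}$ into its Hermitian and anti-Hermitian parts, both of trace norm at most $1$, yields $\|{\cal M}(\ketbra{k}{l})\|_1\le 2\|{\cal M}\|_{1\to 1}\le 2C$, while Cauchy--Schwarz gives $\sum_{k,l}\sqrt{p_k p_l}=\bigl(\sum_k\sqrt{p_k}\bigr)^2\le d_A\sum_k p_k=d_A$. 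Therefore $\|{\cal M}\|_\diamond\le 2d_A C$, which is precisely the bound~\eqref{cor:prima}; the trailing clause of the statement is merely a restatement of the definition of the diamond norm.

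The one mildly delicate ingredient is the inequality $\|{\cal M}\|_\diamond\le 2d_A\,\|{\cal M}\|_{1\to 1}$ for a Hermitian-preserving ${\cal M}$: one must check that a pure reference of dimension at most $d_A$ suffices, carry along the factor $2$ that arises because the off-diagonal Schmidt terms $\ketbra{k}{l}$ are not Hermitian, and apply Cauchy--Schwarz correctly to produce the factor $d_A$. Everything else is a direct substitution into the estimate of Theorem~\ref{theo:correctability}.
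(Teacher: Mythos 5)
Your proposal is correct and follows essentially the same route as the paper, which proves the corollary by combining Theorem~\ref{theo:correctability} with Lemma~\ref{lem:norm1} (induced trace norm bounded by $2\max_\rho\|{\cal M}(\rho)\|_1$ via the Hermitian/anti-Hermitian and positive/negative splittings) and Lemma~\ref{lem:diamondtrace} (diamond norm bounded by $d_A$ times the induced trace norm via the Schmidt decomposition and Cauchy--Schwarz). You merely reorder where the factor of $2$ and the factor of $d_A$ are collected, arriving at the identical bound $2d_A C$.
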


{\bf Proof.}  Immediate from Theorem \ref{theo:correctability}
and from Lemmas \ref{lem:norm1} and \ref{lem:diamondtrace} in Appendix \ref{App:faithful_proof}. \qed

\subsection{Monogamy of the minimum  quantum causal effects}  

Theorem \ref{theo:correctability} implies  a fundamental property of quantum causal relations, which is analogous to the monogamy of quantum entanglement \cite{coffman2000distributed,osborne2006general}:   if the minimum  causal effect of a variable $A$ on a variable $B$ is high, then variable $A$ must have a low causal effect on any other variable $B'$ that is spacelike separated with $B$.

The monogamy of quantum causal relations follows from a fundamental duality in quantum theory,  namely the duality between a quantum channel and its {\em complementary channel}~\cite{kretschmann2008information,kretschmann_continuity_2007,Hayden_2020}. The complementary channel describes the flow of information to the environment, and is defined in terms of a unitary realization of the channel $\N$: 
\begin{align}
\N  (\rho)    = \Tr_{A'} [   U  \, (\rho  \otimes  |\psi_0\rangle \langle \psi_0| )\,  U^\dag] \qquad \forall \rho  \in  {\sf St}  (A)  \, ,     
\end{align}
where $A'$ is an environment,    $|\psi_0\rangle  \in  {\cal H }_{A'}$ is a pure  initial state of the environment,  $U: {\H _A} \otimes {\H _{A'}}  \to {\H _{B'}} \otimes {\H _{B'}}$ is a unitary operator, and $B'$ is the environment for system $B$.   The complementary channel ${\cal N}^{\rm c}:  {\sf St}  (A)  \to {\sf St} (B') $ is  defined as 
\begin{align}
\N^{\rm c}   (\rho):  =     \Tr_{B} \Big[   U  \, (\rho  \otimes  |\psi_0\rangle \langle \psi_0| )\,  U^\dag\Big] \qquad \forall \rho  \in  {\sf St}  (A)  \, .    \end{align}

A fundamental result in quantum information  is that a  channel $\N$ is approximately correctable if and only if its complementary channel $\N^{\rm c}$ is approximately a discard-and-reprepare channel  ~\cite{kretschmann2008information,Hayden_2020}.   This result implies a duality between the maximum and minimum causal effects:  
\begin{proposition}[Duality] \label{prop:max_min_duality}
For every quantum channel   $\N  :    {\sf St}  (A) \to  {\sf St}  (B)$, the following implications hold:
\begin{enumerate}
\item  $\CM  (\N)  \le \epsilon$ implies $\Cm   (\N^{\rm c}) \ge 1-  4 \sqrt{ d_A  \epsilon}$.
\item $\Cm  (\N)\ge 1-\epsilon$ implies $\CM  (\N^{\rm c})\le 2\sqrt{f(\epsilon, d_A)}$, with $f(\epsilon, d_A)=8d_A\sqrt{\epsilon \ln 2}  \sqrt{d_A/2 + \ln(d_A + 1)}$.
\end{enumerate}
\end{proposition}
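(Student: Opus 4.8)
The plan is to obtain both implications from the quantitative information--disturbance tradeoff for the Stinespring dilation \cite{kretschmann2008information,Hayden_2020}, feeding it with the faithfulness bound (Proposition~\ref{Lem:faithful}) for the first implication and with the approximate--correctability results (Theorem~\ref{theo:ddworst}, Theorem~\ref{theo:correctability}, and its entanglement-assisted strengthening Corollary~\ref{cor:correctability}) for the second. The only external input I would use is the duality in the quantitative form already announced in the text: for a quantum channel ${\cal M}$ with complement ${\cal M}^{\rm c}$, if ${\cal M}^{\rm c}$ is within diamond-norm distance $\delta$ of a discard-and-reprepare channel then ${\cal M}$ is $2\sqrt{\delta}$-correctable in diamond norm (i.e.\ $\|{\cal R}\circ{\cal M}-{\cal I}\|_\diamond\le 2\sqrt{\delta}$ for some recovery ${\cal R}$), and symmetrically; together with the standard fact that the complement of the complement of a channel is the channel itself, up to an output isometry that leaves all the relevant quantities unchanged.

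\emph{First implication.} Starting from $\CM(\N)\le\epsilon$, Proposition~\ref{Lem:faithful} supplies a discard-and-reprepare channel $\N_0=\sigma_0\Tr$ with $\|\N-\N_0\|_\diamond\le 4 d_A\epsilon$. Viewing $\N^{\rm c}$ as a channel in its own right and using $(\N^{\rm c})^{\rm c}=\N$, the complement of $\N^{\rm c}$ is then $4 d_A\epsilon$-close to a discard-and-reprepare channel, so the duality yields a recovery channel ${\cal R}$ with $\|{\cal R}\circ\N^{\rm c}-{\cal I}_A\|_\diamond\le 2\sqrt{4 d_A\epsilon}=4\sqrt{d_A\epsilon}$. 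From here I would repeat the argument in the proof of Corollary~\ref{cor:approximatecor2}: since $\|{\cal R}\circ\N^{\rm c}(\rho)-\rho\|_1\le 4\sqrt{d_A\epsilon}$ for every state $\rho$ and $\Cm({\cal I}_A)=1$, continuity of the minimum causal effect (Proposition~\ref{prop:continuity_minimum}) gives $\Cm({\cal R}\circ\N^{\rm c})\ge 1-4\sqrt{d_A\epsilon}$, and the data-processing inequality (Proposition~\ref{prop:propertiesmin}) then gives $\Cm(\N^{\rm c})\ge 1-4\sqrt{d_A\epsilon}$.

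\emph{Second implication.} Starting from $\Cm(\N)\ge 1-\epsilon$, Theorem~\ref{theo:ddworst} gives ${\rm DP}_{\min}(\N)\ge 1-2\epsilon$, hence $(1-{\rm DP}_{\min}(\N))/2\le\epsilon$. Substituting into the entanglement-assisted correctability bound (Corollary~\ref{cor:correctability}) produces a recovery ${\cal R}$ with $\|{\cal R}\circ\N-{\cal I}_A\|_\diamond\le 8 d_A\sqrt{\epsilon\ln 2}\,\sqrt{d_A/2+\ln(d_A+1)}=f(\epsilon,d_A)$. Applying the duality in the opposite direction then gives a discard-and-reprepare channel $\N_0=\sigma_0\Tr$ with $\|\N^{\rm c}-\N_0\|_\diamond\le 2\sqrt{f(\epsilon,d_A)}$, and since $\CM(\N_0)=0$ by faithfulness (Proposition~\ref{prop:propertiesmax}), continuity of the maximum causal effect (Proposition~\ref{prop:continuity_max}) yields $\CM(\N^{\rm c})\le 2\sqrt{f(\epsilon,d_A)}$.

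\emph{Anticipated main obstacle.} The analytic work is already in place (faithfulness, the $\Cm\to{\rm DP}_{\min}$ inequality, and the dimension-dependent correctability bound), so the remaining difficulty is essentially bookkeeping: invoking the cited Stinespring/information--disturbance theorem with exactly the constant that converts $\delta$-closeness into $2\sqrt{\delta}$-closeness, so that the stated prefactors $4\sqrt{d_A\epsilon}$ and $2\sqrt{f(\epsilon,d_A)}$ come out precisely; keeping straight which channel is the complement of which and using $(\N^{\rm c})^{\rm c}=\N$; making sure it is the diamond-norm correctability of Corollary~\ref{cor:correctability}, not the state-wise version of Theorem~\ref{theo:correctability}, that feeds the duality; and translating each diamond-norm statement about ${\cal R}\circ(\cdot)$ into a bound on $\Cm$ or $\CM$ through continuity and the data-processing inequality.
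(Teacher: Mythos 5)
Your proposal is correct and follows essentially the same route as the paper's proof: faithfulness (Proposition~\ref{Lem:faithful}) plus the Kretschmann--Schlingemann--Werner continuity of Stinespring dilations for the first implication, and Theorem~\ref{theo:ddworst} combined with Corollary~\ref{cor:correctability} and the same duality theorem for the second, finishing in each case with continuity and data-processing of the relevant causal effect. The only difference is that you make explicit the intermediate step ${\rm DP}_{\min}(\N)\ge 1-2\epsilon$ and the identification $(\N^{\rm c})^{\rm c}=\N$, which the paper leaves implicit.
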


  See Appendix~\ref{App:max_min_duality} for the proof.   
  
  Proposition \ref{prop:max_min_duality}  implies a monogamy  
   relation for causal effects:  if a quantum system $A$ has has a large value of the minimum causal effect on another system $B$,  then it has a small value of the maximum causal effect on any other system $B'$ that is spacelike separated with $B$:  
   \begin{corollary}[Monogamy of minimum quantum causal effects]\label{cor:monogamy}
    Let $\N:  {\sf St}  (A) \to  {\sf St}  (B\otimes B')$ be a quantum channel, and let $\N_B  : =  \Tr_{B'} \circ \N$   ($\N_{B'}  : =  \Tr_{B} \circ \N$) be the      quantum channel describing the transformation from system $A$ to  system $B$  (from system $A$ to system $B'$).  If $\Cm  (\N_B)$ is high, then $\CM  (\N_{B'})$ is necessarily low; quantitatively, $\Cm (\N_B) \ge 1-\epsilon$ implies $\CM  (\N_{B'})  \le 2\sqrt{f(\epsilon, d_A)}$, with $f(\epsilon, d_A)=8d_A\sqrt{\epsilon \ln 2}  \sqrt{d_A/2 + \ln(d_A + 1)}$.    
   \end{corollary}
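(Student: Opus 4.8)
The plan is to identify $\N_{B'}$ with a data-processed version of a complementary channel of $\N_B$, and then to combine the data-processing inequality for $\CM$ with the duality of Proposition~\ref{prop:max_min_duality}. The spacelike separation of $B$ and $B'$ is exactly the structural hypothesis that they are the two output arms of a single process $\N:\St{A}\to\St{B\otimes B'}$, so no extra input is needed from causality considerations.

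First I would introduce a common Stinespring dilation of the whole process: an isometry $V:\H_A\to\H_B\otimes\H_{B'}\otimes\H_E$ with $\N(\rho)=\Tr_E[V\rho V^\dagger]$ for a suitable environment $E$. Under this dilation one has $\N_B(\rho)=\Tr_{B'E}[V\rho V^\dagger]$, so the composite system $B'\otimes E$ serves as an environment for $\N_B$, and the map $\N_B^{\rm c}(\rho):=\Tr_B[V\rho V^\dagger]$, from $\St{A}$ to $\St{B'\otimes E}$, is a complementary channel of $\N_B$. Discarding $E$ from $\N_B^{\rm c}$ recovers exactly $\N_{B'}$, i.e.\ $\N_{B'}=\Tr_E\circ\N_B^{\rm c}$. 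One should also note that the value $\CM(\N_B^{\rm c})$ is independent of the chosen dilation: any two complementary channels of $\N_B$ differ by an isometry on the environment, and composing a channel with an isometry and with its channel left-inverse leaves $\CM$ unchanged, by applying the data-processing inequality of Proposition~\ref{prop:propertiesmax} twice.

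Next, by the data-processing inequality for the maximum causal effect (item~4 of Proposition~\ref{prop:propertiesmax}), post-composition with the discarding channel $\Tr_E$ cannot increase $\CM$, so $\CM(\N_{B'})=\CM(\Tr_E\circ\N_B^{\rm c})\le\CM(\N_B^{\rm c})$. Finally, I would apply the second implication of Proposition~\ref{prop:max_min_duality} to the channel $\N_B$, which has input dimension $d_A$: the hypothesis $\Cm(\N_B)\ge1-\epsilon$ yields $\CM(\N_B^{\rm c})\le2\sqrt{f(\epsilon,d_A)}$ with $f(\epsilon,d_A)=8d_A\sqrt{\epsilon\ln2}\,\sqrt{d_A/2+\ln(d_A+1)}$. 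Chaining the two inequalities gives $\CM(\N_{B'})\le2\sqrt{f(\epsilon,d_A)}$, which is the claimed bound.

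The only point requiring care is the bookkeeping of which subsystems are traced out in the joint dilation, so that $\N_{B'}$ genuinely appears as a data-processed complementary channel of $\N_B$ and the input dimension entering $f$ is indeed $d_A$. Once this identification is set up, there is no substantive obstacle: the statement reduces to two direct appeals to results already established in the paper.
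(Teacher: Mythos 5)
Your proof is correct and follows essentially the same route as the paper's: realize $\N$ by a single dilation, observe that $B'\otimes E$ then serves as the environment for $\N_B$ so that $\N_{B'}=\Tr_E\circ\N_B^{\rm c}$, and combine the data-processing inequality for $\CM$ with the second implication of Proposition~\ref{prop:max_min_duality}. Your extra remark on the dilation-independence of $\CM(\N_B^{\rm c})$ is a careful touch the paper leaves implicit, but nothing substantive differs.
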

{\bf Proof.}  Immediate from the fact that a unitary realization of channel $\N$ with environment $E'$  is automatically a unitary realization of channel $\N_B$ with environment $B'\otimes E'$.  Hence, $\N_{B'}  =   \Tr_{E'} \circ \N_B^{\rm c}$,  and $\CM (\N_{B'})  \le \CM (\N^{\rm c})$ by the data-processing inequality of the maximum causal effect. The result then follows from Proposition \ref{prop:max_min_duality}. \qed  

\medskip

   The monogamy of causal effects highlights an important difference between quantum and classical causal relations.  Classically,    a  physical  system $A$ can have arbitrarily high causal effects on two spacelike systems $B$ and $B'$.  For example, consider the classical process with conditional probability distribution $p(b,b'|a)=  \delta_{b,b'}\delta_{b', f(a)}$, for some invertible function $f:  A \to B$: this process   has minimum causal effect equal to 1 both from $A$ to $B$ and from $A$ to $B'$.  In stark contrast,  Corollary \ref{cor:monogamy}  dictates  that    a high causal effect of a quantum system $A$ on system $B$ implies a low  causal effect on any other system $B'$ that is spacelike separated with $B$.    This phenomenon is 
  closely related to the no-cloning \cite{wootters1982single,dieks1982communication} and to no-broadcasting \cite{barnum1996noncommuting,barnum2007generalized} theorems, and the information-disturbance tradeoff \cite{fuchs1996quantum,kretschmann2008information}, which prevent a leakage of quantum information from system $A$ to system $B'$.

\section{Quantum causal effects in two examples}\label{sec:applications}

Here we apply the notions of maximum and minimum causal effect   to the analysis of two paradigmatic examples of quantum causal relations.  In the first example,  we quantify the amount of direct cause relation in a setup exhibiting a quantum superposition of direct cause and common cause structures.    In the second example, we examine the increase of the causal effect due to the quantum interference of multiple processes, and provide an upper bound on the amount of increase in terms of the amount of coherence among different paths.  

\subsection{Estimating the amount of direct cause in a  coherent superposition of a common and direct cause}

Here we analyze a setup,  introduced by MacLean {\em et al.}~\cite{MacLean_2017}, that exhibits a coherent quantum superposition of common cause and cause effect relations.  The setup is based on the quantum circuit illustrated in Fig.~\ref{fig:process_direct_common_cause}(a).    The circuit involves three quantum systems $A $, $B$, and $\Lambda$,  associated to Hilbert spaces of the same dimension $d$, namely $\H_A  \simeq \H_B\simeq \H_\Lambda  \simeq   {\mathbb C}^d$.  Here, systems $A$ and $B$ are  assumed to be accessible to an experimenter, while system $\Lambda$ is unaccessible. 

Systems $A$ and $\Lambda$ start off in a joint state $\sigma_{A\Lambda}$. Then, they are transformed into systems $B$ and $\Lambda$ through  the  partial swap  evolution  \begin{align}
U_{\theta} = \cos{\theta}\,   I+i\sin{\theta} \,   {\tt SWAP}  \label{Eq:partial_swap} , 
\end{align}
where  ${\tt SWAP}$ is the SWAP operator, uniquely defined by the  relation ${\tt SWAP} (|\psi\rangle \otimes |\phi\rangle)  := |\phi\rangle \otimes |\psi\rangle $ for every pair of vectors $|\phi\rangle, |\psi\rangle \in  {\mathbb C}^d$, and $\theta \in [0,2\pi)$ is a parameter of the evolution.    Finally, system  $\Lambda$ is discarded.    

MacLean {\em et al.} considered the correlations between measurements performed on system $A$ before the action of the gate $U_{\theta}$ and  measurements performed on system $B$ after the action of the gate $U_{\theta}$, as in Fig.\ref{fig:process_direct_common_cause}(b).     In this setting, the correlations  exhibit a combination of common cause and direct cause relations, depending on the value of the parameter $\theta$. For example, $\theta =  0 $ implies that the measurements are performed in parallel on the bipartite state $\sigma_{AB}$, and therefore are of the common-cause type (the common cause being the preparation of the state $\sigma_{AB}$.)  On the other hand, $\theta=  \pi/2$ implies that the two measurements on $A$ and $B$ are implemented sequentially one after another, and therefore are of the direct-cause type.  For intermediate values of $\theta$, the circuit exhibits a coherent quantum superposition of the two configurations corresponding to common cause and direct cause, respectively.     

Here we quantify the amount of direct cause relation, by  analyzing the  scenario in which the experimenter resets  system $A$ to a quantum state $\rho$,  as illustrated in Fig. \ref{fig:process_direct_common_cause}(c).     Here, the experimenter's intervention decorrelates systems $A$ and $\Lambda$, leaving them in the product state $\rho\otimes \sigma_\Lambda$, with  $\sigma_{\Lambda}:  =  \Tr_A  [  \sigma_{A \Lambda}]$. 
The effective process describing  the dependence of  the state of system $B$  on the state $\rho$ is 
\begin{align}
    \mathcal{N}  (\rho){:=}\Tr_{{\Lambda
    }}  \left[ U_{\theta}\bigg(\rho \otimes \sigma_{\Lambda}\bigg){U_{\theta}}^\dagger\right]   \qquad \forall \rho  \in  {\sf St}  (A) \, ,\label{Eq:channel_partial_swap}
\end{align}

For simplicity, in the following,  we will consider the case where the state  $\sigma_{\Lambda}$ is a convex combination of a maximally mixed state and a pure state, namely 
\begin{align}\label{sigmalambda1}
\sigma_{\Lambda}=(1-p)  \frac I d  +  p  \, \proj{\phi} \, ,  
\end{align}
for some probability $p\in  [0,1]$ and some unit vector $|\phi\rangle \in  {\mathbb C}^d.$   Note that this assumption is not restrictive in the two-dimensional case $d=2$, where every density matrix is of the form (\ref{sigmalambda1}).

\begin{figure}[!h]
    \centering
    \includegraphics[width=\columnwidth]{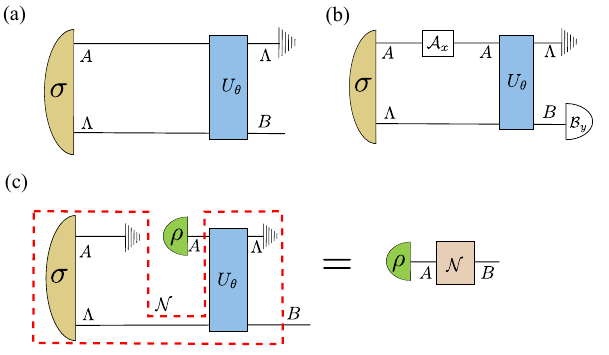}
    \caption{ {\bf Quantum setup exhibiting a coherent superposition of common cause and direct cause configurations.}     Figure \ref{fig:process_direct_common_cause}(a)  shows the original setup, consisting of the preparation of two quantum systems $A$ and $\Lambda$ in a joint state $\rho_{A\Lambda}$.  The two systems then undergo the partial swap evolution $U_\theta  = \cos \theta \,  I+  i  \, \sin \theta \,  {\tt SWAP}$, which transforms them into systems $B$ and  $\Lambda$ (here, all systems $A,B,$ and $\Lambda$ have the same Hilbert space dimension.)    Figure \ref{fig:process_direct_common_cause}(b) shows the scenario where the experimenter performs measurements on system $A$ and $B$, obtaining outcomes $x$ and $y$, respectively. The correlations arising in this scenario have been interpreted as a coherent superposition of common cause and cause-effect correlations \cite{MacLean_2017}.  Figure \ref{fig:process_direct_common_cause}(c)  shows the scenario where the experimenter probes the causal relation between system $A$ and system $B$ by resetting the state of system $A$ to a given state $\rho$.  The dependence of system $B$'s state on the state $\rho$  is characterized by the quantum process $\N$ in Eq. \eqref{Eq:channel_partial_swap}.     In this scenario, the maximum causal effect of process $\N$ provides a measure of the amount of direct cause-effect relation present between systems $A$ and $B$.      }
    \label{fig:process_direct_common_cause}
\end{figure}

The maximum  causal effect of channel $\N$ provides a measure of the amount of direct cause relation between $A$ and $B$. 
For states of the form (\ref{sigmalambda1}),  the maximum causal effect of channel has the following expression:

\begin{proposition} \label{Lem:partial_swap}
    For any state  of the form (\ref{sigmalambda1}),  the   maximum   causal effect of the channel $\N$ in Eq. (\ref{Eq:channel_partial_swap}) is
 
    \begin{align}\label{CMmaclean}
    {\CM} (\mathcal{N}) {=} \max_{F\in [0,1]}   \frac{\sqrt{a^2  +  F b^2 }  +  \sqrt{  a^2  +  F  b^2  +  2|ab| \sqrt{F(1-F)}}}{2}  \,,
    \end{align}
    with $a=  \sin^2 \theta$ and $b =  p   \sin \theta \cos \theta$. 
    
\end{proposition}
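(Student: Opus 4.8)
My plan is to reduce $\CM(\N)$ to a small, explicit eigenvalue optimization.

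First I would compute $\N$ in closed form. Expanding $U_\theta(\rho\otimes\sigma_\Lambda)U_\theta^\dagger$ with $U_\theta=\cos\theta\,I+i\sin\theta\,{\tt SWAP}$ and using the swap partial‑trace identities $\Tr_\Lambda[{\tt SWAP}\,(X\otimes Y)]=YX$, $\Tr_\Lambda[(X\otimes Y)\,{\tt SWAP}]=XY$ and $\Tr_\Lambda[{\tt SWAP}\,(X\otimes Y)\,{\tt SWAP}]=(\Tr X)\,Y$, together with $\Tr\rho=1$, gives
\begin{align}
\N(\rho)=\sin^2\theta\,\rho+\cos^2\theta\,\sigma_\Lambda+i\sin\theta\cos\theta\,[\rho,\sigma_\Lambda]\,.
\end{align}
Because all quantum state spaces are generated by perfectly distinguishable states, Proposition \ref{prop:pdmax} gives $\CM(\N)=\max_{\langle\psi|\psi'\rangle=0}\tfrac12\|\N(\proj\psi)-\N(\proj{\psi'})\|_1$. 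In the difference the state‑independent term $\cos^2\theta\,\sigma_\Lambda$ cancels, and since $[\,\cdot\,,I]=0$ only the pure part of $\sigma_\Lambda$ contributes to the commutator, so that
\begin{align}
M:=\N(\proj\psi)-\N(\proj{\psi'})=a\,Z+i\,b\,[Z,\proj\phi]\,,\qquad Z:=\proj\psi-\proj{\psi'}\,,
\end{align}
with $a=\sin^2\theta$, $b=p\sin\theta\cos\theta$, and $\CM(\N)=\max_{\psi\perp\psi'}\tfrac12\|M\|_1$.

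Next I would exploit the support structure of $M$. Since $Z$ lives on the plane $V=\mathrm{span}\{|\psi\rangle,|\psi'\rangle\}$ and $[Z,\proj\phi]=(Z|\phi\rangle)\langle\phi|-|\phi\rangle(\langle\phi|Z)$ lives on $V+\mathbb{C}|\phi\rangle$, the Hermitian operator $M$ is supported on a subspace of dimension at most three. Up to a global unitary and a choice of phases, the configuration is captured by one overlap parameter, $F:=\langle\phi|(\proj\psi+\proj{\psi'})|\phi\rangle=\|Z|\phi\rangle\|^2\in[0,1]$, together with the orientation of $|\phi\rangle$ inside $V$ relative to $\{|\psi\rangle,|\psi'\rangle\}$ (a single angle). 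Writing $M$ explicitly as a $3\times 3$ Hermitian matrix in this basis, I would compute its characteristic polynomial, a depressed cubic $\mu^3-P\mu-Q=0$ with $P,Q$ explicit in $a,b,F$ and the angle. As $M$ is traceless and acts on a space of dimension at most three, $\|M\|_1=2\max(\mu_{\max},|\mu_{\min}|)$; and the substitution that sends the internal angle to its supplement negates $Q$, hence all eigenvalues, so it suffices to treat the branch $Q\ge 0$, on which $\|M\|_1=2\mu_{\max}$.

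It then remains to maximize the largest root of the cubic, first over the internal angle at fixed $F$ and then over $F$. Implicit differentiation of $\mu^3-P\mu-Q=0$ shows that the angular derivative of $\mu_{\max}$ vanishes exactly when a simple relation among $\cos(\text{angle})$, $F$ and $\mu_{\max}$ holds; feeding that relation back into the cubic eliminates the angle and leaves a quartic in $\mu_{\max}$ with coefficients depending only on $a,b,F$, which can be solved in closed form — carrying out the simplification is the bulk of the algebra and reproduces the bracketed expression $\sqrt{a^2+Fb^2}+\sqrt{a^2+Fb^2+2|ab|\sqrt{F(1-F)}}$ for $\|M\|_1$. I would also check that the endpoints of the angular range never do better, and dispose of the degenerate cases $F=0$ (where $[Z,\proj\phi]=0$, so $\tfrac12\|M\|_1=a$) and $F=1$ (where $M$ is $2\times 2$). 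Taking the maximum over $F\in[0,1]$ then yields the stated formula. I expect the main obstacle to be precisely this last step: the cubic/quartic bookkeeping, keeping track of which root is largest, confirming that the interior critical point is a genuine maximum throughout the range of $F$, and handling the boundary and degenerate cases; once the three‑dimensional reduction is set up, the rest is mechanical.
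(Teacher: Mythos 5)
Your setup is correct and matches the paper's: the closed form for $\N$, the reduction via Proposition \ref{prop:pdmax} to orthogonal pure states, the cancellation of the identity part of $\sigma_\Lambda$ inside the commutator, and the observation that the traceless operator $M=a\,Z+ib\,[Z,\phi]$ lives on a subspace of dimension at most three parametrized by $F$ and one internal angle (the paper's $\alpha=\langle \xi|Z|\xi\rangle$). The identity $\|M\|_1=2\max(\mu_{\max},|\mu_{\min}|)$ for a traceless rank-$\le 3$ Hermitian operator is also correct. The problem is that everything after that point --- which is the entire content of the proposition --- is asserted rather than proved. You propose to extract the largest root of the characteristic cubic of $M$, eliminate the internal angle by implicit differentiation, solve a resulting quartic in closed form, and you state that this ``reproduces the bracketed expression'' $\sqrt{a^2+Fb^2}+\sqrt{a^2+Fb^2+2|ab|\sqrt{F(1-F)}}$. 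No part of this computation is carried out, no candidate for $P$, $Q$, or the critical-angle relation is exhibited, and there is no verification that the interior critical point is the global maximum or that the stated expression is actually attained. A proof of an exact formula cannot defer the derivation of the formula.

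The deferral matters because the route you chose is considerably harder than the one the paper takes, and it is not evident it closes. The paper never touches the eigenvalues of $M$ itself: it computes $\|M\|_1=\Tr\!\left[\sqrt{M^\dag M}\right]$ and shows that $M^\dag M$ splits into an explicit rank-one eigenprojector along $Z|\xi\rangle$ with eigenvalue $a^2+b^2F(1-\alpha^2F)$ plus a $2\times 2$ block $X$ on $\mathrm{span}\{Z|\xi_\perp\rangle,|\gamma\rangle\}$, for which $\Tr\sqrt{X}=\sqrt{\Tr X+2\sqrt{\det X}}$ gives a closed form immediately; the dependence on the angle $\alpha$ then appears only through manifestly decreasing terms, so the optimum $\alpha=0$ is read off without any implicit differentiation, and achievability is shown by an explicit construction of $|\psi\rangle,|\psi_\perp\rangle$. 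If you want to complete your argument along your own lines, you must actually produce the cubic, perform the elimination, identify which root is $\mu_{\max}$ throughout the parameter range, and check the boundary cases; alternatively, switch to the $\Tr\sqrt{M^\dag M}$ route, where the block structure does the work for you.
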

The proof is provided in Appendix~\ref{app:partial_swap}.   Note that the maximum causal effect is nonzero whenever $\sin \theta $ is nonzero: as soon as the partial swap (\ref{Eq:partial_swap}) is not equal to the identity operator, there is a transfer of information from system $A$ to system $B$, which witnesses the presence of a causal influence.  One can further observe that this causal influence is of a non-classical type,  as witnessed by a positive value of the minimum quantum causal effect:   for  the channel $\N$ in Eq. (\ref{Eq:channel_partial_swap}),
 the minimum quantum causal effect is  lower bounded as 
\begin{align}\label{minmaclean}
 \Cm  (\N)   \ge  \sin^2 \theta \, ,   \end{align}
for every state $\sigma_\Lambda$, as shown in Appendix~\ref{app:partial_swap}.  Interestingly, the minimum quantum causal effect is positive whenever the maximum causal effect is positive:  indeed, $\CM  (\N)   >0$ implies $\sin \theta \not  =  0,$ which in turn implies $\Cm (\N)  >   0$.   In other words,  causal influence in this setup is  necessarily  different from  the causal influence that arises from measuring the input system in a given basis.

\subsection{Maximum causal effect of  a coherent superpositions of quantum processes}

In general, a quantum particle can travel simultaneously on multiple alternative paths, undergoing to some noisy processes along each path  \cite{oi2003interference,aaberg2004operations,Chiribella_2019,Abbott2020communication,witnessing_latent_time}.  For example, the particle could be a single photon traveling along two arms of an interferometer,  and its polarization could undergo independent  random rotations on the two paths \cite{rubino2021experimental}. 

In this scenario,  quantum interference between noisy processes can sometime reduce the overall amount of noise affecting the particle  \cite{gisin2005error} and in some extreme cases can enable the transmission of information even when the processes on the two paths completely block information when considered in isolation \cite{Abbott2020communication,Chiribella_2019,witnessing_latent_time}.  From the point of view of causal inference, these findings show that  quantum interference between two processes with no causal effect can give rise to an overall process with positive causal effect.     In the following, we will provide a quantitative analysis of this phenomenon.

 We consider the scenario where one particle with can travel on $k$ alternative paths, corresponding to basis states $\{  |j\rangle\}_{j=1}^k$ in the Hilbert space associated to the particle's path.   For simplicity, we assume that no noise takes place on the path degree of freedom. 
 
 When the particle travels on the a well-defined path, its internal degrees of freedom ({\em e.g.}, the polarization of a photon)  undergo a noisy process. For simplicity, we assume that the same process acts on each path, and we denote the corresponding quantum channel  by $\mathcal C$.  When the  particle travels in a superposition of paths,   the joint evolution of internal degrees of freedom, labeled by $A_{\rm int}$,  and  the path degrees of freedom, labeled by $A_{\rm path}$, is described  by a quantum channel ${\cal S} :  {\sf St} (A_{\rm int}\otimes A_{\rm path}) \to  {\sf St} (A_{\rm int}\otimes A_{\rm path})$, whose general expression is provided in Ref. \cite{Chiribella_2019}.  
   In the case under consideration,  the action of the channel $\cal S$ on a  product state $\rho \otimes \sigma$ is 
    \begin{align}
 {\cal S}  (\rho  \otimes \sigma)  = {\cal C}   (\rho) \otimes \sigma_{\rm diag}    +   F \rho  F^\dag   \otimes  \sigma_{\rm off-diag}\, ,         
 \end{align}   
 with 
 \begin{align}
\nonumber  \sigma_{\rm diag}    & :  =  \sum_{j=1}^k  \langle j| \sigma |j\rangle \,  |j\rangle \langle j| \\
  \sigma_{\rm off-diag}  & :  =  \sum_{j, l:  \,  j\not =  l}  \langle j| \sigma |l\rangle \,  |j\rangle \langle l|
 \end{align}
 and 
 \begin{align}
 F  :  =  \sum_i  \overline \gamma_i  C_i  \, ,   \end{align}
 where $(C_i)_i$ are Kraus operators for channel $\cal C$ and $(\gamma_i)_i$ are complex amplitudes, called {\em vacuum amplitudes} and satisfying the normalization condition $\sum_{i}  |\gamma_i|^2  = 1$ \cite{Chiribella_2019}.    The operator $F$ is called the {\em interference operator},  and characterizes the behavior of the noisy  process when the system travels on a coherent superposition of paths.    

In the following, we will take the initial  state of the path $\sigma$ to be fixed, and we will analyze the causal influence of the state $\rho$    on the overall output state of the particle. Specifically, we will evaluate the maximum causal effect  of the effective channel $\N_\sigma:  {\sf St}  (A_{\rm int}) \to {\sf St}   (A_{\rm int} \otimes A_{\rm path})$ defined by 
\begin{align}
\nonumber \N_\sigma  (\rho)  &:  =  {\cal S}  (\rho \otimes \sigma) \\
& = {\cal C}   (\rho) \otimes \sigma_{\rm diag}    +   F \rho  F^\dag   \otimes  \sigma_{\rm off-diag}  \, .  \label{Nsigma}
\end{align}

Our main result is the following: 
\begin{proposition}\label{prop:superposition}
For every quantum channel $\cal C$,  the maximum causal effect  of the channel $\N_\sigma$ is upper bounded as 
\begin{align}\label{CMbound}
\CM  (  \N_\sigma)  \le     \CM  ({\cal C})   +  \frac{\| F^\dag F \|_{\rm Ky Fan2} \, \| \sigma_{\rm off-diag}\|_1}2 \,,  
\end{align}
where $\| F^\dag F \|_{\rm Ky Fan2}$  is the Ky Fan 2-norm, equal to the sum of the first two largest eigenvalues of the operator $F^\dag F$.   Eq. (\ref{CMbound}) holds with the equality sign when  $\CM  ({\cal C}) =  0$. 
\end{proposition}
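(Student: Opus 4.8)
The plan is to reduce to perfectly distinguishable pure inputs via Proposition~\ref{prop:pdmax}, split the output difference coming from \eqref{Nsigma} into a ``diagonal'' and an ``interference'' part, and bound each by the triangle inequality, with the Ky Fan $2$-norm arising as the natural cost of the interference part.

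Concretely, since every quantum state space is generated by perfectly distinguishable states, Proposition~\ref{prop:pdmax} gives $\CM(\N_\sigma)=\max_{\langle\psi|\psi'\rangle=0}\tfrac12\,\big\|\N_\sigma(\proj\psi)-\N_\sigma(\proj{\psi'})\big\|_1$, the maximum running over orthonormal pairs in $\H_{A_{\rm int}}$, and likewise for $\CM(\mathcal C)$. Using \eqref{Nsigma}, the output difference splits as
\[
\N_\sigma(\proj\psi)-\N_\sigma(\proj{\psi'})=\big(\mathcal C(\proj\psi)-\mathcal C(\proj{\psi'})\big)\otimes\sigma_{\rm diag}+\big(F\proj\psi F^\dag-F\proj{\psi'}F^\dag\big)\otimes\sigma_{\rm off-diag}\,.
\]
Applying the triangle inequality, the multiplicativity of the trace norm under tensor products, and $\|\sigma_{\rm diag}\|_1=1$ (it is a density matrix), one obtains
\[
\tfrac12\big\|\N_\sigma(\proj\psi)-\N_\sigma(\proj{\psi'})\big\|_1\le\tfrac12\big\|\mathcal C(\proj\psi)-\mathcal C(\proj{\psi'})\big\|_1+\tfrac12\big\|F\proj\psi F^\dag-F\proj{\psi'}F^\dag\big\|_1\,\|\sigma_{\rm off-diag}\|_1\,.
\]
The first term is at most $\CM(\mathcal C)$. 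For the second, a further triangle inequality gives $\big\|F\proj\psi F^\dag-F\proj{\psi'}F^\dag\big\|_1\le\langle\psi|F^\dag F|\psi\rangle+\langle\psi'|F^\dag F|\psi'\rangle=\Tr[F^\dag F\,(\proj\psi+\proj{\psi'})]$, and since $\proj\psi+\proj{\psi'}$ is a rank-$2$ orthogonal projector, Ky Fan's maximum principle bounds this by the sum of the two largest eigenvalues of $F^\dag F$, i.e.\ $\|F^\dag F\|_{\rm Ky Fan2}$. Taking the maximum over orthonormal pairs yields \eqref{CMbound}.

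For the equality case, assume $\CM(\mathcal C)=0$. By the faithfulness property in Proposition~\ref{prop:propertiesmax}, $\mathcal C$ is a discard-and-reprepare channel, so $\mathcal C(\proj\psi)=\mathcal C(\proj{\psi'})$ for every pair of pure states and the diagonal term vanishes identically; the upper bound then reads $\CM(\N_\sigma)\le\tfrac12\|F^\dag F\|_{\rm Ky Fan2}\,\|\sigma_{\rm off-diag}\|_1$. To get the matching lower bound, choose $|\psi\rangle=|m_1\rangle$ and $|\psi'\rangle=|m_2\rangle$ to be orthonormal eigenvectors of $F^\dag F$ belonging to its two largest eigenvalues $\mu_1\ge\mu_2$. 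Then $\langle m_1|F^\dag F|m_2\rangle=0$, so $F|m_1\rangle\perp F|m_2\rangle$, the rank-one operators $F\proj{m_1}F^\dag$ and $F\proj{m_2}F^\dag$ have orthogonal supports, and $\big\|F\proj{m_1}F^\dag-F\proj{m_2}F^\dag\big\|_1=\langle m_1|F^\dag F|m_1\rangle+\langle m_2|F^\dag F|m_2\rangle=\mu_1+\mu_2=\|F^\dag F\|_{\rm Ky Fan2}$. Hence $\CM(\N_\sigma)\ge\tfrac12\|F^\dag F\|_{\rm Ky Fan2}\,\|\sigma_{\rm off-diag}\|_1$, which together with the upper bound forces equality.

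The argument is essentially elementary; the only delicate points are (i) recognizing that the rank-$2$ projector $\proj\psi+\proj{\psi'}$ converts the interference contribution into the Ky Fan $2$-norm via Ky Fan's maximum principle, and (ii) checking in the equality case that the eigenvector choice makes the two rank-one operators $F\proj{m_i}F^\dag$ orthogonally supported, so that both triangle inequalities are simultaneously tight. One should also note that the statement is non-trivial only when $\dim\H_{A_{\rm int}}\ge2$, so that orthonormal pairs---and a second eigenvalue of $F^\dag F$---exist.
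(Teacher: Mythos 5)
Your proof is correct and follows essentially the same route as the paper's: reduce to orthogonal pure inputs, split $\N_\sigma$ into the diagonal and interference parts, apply the triangle inequality and multiplicativity of the trace norm, and recognize $\Tr[F^\dag F(\proj{\psi}+\proj{\psi'})]$ as bounded by the Ky Fan $2$-norm. Your treatment of the equality case is in fact slightly more careful than the paper's, which merely asserts attainability, whereas you verify that $\langle m_1|F^\dag F|m_2\rangle=0$ forces $F\proj{m_1}F^\dag$ and $F\proj{m_2}F^\dag$ to have orthogonal supports so that the triangle inequality is tight.
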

The proof is provided in Appendix \ref{app:superposition}.  Eq. (\ref{CMbound}) shows that the interference between multiple paths  can increase the causal effect of a quantum channel $\cal C$ by an amount that is limited  by the amount of coherence in the initial state of the path (as quantified by the trace norm of the off-diagonal part of the path state) and by the amount of coherence in the noisy process (as quantified by the Ky Fan 2-norm of the modulus square of the interference operator.)  
In turn, the trace norm  is upper bounded as  
\begin{align}\label{coherencebound}
\| \sigma_{\rm off-diag} \|_1  \le   2  \left(1- \frac 1k\right) \,, 
\end{align}
and the equality sign is attained by the pure state $\sigma =  |e\rangle \langle e|$, where $|e\rangle $  is the   maximally coherent state $|e\rangle  : =  ( |1\rangle   +|2\rangle +  \dots +|k\rangle)/\sqrt k $, or any other equally weighted superposition of the basis states.  
For completeness, a proof of the bound (\ref{coherencebound}) is provided  in  Appendix \ref{app:superposition}.

 As an example, we now  consider the case where $\cal C$ is the completely depolarizing qubit  channel   ${\cal C}  (\rho)  =  I/2$, previously considered in the study of quantum communication with superposition of paths \cite{Abbott2020communication,Chiribella_2019,witnessing_latent_time}.      For the Kraus representation of channel $\cal C$, we choose the Pauli operators $(\frac I2, \frac X2,  \frac Y2,  \frac Z2)$ and for the vacuum amplitudes we adopt the uniform choice $(\frac 12, \frac 12, \frac 12,  \frac 12)$, giving rise to the interference operator $F  = \frac{  I  +  X+  Y+  Z}4$.   In this case, the modulus square of $F$ is  
 \begin{align}
 F^\dag F   =  \frac I  4  + \frac{ X+Y+Z}8 \, , 
 \end{align}
 and its eigenvalues are $(2 \pm  \sqrt 3)/8$.  Hence, the Ky Fan $2$-norm is $\|  F^\dag F\|_{{\rm Ky Fan}2}  \equiv \Tr[F^\dag F]   =  1/2$.   Since the completely depolarizing channel has zero $\CM$, Eq. (\ref{CMbound}) holds with the equality sign:   the effective channel $\N_\sigma$ arising from the superposition of paths has maximum causal effect given by 
\begin{align}
\CM (\N_\sigma)   =  \frac{  \|  \sigma_{\rm off-diag}\|_1}4  \, .
\end{align}
In particular, choosing $\sigma$ to be the maximally coherent state, we obtain 
\begin{align}\label{benchmark2}
\CM (\N_{|e\rangle \langle e|})   =  \frac{1-\frac 1k}{2}    \, .
\end{align}
It is also interesting to consider the case of completely depolarizing multiqubit channels.  Assuming the same Kraus representation and the same vacuum amplitudes for each qubit, the interference operator takes the product form $F^{\otimes m}$. The modulus square $(F^\dag F)^{\otimes m}$ has eigenvalues $\left\{[(2+ \sqrt 3)/8]^j  \, [(2- \sqrt 3)/8]^{m-j}~|~  0\le j\le m   \right\}$, and therefore the Ky Fan $2$-norm is $\|  F^{\otimes m}\|_{\rm Ky Fan 2}  =[(2+ \sqrt 3)/8]^{m-1}/2$.   
When the path is initialized in the maximally coherent state $\sigma  = |e\rangle \langle e|$,  the maximum causal effect of the effective process $\N^{(m)}_{|e\rangle \langle e|}$ is 
\begin{align} \label{benchmark2.1}
\CM  (  \N^{(m)}_{|e\rangle \langle e|})  =  \left[\frac{2+ \sqrt 3}8\right]^{m-1}  \,  \frac{   1-\frac 1k }2  \, .
\end{align}
Note that the causal effect is exponentially suppressed when the number of qubits is increased, due to the exponential decrease of the Ky Fan $2$-norm of the interference operator, meaning that the interference effects become negligible for large $m$.  

We conclude with a lower bound on the minimum causal effect for quantum channels of the discard-and-reprepare form:  
\begin{proposition}\label{prop:superposition1}
For every quantum channel $\cal C$ of the discard-and-reprepare form ${\cal C}  (\rho)  =  \rho_0\,\forall \rho  \in  {\sf St}  (A)$,  the mininum causal effect  of the channel $\N_\sigma$ is lower  bounded as 
\begin{align}\label{Cmbound}
\CM  (  \N_\sigma)  \ge   \min {\rm eigv}  (F^\dag F) \, \|  \sigma_{\rm off-diag}\|_1    \,,  
\end{align}
where $\min {\rm eigv}  (F^\dag F)    $ denotes the minimum eigenvalue of the operator $F^\dag F$.
\end{proposition}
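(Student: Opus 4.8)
The plan is to use the explicit structure of $\N_\sigma$ in Eq.~(\ref{Nsigma}) together with the fact that, for a discard-and-reprepare channel $\cal C$, the term ${\cal C}(\rho)$ does not depend on $\rho$. First I would observe that, for any two states $\rho,\rho'\in{\sf St}(A_{\rm int})$, the diagonal contribution cancels in the difference, so that
\begin{align}
\N_\sigma(\rho)-\N_\sigma(\rho')=\big(F(\rho-\rho')F^\dag\big)\otimes\sigma_{\rm off-diag}\, .
\end{align}
Since the trace norm is multiplicative over tensor products, this gives $\|\N_\sigma(\rho)-\N_\sigma(\rho')\|_1=\|F(\rho-\rho')F^\dag\|_1\,\|\sigma_{\rm off-diag}\|_1$, and therefore the ratio defining $\Cm$ factorizes into $\|\sigma_{\rm off-diag}\|_1$ times $\|F(\rho-\rho')F^\dag\|_1/\|\rho-\rho'\|_1$. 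It then remains to show that this last ratio is bounded below by $\min{\rm eigv}(F^\dag F)$ for every pair of distinct states.

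The core of the argument is the operator inequality $\|F\delta F^\dag\|_1\ge\lambda\,\|\delta\|_1$, valid for every Hermitian $\delta$, with $\lambda:=\min{\rm eigv}(F^\dag F)$. The case $\lambda=0$ is trivial, so I would assume $\lambda>0$, which makes $F$ invertible. I would then invoke the dual characterization of the trace norm, $\|M\|_1=\max\{\Tr[WM]:W=W^\dag,\ \|W\|_\infty\le1\}$: it suffices to produce one admissible $W$ for which $\Tr[WF\delta F^\dag]=\lambda\|\delta\|_1$. Writing $Z:={\rm sgn}(\delta)$ (the signature operator, so that $Z$ is Hermitian, $\|Z\|_\infty\le1$, and $\Tr[Z\delta]=\|\delta\|_1$), the natural candidate is $W:=\lambda\,(F^{-1})^\dag\,Z\,F^{-1}$, which is Hermitian, satisfies $\|W\|_\infty\le\lambda\,\|(F^\dag F)^{-1}\|_\infty=1$, and yields $\Tr[WF\delta F^\dag]=\lambda\,\Tr[Z\delta]=\lambda\|\delta\|_1$ by cyclicity of the trace. (Equivalently, one can conjugate by the polar-decomposition unitary $U$ in $F=U(F^\dag F)^{1/2}$ to reduce to $\|G^{1/2}\delta G^{1/2}\|_1$ with $G:=F^\dag F\ge\lambda I$, and run the same dual argument there.)

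Applying this inequality with $\delta=\rho-\rho'$ gives $\|F(\rho-\rho')F^\dag\|_1/\|\rho-\rho'\|_1\ge\lambda$ for all $\rho\neq\rho'$, hence $\Cm(\N_\sigma)\ge\lambda\,\|\sigma_{\rm off-diag}\|_1$, which is the claim. I expect the only genuine obstacle to be the operator inequality itself: one cannot reduce to pure input states here, since the minimum causal effect really does depend on mixed states (cf. the discussion after Proposition~\ref{prop:pdmin}), so a uniform argument such as the dual-norm witness above — which treats all Hermitian $\delta$ at once — is the clean way around it. The remaining steps are straightforward manipulations of the explicit form of $\N_\sigma$ and the tensor multiplicativity of $\|\cdot\|_1$.
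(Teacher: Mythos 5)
Your proof is correct, and it reaches the key inequality $\|F\delta F^\dag\|_1\ge \lambda\,\|\delta\|_1$ (with $\lambda:=\min{\rm eigv}(F^\dag F)$ and $\delta$ Hermitian) by a genuinely different route from the paper. The setup is the same in both arguments: the $\mathcal{C}$-term cancels in the difference because $\mathcal{C}$ is constant, and tensor multiplicativity of the trace norm factors out $\|\sigma_{\rm off\text{-}diag}\|_1$. For the core inequality, the paper first restricts to perfectly distinguishable states via Proposition~\ref{prop:pdmin} (so $\|\delta\|_1=2$), writes $\|F\delta F^\dag\|_1=\Tr[\sqrt{F\delta F^\dag F\delta F^\dag}]$, and applies the operator inequality $F^\dag F\ge\lambda I$ together with operator monotonicity of the square root \emph{twice}, each pass extracting a factor $\sqrt{\lambda}$ and ending at $\Tr[|\delta|]$. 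You instead exhibit an explicit dual-norm witness $W=\lambda\,(F^{-1})^\dag\,{\rm sgn}(\delta)\,F^{-1}$, verify $\|W\|_\infty\le\lambda\,\|(F^\dag F)^{-1}\|_\infty=1$, and get $\Tr[W F\delta F^\dag]=\lambda\|\delta\|_1$ by cyclicity; the degenerate case $\lambda=0$ is handled trivially. Your argument is arguably cleaner: it treats all Hermitian $\delta$ uniformly without invoking the reduction to perfectly distinguishable states, and it avoids the double application of operator monotonicity. What the paper's route buys is that the same square-root manipulations generalize naturally to the non-invertible situation and mirror the computation used in the companion upper bound (Proposition~\ref{prop:superposition}). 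Both proofs give the identical constant, so there is no loss either way. One cosmetic remark: the proposition's display writes $\CM(\N_\sigma)$ on the left-hand side, but, as your proof and the paper's both make clear, the quantity being bounded is $\Cm(\N_\sigma)$.
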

The proof is provided in Appendix \ref{app:superposition1}. A consequence of the above proposition is that the superposition of identical discard-and-reprepare channels has a positive causal effect whenever the interference operator is invertible.

   \section{Variational algorithm for estimating the maximum quantum causal effect} \label{sec:algorithm}

\begin{figure*}
    \centering
    \includegraphics[width=\linewidth]{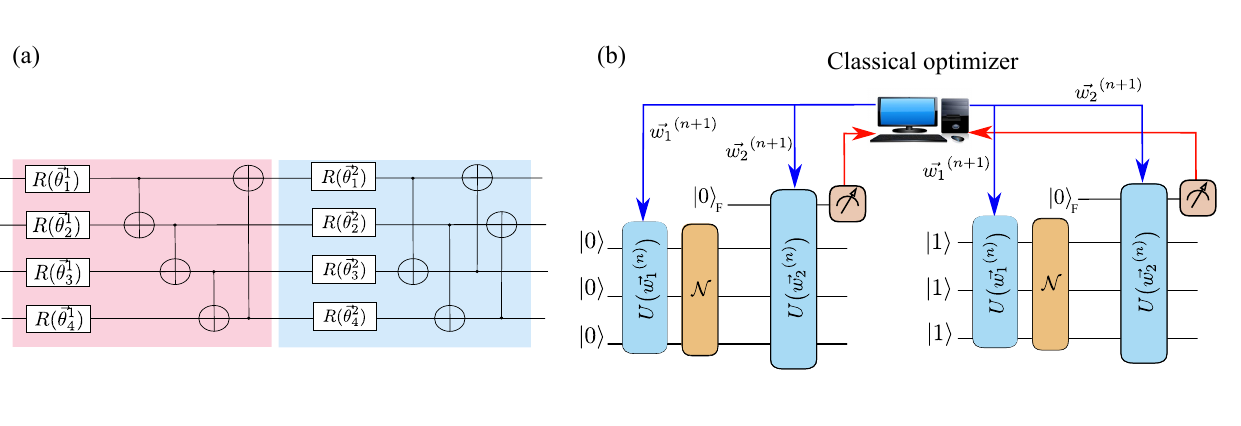}
    \caption{{\bf Variational circuit  for estimating the maximum causal effect.} 
     Fig.(a) shows our variational circuit in the case of 4 qubits.   The circuit consists of  two layers,  $\ell =1$ and $\ell =  2$,    represented by red and blue shades, respectively. Each layer includes single-qubit rotations on all the four qubits and controlled $\tt NOT$ gates on pairs of qubits.  At layer $\ell\in  \{1,2\}$, the single-qubit rotation on qubit $i\in  \{1,2,3,4\}$ is specified by a vector $\vec{\theta}_i^\ell$ of three Euler  angles. The overall unitary operator implemented by the circuit is specified by the vector $\vec w  =  ( \vec{\theta}_i^\ell )_{i\in  \{1,2,3,4\},  \ell \in \{1,2\}}$.   Fig.(b) illustrates the optimization performed by our algorithm, in the case of a quantum channel $\N$ with  three-qubit  input and three-qubit output.   At the $n$-th iteration, a pair of orthogonal states $\ket{\psi_{(n)}}=U(\vec{w_1}^{(n)})\ket{0}^{\otimes3}$ and $\ket{\psi_{(n)}^{\perp}}=U(\vec{w_1}^{(n)})\ket{1}^{\otimes3}$ is generated by applying the 3-qubit unitary gate $U(\vec{w_1}^{(n)})$ to the fixed states $|0\rangle^{\otimes 3}$  and  $|1\rangle^{\otimes 3}$.    The goal of the algorithm is to estimate the trace distance between the output states $\mathcal{N}(\psi_{(n)})$ and $\mathcal{N}(\psi_{(n)}^{\perp})$.    The trace distance is estimated by optimizing over all possible measurements on the output qubits.  The algorithm generates a measurement by applying a $4$-qubit unitary gate $U(\vec{w_2}^{(n)})$, acting on the three ouput qubits plus an auxiliary qubit $F$, initialized in the state $|0\rangle$ and measured in the computational basis $\{|0\rangle, |1\rangle\}$ after the action of the gate $U(\vec{w_2}^{(n)})$.   The algorithm computes the outcome probabilities  and feeds them to a classical optimizer, which updates the parameters $\vec{w}_1 ^{(n)}$ and $\vec{w}_2^{(n)}$ in order to maximize the total variation distance of the  probability distributions.  }
    \label{fig:entangling_unitary_algo}
\end{figure*}

The maximum causal effect can be computed analytically for channels  with sufficiently simple structure.  In general, however, it is also useful to have  techniques providing numerical estimates.   

Here we provide a variational algorithm based on Eq. (\ref{maxquantum}), which reduces the evaluation of $\CM$ to a maximization of the trace distance over pairs of pure and orthogonal input states. 
 
In turn, the algorithm uses the fact that the trace distance is equal to the maximum  of the total variation distance of the probability distributions  generated by measurements on the output system~\cite{helstrom_quantum_1976,Chen_2021,Rethinasamy_2023}.  

For simplicity, we illustrate the algorithm    for    channels $\N$ with $k$ input and $k$ output qubits.       The algorithm proceeds through a series of iterations, in which it searches for the input states  and measurement that maximize the total variation distance of the output probability distributions.

At the $n$-th iteration, a pair of orthogonal pure states  is generated by applying a parametrized unitary gate to a fixed pair of orthogonal states; specifically, the states are of the form  
\begin{align}\label{states}
|\psi^{(n)}  \rangle  :  =  U\left(\vec{w}_1^{(n)}\right)  | 0\rangle  \qquad {\rm and}  \qquad |\psi^{(n)}_\perp\rangle  :  =  U\left(\vec{w}_1^{(n)}\right)  | 1\rangle  \, ,
\end{align}
where $U\left(\vec{w}_1^{(n)}\right)$ is a parametrized $k$-qubit unitary gate specified by a vector of parameters $\vec{w}_1^{(n)}$,  called the ``weight.''      If  the variational circuit    scans all possible $k$-qubit unitary gates,  then all pair of orthogonal pure states can be obtained as  in  Eq. (\ref{states}).  

The algorithm also generates a quantum  measurement on the output qubits. The measurement is generated by applying   a parametrized $k+1$ qubit unitary gate $U\left(\vec{w}_2^{(n)}\right)$, on the $k$ output qubits and on an  auxiliary qubit  $F$,  initialized in the state $|0\rangle$, and  measured in the computational basis $\{|0\rangle,|1\rangle\}$ after the action of the gate  $U\left(\vec{w}_2^{(n)}\right)$. 
The probability of obtaining outcome $j  \in  \{0,1\}$ in the measurement, when the $k$ qubits are in the state $\rho$ is  
\begin{align}
\nonumber p(j|\rho) &= \Tr  [  \Big(I^{\otimes k}\otimes |j\rangle \langle j|\Big)  \, U\left(\vec{w}_2^{(n)}  \right) \Big(\rho  \otimes |0\rangle \langle 0|\Big)U\left(\vec{w}_2^{(n)}  \right)^\dag] \\
&=:   \Tr  [ P_j^{(n)}  \rho]  \,  ,  \end{align}
with
\begin{align}
P^{(n)}_j :  = \Big(I^{\otimes k}\otimes  \langle 0|\Big) U\left(\vec{w}_2^{(n)}  \right)^\dag \Big(I^{\otimes k}   \otimes |j\rangle\langle j|\Big) U\left(\vec{w}_2^{(n)}  \right)   \Big(I^{\otimes k}\otimes |0 \rangle\Big)  \,.
\end{align}
If the variational circuit scans all possible $(k+1)$-qubit unitaries, then this scheme scans all possible pairs of positive operators $P_0$ and $P_1$ satisfying the normalization condition   $P_0 +  P_1  =  I_k^{\otimes k}$, as shown in Appendix~\ref{App:strongly_entangling}.

The goal of the algorithm is to find a pair of input states and a measurement that maximize    the difference between the outcome probabilities after the action of channel $\N$; explicitly, the objective function to be maximized is   
\begin{align}\label{CWW}
C(  w^{(n)}_1,  w^{(n)}_2)  :  =  p\Big(1\big|  \,   \N\left( \psi^{(n)}\right)  \,  \Big)  -  p\Big(1\big| \,  \N\left(\psi^{(n)}_\perp\right) \, \Big)  \, ,  \end{align}
where we used the shorthand notations $\psi^{(n)} :  =|\psi^{(n)} \rangle \langle \psi^{(n)}|$  and $\psi^{(n)}_\perp :  =|\psi^{(n)}_\perp \rangle \langle \psi^{(n)}_\perp|$.  
Note that the maximization guarantees that  $C(  w^{(n)}_1,  w^{(n)}_2)$ is non-negative. In this case,  $C(  w^{(n)}_1,  w^{(n)}_2)$ is equal to half of the total variation distance of the  probability distributions $p\Big(j\big|  \,   \N( \psi^{(n)})  \,  \Big) $ and $p\Big(j\big| \,  \N(\psi^{(n)}_\perp) \, \Big)$. 

If the variational circuit scans all possible $k$-qubit gates and $(k+1)$-qubit gates, then the maximum of  $C(  w^{(n)}_1,  w^{(n)}_2)$ yields the maximum causal effect:  
\begin{align}\label{}
\nonumber &\max_{w^{(n)}_1,  w^{(n)}_2}   C(  w^{(n)}_1,  w^{(n)}_2)   \\  
\nonumber & =   \max_{\psi^{(n)},  \psi_\perp^{(n)}  } \max_{P_0, P_1\ge 0, P_0+P_1=  I}  \Tr[P_1^{(n)}  \N  (\psi^{(n)})]  -  \Tr[P_1^{(n)}  \N  (\psi_\perp^{(n)})]\\
\nonumber &= \frac 12 \max_{\psi^{(n)},  \psi^{(n)}_\perp  }\|  \N  (\psi^{(n)})  -  \N (\psi_\perp^{(n)})\|_1 \\
&  =     \CM  (\N) \, . \label{algorithm}  
\end{align}

The problem, of course, is that the number of parameters needed to specify an arbitrary $k$-qubit gate grows exponentially with $k$, which makes a full optimization unfeasible for large values of $k$.  
As a  feasible alternative, we adopt a variational circuit of a simpler form, proposed in Ref.~\cite{schuld_circuit-centric_2020}.     The circuit consists of  $L$ layers.    In each layer, all the qubits  undergo arbitrary local unitary rotations, followed by  $\tt CNOT$ gates. At layer $\ell$,  the $\tt CNOT$ gates are implemented on pairs of qubits that are at distance $\ell$ (modulo $k$): for all $i\in  \{1,\dots,  k\}$ a $\tt CNOT$ is applied between  the $i$-th qubit  (serving as the control) and the  $i{+}\ell(\rm{mod}\  k)$-th qubit (serving as the target.)     An illustration of the circuit in the case of $k=4$ qubits and $L=2$ layers is provided in  Fig.~\ref{fig:entangling_unitary_algo}(a).

The maximization of the objective function over $\vec{w}^{(n)}_1$ and $\vec{w}^{(n)}_2$ is carried out by using a gradient descent algorithm supported by the standard package Pennylane.   The codes used in our implementation of the algorithm  are available on Ref.~\cite{Goswami_github}.
 The algorithm terminates   when the value of objective function  does not change significantly in subsequent iterations, or when  the  number of iterations  has reached a maximum value.  Here, we set a  threshold value $\epsilon = 10^{-5}$ for  the changes in the objective function, and a maximum value  $2000$   for the number of iterations.  
After termination, the algorithm provides a lower bound to  the maximum causal effect of channel $\N$.

\begin{figure}
    \centering
    \includegraphics[width=\columnwidth]{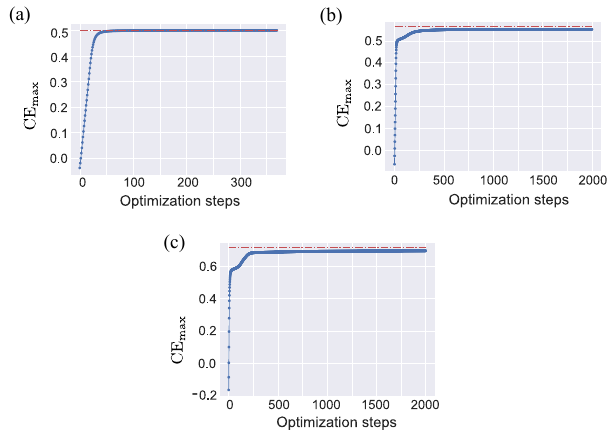}
    \caption{{\bf Benchmarking the algorithm for  a coherent superposition of  common cause and direct cause.} Subfigures \ref{fig:partial_swap_pi_4}(a),  \ref{fig:partial_swap_pi_4}(b), and \ref{fig:partial_swap_pi_4}(c) correspond to different choices of the state $\rho_\Lambda$ in  Eq. \eqref{Eq:channel_partial_swap}.  In all three cases, the angle $\theta$ is set to $\pi/4$, and  systems  $A$, $B$ and $\Lambda$ are  taken to be eight-dimensional ({\em i.e.},  each system consists of three qubits.)       The red dashed lines represent the true value of ${\CM}$, while 
    the blue dots represent subsequent iterations of our algorithm, in the scenario of Proposition~\ref{Lem:partial_swap}.   
   In Fig. (a) $p=0$, i.e., the marginal state $\rho_{\Lambda}$ at $\Lambda$ is  maximally mixed, and the true value is $0.5$ (the algorithm converged to $\approx 0.4999$).   In Fig.(b), $p=0.5$, and the true value is $0.591$ (the algorithm converged to $\approx 0.553$), and in Fig.(c), $p=1$ indicating $\rho_{\Lambda}$ is a pure state, and the true value is $0.741$ (the algorithm converged to $\approx 0.6959$). }
    \label{fig:partial_swap_pi_4}
\end{figure}

In the following, we evaluate the lower bound computed by our variational algorithm in the two examples shown in Section \ref{sec:applications}.   In the case of the common cause/direct cause setup,  the results of our algorithm can be benchmarked against  Eq. (\ref{CMmaclean}), which provides the exact value of the maximum causal effect.  The results are presented in Fig.~\ref{fig:partial_swap_pi_4}.  There, we set $d=8$ and $\theta =  \pi/4$, corresponding to a uniform  superposition of common cause and direct cause  configurations. 
  The three subfigures  show the results for three different values of the probability $p$ in Eq. (\ref{sigmalambda1}); specifically, the values are $p  =  0,  \frac 12,$ and $1$ and correspond to the values $\CM  (\N)  =  0.5,  0.591 $, and $0.741$ of the maximum causal effect, respectively.  In all these cases,  Fig.~\ref{fig:partial_swap_pi_4} shows that the numerical value found by the algorithm after the last iteration is close to the true value computed with Eq. (\ref{CMmaclean}).

\begin{figure}
    \centering
    \includegraphics[width=\columnwidth]{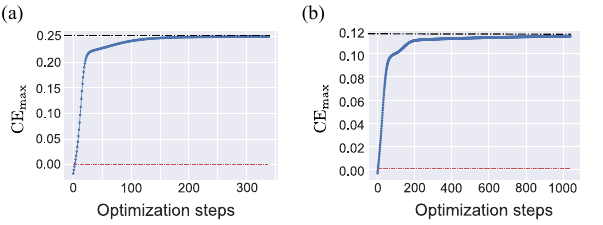}
    \caption{{\bf Benchmarking the algorithm for a coherent superposition of two completely depolarizing channels.}Signaling through coherent superposition of two completely depolarizing channels ($q=0$). We consider a 2-dimensional target system in (a) and a 4-dimensional target system in (b). We have a control qubit in $\ketbra{+}{+}$ state in both cases. The red line represents the bound on causal influence for the convex combination of the corresponding depolarization channels, which for the scenario of maximally depolarizing channels is $0$, signifying no causal influence. In contrast to the classical mixture, a coherent superposition allows more signaling, with the black dashed line representing the true value of the maximum quantum causal effect given by Proposition~\ref{prop:superposition}. For the 2-dimensional target (Fig.(a)), the algorithm converged to a ${\CM} \approx 0.2499$ (true value is $0.25$ as in Eq.~\eqref{benchmark2} with $k=2$), and for the 4-dimensional target(Fig. (b)), the algorithm converged to $\CM \approx 0.1152$ (true value is $0.1166$ as in Eq.~\eqref{benchmark2.1} with $m{=}k{=}2$  ). }
    \label{fig:coherent_sup1}
\end{figure}

Next, we considered the application of our algorithm to the estimation of $\CM$ for a coherent superposition of two completely depolarizing channels, using Eq.  (\ref{benchmark2}) for $k=2$ as a benchmark for the algorithm's estimate.  The results are shown   in Fig.~\ref{fig:coherent_sup1} for completely depolarizing channels acting on one qubit (a) and on two qubits (b). Also, in this case, the numerical values provided by the algorithm appear to be close to the exact value of the maximum causal effect provided by Eqs.~\eqref{benchmark2}, and ~\eqref{benchmark2.1}.

\section{Conclusions} \label{sec:discussion}
In this paper, we introduced the  maximum and minimum causal effects, two quantities that measure  the extent of causal influence that the input of a physical process exerts on the output.      The maximum causal effect provides a natural generalization of the ACE, a popular quantifier  that has been extensively used in classical causal inference.   In a similar way, we expect that the maximum causal effect  will find applications in the characterization of quantum causal relations,  both theoretically and experimentally.   
The minimum causal effect, instead,  quantifies the ability of physical processes to preserve the distinguishability of pairs of input states.  In quantum theory, it is closely connected with the recoverability of quantum information:  a quantum process has a high value of the minimum causal effect if and only if it is approximately correctable.

   The maximum and minimum causal effects enjoy  useful properties, including continuity and data-processing inequality.   In quantum theory, they satisfy a monogamy relation: if a quantum system $A$ has a high value of the minimum quantum effect on system $B$, then it must have a low value of the maximum causal effect on any system $B'$ that is spacelike separated with $B$.

Overall, the concepts and methods developed in this work enable a quantitative study of quantum causal relations  and have significant implications for quantum information processing and quantum technologies.

\begin{acknowledgments}
   KG thanks Siddhartha Das, Tamal Guha, and Saptarshi Roy for discussions. This work is supported by the Innovation Program for Quantum Science and Technology (Grant No. 2023ZD0300600),  by the Hong Kong Research Grant Council (RGC) through Grants No.  17307520,  SRFS2021-7S02, and R7035-21F,    and by the John Templeton Foundation through grant  62312, ``The Quantum Information Structure of Spacetime'' (qiss.fr). The opinions expressed in this publication are those of the authors and do not necessarily reflect the views of the John Templeton Foundation.  Research at the Perimeter Institute is supported by the Government of Canada through the Department of Innovation, Science and Economic Development Canada and by the Province of Ontario through the Ministry of Research, Innovation and Science.
\end{acknowledgments}

\bibliography{causal_influence}

\newpage
\appendix
\pagebreak

\section{Proof of Proposition~\ref{prop:classicalACE}}\label{app:classicalexpression}

The proof uses the following lemmas.
\begin{lemma}~\label{lem:orthogonal_distributions}
    For any two probability distributions $P$ and $P'$, associated with a sample space  $A=\{a\}_a$, we can find two probability distributions $Q$ and $Q'$ with maximal total variation distance, i.e. $\norm{Q-Q'}_1=2$, such that 
    \begin{align}
    P(a)-P'(a) = \frac{\norm{P-P'}_1}{2}(Q(a)-Q'(a)), \ \ \ \forall a.
\end{align}
\end{lemma}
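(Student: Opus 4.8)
The plan is to build $Q$ and $Q'$ directly from the Jordan (positive/negative part) decomposition of the difference vector $P-P'$. First I would write $\delta := P - P'$, viewed as a real function on the finite set $A$, and split it into its non-negative and non-positive parts, $\delta = \delta_+ - \delta_-$ with $\delta_\pm(a) := \max\{\pm\,\delta(a),\,0\} \ge 0$. By construction $\delta_+$ and $\delta_-$ have disjoint supports, so that $|\delta(a)| = \delta_+(a) + \delta_-(a)$ for every $a$.

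The crucial point is a mass-balance identity: since $P$ and $P'$ are both normalized, $\sum_a \delta(a) = 1 - 1 = 0$, hence $\sum_a \delta_+(a) = \sum_a \delta_-(a)$; denote this common value by $c \ge 0$. Then $\| P-P' \|_1 = \sum_a |\delta(a)| = \sum_a \delta_+(a) + \sum_a \delta_-(a) = 2c$, i.e. $c = \| P-P' \|_1/2$.

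Assume first that $P \neq P'$, so that $c > 0$. I would then set $Q := \delta_+/c$ and $Q' := \delta_-/c$. Each is non-negative and sums to $1$, hence is a legitimate probability distribution; moreover $Q$ and $Q'$ have disjoint supports, so $\| Q - Q' \|_1 = 2$. Substituting back, $\tfrac{\| P-P' \|_1}{2}\,\bigl(Q(a) - Q'(a)\bigr) = c\cdot\tfrac{\delta_+(a) - \delta_-(a)}{c} = \delta(a) = P(a) - P'(a)$ for all $a$, which is exactly the claimed identity. The degenerate case $P = P'$ is trivial: both sides of the identity vanish identically, and any pair of distributions with disjoint support (which exists as soon as the sample space has at least two elements) can be taken for $Q$ and $Q'$.

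There is no serious obstacle here; the one point that requires a little care is the normalization of $Q$ and $Q'$, and this is precisely where the hypothesis that $P$ and $P'$ are \emph{probability} distributions — rather than arbitrary non-negative vectors — enters, through the equal-mass property $\sum_a \delta_+(a) = \sum_a \delta_-(a)$. Everything else is bookkeeping with the positive/negative part decomposition, so I would keep the write-up short.
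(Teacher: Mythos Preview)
Your proof is correct and is essentially identical to the paper's: the paper partitions $A$ into $A_1=\{a:P(a)\ge P'(a)\}$ and $A_2=\{a:P(a)<P'(a)\}$ and sets $Q,Q'$ to be the suitably normalized restrictions of $P-P'$ to these sets, which is exactly your positive/negative-part decomposition $\delta_\pm$ normalized by $c=\|P-P'\|_1/2$. Your treatment is slightly cleaner in that you explicitly handle the degenerate case $P=P'$, which the paper's construction leaves implicit.
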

\begin{proof}
We first construct $Q$ and $Q'$ associated with any two probability distributions $P$ and $P'$. Let us divide the sample space of interventions $A=\{a\}_a$ into two disjoint parts: $A = A_1 + A_2$, such that 
\begin{align}
    &A_1 := \{a: P(a)\geq P'(a)\}, \nonumber \\
    &A_2 := \{a: P(a) < P'(a)\}.
\end{align}
Now we define the probability distributions $Q$ and $Q'$:
\begin{align}
    &Q(a) := \begin{cases}
			\frac{2\left[P(a)-P'(a)\right]}{\norm{P - P'}_1}, & \text{if $a \in A_1$}\\
            0, & \text{if $a \in A_2$}
		 \end{cases} \nonumber \\
   &Q'(a) := \begin{cases}
			0, & \text{if $a \in A_1$}\\
            \frac{2[P'(a)-P(a)]}{\norm{P - P'}_1}. & \text{if $a \in A_2$}
		 \end{cases} \nonumber
\end{align}
One can check that both $Q$ and $Q'$ are valid probability distributions. By construction, we have  $Q(a) \geq 0$ and $Q'(a) \geq 0$ for all intervention $a$. To show the normalization, we note that because $A_1$ and $A_2$ together constitute the entire sample space, we have 
\begin{align}
    \sum_{a\in A_1} (P(a) - P'(a)) = \sum_{a\in A_2} (P'(a) - P(a)),
\end{align}
and
\begin{align}
\norm{P-P'}_1&:=\sum_a\abs{ (P(a)-P'(a))}\nonumber \\
&=\sum_{a\in A_1}(P(a) - P'(a)) + \sum_{a\in A_2}(P'(a) - P(a)) \label{Eq:TVD_0} \\
&=2\sum_{a\in A_1}(P(a) - P'(a)) \label{Eq:TVD_1} \\
&=2\sum_{a\in A_2}(P'(a) - P(a)). \label{Eq:TVD_2} 
\end{align}
Plugging Eqs.\eqref{Eq:TVD_1} and \eqref{Eq:TVD_2} in the definition of $Q(a)$ and $Q'(a)$ respectively, we see $\sum_a Q(a)=\sum_a Q'(a)=1$, in other words they are valid probability distributions. Also, Eq.~\eqref{Eq:TVD_0} immediately makes it clear that $\norm{Q-Q'}_1=2$, i.e., they have maximal total variation distance.  This construction allows us to write 
\begin{align}
    P(a)-P'(a) = \frac{\norm{P-P'}_1}{2}(Q(a)-Q'(a)), \ \ \ \forall a.
\end{align}
\end{proof}

\begin{lemma} \label{lem:classicalACEorhto_2}
    The supremum in the  right-hand-side of Eq. (\ref{almostGPT}) 
    can be restricted without loss of generality to probability distributions having the maximum total variation distance, {\em i.e.},
    \begin{align}
     \sup_{P \not  =    P'}  \frac{  \|   {\cal N}  (P)   -  {\cal N}   (P')\|_1}{\|  P  -  P'\|_1}   = \sup_{Q,Q': \norm{Q-Q'}_1=2} \frac{\norm{\N(Q) - \N(Q')}_1}{2}. 
    \end{align}
    Moreover, if the variable $A$ has a finite set of possible values, then the supremum is a maximum. 
\end{lemma}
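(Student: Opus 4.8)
The plan is to derive the lemma directly from Lemma~\ref{lem:orthogonal_distributions} together with the linearity of the process ${\cal N}$. First I would record that, although ${\cal N}$ is defined on probability distributions, it is the restriction of a linear map on the space of signed measures on $A$, namely $[{\cal N}(v)](b):=\sum_a P(b|{\rm do}(a))\,v(a)$; consequently ${\cal N}(P)-{\cal N}(P')={\cal N}(P-P')$ and ${\cal N}(\lambda v)=\lambda\,{\cal N}(v)$ for every real scalar $\lambda$. This is the only structural fact needed, and it is what will make the numerator scale by exactly the same factor $\|P-P'\|_1/2$ that appears in Lemma~\ref{lem:orthogonal_distributions}.

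Next I would fix an arbitrary pair $P\neq P'$, invoke Lemma~\ref{lem:orthogonal_distributions} to produce probability distributions $Q,Q'$ with $\|Q-Q'\|_1=2$ and $P-P'=\frac{\|P-P'\|_1}{2}(Q-Q')$, and apply ${\cal N}$ to obtain ${\cal N}(P)-{\cal N}(P')=\frac{\|P-P'\|_1}{2}\big({\cal N}(Q)-{\cal N}(Q')\big)$. Taking the operational norm and dividing by $\|P-P'\|_1$ gives
\[
\frac{\|{\cal N}(P)-{\cal N}(P')\|_1}{\|P-P'\|_1}=\frac{\|{\cal N}(Q)-{\cal N}(Q')\|_1}{2}.
\]
Hence every ratio occurring on the left-hand side of Eq.~(\ref{almostGPT}) equals one of the right-hand-side form with $\|Q-Q'\|_1=2$, so the left supremum is at most the right supremum. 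For the reverse inequality I would note that any pair $Q,Q'$ with $\|Q-Q'\|_1=2$ automatically satisfies $Q\neq Q'$, hence is itself an admissible pair in the left-hand supremum and contributes precisely $\frac{\|{\cal N}(Q)-{\cal N}(Q')\|_1}{2}$; therefore the two suprema coincide.

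Finally, for the claim that the supremum is attained when $A$ has a finite set of values, I would give a short compactness argument: the set of probability distributions on a finite set is a (compact) simplex, the constraint $\|Q-Q'\|_1=2$ is closed, so $\{(Q,Q'):\|Q-Q'\|_1=2\}$ is a compact subset of the product of two simplices, and it is nonempty as long as $|A|\ge 2$ (take $Q,Q'$ to be point masses at two distinct values; if $|A|=1$ the supremum is empty and both sides vanish by convention). Since ${\cal N}$ is linear on a finite-dimensional space it is continuous, hence so is $(Q,Q')\mapsto\|{\cal N}(Q)-{\cal N}(Q')\|_1$, and a continuous function on a nonempty compact set attains its maximum. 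I do not expect a genuine obstacle here; the only point requiring a little care is the passage from the affine action of ${\cal N}$ on states to its linear action on differences, handled in the first step.
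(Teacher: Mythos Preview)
Your proposal is correct and follows essentially the same approach as the paper: both invoke Lemma~\ref{lem:orthogonal_distributions} together with the linearity of ${\cal N}$ to show that every ratio on the left equals one of the restricted form, and both conclude with a standard compactness argument (the paper extracts a convergent subsequence from a maximizing sequence, while you argue directly that the constraint set is compact and the objective continuous---these are equivalent).
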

\begin{proof}
 First, observe the following chain of equations.
 \begin{align}
     \frac{\norm{\N(P)-\N(P')}_1}{\norm{P-P'}_1}&=\frac{\sum_b \abs{\N(P)(b)-\N(P')(b)}}{\norm{P-P'}_1} \nonumber \\
     &=\frac{\sum_b \abs{\sum_a P(b|\mathrm{do}(a))(P(a)-P'(a))}}{\norm{P-P'}_1} \nonumber \\
     &=\frac{\sum_b \abs{\sum_a P(b|\mathrm{do}(a))(Q(a)-Q'(a))}}{2} \nonumber\\
     &=\frac{\sum_b \abs{\N(Q)(b)-\N(Q')(b)}}{2} \nonumber\\
     &=\frac{\norm{\N(Q) - \N(Q')}_1}{2} \label{Eq:classicalACE_orth}
      \end{align}
Here, in the second equation, we expand $\N(P)(b)$ as in Eq.~\eqref{classicalchannel}. In the third equation, we apply Lemma~\ref{lem:orthogonal_distributions}. Thus, the supremum  over all possible distributions $P$ and $P'$ is equal  to the supremum  over all distributions $Q$ and $Q'$ satisfying $\|  Q-Q'\|_1  =  2$.   

When the variable $A$ has a finite set of values, the space of probability distributions on $A$ is compact and the supremum is actually a maximum:  if  $(Q_n,  Q_n')_{n  \in {\mathbb N}}$ is a sequence of pairs of probability distributions achieving the supremum in the limit, {\em i.e.,} 
\begin{align}
\lim_{n\to \infty} \frac{\norm{\N(Q_n) - \N(Q_n')}_1}{2}  &  =     \sup_{Q,Q': \norm{Q-Q'}_1=2} \frac{\norm{\N(Q) - \N(Q')}_1}{2} \, ,
\end{align}
then   one can extract a subsequence $(Q_{n_k},  Q_{n_k}')_{k  \in {\mathbb N}}$ that converges to a pair of probability distributions $(Q_0, Q_0')$, thereby obtaining  
\begin{align}
\nonumber \frac{\norm{\N(Q_0) - \N(Q_0')}_1}{2}   &=  \lim_{k\to \infty}   \frac{\norm{\N(Q_{n_k}) - \N(Q_{n_k}')}_1}{2}  \\
  &  = \sup_{Q,Q': \norm{Q-Q'}_1=2} \frac{\norm{\N(Q) - \N(Q')}_1}{2} \, .
\end{align}
Hence, the supremum is actually a maximum. 
 \end{proof}

\medskip

\noindent \textbf{Proof of proposition~\ref{prop:classicalACE}.}   Lemma \ref{lem:classicalACEorhto_2}  reduces the supremum over all pairs of distinct probability distributions to a maximum over all pairs $  (Q, Q')$ of  probability distribution with maximum distance $\|  Q  - Q'\|_1  = 2$.    Moreover, the convexity of the norm (implied by the triangle inequality), implies that the maximization can be restricted without loss of generality to   extreme points of the probability simplex, corresponding to probability distributions of the form $Q_{a_0}   (a)   = \delta_{a,  a_0}$ for some fixed value $a_0$.   For an extreme point $Q_{a_0}$, the output probability distribution $\N  (Q_0)$ is  
\begin{align}
\nonumber [\N  (Q_0)]\,  (b)  &=   \sum_a  p(  b|  {\rm do  (a)})  \,  Q_{a_0}  (a) \\
\nonumber &=   \sum_a  p(  b|  {\rm do  (a)})  \,  \delta_{a,  a_0}  \\
\nonumber  &=  p(b|  {\rm do}(a_0))\\
  &  \equiv P_{{\rm do}  (a_0)}    (b) \,.
\end{align}
Hence,  we have 
\begin{align}
\nonumber      \sup_{P \not  =    P'}  \frac{  \|   {\cal N}  (P)   -  {\cal N}   (P')\|_1}{\|  P  -  P'\|_1}   &= \max_{Q,Q': \norm{Q-Q'}_1=2}  \frac{\norm{\N(Q) - \N(Q')}_1}{2}  
\\
   &= \max_{a_0\not  =  a_1} \frac{\norm{  P_{{\rm do}  (a_0) } - P_{{\rm do}  (a_0)} }_1}{2}.
  \end{align}
The proof is then completed by comparing this equality with the definition of ACE [Eq.~\eqref{ACEnormnonbinary}].  \qed

  \section{From the classical ACE to the maximum causal effect in general probabilistic theories}\label{app:foundations}  

Here we provide more background on the use of the ACE in classical theory, and show that the notion of maximum causal effect introduced in this work provides a natural analogue of the ACE in general probabilistic theories.

\subsection{Causation vs correlation in classical theory}
Given two  random variables $A$ and $B$,   a fundamental question  is how to determine whether there is a genuine causal influence from $A$ to $B$, and, in the affirmative case, to quantify the strength of the causal influence. 

  A cause-effect relation  $A \to B$, from Alice's to Bob's variable, generally implies  
  correlations in the joint probability distribution $P(a,b)$ that $A$ has value $a$ and $B$ has value $b$.  The converse is generally not true, as emphasized by the famous motto ``correlation does not imply causation''.  For example,  the correlation between variables $A$ and $B$ could be due to a common cause, that is, an additional random variable $\Lambda$ affecting both $A$ and $B$.  The variable $\Lambda$ is typically regarded as {\em latent}, meaning that it is not directly observable.

\begin{figure}
    \centering
\includegraphics[width=0.8\columnwidth]{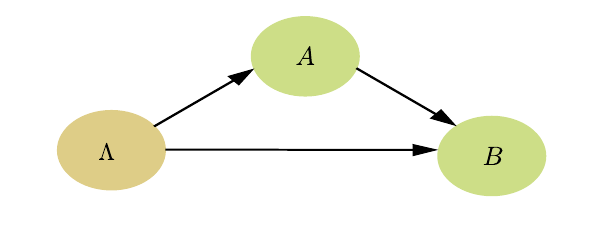}
    \caption{ {\bf  A causal model with both direct and common cause.}   The figure shows the cause-effect relations  (represented by arrows) between two observable variables, $A$ and $B$, and a latent variable $\Lambda$, not accessible to observation.  Here, variable $\Lambda$ acts as a common cause for variables $A$ and $B$, generating correlations between them. The problem is to quantify the amount of  direct causal influence from $A$ to $B$, corresponding to correlations that are not due to the common  cause $\Lambda$.    }
    \label{fig:DAG_quantum}
\end{figure}

  Consider a scenario involving observable variables $A$ and $B$, and a latent common cause $\Lambda$.  This scenario can be mathematically represented as a {\em direct acyclic graph (DAG)}, where the vertices represent random variables and the directed edges represent cause-effect relations, as illustrated in Fig.~\ref{fig:DAG_quantum}.       The joint probability distribution of variables $A$ and $B$ can be decomposed as 
\begin{align}\label{pab}
 P(a,b) = \sum_{\lambda}  P(b|a,\lambda) \, P(a|\lambda)\,  P(\lambda) \, , 
\end{align}
where  $P(\lambda)$ is it probability that the random variable $\Lambda$ takes the value $\lambda$.   

To determine whether variable $A$ affects variable $B$, one needs to consider what would happen if  Alice intervened in her variable, forcing it to assume a certain value $a$, independently of the value of variable $\Lambda$.   
This intervention,  denoted by $\mathrm{do}(a)$, overrides the causal influence from $\Lambda$ to $A$ by forcibly removing the correlations between these two variables; in other words, the probability of performing the intervention ${\rm do} (a)$ is independent of the value of $\Lambda$:
\begin{align}\label{probdo}
P(\mathrm{do}(a)|\lambda)=P(\mathrm{do}(a))  \qquad \forall a ,\ ,\forall\lambda \,.
\end{align}

\begin{figure} 
    \centering
    \includegraphics[width= 0.8\columnwidth]{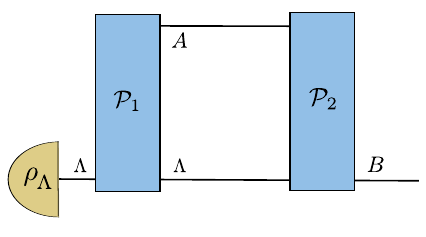}
    \caption{  {\bf Circuit associated to the causal model in Fig.~\ref{fig:DAG_quantum}.}  System $\Lambda$ is initially in a state $\rho_\Lambda$.   A first process, ${\cal P}_1$  transforms system $\Lambda$ into the composite system  $A\otimes \Lambda$, thereby implementing the causal relation $\Lambda \to A$.  Then, a second process ${\cal P}_2$ transforms system $A \otimes \Lambda$ into system $B$, thereby implementing the causal relations $A\to B$ and $\Lambda \to B$. } 
    \label{Fig:circuit}
\end{figure}

Now, if $A$ causes $B$ through a physical mechanism described by the conditional probability distribution $P(b|a,\lambda)$,  forcing $A$ to assume a chosen value before the mechanism acts should not alter the probability of obtaining the value $b$: in formula,
\begin{align}\label{conditionaldo}
p  (b|  {\rm do}  (a)   , \lambda)   =  p(b|a,\lambda)   \qquad \forall a, \, \forall b,  \,\forall \lambda\,.
\end{align}
Eqs. (\ref{probdo}) and (\ref{conditionaldo})  imply that the joint probability distribution of performing the intervention ${\rm do} (a)$ and observing value $b$ is 
\begin{align}
 P(\mathrm{do}(a),b) = \sum_{\lambda} P(b|a,\lambda) \,  P\big(\mathrm{do}(a)\big)\,  P(\lambda)  \, .   
\end{align}
The conditional probability distribution of variable $B$ given a do-intervention on variable $A$ is then 
\begin{align}\label{classicaldochannel}
P(b|  {\rm do (a)})  =  \sum_\lambda  \,      P(b| a, \lambda)  \,   P(\lambda)  \, . 
\end{align}

By definition, the probability distribution $p(b|  {\rm do} (a))$ depends only on the observable variables $a$ and $b$.

\subsection{From classical theory to general probabilistic theories: circuits and  do-interventions}

 In a general probabilistic theory,  variables are associated with physical systems, and  cause-effect relations between them are induced  by physical processes \cite{henson2014theory,Chiribella_quantum_speedup_2019}.   From this perspective, the causal structure described by a DAG is associated with a sequence of physical processes.  For example, the DAG in Fig. \ref{fig:DAG_quantum} is associated to a sequence of two processes ${\cal P}_1$ and ${\cal P}_2$, with ${\cal P}_1$ transforming system $\Lambda$ into the composite system $A\otimes \Lambda$, and process  ${\cal P}_2$ taking the composite system $A\otimes \Lambda$ into  system $B$ in output, as illustrated in Fig.~\ref{Fig:circuit}. At the beginning, system $\Lambda$ starts off in a state $\rho_\Lambda  \in  {\sf St}  (\Lambda)$.  Then,  process ${\cal P}_1$ generates system $A$ and, in general, correlates with system $\Lambda$, producing the bipartite state  
\begin{align}\label{sigmaalambda}
\sigma_{A\Lambda}: =  {\cal P}_1  (\rho_\Lambda)  \in  {\sf St}  (A\otimes \Lambda) \, .
\end{align}      
Finally, process ${\cal P}_2$ generates variable   $B$, putting it in the state 
\begin{align}\label{Eq:tau_B}
\tau_B   :  = {\cal P}_2  (\sigma_{A\Lambda})  \in {\sf St}  (B) \, .
\end{align}

 {\em A priori}, the initial state of system $\Lambda$  and the processes ${\cal P}_1$  and  ${\cal P}_2$ may be unknown, and we will not make any assumption on them, except that they are a valid state and two valid processes allowed by our theory, respectively.

To quantify the amount of causal influence from system $A$ to system $B$, we  need an analogue of the do-intervention.  In a general probabilistic theory, we define the do-intervention to be  the operation that  resets system $A$ to  a given state $\rho_0$, removing any prior  correlations between system $A$ and other physical systems.  
\begin{definition}[Do-interventions]\label{def:do}
Consider a general probabilistic theory satisfying the Causality Axiom.  For a given system $A$ in the theory,  
and a given state state $\rho_0  \in  {\sf St}  (A)$,  the general do-intervention ${\rm do}  (\rho_0):  {\sf St}  (A) \to {\sf St} (A)$  is an affine map  of the  the discard-and-reprepare form 
\begin{align}\label{dorho}
{\rm do}  (\rho_0)  :=  \rho_0  \, \Tr_A\,, 
\end{align}
where $\Tr_A$ is the unique discarding operation for system $A$ (in quantum theory, the partial trace over system $A$'s Hilbert space.) 
\end{definition}  
  
In classical theory, Definition \ref{def:do}    is consistent with the standard definition of the do-operation ${\rm do}  (a_0)$, which resets the variable $A$ to a fixed value $a_0$.    In the notation of Definition \ref{def:do}, this operation is denoted by ${\rm do}  (P_{a_0})$, where   $P_{a_0}  (a) : =  \delta_{a,a_0}$ is the probability distribution that assigns probability 1 to the value $a=a_0$ and zero to all the other values. Note that, more generally, Definition \ref{def:do} allows for  do-operations that reset variable $A$ to a random value $a$ distributed according to  a general  probability distribution $P_0(a)$.

\begin{figure}
    \centering
    \includegraphics[width=\columnwidth]{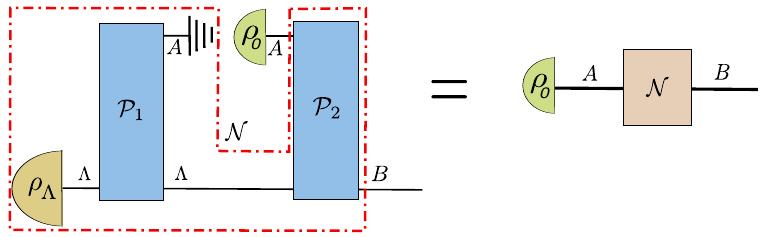}
    \caption{{\bf Inserting a do-intervention in the circuit of Fig. \ref{Fig:circuit} .}  The experimenter resets system $A$ to a given state $\rho_0$. As shown by the red dashed line on the right-hand side, the dependence of the state of the system $B$ from the state $\rho_0$ is given by the effective process $\N$.    }
    \label{fig:dointervention}
\end{figure}

In a general probabilistic theory, the do-intervention describes the act of resetting a system  to a fixed state,  breaking any  correlation that the system may  have had previously.    In particular, inserting a do-intervention   ${\rm do}  (\rho_0)$  into the circuit of Fig.~\ref{Fig:circuit} yields  a new circuit, depicted in Fig.~\ref{fig:dointervention}.   Here,  the effect of the do-intervention  ${\rm do}  (\rho_0)$ is to transform the joint  state $\sigma_{A\Lambda}$  [Eq. (\ref{sigmaalambda})]  into the product state 
\begin{align}
\nonumber \sigma'_{A\Lambda}:    &=    \Big(  {\rm do}  (\rho_0)  \otimes {\cal I}_\Lambda    \Big)  (\sigma_{A\Lambda})  \\
&  =\rho_0 \otimes \sigma_\Lambda \,,
\end{align} 
where ${\cal I}_\Lambda$ is the identity transformation on system $\Lambda$, and $\sigma_\Lambda  \in  {\sf St} (\Lambda)$ is the marginal state defined by 
\begin{align}\label{sigmalambda}
\sigma_{\Lambda} : =  \Tr_A  [  \sigma_{A\Lambda}] \,.
\end{align}
In the end, the  state of system $B$ is    
\begin{align}
\nonumber \tau_B'    &:  =  {\cal P}_2   (\sigma'_{A\Lambda})  \\
\nonumber &  =   {\cal P}_2   ( \rho_0  \otimes  \sigma_\Lambda)\\
&\equiv  {\cal N}  (\rho_0)   \, ,    \label{tauprime}  
\end{align}
where $\mathcal {N}:  \St{A} \to \St{B}$  is the process defined by  
\begin{align}\label{Eq:new_process_GPT}
 {\cal N}  (\rho)  :  =  {\cal P}_2   (  \rho  \otimes  \sigma_\Lambda) \, \qquad \forall \rho \in {\sf St} (A) \,.  
\end{align}
The process  $\cal N$ describes  the influence of the experimenter's reset operation for system $A$ on the output state of system $B$.   In the following, we will consider the situation where the  experimenter has black box access to the process ${\cal N}$, meaning that they can only prepare  the process' input  and perform measurements on  the process' output.   The goal of the experimenter is to estimate the amount of causal influence between systems $A$ and $B$ based on the observable input-output output statistics of   the process  ${\cal N}$. 

\begin{figure}
    \centering
    \includegraphics[width=\columnwidth]{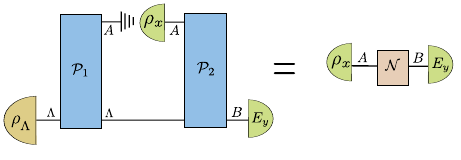}
    \caption{{\bf Probability distribution  of measurements on system $B$ following do-interventions on system $A$. }   The experimenter performs a do-intervention ${\rm do} (\rho_x)$ on system $A$, resetting it to the state $\rho_x$,  and a measurement $ (e_y)_{y=1}^N$ on system $B$.    The  probability of the outcome $y$ is $p(y|  {\rm do}  (x))   =  e_y  (\N (\rho_x))$ and can be computed in terms of the effective process $\N$.   }
    \label{fig:DAGwithmeasurement}
\end{figure}

\subsection{Maximum ACE between state preparation variables and measurement  outcome variables} \label{app:GCEMaxIsMaxACE}

Consider the scenario of Figure~\ref{fig:DAGwithmeasurement}, where an experimenter resets system $A$ to a state $\rho_x$, labeled by an index $x$ in the finite set  $X  = \{1,\dots, M\}$, and  subjects system $B$ to a measurement $(E_y)_{y \in  Y}$ with outcomes in the finite set $Y  = \{  1,
 \dots,  N\}$. In this scenario,  the state of system $B$ right before the measurement is 
\begin{align}
\tau_x   =   \mathcal{N}   (\rho_x) \  , 
\end{align} 
as one can see  from Eq. (\ref{tauprime}) by replacing $\rho_0$ with $\rho_x$ in the right-hand-side.  
Hence, the probability of observing outcome $y$ after resetting system $A$  to the state $\rho_x$   is 
\begin{align}
\nonumber p(y|  {\rm do}  (x))    &:= E_y (\tau_x)\\
  & =E_y  (   {\cal N}     (\rho_x)   ) \, . \label{measured} 
\end{align}
Then, the  ACE between the state preparation variable $X$ and the measurement outcome variable $Y$ is
\begin{align}
\nonumber  \mathrm{ACE}_{X\to Y  }   &= \max_{x,x'\in X}\frac{ \|   P_{{\rm do}  (x)} -  P_{{\rm do}  (x')} \|_1}2  \\
\nonumber   &  = \frac12   \, \max_{x,x'\in X}  \sum_{y\in Y}  \big|  p( y|  {\rm do}  (x))   -  p  (y|  {\rm do}  (x') ) \big| \\
\label{measuredACE}  &   =  \frac 12  \max_{x,x'\in  X}\sum_{y\in Y}  \big|  E_y  (   {\cal N}     (\rho_x)    -  E_y  (   {\cal N}     (\rho_{x'}) \big| \, .       \end{align}

Now, suppose that the experimenter optimizes the choice of measurement over all possible measurements allowed by the theory, searching for the measurement that maximizes the ACE.    The maximum ACE over all possible measurements is  given by 
\begin{align}
\nonumber {\rm ACE}    (  X,  {\cal N})     &:  =   \sup_{ (E_y)_{y\in  Y}  }   {\rm ACE}_{X\to Y}  \\
 \nonumber  & =
 \frac 12  \, \sup_{ (E_y)_{y\in  Y}  }  \max_{x,x'\in  X}\sum_{y\in Y}  \big|  E_y  (   {\cal N}     (\rho_x)    -  E_y  (   {\cal N}     (\rho_{x'}) \big| \\
  &  =   \max_{x,x'\in  X}  \frac{\|  {\cal N}     (\rho_x)    -  {\cal N}     (\rho_{x'}) \|_1}2\, , 
\end{align}
where the  supremum  is over all possible measurements $(E_y)_{y\in  Y}$  on system $B$, with all possible numbers of outcomes $N$, and  last equality follows from the definition of the operational norm in Eq. \eqref{Eq:operational_norm2}.   

Finally, suppose that the experimenter optimizes the choice of state preparations in order to maximize the causal effect.  The supremum over all possible state preparations is given by 
\begin{align}
\nonumber {\rm ACE}    ({\cal N})     &:  =   \sup_{ (\rho_x)_{x\in  X}  }   {\rm ACE}  (X,  {\cal N})    \\
 \nonumber   &  =  \sup_{ (\rho_x)_{x\in  X}  }    \max_{x,x'\in  X}  \frac{\|  {\cal N}     (\rho_x)    -  {\cal N}     (\rho_{x'}) \|_1}2\\
   &   =   \sup_{\rho,  \rho' \in  {\sf St}  (A)} \frac{\|  {\cal N}     (\rho)    -  {\cal N}     (\rho') \|_1}2 \, ,   \label{supACE}
\end{align}
where  the last equality follows by restricting the optimization over all indexed sets of input states $(\rho_x)_{x\in X}$ to an optimization over  pairs of states $(\rho,  \rho')$.

\begin{definition}
We call the quantity ${\rm ACE}    ({\cal N})$  in Eq. (\ref{supACE})   the {\em maximum classical ACE} induced by the process $\N :  {\sf St} (A) \to {\sf St}  (B) $.
\end{definition}
The maximum classical ACE quantifies the maximum strengths of the causal influence between  state preparation variables  and  measurement outcome variables.   

When the state space ${\sf St}  (A)$ is finite-dimensional, 
it is  standard to assume that ${\sf St}  (A)$  is also compact \cite{chiribella2010probabilistic,Barnum_2007,chiribella2016quantum,Plvala2023}.   In this case, the supremum is actually a maximum: 
\begin{proposition}\label{prop:supmax}
When the state space ${\sf St}  (A)$ is compact, one has 
 \begin{align}
{\rm ACE}    ({\cal N})   =  \max_{\rho,  \rho' \in  {\sf St}  (A)} \frac{\|  {\cal N}     (\rho)    -  {\cal N}     (\rho') \|_1}2 \, ,   \label{maximumACE}
 \end{align}
\end{proposition}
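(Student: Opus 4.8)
The plan is to treat this as a standard compactness argument: the quantity appearing inside the supremum in Eq.~(\ref{supACE}) is a continuous function of the pair $(\rho,\rho')$, and a continuous real-valued function on a nonempty compact set attains its maximum, so the supremum is automatically a maximum.

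Concretely, I would proceed in three steps. First, observe that the domain ${\sf St}(A)\times{\sf St}(A)$ is compact, being the Cartesian product of two copies of the compact set ${\sf St}(A)$ --- compactness of ${\sf St}(A)$ is the standing assumption on finite-dimensional state spaces made in Sec.~\ref{sec:theory}, and it is explicitly invoked in the statement. Second, show that $g(\rho,\rho'):=\tfrac{1}{2}\,\|{\cal N}(\rho)-{\cal N}(\rho')\|_1$ is continuous on this domain. This follows by composing two continuity facts: (i)~the process ${\cal N}$ is affine and, acting between finite-dimensional state spaces, extends to a linear --- hence continuous --- map on the ambient real vector space ${\sf St}_{\mathbb R}(A)$; and (ii)~the operational norm is a genuine norm on the finite-dimensional vector space ${\sf St}_{\mathbb R}(B)$ --- a fact recalled in Sec.~\ref{sec:theory} --- and is therefore continuous, the reverse triangle inequality $\big|\,\|\delta\|_1-\|\delta'\|_1\,\big|\le\|\delta-\delta'\|_1$ furnishing an explicit modulus of continuity if one wants it. Third, invoke the extreme value theorem to conclude that the supremum in Eq.~(\ref{supACE}) is attained, which is precisely Eq.~(\ref{maximumACE}).

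There is no substantive obstacle here; the only points requiring a moment's attention are the continuity of ${\cal N}$ and of $\|\cdot\|_1$ in the general-probabilistic-theory setting, both of which are immediate from finite-dimensionality together with the already-established fact that $\|\cdot\|_1$ is a norm. If one prefers to avoid citing the extreme value theorem by name, the same conclusion can be reached by the sequential argument used in the proof of Lemma~\ref{lem:classicalACEorhto_2}: take a maximizing sequence $(\rho_n,\rho_n')_{n\in\mathbb N}$, extract a convergent subsequence by compactness, and use continuity of $g$ to see that its limit attains the supremum.
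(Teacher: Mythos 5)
Your proposal is correct and follows essentially the same route as the paper: the paper's proof is exactly the maximizing-sequence argument you sketch at the end (extract a convergent subsequence by compactness of ${\sf St}(A)$ and pass to the limit), and your extreme-value-theorem phrasing is just the same compactness argument in different clothing. If anything, you are slightly more careful than the paper, which tacitly assumes the continuity of $(\rho,\rho')\mapsto\|{\cal N}(\rho)-{\cal N}(\rho')\|_1$ that you justify explicitly via linearity of ${\cal N}$ and the reverse triangle inequality.
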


\proof  Let $(\rho_n,  \rho_n')_{n\in  \mathbb N}$ be a sequence of  pairs of (perfectly distinguishable) states achieving the supremum,  namely  
\begin{align}
\lim_{n\to \infty}  \frac{ \|  \N  ( \rho_n-  \rho_n' ) \|_1}{ 2}   =  \sup_{\rho, \rho'  \in {\sf St}  (A)} \frac{ \|  \N  ( \rho-  \rho' ) \|_1}{ 2}  \, .    
 \end{align}
Compactness of the state space ${\sf St}  (A)$ implies that one can extract a  subsequence   $(\rho_{n_k},  \rho_{n_k}')_{k\in  \mathbb N}$   that converges to a pair of   states $(\rho_0,  \rho_0')$. Hence, we have   
\begin{align}
\nonumber  \frac{ \|  \N  ( \rho_{0}-  \rho_{0}' ) \|_1}{ 2} &=   \lim_{k\to \infty}  \frac{ \|  \N  ( \rho_{n_k}-  \rho_{n_k}' ) \|_1}{ 2}  \\
&=\sup_{\rho, \,\rho' \in  {\sf St}  (A)} \frac{ \|  \N  ( \rho-  \rho' ) \|_1}{ 2}  \, .    
 \end{align}
\qed

\medskip

We conclude by showing that the maximum classical ACE is a lower bound to the  maximum causal effect $\CM$ introduced in Definition \ref{def:GCEmax}:  
\begin{proposition}\label{prop:ACECE}
 For every probabilistic theory, every pair of systems $A$ and $B$, and every process $\N  :  {\sf St}  (A) \to  {\sf St}  (B)$ one has the lower bound 
 \begin{align}\label{ACECEbound}
 \CM (\N)  \ge  {\rm ACE}  (\N)  \, .   
 \end{align}
  The equality   $\CM (\N) =  {\rm ACE}  (\N)$ holds if  the state space of system $A$ is generated by perfectly distinguishable pure states,  that is, if  every element $\delta$  of the vector space ${\sf St}_{\mathbb R}  (S)$ can be decomposed as    $\delta  =   a   \,    \sigma    -   b  \,    \sigma'$,
where $a$ and $b$ are nonnegative coefficients, and $\sigma$ and $\sigma'$ are perfectly distinguishable states.  \end{proposition}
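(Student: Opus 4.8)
\medskip
\noindent\textbf{Proof strategy.} The bound~(\ref{ACECEbound}) is the easy half, and I would dispatch it first. For any two \emph{distinct} states $\rho,\rho'\in{\sf St}(A)$ one has $0<\|\rho-\rho'\|_1\le 2$, hence
\begin{align}
\frac{\|{\cal N}(\rho)-{\cal N}(\rho')\|_1}{\|\rho-\rho'\|_1}\ \ge\ \frac{\|{\cal N}(\rho)-{\cal N}(\rho')\|_1}{2}\,.
\end{align}
Taking the supremum over all $\rho\neq\rho'$ turns the left-hand side into $\CM({\cal N})$ by Definition~\ref{def:GCEmax}, and the right-hand side into ${\rm ACE}({\cal N})$ by Eq.~(\ref{supACE}) (adjoining the trivial pairs $\rho=\rho'$ leaves the supremum unchanged), which yields $\CM({\cal N})\ge{\rm ACE}({\cal N})$.

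For the equality under the stated hypothesis, the plan is to establish the reverse inequality $\CM({\cal N})\le{\rm ACE}({\cal N})$ by showing that every ratio entering the definition of $\CM({\cal N})$ is already of the form $\|{\cal N}(\sigma)-{\cal N}(\sigma')\|_1/2$. I would fix $\rho\neq\rho'$, set $\delta:=\rho-\rho'\in{\sf St}_{\mathbb R}(A)$, and use the hypothesis to write $\delta=a\,\sigma-b\,\sigma'$ with $a,b\ge0$ and $\sigma,\sigma'$ perfectly distinguishable. Two short observations then pin everything down. First, applying the unique deterministic effect $\Tr_A$ (whose existence is granted by the Causality Axiom) gives $a-b=\Tr_A[\delta]=\Tr_A[\rho]-\Tr_A[\rho']=0$, so $a=b$. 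Second, perfect distinguishability of $\sigma$ and $\sigma'$ together with Eq.~(\ref{opnorm}) gives $\|\delta\|_1=\|a\sigma-b\sigma'\|_1=a+b=2a$. Hence $\delta=a(\sigma-\sigma')$ with $a=\|\delta\|_1/2>0$, and by linearity of $\cal N$ on ${\sf St}_{\mathbb R}(A)$,
\begin{align}
\frac{\|{\cal N}(\rho)-{\cal N}(\rho')\|_1}{\|\rho-\rho'\|_1}=\frac{a\,\|{\cal N}(\sigma)-{\cal N}(\sigma')\|_1}{2a}=\frac{\|{\cal N}(\sigma)-{\cal N}(\sigma')\|_1}{2}\ \le\ {\rm ACE}({\cal N})\,,
\end{align}
the last step directly from the definition (\ref{supACE}). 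Taking the supremum over $\rho\neq\rho'$ then gives $\CM({\cal N})\le{\rm ACE}({\cal N})$, and combined with the easy direction this proves $\CM({\cal N})={\rm ACE}({\cal N})$.

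A shortcut is also available: since a state space generated by perfectly distinguishable pure states is in particular generated by perfectly distinguishable states, Proposition~\ref{prop:pdmax} applies (compactness being assumed throughout), and Eq.~(\ref{distinguishablepsipsi'}) already writes $\CM({\cal N})$ as a maximum of $\|{\cal N}(\psi)-{\cal N}(\psi')\|_1/2$ over a restricted set of state pairs, which is trivially at most ${\rm ACE}({\cal N})$. I would nonetheless present the self-contained argument, since it exposes the only place where the hypothesis genuinely enters. That step is also the main, if modest, obstacle: one must invoke the normalization constraint to force $a=b$ \emph{and} the \emph{perfect} distinguishability of $\sigma,\sigma'$ to obtain $\|\delta\|_1=2a$ rather than merely $\|\delta\|_1\le 2a$; with an arbitrary two-state decomposition of $\delta$ the chain of equalities above would collapse into a tautology. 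Everything else is linearity together with the elementary properties of the operational norm already recorded in the text.
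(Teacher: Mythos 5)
Your proof is correct and follows essentially the same route as the paper's: the easy bound via $\|\rho-\rho'\|_1\le 2$, then the decomposition $\rho-\rho'=a\,\sigma-b\,\sigma'$ into perfectly distinguishable states, with normalization forcing $a=b$ and perfect distinguishability giving $\|a\,\sigma-b\,\sigma'\|_1=a+b=2a$, so that each ratio collapses to $\|\mathcal{N}(\sigma)-\mathcal{N}(\sigma')\|_1/2$. One caution on the shortcut you mention in passing: in the paper's logical ordering Proposition~\ref{prop:pdmax} is itself derived from this proposition, so invoking it here would be circular --- the self-contained argument you present is the one to keep.
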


{\bf Proof.} Since the operational norm is upper bounded by 2 for every pair of states $\rho$ and $\rho'$ \cite{chiribella2010probabilistic},  
  one has  the bound 
\begin{align}
\nonumber \CM  (\N)   &:  = \sup_{\rho \not  =  \rho'}   \frac{ \| \N  (\rho)  -  \N  (\rho')\|_1 }{\|\rho   -  \rho'\|_1} \\
 \nonumber &\ge  \sup_{\rho \not  =  \rho'}   \frac{ \| \N  (\rho)  -  \N  (\rho')\|_1 }{2}\\
    & \equiv  {\rm ACE}  (\N)\, .   \label{poiu} 
\end{align}  

If the state space ${\sf St}  (A)$ is generated by perfectly distinguishable states,  then the vector $\delta  =  \rho-\rho'$ can decomposed as  $\rho-  \rho'  =  a\ , \sigma  -  b\,  \sigma'$ where $\sigma ,\sigma'$ are perfectly distinguishable states, and $a, b$  are nonnegative numbers.   Note that the normalization of the states $\rho, \rho', \sigma,$ and $\sigma'$ implies 
\begin{align}  
\nonumber a-b   & =     a \,\Tr [\sigma]   -b\,  \Tr[\sigma']  \\
\nonumber &  =  \Tr  [a \sigma  -b  \sigma']\\
\nonumber  &  =\Tr_A  [\rho-  \rho']  \\
  \nonumber  &  = 1-1  \\
  &  =  0  \, ,
  \end{align}
  that is, $a= b$. 
  Note also that $a\not  =  0$ whenever $\rho \not =  \rho'$.   
Hence, for every $\rho  \not  =  \rho'$ we have the relation  
\begin{align}
\nonumber \frac{ \|  \N (\rho) -\N  (\rho') \|_1}{ \|  \rho  -  \rho'\|_1}  &  = \frac{ \|  \N (\rho-\rho')   \|_1}{ \|  \rho  -  \rho'\|_1}   \\
 \nonumber  &=  \frac{ \|  a\,  \N  ( \sigma-  \sigma' ) \|_1}{  \|  a\,  (\sigma  -  \sigma'  )\|_1}  \\
 \nonumber  &=\frac{ \|  \N  ( \sigma-  \sigma' ) \|_1}{  \|   \sigma  -  \sigma'  \|_1}\\
  &  \le  \sup_{\sigma~{\rm p.d.}~\sigma'} \frac{ \|  \N  ( \sigma-  \sigma' ) \|_1}{  \|   \sigma  -  \sigma'  \|_1}  \, , 
\end{align}
 where the supremum is over all pairs of perfectly distinguishable $\sigma$ and $\sigma'$.  Taking the supremum of the left-hand-side over all pairs of distinct  states $\rho \not =  \rho'  $, we then obtain the bound 
 \begin{align}
\nonumber  \CM  (\N)       & \le  \sup_{\sigma~{\rm p.d.}~\sigma'} \frac{ \|  \N  ( \sigma-  \sigma' ) \|_1}{  \|   \sigma  -  \sigma'  \|_1} \\
\label{edc}   &    =   \sup_{\sigma~{\rm p.d.}~\sigma'} \frac{ \|  \N  ( \sigma-  \sigma' ) \|_1}{ 2}  \\
 \nonumber  &\le   \sup_{\rho, \rho'  \in  {\sf St}  (A)} \frac{ \|  \N  ( \rho-  \rho' ) \|_1}{ 2} \\
  &  \equiv   {\rm ACE}  (\N)  \, .\label{rfv}
 \end{align}
 Combining Eqs.  (\ref{poiu}) and (\ref{rfv}) then yields the equality $ \CM (\N) =  {\rm ACE}  (\N)$.
\qed


\section{Properties of the maximum causal effect in general physical theories}
\subsection{Proof of Proposition~\ref{prop:propertiesmax}} \label{app:basicproperties_Max}

\begin{prop*}
 For every probabilistic theory satisfying the Causality Axiom, every pair of systems $A$ and $B$ and every process ${\cal N}:  {\sf St }  (A)  \to {\sf St}  (B)$, the maximum causal effect has the  following properties: 
 \begin{enumerate}
 \item {\em Range:}  $0\le  \CM   ({\cal N}) \le 1$,
 \item {\em Faithfulness:}  $\CM   ({\cal N})=0$ if and only if $\cal N$ is a discard-and-reprepare process, that is, if and only if ${\cal N}   =  \rho_0 \Tr_A$, 
 
 \item {\em Convexity:}   $ \CM    (\sum_i \, p_i \, {\cal N}_i) \le  \sum_i \,  p_i  \, \CM    ( {\cal N}_i)$ for every collection of processes $({\cal N}_i)_i$ and for every probability distribution $(p_i)_i$.
 \item {\em Data-processing inequality:}   $  \CM    ( {\cal  B}  \circ {\cal N}   \circ {\cal A})  \le  \CM    (  {\cal N} )$ for arbitrary processes ${\cal A}  : {\sf St}  ( A')  \to  {\sf St}  ( A),$  ${\cal N}  : {\sf St}  ( A)  \to  {\sf St}  ( B)$,  and ${\cal B}  : {\sf St}  ( B)  \to  {\sf St}  ( B')$, and arbitrary systems $A,A',B,$ and $B'$. 
 \end{enumerate}
\end{prop*}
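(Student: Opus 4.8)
The plan is to derive all four items from a single structural fact: every physical process is a contraction for the operational norm,
\begin{align}\label{contraction}
\| {\cal M}(\delta)\|_1 \le \|\delta\|_1 \qquad \text{for every } \delta \in {\sf St}_{\mathbb R}(S),
\end{align}
valid for every process ${\cal M}$ with input system $S$. This is the data-processing inequality for the operational norm \cite{chiribella2010probabilistic}, and it follows directly from the dual expression \eqref{opnorm}: if $e$ is an effect on the output system then $e\circ{\cal M}$ is an effect on the input system (it is affine, nonnegative, and bounded above by $1$ on states), so the ${\max}-{\min}$ over output effects in \eqref{opnorm} is dominated by the ${\max}-{\min}$ over input effects.

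Given \eqref{contraction}, the \emph{Range} property is immediate: the ratio defining $\CM$ is a quotient of nonnegative operational norms, so $\CM({\cal N})\ge 0$; and \eqref{contraction} applied to $\delta=\rho-\rho'$ gives $\|{\cal N}(\rho)-{\cal N}(\rho')\|_1\le\|\rho-\rho'\|_1$, so every ratio is at most $1$ and $\CM({\cal N})\le 1$. For \emph{Faithfulness}, the ``if'' direction is trivial, since ${\cal N}=\sigma_0\Tr_A$ yields ${\cal N}(\rho)=\sigma_0$ for every normalized state and hence a vanishing numerator. Conversely, $\CM({\cal N})=0$ forces $\|{\cal N}(\rho)-{\cal N}(\rho')\|_1=0$ for all $\rho,\rho'$; since the operational norm is a genuine norm, this means ${\cal N}(\rho)$ equals a fixed state $\sigma_0$ independent of $\rho$. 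I would then note that ${\cal N}$, being affine on the convex set ${\sf St}(A)$, extends uniquely to a linear map on ${\sf St}_{\mathbb R}(A)$; since the states span ${\sf St}_{\mathbb R}(A)$ and ${\cal N}$ agrees on them with the linear map $\sigma_0\Tr_A$ (the Causality Axiom guarantees $\Tr_A$, hence this map, is well defined), the two maps coincide, i.e.\ ${\cal N}=\sigma_0\Tr_A$.

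For \emph{Convexity}, I would simply use the triangle inequality: for $\rho\ne\rho'$,
\begin{align}
\frac{\|\sum_i p_i\,({\cal N}_i(\rho)-{\cal N}_i(\rho'))\|_1}{\|\rho-\rho'\|_1}\;\le\;\sum_i p_i\,\frac{\|{\cal N}_i(\rho)-{\cal N}_i(\rho')\|_1}{\|\rho-\rho'\|_1}\;\le\;\sum_i p_i\,\CM({\cal N}_i),
\end{align}
and take the supremum over $\rho\ne\rho'$. For the \emph{Data-processing inequality}, fix $\rho\ne\rho'$ in ${\sf St}(A')$ and set $\tilde\rho:={\cal A}(\rho)$, $\tilde\rho':={\cal A}(\rho')$. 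If $\tilde\rho=\tilde\rho'$ the numerator vanishes and the ratio is $0\le\CM({\cal N})$. Otherwise I would chain \eqref{contraction} for ${\cal B}$ in the numerator with \eqref{contraction} for ${\cal A}$ in the denominator,
\begin{align}
\frac{\|{\cal B}{\cal N}(\tilde\rho)-{\cal B}{\cal N}(\tilde\rho')\|_1}{\|\rho-\rho'\|_1}\;\le\;\frac{\|{\cal N}(\tilde\rho)-{\cal N}(\tilde\rho')\|_1}{\|\rho-\rho'\|_1}\;\le\;\frac{\|{\cal N}(\tilde\rho)-{\cal N}(\tilde\rho')\|_1}{\|\tilde\rho-\tilde\rho'\|_1}\;\le\;\CM({\cal N}),
\end{align}
where the middle inequality uses $\|\tilde\rho-\tilde\rho'\|_1\le\|\rho-\rho'\|_1$; taking the supremum over $\rho\ne\rho'$ gives $\CM({\cal B}\circ{\cal N}\circ{\cal A})\le\CM({\cal N})$.

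I do not expect a serious obstacle: once \eqref{contraction} is in hand, the whole argument is elementary. The only points needing a little care are the affine-to-linear passage in the faithfulness proof — and observing that this is exactly where the Causality Axiom enters, making the discard-and-reprepare form $\sigma_0\Tr_A$ meaningful — and, in the data-processing step, isolating the degenerate case $\tilde\rho=\tilde\rho'$ so that one never divides by zero.
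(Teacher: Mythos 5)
Your proposal is correct and follows essentially the same route as the paper: all four items rest on the contraction (data-processing) property of the operational norm, with the range, faithfulness, and convexity arguments matching the paper's almost verbatim. The only cosmetic difference is in item 4, where the paper first establishes the submultiplicativity $\CM({\cal N}_2\circ{\cal N}_1)\le\CM({\cal N}_2)\,\CM({\cal N}_1)$ and then uses $\CM({\cal A}),\CM({\cal B})\le 1$, whereas you chain the two contractions directly (handling the degenerate case ${\cal A}(\rho)={\cal A}(\rho')$ explicitly, which the paper also flags); the two arguments are equivalent.
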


\begin{proof}

\textbf{1. Range:} 
The upper bound follows from the data-processing inequality of the operational norm~\cite{chiribella2010probabilistic}. The lower bound follows from the positivity of the operational norm with equality when the process $\N$ is a discard-and-reprepare process. 
\medskip

\textbf{2. Faithfulness:}
The `if part' is trivial. For only if, first note from the property of the operational norm, a process $\N: \mathsf{St}(A) \to \mathsf{St}(B)$ has vanishing maximum causal effect: $\CM   ({\cal N})=0$ if and only if $\N (\rho) = \N(\sigma) = \rho_0$ for all $\rho, \sigma \in \mathsf{St}(A)$, and $\rho_0 \in \mathsf{St}(A)$ is a fixed state.  Now consider a discard-and-reprepare process $\mathcal{E}(\rho)=\rho_0 \Tr_{A}(\rho)=\rho_0$. Then we have $\mathcal{N}(\rho) = \mathcal{E}(\rho)$ for all $\rho \in \mathsf{St}(A)$, which implies $\mathcal{N}=\mathcal{E}=\rho_0 \Tr_A$. 
\medskip

\medskip
\textbf{3. Convexity:}
Consider {$\rho$} and {$\sigma$} are the states maximizing $\CM    (\sum_i \, p_i \, {\cal N}_i)$. Due to convexity of the operational norm~\cite{Kimura2010, DAriano2016}: $\norm{\sum_i p_i \N _i(\rho - \sigma) }_1 \le \sum_i p_i \norm{ \N _i(\rho - \sigma) }_1$, we have
\begin{align}
\CM    \left( \sum_i \, p_i \, {\cal N}_i\right) &= \frac{\norm {\sum_i p_i \N _i (\rho - \sigma)}_1}{\norm{\rho - \sigma}_1} \nonumber \\
&\le \sum_i p_i \frac{\norm {\N _i (\rho - \sigma)}_1}{\norm{\rho - \sigma}_1} \nonumber \\
& \le \sum_i p_i \CM(\N _i).
\end{align}
 \medskip 

\textbf{4. Data-processing inequality:} First consider any two arbitrary processes $\N _1 : {\sf St}  ( A')  \to  {\sf St}  ( A),$ and $\N _2 : {\sf St}  ( A)  \to  {\sf St}  ( B),$  and arbitrary systems $A,A',$ and $B$. Let $\rho \in \mathsf{St}(A')$ and $\sigma \in \mathsf{St}(A')$ be the states that achieve the maximum for $\CM    (  {\cal N}_2 \circ {\cal N}_1 )$: 

\begin{align}
\CM(  {\cal N}_2 \circ {\cal N}_1 ) &=  \frac{ \norm{\N _2 \circ \N _1 (\rho -\sigma)}_1}{\norm{\rho -\sigma}_1} \\
&= \frac{ \norm{\N _2\circ \mathcal{N} _1 (\rho -\sigma)}_1}{\norm{\mathcal{N}_1(\rho -\sigma)}_1} \cdot \frac{ \norm{ \mathcal{N}_1 (\rho -\sigma)}_1}{\norm{\rho -\sigma}_1} \\
&\le \CM (\mathcal{N}_2) \cdot \CM (\mathcal{N}_1). \label{Eq:CE_max_cascade} 
\end{align}
Note, if $\N _1$ or $\N_2$ is a discard-and-reprepare process, then the inequality in Eq.~\eqref{Eq:CE_max_cascade} trivially holds. Otherwise, the states maximizing $\CM    (  {\cal N}_2 \circ {\cal N}_1 ) > 0$ should necessarily respect $\norm{\N _1(\rho -\sigma)}_1 > 0$, allowing us to divide by $\norm{\N _1(\rho -\sigma)}_1$ in the second equation. The last inequality holds because the pair of states $\{\N _1 (\rho), \N _1 (\sigma)\}$ and $\{\rho, \sigma\}$ are in general suboptimal to achieve $\rm{CE}_{\max}(\N _2)$ and $\rm{CE}_{\max}(\N _1)$ respectively.

Now, using the inequality in Eq.~\eqref{Eq:CE_max_cascade} recursively, we have 
\begin{align}
   \CM    ( {\cal  B}  \circ {\cal N}   \circ {\cal A})  &\le  \CM    (  {\cal B}) \cdot \CM    (  {\cal N}) \cdot \CM    (  {\cal A}) \nonumber \\
   & \le \CM    (  {\cal N}).
\end{align}
Here the last inequality is due to $\CM    (  {\cal B}) {\le} 1$ and $\CM    (  {\cal A}) {\le} 1$, with the equality when both $\cal A$ and $\cal B$ are reversible processes.
\end{proof}

\subsection{Maximum causal effect in state spaces generated by perfectly distinguishable states}\label{app:pdstuff}

Here, we prove Proposition \ref{prop:pdmax} of the main text. 

\begin{proof}
Let us start by proving Eq.  (\ref{distinguishablerhorho'}). 
To this purpose, note that Propositions \ref{prop:ACECE} and \ref{prop:supmax}  imply the inequality 
\begin{align}
\nonumber \CM  (\N)    &  =     \max_{\rho, \rho'  \in  {\sf St} (A) }  \frac{  \|  \N  (  \rho  -  \rho')\|_1}2 \\
& \ge   \max_{\sigma ~ {\rm p.d.}~   \sigma'  }  \frac{  \|  \N  (  \sigma  -  \sigma')\|_1}2 \, , 
\end{align}
where the notation $\rho ~  {\rm p.d.} ~\rho'$ means that the states $\rho$ and $\rho'$ are perfectly distinguishable. 

On the other hand, Eq. (\ref{edc}) implies the  inequality
\begin{align}
\nonumber \CM  (\N)    &  \le   \sup_{\sigma ~ {\rm p.d.}~   \sigma'  }  \frac{  \|  \N  (  \sigma  -  \sigma')\|_1}2 \, , 
\end{align}  
where the supremum is actually a maximum due to the compactness of ${\sf St}(A)$ (by the same argument as in the proof of Proposition \ref{prop:supmax}.)     
Hence,  we obtained the equality  
\begin{align}\label{app:distinguishablerhorho'}
\nonumber \CM  (\N)    & =   \max_{\sigma ~ {\rm p.d.}~   \sigma'  }  \frac{  \|  \N  (  \sigma  -  \sigma')\|_1}2 \, , 
\end{align}
which is Eq. (\ref{distinguishablerhorho'}) in the main text.  

Now,  it only remains to prove that the maximization can be restricted to pure states without loss of generality.   Let $\rho  =  \sum_i  \, p_i \, \psi_i$ and $\rho'  =  \sum_j\,  p_j'  \,  \psi_j'$ be two perfectly distinguishable states, decomposed as convex mixtures of pure states.  Note that the perfect distinguishability of $\rho$ and $\rho'$ implies that every pure state $\psi_i$ is perfectly distinguishable fro every pure state $\psi_j'$.     Hence, one has the bound  
 \begin{align}
\nonumber  \|  \N (\rho)  - \N (\rho') \|_1   & =  \left\|  \sum_{i,j}  \,  p_i\, p'_j  \,  \Big( \N (\psi_i)  - \N (\psi_j')\, \Big)\right\|_1  \\
 \nonumber   & \le  \sum_{i,j}  \,  p_i\, p'_j  \,  \left\|   \N (\psi_i)  - \N (\psi_j')  \right\|_1  \\
\nonumber  & \le \max_{i,j}    \left\|   \N (\psi_i)  - \N (\psi_j')  \right\|_1\\
 & \le \max_{\psi ~{\rm p.d.}~ \psi'  }    \left\|   \N (\psi)  - \N (\psi')  \right\|_1\, ,
   \end{align}
   where the first inequality follows from the convexity of the operational norm \cite{chiribella2010probabilistic}.  
Maximizing the left-hand-side over all pairs of states $\rho$ and $\rho'$, we then obtain the bound 
\begin{align}
\max_{\rho ~{\rm p.d.}~ \rho'} \|  \N (\rho)  - \N (\rho') \|_1  \le \max_{\psi ~{\rm p.d.}~ \psi' } \|  \N (\psi)  - \N (\psi') \|_1 \, .
\end{align}
The converse bound 
\begin{align}
\max_{\rho~{\rm p.d.}~ \rho'}  \|  \N (\rho)  - \N (\rho') \|_1  \ge \max_{\psi ~{\rm p.d.}~ \psi' } \|  \N (\psi)  - \N (\psi') \|_1
\end{align}
trivially holds,  since restricting the maximization over pairs of perfectly distinguishable pure states cannot increase the maximum.  Hence, we  obtained the equality 
\begin{align}
\max_{\rho~{\rm p.d.}~\rho'  } \|  \N (\rho)  - \N (\rho') \|_1  = \max_{\psi~{\rm p.d.}~\psi' } \|  \N (\psi)  - \N (\psi') \|_1 \, , 
\end{align}
which, combined with Eq. (\ref{app:distinguishablerhorho'}), implies Eq. (\ref{distinguishablepsipsi'}). 
 \end{proof}

\subsection{Continuity of the maximum causal effect} \label{app:CE_max_continuity}

First, we prove a more general continuity result where the state space does not follow the condition in Definition~\ref{def:generated}.

\begin{lemma} \label{Lem:continuity_gen_max}
    if two processes $\cal  N$ and $\cal N'$ are $\epsilon$-close: $ \|  {\cal N}  (\rho)   -  {\cal N'}  (\rho)\|_1 \le \epsilon \, , \forall \rho  \in  {\sf St}  (A)$,  then their maximum causal effects satisfy:  
      \begin{align}
                 \abs{\CM (\N) - \CM (N')} \le \frac{2\epsilon}{\min \{\norm{\rho-\sigma}_1, \norm{\widetilde{\rho}-\widetilde{\sigma}}_1\}},   
        \end{align}
where the state pairs $\{\rho, \sigma\} \in \St{A}$ and $\{\widetilde{\rho}, \widetilde{\sigma}\} \in \St{A}$ achieve $\CM(\N)$ and $\CM (\N')$ respectively.
\end{lemma}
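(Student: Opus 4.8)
The plan is to exploit the fact that the maximum causal effect is attained on an explicit pair of states (compactness of $\St{A}$, which is assumed throughout), and then to run a triangle-inequality argument on the numerator while keeping the denominator fixed.  First I would fix the optimal pair $\{\rho,\sigma\}$ for $\N$, so that $\CM(\N) = \|\N(\rho)-\N(\sigma)\|_1 / \|\rho-\sigma\|_1$.  Applying the triangle inequality for the operational norm twice gives
\begin{align}
\nonumber \|\N(\rho)-\N(\sigma)\|_1 &\le \|\N'(\rho)-\N'(\sigma)\|_1 + \|\N(\rho)-\N'(\rho)\|_1 + \|\N(\sigma)-\N'(\sigma)\|_1 \\
&\le \|\N'(\rho)-\N'(\sigma)\|_1 + 2\epsilon ,
\end{align}
where the second line uses the hypothesis $\|\N(\tau)-\N'(\tau)\|_1\le\epsilon$ for all $\tau\in\St{A}$.

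Next I would divide through by $\|\rho-\sigma\|_1$.  Since $\{\rho,\sigma\}$ is a (generally suboptimal) feasible pair in the supremum defining $\CM(\N')$, the first term on the right is bounded by $\CM(\N')$, yielding
\begin{align}
\CM(\N) \;\le\; \CM(\N') + \frac{2\epsilon}{\|\rho-\sigma\|_1}.
\end{align}
Running the identical argument with the roles of $\N$ and $\N'$ exchanged, using the optimal pair $\{\widetilde\rho,\widetilde\sigma\}$ for $\N'$, gives $\CM(\N') \le \CM(\N) + 2\epsilon/\|\widetilde\rho-\widetilde\sigma\|_1$.  Combining the two inequalities produces
\begin{align}
|\CM(\N)-\CM(\N')| \;\le\; \max\left\{\frac{2\epsilon}{\|\rho-\sigma\|_1},\ \frac{2\epsilon}{\|\widetilde\rho-\widetilde\sigma\|_1}\right\} \;=\; \frac{2\epsilon}{\min\{\|\rho-\sigma\|_1,\ \|\widetilde\rho-\widetilde\sigma\|_1\}},
\end{align}
which is the claimed bound.

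There is essentially no hard step here; the only point requiring care is that both suprema in $\CM(\N)$ and $\CM(\N')$ are genuinely attained, so that the distinguished pairs $\{\rho,\sigma\}$ and $\{\widetilde\rho,\widetilde\sigma\}$ in the statement exist.  This follows from compactness of $\St{A}$ together with continuity of $\tau\mapsto\|\N(\tau)\|_1$ on the (closed) complement of the diagonal — the same extraction-of-a-convergent-subsequence argument already used in Proposition~\ref{prop:supmax}.  If one did not want to invoke attainment, the same estimate goes through verbatim with suprema and an arbitrarily small slack, but since the later Proposition~\ref{prop:continuity_max} will specialize to state spaces generated by perfectly distinguishable states (where the optimum is over a compact set of pure pairs), stating the lemma with the attained pairs is the cleanest form.
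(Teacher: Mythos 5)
Your proposal is correct and follows essentially the same route as the paper's proof: fix the optimizing pair for each channel, apply the triangle inequality twice to pick up the $2\epsilon$, use that the pair is feasible (suboptimal) for the other channel's supremum, and combine the two one-sided bounds via the minimum of the two denominators. Your additional remark on why the optimizing pairs exist is a careful touch the paper leaves implicit.
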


\begin{proof}
 In the lemma, we have considered the states attaining the optimum causal effect for $\mathcal{N}: \mathsf{St}(A) \to \mathsf{St}(B)$ are ${\rho} \in \mathsf{St}(A)$ and ${\sigma} \in \mathsf{St}(A) $. Then we have 
\begin{align}
    &\norm{\rho-\sigma}_1 \cdot\CM(\mathcal{N})=\norm{\mathcal{N}(\rho)-\mathcal{N}(\sigma)}_1 \nonumber \\
    &\le\norm{\mathcal{N}(\rho)-\mathcal{N'}(\rho)}_1+\norm{\mathcal{N}'(\rho)-\mathcal{N}'(\sigma)}_1 \nonumber \\
    &\quad +\norm{\mathcal{N}'(\sigma)-\mathcal{N}(\sigma)}_1 \nonumber \\
    &\le 2\epsilon + \norm{\rho-\sigma}_1\cdot \CM(\N')
\end{align}
The first inequality is due to the triangle inequality of the operational norm. In the second inequality we use $\norm{\rho-\sigma}_1 \CM(\mathcal{N}')\ge \norm{\N '(\rho)-\N '(\sigma)}_1 $.
Using the similar arguments, we can have $\norm{\widetilde{\rho}-\widetilde{\sigma}}_1 \cdot \CM(\mathcal{N}')\le 2\epsilon + \norm{\widetilde{\rho}-\widetilde{\sigma}}_1\cdot \CM(\mathcal{N})$. 

Hence, we prove the lemma by rearranging the above equations and noting $\norm{\rho-\sigma}_1 \ge \min \{\norm{\rho-\sigma},\norm{\widetilde{\rho}-\widetilde{\sigma}}_1 \}$, and $\norm{\widetilde{\rho}-\widetilde{\sigma}}_1 \ge \min \{\norm{\rho-\sigma},\norm{\widetilde{\rho}-\widetilde{\sigma}}_1 \}$.

\end{proof}

Next, we focus on Proposition~\ref{prop:continuity_max} where the condition in Definition~\ref{def:generated} holds. For the reader's convenience, we repeat their statements here:
\begin{prop*}
   \emph{ Continuity:}  if two processes $\cal  N$ and $\cal N'$ are $\epsilon$-close,  then their maximum causal effects are $\epsilon$-close;  in formula, the condition $ \|  {\cal N}  (\rho)   -  {\cal N'}  (\rho)\|_1 \le \epsilon \, , \forall \rho  \in  {\sf St}  (A)$ 
 implies  $|\CM   ({\cal N})  -  \CM   ({\cal N'})|  \le \epsilon$   
  \end{prop*}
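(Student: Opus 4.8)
The plan is to exploit Proposition~\ref{prop:pdmax}, which reduces the computation of $\CM$ to a maximization over pairs of perfectly distinguishable states. The crucial point is that for perfectly distinguishable $\rho$ and $\rho'$ one has $\|\rho-\rho'\|_1=2$, so every ratio entering the optimization has denominator exactly $2$; this is precisely what removes the state-dependent factor $1/\min\{\|\rho-\sigma\|_1,\|\widetilde\rho-\widetilde\sigma\|_1\}$ that appeared in the general continuity bound of Lemma~\ref{Lem:continuity_gen_max}, and lets us reach the clean estimate $\epsilon$.

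Concretely, I would first invoke Proposition~\ref{prop:pdmax} together with the compactness of $\mathsf{St}(A)$ to fix a pair of perfectly distinguishable states $\rho,\rho'\in\mathsf{St}(A)$ attaining $\CM(\mathcal N)=\tfrac12\|\mathcal N(\rho)-\mathcal N(\rho')\|_1$. Then apply the triangle inequality for the operational norm,
\begin{align}
\|\mathcal N(\rho)-\mathcal N(\rho')\|_1 \le \|\mathcal N(\rho)-\mathcal N'(\rho)\|_1 + \|\mathcal N'(\rho)-\mathcal N'(\rho')\|_1 + \|\mathcal N'(\rho')-\mathcal N(\rho')\|_1 .
\end{align}
The first and third terms are each at most $\epsilon$ by the hypothesis $\|\mathcal N(\tau)-\mathcal N'(\tau)\|_1\le\epsilon$ for all $\tau$; the middle term is at most $2\,\CM(\mathcal N')$ because $\rho,\rho'$ are perfectly distinguishable and hence, by Proposition~\ref{prop:pdmax}, $\tfrac12\|\mathcal N'(\rho)-\mathcal N'(\rho')\|_1\le\CM(\mathcal N')$. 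Dividing through by $2$ gives $\CM(\mathcal N)\le\CM(\mathcal N')+\epsilon$.

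Finally I would run the identical argument with the roles of $\mathcal N$ and $\mathcal N'$ exchanged, fixing this time a pair of perfectly distinguishable states attaining $\CM(\mathcal N')$, to get $\CM(\mathcal N')\le\CM(\mathcal N)+\epsilon$. Combining the two inequalities yields $|\CM(\mathcal N)-\CM(\mathcal N')|\le\epsilon$, as claimed. There is no genuine obstacle beyond two bookkeeping points: one must use compactness to guarantee that the maximum in Proposition~\ref{prop:pdmax} is actually attained (so that the optimizing pair $\rho,\rho'$ exists), and one must observe that restricting to perfectly distinguishable states is exactly what pins all denominators to $2$; the remainder is the routine three-term triangle inequality above.
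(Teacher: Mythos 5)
Your proof is correct and follows essentially the same route as the paper: the paper first establishes a general triangle-inequality bound with denominator $\min\{\|\rho-\sigma\|_1,\|\widetilde\rho-\widetilde\sigma\|_1\}$ (Lemma~\ref{Lem:continuity_gen_max}) and then invokes Proposition~\ref{prop:pdmax} to pin that denominator to $2$, which is exactly the two-step structure you carry out in a single argument. No gaps.
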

\begin{proof} This is a simple consequence of Statement 1 of Lemma~\ref{Lem:continuity_gen_max} and the fact established in Proposition~\ref{prop:pdmax} that $\min \{\norm{\rho-\sigma}_1, \norm{\widetilde{\rho}- \widetilde{\sigma}}_1\}=2$.
\end{proof}

\section{Properties of the minimum causal effect in general physical theories}

\subsection{Proof of Proposition~\ref{prop:propertiesmin}}\label{app:basicproperties_Min}
\begin{proof}

\textbf{1. Range:} The upper bound follows from the data-processing inequality of operational norms: $\norm{\mathcal{N}(\rho-\sigma)}_1\le \norm{\rho -\sigma}_1$ for all $\rho$ and $\sigma$~\cite{chiribella2010probabilistic}. The lower bound follows from the non-negativity of the operational norm.
\medskip

\textbf{2. Data-processing inequality:}
The bound follows from the data-processing inequality of the operational norm: $\norm{\mathcal{B}\circ \N (\rho -\sigma)}_1 \le \norm{\N (\rho - \sigma)}_1$ for all $\rho$ and $\sigma$ in $\sf {St} (A)$.

\end{proof}

\subsection{Minimum causal effect in compact state spaces  generated by perfectly distinguishable states} \label{app:property_min_perfectly_disting}

Here we prove Proposition \ref{prop:pdmin} and  Theorem \ref{theo:ddworst}.

\medskip

\noindent 

{\bf Proof of Proposition \ref{prop:pdmin}.} The proof follows the lines of the proof of Proposition \ref{prop:ACECE}. 
 First, one has  the bound 
\begin{align}
\nonumber \Cm  (\N)   &:  = \inf_{\rho \not  =  \rho'}   \frac{ \| \N  (\rho)  -  \N  (\rho')\|_1 }{\|\rho   -  \rho'\|_1} \\
 \nonumber &\le  \inf_{\rho ~{\rm p.d.} ~\rho'} \frac{ \| \N  (\rho)  -  \N  (\rho')\|_1 }{\|\rho   -  \rho'\|_1}  \\
 &=  \inf_{\rho ~{\rm p.d.} \rho'}~ \frac{ \| \N  (\rho)  -  \N  (\rho')\|_1 }{2}  \label{lkjh}
 \end{align}  

If the state space ${\sf St}  (A)$ is generated by perfectly distinguishable states,  then the vector $\delta  =  \rho-\rho'$ can decomposed as  $\rho-  \rho'  =  a\ , (\sigma  -   \sigma')$ where $\sigma ,\sigma'$ are perfectly distinguishable states, and $a$ is a real number.  
Hence, for every $\rho  \not  =  \rho'$ we have the relation  
\begin{align}
\nonumber \frac{ \|  \N (\rho) -\N  (\rho') \|_1}{ \|  \rho  -  \rho'\|_1}  &  = \frac{ \|  \N (\rho-\rho')   \|_1}{ \|  \rho  -  \rho'\|_1}   \\
 \nonumber  &=  \frac{ \|  a\,  \N  ( \sigma-  \sigma' ) \|_1}{  \|  a\,  (\sigma  -  \sigma'  )\|_1}  \\
 \nonumber  &=\frac{ \|  \N  ( \sigma-  \sigma' ) \|_1}{  2}\\
  &  \ge  \inf_{\sigma~{\rm p.d.}~\sigma'} \frac{ \|  \N  ( \sigma)-  \N(\sigma' ) \|_1}{2}  \, , 
\end{align}
 where the infimum is over all pairs of perfectly distinguishable $\sigma$ and $\sigma'$.  Minimizing the left-hand-side over all pairs of states $\rho \not =  \rho'  $, we then obtain the bound 
 \begin{align}
 \Cm  (\N)        \ge  \inf_{\sigma~{\rm p.d.}~\sigma'} \frac{ \|  \N  ( \sigma)- \N( \sigma' ) \|_1}{  2  }  \,, 
 \end{align}
which combined with  Eq. (\ref{lkjh}) yields the equality  
 \begin{align}
 \Cm  (\N)        =  \inf_{\sigma~{\rm p.d.}~\sigma'} \frac{ \|  \N  ( \sigma)- \N( \sigma' ) \|_1}{  2  }  \,.  
 \end{align}
Finally, compactness of the state space ${\sf St} (A)$ implies that the infimum is actually a minimum, by the same argument used in the proof of Proposition \ref{prop:supmax}. \qed 

\medskip  

We now move to the proof of Theorem \ref{theo:ddworst}. The proof uses a few  lemmas, provided in the following: 

\begin{lemma}\label{lem:A=tildeA}
For two systems $A$ and $B$, and a process ${\cal N} :  {\sf St}  (A) \to {\sf St}  (B) $, consider the sets  
\begin{align}
\nonumber {\sf A}   ({\cal N})  : = &\left\{  \frac{  \| p\,  {\cal N}  (\rho)  - (1-p)\, {\cal N}   (\rho')\|_1}{ \|   p\, \rho  -  (1-p)\,  \rho'\|_1}  ~\right|  ~  \rho, \rho'  \in  {\sf St} (A)  \, ,  \, \rho \not  =  \rho'\, , \\
 &  \left. \phantom{ \frac{  \| p\,  {\cal N}  (\rho)  - (1-p)\, {\cal N}   (\rho')\|_1}{ \|   p\, \rho  -  (1-p)\,  \rho'\|_1}}   \qquad \qquad     p\in [0,1] \right\}  
\end{align}
and
\begin{align}
\nonumber \widetilde{\sf A}  ({\cal N})   : = \Big\{    \|  p\,  {\cal N}  (\tilde \rho)  - (1-p)\, {\cal N}   (\tilde \rho')\|_1   ~\Big|&    ~ \tilde  \rho, \tilde \rho'  \in  {\sf St} (A) , \tilde \rho  ~{\rm p.d.} ~\tilde \rho' \,  , \\
  &   \qquad \quad  p\in  [0,1]     \Big\}  \,.
\end{align}
 If the state space ${\sf St}  (A)$ is generated by perfectly distinguishable states, then the sets  ${\sf A}  ({\cal N})$ and $\widetilde{\sf A}  ({\cal N}) $ coincide.   
\end{lemma}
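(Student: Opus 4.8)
The plan is to mimic the proof of the analogous set-equality used implicitly in Propositions~\ref{prop:ACECE} and~\ref{prop:pdmin}, but now carrying along the prior-probability weights $p$ and $1-p$. The key observation is that the "generated by perfectly distinguishable states" hypothesis gives a decomposition not just of $\rho-\rho'$ but of the weighted difference $p\,\rho-(1-p)\,\rho'$ — which is a general element of ${\sf St}_{\mathbb R}(A)$ — as $a\,\tilde\rho - b\,\tilde\rho'$ with $\tilde\rho,\tilde\rho'$ perfectly distinguishable and $a,b\ge 0$.

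First I would prove the inclusion $\widetilde{\sf A}({\cal N})\subseteq {\sf A}({\cal N})$, which is essentially immediate: given $\tilde\rho~{\rm p.d.}~\tilde\rho'$ and $p\in[0,1]$, perfect distinguishability gives $\|p\,\tilde\rho-(1-p)\,\tilde\rho'\|_1 = 1$ (this is the case $a=p$, $b=1-p$ of the operational-norm formula, since an effect $e$ with $e(\tilde\rho)=1$, $e(\tilde\rho')=0$ achieves $e(p\tilde\rho-(1-p)\tilde\rho')=p$ and a complementary effect achieves $-(1-p)$). Hence the element $\|p\,{\cal N}(\tilde\rho)-(1-p)\,{\cal N}(\tilde\rho')\|_1$ of $\widetilde{\sf A}({\cal N})$ equals the ratio $\|p\,{\cal N}(\tilde\rho)-(1-p)\,{\cal N}(\tilde\rho')\|_1/\|p\,\tilde\rho-(1-p)\,\tilde\rho'\|_1$, which lies in ${\sf A}({\cal N})$ (taking $\rho=\tilde\rho$, $\rho'=\tilde\rho'$, same $p$).

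Next I would prove the reverse inclusion ${\sf A}({\cal N})\subseteq\widetilde{\sf A}({\cal N})$. Fix $\rho\neq\rho'$ and $p\in[0,1]$ and set $\delta:=p\,\rho-(1-p)\,\rho'\in {\sf St}_{\mathbb R}(A)$. By hypothesis write $\delta = a\,\tilde\rho - b\,\tilde\rho'$ with $\tilde\rho~{\rm p.d.}~\tilde\rho'$ and $a,b\ge0$. Applying $\Tr_A$ (using the Causality Axiom so that $\Tr_A$ is well-defined and $\Tr_A[\tilde\rho]=\Tr_A[\tilde\rho']=1$) gives $a-b=\Tr_A[\delta]=p-(1-p)=2p-1$, so $(a,b)$ is determined up to a common shift; normalizing $a+b$ appropriately — or more directly, writing $a=\|\delta_+\|$-type quantities — I can choose $a,b\ge0$ with $a+b=\|p\,\rho-(1-p)\,\rho'\|_1$. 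Setting $q:=a/(a+b)\in[0,1]$, we get $\delta=(a+b)\bigl(q\,\tilde\rho-(1-q)\,\tilde\rho'\bigr)$, and since ${\cal N}$ extends linearly to ${\sf St}_{\mathbb R}(A)$, also ${\cal N}(\delta)=(a+b)\bigl(q\,{\cal N}(\tilde\rho)-(1-q)\,{\cal N}(\tilde\rho')\bigr)$. Therefore
\begin{align}
\nonumber \frac{\|p\,{\cal N}(\rho)-(1-p)\,{\cal N}(\rho')\|_1}{\|p\,\rho-(1-p)\,\rho'\|_1}
&= \frac{\|{\cal N}(\delta)\|_1}{\|\delta\|_1}
= \frac{(a+b)\,\|q\,{\cal N}(\tilde\rho)-(1-q)\,{\cal N}(\tilde\rho')\|_1}{(a+b)\,\|q\,\tilde\rho-(1-q)\,\tilde\rho'\|_1}\\
&= \|q\,{\cal N}(\tilde\rho)-(1-q)\,{\cal N}(\tilde\rho')\|_1,
\end{align}
where in the last step I used $\|q\,\tilde\rho-(1-q)\,\tilde\rho'\|_1=1$ by perfect distinguishability, exactly as in the first inclusion. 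The right-hand side is an element of $\widetilde{\sf A}({\cal N})$ (with prior $q$), completing the inclusion and hence the equality of the two sets.

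The main obstacle — and the only point requiring care — is the bookkeeping in the decomposition step: one must verify that the nonnegative coefficients $a,b$ in $\delta=a\tilde\rho-b\tilde\rho'$ can be chosen so that $a+b=\|\delta\|_1$ (equivalently $q\in[0,1]$ with $\|q\tilde\rho-(1-q)\tilde\rho'\|_1=1$), rather than merely $a-b=\Tr_A[\delta]$. This follows because for perfectly distinguishable $\tilde\rho,\tilde\rho'$ the positive and negative parts of $a\tilde\rho-b\tilde\rho'$ are $a\tilde\rho$ and $b\tilde\rho'$, so $\|\delta\|_1 = a\Tr_A[\tilde\rho]+b\Tr_A[\tilde\rho'] = a+b$; I would state this as the content of Eq.~\eqref{opnorm} applied to mutually distinguishable states. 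Everything else is linearity of ${\cal N}$, homogeneity of the norm, and the two-line computation above.
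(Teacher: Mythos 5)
Your proof is correct and follows essentially the same route as the paper's: the trivial inclusion via $\|p\,\tilde \rho-(1-p)\,\tilde \rho'\|_1=1$ for perfectly distinguishable states, and the reverse inclusion via the decomposition $p\,\rho-(1-p)\,\rho'=a\,\tilde\rho-b\,\tilde\rho'$, the reweighting $q=a/(a+b)$, and cancellation of the factor $(a+b)$ by homogeneity of the norm. Your closing worry about arranging $a+b=\|\delta\|_1$ is unnecessary: since $(a+b)$ cancels in the ratio, the only fact needed is $\|q\,\tilde\rho-(1-q)\,\tilde\rho'\|_1=1$, which you already establish from perfect distinguishability.
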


{\bf Proof.}    
Since any two perfectly distinguishable states $\rho$ and $\rho'$ satisfy the equality $\| p \,\rho  -  (1-p)\rho'\|_1= 1$,  the  inclusion   $\widetilde{\sf A}  ({\cal N})  \subseteq {\sf A}  ({\cal N})$ holds trivially. Hence, we only need to prove the converse inclusion.   
 
Let us take two arbitrary states $\rho$ and $\rho'$ and an arbitrary probability $p$.  Since the vector space ${\sf St}_{\mathbb R}  (A)$ is generated by perfectly distinguishable states, there exist two perfectly distinguishable states $\tilde \rho$ and $\tilde \rho'$ and two positive coefficients  $a$ and $b$   such that $p\,  \rho  -  (1-p)\, \rho'  =  a \tilde \rho  -  b \tilde \rho'$.  Defining the probability $q  :=  a/(a+b)$, we can rewrite the previous equality as $p\,  \rho  -  (1-p)\, \rho'   =  (a+b) \,  (  q \,  \tilde \rho  -  (1-q) \,  \tilde \rho')$, which implies 
\begin{align}
\nonumber \frac{  \| p\,  {\cal N}  (\rho)  - (1-p)\, {\cal N}   (\rho')\|_1}{ \|   p\, \rho  -  (1-p)\,  \rho'\|_1} & =  \frac{  \|    {\cal N}  \big(p \,\rho    - (1-p)\, \rho'\big)\|_1}{ \|   p\, \rho  -  (1-p)\,  \rho'\|_1}  \\
\nonumber &=  \frac{  \|   (a+b)    \,  {\cal N}  \big(q\,  \tilde \rho   - (1-q)\, \tilde \rho'\big)\|_1}{ \|    (a+b)  \, \big ( q\, \tilde \rho  -  (1-q)\,  \tilde \rho'\big) \|_1}\\
\nonumber & =  \frac{  \|   {\cal N}  \big(q \, \tilde \rho   - (1-q)\,   \tilde \rho'\big)\|_1}{ \|   q\, \tilde \rho  -  (1-q)\,  \tilde \rho'\|_1}\\
& =   \| q\,  {\cal N}  (\tilde \rho)  - (1-q)\, {\cal N}   (\tilde \rho')\|_1  \,.   
\end{align}
Since the probability $p$ and the states $\rho$ and $\rho'$ are arbitrary, this equality implies the inclusion  ${\sf A}  ({\cal N})  \subseteq \widetilde{\sf A}  ({\cal N})$.  \qed

 \begin{lemma}\label{lem:ddworst}
  If the state space ${\sf St}  (A)$ is compact and  generated by perfectly distinguishable states, then the equality 
  \begin{align}\label{bastaa}
  {\rm DP}_{\min}  ({\cal N})   =  \min_{0\le p\le 1}  \min_{  \rho  \,  {\rm p.d.}\,  \rho'}\, \|  p\, {\cal N}  (\rho)   -  (1-p) \,    {\cal N}  (\rho')  \|_1  \, .   
  \end{align}
  holds for   every process ${\cal N} :  {\sf St}  (A)  \to  {\sf St}  (B)$ and for every system $B$. 
 \end{lemma}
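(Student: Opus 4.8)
The plan is to combine Lemma~\ref{lem:A=tildeA} with a compactness argument. First I would observe that, directly from Definition~\ref{def:dp}, the quantity ${\rm DP}_{\min}({\cal N})$ is nothing but the infimum of the set ${\sf A}({\cal N})$ introduced in Lemma~\ref{lem:A=tildeA} (the outer $\min$ over $p$ and the inner $\inf$ over pairs $\rho\neq\rho'$ together amount to an infimum over all admissible triples $(p,\rho,\rho')$, which is exactly what ${\sf A}({\cal N})$ collects). Under the standing hypothesis that ${\sf St}(A)$ is generated by perfectly distinguishable states, Lemma~\ref{lem:A=tildeA} gives the set equality ${\sf A}({\cal N}) = \widetilde{\sf A}({\cal N})$, and reading off the definition of $\widetilde{\sf A}({\cal N})$ this yields
\[
{\rm DP}_{\min}({\cal N}) \;=\; \inf_{0\le p\le 1}\ \inf_{\rho\,{\rm p.d.}\,\rho'}\ \big\|\,p\,{\cal N}(\rho) - (1-p)\,{\cal N}(\rho')\,\big\|_1 .
\]
(Here it is used that for perfectly distinguishable $\rho,\rho'$ one has $\|p\,\rho-(1-p)\,\rho'\|_1 = 1$, which is precisely what makes the ratios defining ${\sf A}$ collapse to the bare numerators of $\widetilde{\sf A}$, as already recalled in the proof of Lemma~\ref{lem:A=tildeA}.) Hence the only thing left to prove is that these infima are attained, so that the $\inf$'s may be replaced by $\min$'s and Eq.~(\ref{bastaa}) follows.

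For that I would argue exactly as in the proof of Proposition~\ref{prop:supmax}. Pick a sequence $(p_n,\rho_n,\rho_n')$ of admissible triples along which $\|p_n\,{\cal N}(\rho_n) - (1-p_n)\,{\cal N}(\rho_n')\|_1$ tends to the infimum. Since $[0,1]$ is compact and ${\sf St}(A)$ is compact by assumption, one can pass to a subsequence along which $p_n\to p_0\in[0,1]$, $\rho_n\to\rho_0\in{\sf St}(A)$, and $\rho_n'\to\rho_0'\in{\sf St}(A)$. The limit pair is again perfectly distinguishable: perfect distinguishability is the closed condition $\|\rho_n-\rho_n'\|_1 = 2$, and continuity of the operational norm passes it to the limit, giving $\|\rho_0-\rho_0'\|_1 = 2$. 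Finally, since ${\cal N}$ is affine — hence continuous in finite dimensions — and the operational norm is continuous, the map $(p,\rho,\rho')\mapsto\|p\,{\cal N}(\rho)-(1-p)\,{\cal N}(\rho')\|_1$ is jointly continuous, so its value at $(p_0,\rho_0,\rho_0')$ equals the infimum. This attains the minimum and proves Eq.~(\ref{bastaa}).

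I expect the only point needing a little care — not really an obstacle — to be this last compactness/continuity bookkeeping: one must check that the set of perfectly distinguishable pairs is closed (so that limits of such pairs stay in it) and that the objective depends continuously on $p$, $\rho$, and $\rho'$ \emph{simultaneously}, not merely separately. All the algebraic substance is already carried by Lemma~\ref{lem:A=tildeA}, so beyond this the argument is routine. (In the degenerate case where ${\sf St}(A)$ admits no two distinct, hence no two perfectly distinguishable, states, both sides of Eq.~(\ref{bastaa}) are vacuous and the identity holds trivially.)
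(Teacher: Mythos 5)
Your proposal is correct and follows essentially the same route as the paper: identify ${\rm DP}_{\min}({\cal N})$ with $\inf {\sf A}({\cal N})$, invoke Lemma~\ref{lem:A=tildeA} to replace ${\sf A}({\cal N})$ by $\widetilde{\sf A}({\cal N})$, and then upgrade the infimum to a minimum by compactness as in Proposition~\ref{prop:supmax}. The only difference is that you spell out the compactness step (closedness of the perfectly-distinguishable condition and joint continuity of the objective) which the paper leaves implicit; that extra care is sound and harmless.
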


 {\bf Proof.}   The definition of ${\rm DP}_{\min}  ({\cal N})$ implies 
 \begin{align}
 \nonumber  {\rm DP}_{\min}  ({\cal N})  &  :=   \min_{p\in  [0,1]} \inf_{\rho\not = \rho'}  \frac{\|   \N (   p\,\rho  +  (1-p)\,  \rho')\|_1}{\| p\,\rho  +  (1-p)\,  \rho'\|_1}  \\
 \nonumber    &   =   \inf  \big\{  x~|~  x\in   {\sf A} ({\cal N})\big\}  \\ 
  \nonumber    &   =   \inf  \big\{  x~|~  x\in   \widetilde{\sf A} ({\cal N})\big\}  \\
    &  =  \min_{0\le p\le 1}  \inf_{  \rho  \,  {\rm p.d.}\,  \rho'}\, \|  p\, {\cal N}  (\rho)   -  (1-p) \,    {\cal N}  (\rho')  \|_1  \, , 
 \end{align}  
 where the third equality follows from Lemma \ref{lem:A=tildeA}.   Since the state space ${\sf St} (A)$ is compact, the infimum is actually a minimum, by the same argument as in the proof of Proposition \ref{prop:supmax}. \qed 
 
\begin{lemma}\label{lem:oneeffect}
 For every system $S$, and for every pair of states $\rho, \rho'  \in  {\sf St} (S)$, the operational norm satisfies the equality
 \begin{align}
   \| \rho  - \rho'\|_1  =   2\,   \max_{e\in {\sf Eff}  (S)}  \, e(\rho -\rho')  \, .
 \end{align}
\end{lemma}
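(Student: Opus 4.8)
The plan is to collapse the two-term formula for the operational norm in Eq.~(\ref{opnorm}) into a single maximization, exploiting the fact that the set of effects is symmetric under complementation by the deterministic effect.

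First I would recall the relevant structural facts about effects in a theory satisfying the Causality Axiom. There is a unique deterministic effect $u\in{\sf Eff}(S)$, characterized by $u(\rho)=1$ for every state $\rho\in{\sf St}(S)$; this is exactly the normalization condition for measurements. Moreover, a functional $e$ belongs to ${\sf Eff}(S)$ precisely when $0\le e(\rho)\le 1$ for all states $\rho$ — equivalently $0\le e\le u$ pointwise — since the minimal measurement containing $e$ is the two-outcome measurement $(e,\,u-e)$, which requires $u-e$ to be a valid effect as well. In particular, the map $e\mapsto u-e$ is an involution of ${\sf Eff}(S)$.

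Next I would note that, for any $\delta=\rho-\rho'$ with $\rho,\rho'\in{\sf St}(S)$, the deterministic effect annihilates $\delta$:
\begin{align}
u(\delta)=u(\rho)-u(\rho')=1-1=0.
\end{align}
Writing an arbitrary effect $f\in{\sf Eff}(S)$ as $f=u-e$ with $e\in{\sf Eff}(S)$, this gives $f(\delta)=u(\delta)-e(\delta)=-e(\delta)$, so that
\begin{align}
\min_{f\in{\sf Eff}(S)}f(\delta)=\min_{e\in{\sf Eff}(S)}\big(-e(\delta)\big)=-\max_{e\in{\sf Eff}(S)}e(\delta).
\end{align}
Substituting this into Eq.~(\ref{opnorm}) yields
\begin{align}
\|\rho-\rho'\|_1=\max_{e\in{\sf Eff}(S)}e(\rho-\rho')-\min_{f\in{\sf Eff}(S)}f(\rho-\rho')=2\max_{e\in{\sf Eff}(S)}e(\rho-\rho'),
\end{align}
which is the claimed identity.

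The argument is short, and there is no real obstacle; the only point deserving a line of justification is the closure of ${\sf Eff}(S)$ under $e\mapsto u-e$, i.e.\ that every effect extends to a two-outcome measurement. This is built into the framework, and it is precisely what makes the $\max$ and $\min$ terms in the operational norm mirror each other once $\delta$ has vanishing total ``weight'' $u(\delta)=0$. (If one instead works with a notion of norm on all of ${\sf St}_{\mathbb R}(S)$ for general $\delta$ not of the form $\rho-\rho'$, the factor $2\max_e e(\delta)$ would have to be replaced by $\max_e e(\delta)+\max_e e(-\delta)$; the simplification here is exactly due to restricting to differences of normalized states.)
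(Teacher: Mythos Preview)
Your proof is correct and follows essentially the same approach as the paper's: both arguments use the complementation $f\mapsto u-f$ on effects (the paper phrases it as ``there exists $f'$ with $f+f'=1$ on all states'') together with $u(\rho-\rho')=0$ to convert the $\min$ term in Eq.~(\ref{opnorm}) into $-\max_e e(\rho-\rho')$. Your presentation is in fact a bit more streamlined than the paper's, which takes a slightly roundabout route via $f(\rho-\rho')=f'(\rho'-\rho)$ and a symmetry argument before arriving at the same identity.
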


{\bf Proof.}  By Eq. (\ref{opnorm}),  the operational norm is $\|\rho - \rho'\|   =  \max_{e\in {\sf Eff}  (S)}  \, e (  \rho-\rho')    - \min_{f\in {\sf Eff}  (S)}  \, f (\rho-\rho')$.    Since the positive functional $f\in {\sf Eff}  (S)$ corresponds to the outcome of some measurement,    there must  exist another positive functional $f' \in {\sf Eff}  (S)$  such that, for every state $\xi  \in  {\sf St}  (S)$,  one has  $f(\xi)  + f'  (\xi)=1 $.  Using this equation for $\xi  = \rho$ and $\xi  = \rho'$, we obtain  the relation $  f (\rho-\rho') =      f'  (\rho'-\rho)$, which implies   $ \min_f  f (\rho-\rho') \ge \min_{f'}      f'  (\rho'-\rho)$.  Since the same argument applies to $f'$, in fact we have the equality  $\min_f   f (\rho-\rho') =    \min_{f'}   f'  (\rho'-\rho)$.  which implies 
\begin{align}
\nonumber \|  \rho- \rho'\|_1  &=  \max_{e\in {\sf Eff}  (S)}  \, e (  \rho-\rho')    - \min_{f\in {\sf Eff}  (S)}  \, f (\rho-\rho')  \\
\nonumber &=  \max_{e\in {\sf Eff}  (S)}  \, e (  \rho-\rho')    - \min_{f'\in {\sf Eff}  (S)}  \, f' (\rho'-\rho) \\
\nonumber &=  \max_{e\in {\sf Eff}  (S)}  \, e (  \rho-\rho')    + \max_{f\in {\sf Eff}  (S)}  \, f (\rho-\rho') \\
& =  2 \, \max_{e\in {\sf Eff}  (S)}  \, e (  \rho-\rho') \, .
\end{align}
\qed

\begin{lemma}\label{lem:erho}
 For every system $S$, and for every pair of states $\rho, \rho'  \in  {\sf St} (A)$, there exists a positive functional $  e   \in   {\sf Eff}  (S)$   such that  
 \begin{align}
 e(\rho)   &\ge \frac{  \|  \rho-\rho'\|_1}2 \label{erho} \\
 e (\rho') &\le 1-   \frac{  \|  \rho-\rho'\|_1}2 
  \, .   \label{erhoprime}
 \end{align}
\end{lemma}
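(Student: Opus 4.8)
The plan is to extract the required functional directly from Lemma~\ref{lem:oneeffect}. That lemma states $\|\rho-\rho'\|_1 = 2\,\max_{e\in{\sf Eff}(S)}e(\rho-\rho')$, and since the effect space is compact under the standing assumptions, the maximum is attained by some $e\in{\sf Eff}(S)$. For this $e$ one has the key identity $e(\rho)-e(\rho') = e(\rho-\rho') = \tfrac12\|\rho-\rho'\|_1$, and I claim this single $e$ already satisfies both bounds \eqref{erho} and \eqref{erhoprime}.

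First I would derive \eqref{erho}. Because $e$ is a positive functional, $e(\rho')\ge 0$; combining this with the identity above gives $e(\rho) = e(\rho') + \tfrac12\|\rho-\rho'\|_1 \ge \tfrac12\|\rho-\rho'\|_1$, which is exactly \eqref{erho}.

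Next I would derive \eqref{erhoprime} using the normalization property of effects. For every effect $e$ there is a complementary effect $f\in{\sf Eff}(S)$ with $e(\xi)+f(\xi)=1$ for every state $\xi$ — the same fact already exploited in the proof of Lemma~\ref{lem:oneeffect} — and positivity of $f$ then forces $e(\xi)\le 1$ for every state $\xi$, in particular $e(\rho)\le 1$. Substituting this into the identity yields $e(\rho') = e(\rho) - \tfrac12\|\rho-\rho'\|_1 \le 1 - \tfrac12\|\rho-\rho'\|_1$, which is \eqref{erhoprime}, completing the argument.

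There is essentially no obstacle here: the whole content is packaged in Lemma~\ref{lem:oneeffect} together with the elementary facts that effects are nonnegative and bounded by $1$ on states. The only point that deserves a word of care is that the supremum defining the operational norm be actually attained, so that we may speak of a concrete optimal $e$ rather than an approximating sequence; this holds by the compactness of the state space (hence of the effect space) that is assumed throughout the paper, and is already implicit in the $\max$ appearing in the statement of Lemma~\ref{lem:oneeffect}.
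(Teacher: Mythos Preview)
Your proof is correct and follows essentially the same approach as the paper: both arguments extract the optimal effect $e$ from Lemma~\ref{lem:oneeffect}, then derive \eqref{erho} from the positivity $e(\rho')\ge 0$ and \eqref{erhoprime} from the bound $e(\rho)\le 1$. The only cosmetic difference is that the paper phrases the second step directly as ``$e(\rho)$ is a probability'' rather than invoking the complementary effect explicitly.
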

{\bf Proof.}   By Lemma \ref{lem:oneeffect}, there exists a positive functional  $e \in  {\sf Eff}  (S)$  such that $  e (\rho-  \rho')  =  \|  \rho-\rho'\|_1/2 $.  Since $e$ is positive and $\rho'$ is a state, we must have $e (\rho) \ge e  (\rho-
\rho')  = \|  \rho-\rho'\|_1/2$, which implies Eq. (\ref{erho}).  Since $e (\rho)$ is a probability, and therefore is upper bounded  by 1, we have $  1-  e(\rho')     \ge   e (\rho-  \rho')   =   \|  \rho-\rho'\|_1/2$, which implies Eq.  (\ref{erhoprime}).  \qed

\begin{lemma}\label{lem:withoutp}
 For every system $A$, and for every pair of states $\rho, \rho'  \in  {\sf St} (A)$, one has the bound
 \begin{align}
 \| p\,   \rho  -  (1-p)\,  \rho' \|_1   \ge   \|  \rho-  \rho'\|  -1  \, . \end{align}
\end{lemma}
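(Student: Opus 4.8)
The plan is to reduce everything to the fact that the operational norm is a genuine norm on $\mathsf{St}_{\mathbb R}(A)$ — hence subadditive and absolutely homogeneous — together with the elementary observation that every normalized state $\sigma$ satisfies $\|\sigma\|_1 = 1$. The latter is immediate from Eq.~(\ref{opnorm}): effects take values in $[0,1]$ on states, so $\max_{e}e(\sigma) = 1$ (attained by the deterministic effect) while $\min_f f(\sigma) = 0$ (attained by the zero effect), whence $\|\sigma\|_1 = 1 - 0 = 1$.

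The engine of the proof is the algebraic identity
\begin{align*}
\rho - \rho' = \big(p\,\rho - (1-p)\,\rho'\big) + \big((1-p)\,\rho - p\,\rho'\big),
\end{align*}
which one verifies by collecting coefficients: the coefficient of $\rho$ on the right is $p + (1-p) = 1$ and that of $\rho'$ is $-(1-p) - p = -1$. Applying subadditivity of $\|\cdot\|_1$ to this splitting, and then bounding the second summand by homogeneity and subadditivity as $\|(1-p)\,\rho - p\,\rho'\|_1 \le (1-p)\|\rho\|_1 + p\|\rho'\|_1 = 1$, gives
\begin{align*}
\|\rho - \rho'\|_1 \le \big\|p\,\rho - (1-p)\,\rho'\big\|_1 + 1,
\end{align*}
and rearranging yields the claimed bound (understood for every probability $p \in [0,1]$).

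I do not expect a genuine obstacle here; the single point that warrants a line of justification is the identity $\|\sigma\|_1 = 1$ for normalized states, which I would either cite from the general-probabilistic-theory framework or prove in one sentence as above. Should one prefer to avoid even this, an alternative route is to feed the effect produced by Lemma~\ref{lem:erho} into the relation $\|\delta\|_1 = 2\max_{e \in \mathsf{Eff}(S)} e(\delta) - u(\delta)$ with $u\big(p\rho - (1-p)\rho'\big) = 2p - 1$; this reconstructs the same inequality but is strictly longer, so the direct decomposition above is the route I would take.
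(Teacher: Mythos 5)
Your proof is correct, and it takes a genuinely different --- and more economical --- route than the paper's. The paper lower-bounds $\| p\rho - (1-p)\rho'\|_1$ directly from the characterization $\|\delta\|_1 = \max_e e(\delta) - \min_f f(\delta)$ of Eq.~(\ref{opnorm}): it invokes Lemma~\ref{lem:erho} to produce two effects $e$ and $f$ that are near-optimal for discriminating $\rho$ from $\rho'$ (with the roles of the two states swapped between them), evaluates $e\big(p\rho - (1-p)\rho'\big) + f\big((1-p)\rho' - p\rho\big)$, and watches the terms telescope to $\|\rho - \rho'\|_1 - 1$. You instead write $\rho - \rho' = \big(p\rho - (1-p)\rho'\big) + \big((1-p)\rho - p\rho'\big)$, apply the triangle inequality, and bound the second summand by $(1-p)\|\rho\|_1 + p\|\rho'\|_1 = 1$; every ingredient is available in the framework, since the paper asserts that $\|\cdot\|_1$ is a norm on ${\sf St}_{\mathbb R}(S)$, and your one-sentence justification of $\|\sigma\|_1 = 1$ for normalized states (deterministic effect achieves the max, zero effect achieves the min) is sound. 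Your argument bypasses Lemmas~\ref{lem:oneeffect} and~\ref{lem:erho} entirely, which is a real simplification for this particular statement; the paper's explicit-effect construction buys a concrete pair of discriminating effects, but since those auxiliary lemmas are not reused elsewhere, nothing is lost by your shortcut.
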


{\bf Proof.}  The definition of the operational norm [Eq.  (\ref{opnorm})] implies
\begin{align}
\nonumber \| p\,   \rho  -  (1-p)\,  \rho' \|_1   &  =  \max_e  e (p\,   \rho  -  (1-p)\,  \rho'  ) \\
\nonumber & \qquad  -  \min_f  f (p\,   \rho  -  (1-p)\,  \rho'  )  \\
\nonumber  &  =  \max_e  e (p\,   \rho  -  (1-p)\,  \rho'  )  \\
 &\qquad +  \max_f  f (  (1-p)\,  \rho' -  p\,   \rho ) \,.    \label{basta}
\end{align}  
Moreover, Lemma \ref{lem:erho} applied to the norm $\|  \rho'-\rho\|_1$ implies that there exists a functional  $e \in  {\sf Eff}  (S)$    satisfying Eqs.  (\ref{erho}) and  (\ref{erhoprime}), and a functional     $f\in  {\sf Eff}  (S)$ satisfying the conditions $f(\rho')  \ge  \| \rho'-\rho \|_1/2$ and 
 $f(\rho)  \le  1- \| \rho'-\rho \|_1/2$. Hence, equation (\ref{basta}) implies  
 \begin{align}
 \nonumber \| p\,   \rho  -  (1-p)\,  \rho' \|_1     &\ge  e  (  p \, \rho  -  (1-p) \, \rho')  +  f  ( (1-p)\, \rho'  - p \,\rho)\\
  \nonumber  &  \ge   p\,  \frac{\| \rho-\rho'\|_1 }2  - (1-p)\,  \left(1-  \frac{\| \rho-\rho'\|_1 }2\right)  \\
   \nonumber & \quad +    (1-p)\,  \frac{\| \rho'-\rho\|_1 }2  - p\,  \left(1-  \frac{\| \rho'-\rho\|_1 }2\right)\\
   &  =  \| \rho-\rho'\|_1   - 1 \, ,
 \end{align}
the last equation following from the homogeneity  property of the norm, which implies  $\|  \rho'  -\rho\|_1  =  \|  \rho- \rho'\|_1$. \qed

\medskip

\medskip
\noindent \textbf{Proof of Theorem~\ref{theo:ddworst}.}
By Lemma \ref{lem:ddworst} we have the equality 
\begin{align}\label{a}
{\rm DP}_{\min}  ({\cal N})  =  \min_p  \min_{\rho  \,  {\rm p.d.}\,  \rho'}  \|  p \, {\cal N}  (\rho)  -  (1-p)\,  {\cal N}  (\rho')\|_1\,.
\end{align}
On the other hand, Lemma \ref{lem:withoutp} implies  the bound 
\begin{align}\label{b}
\|  p \, {\cal N}  (\rho)  -  (1-p)\,  {\cal N}  (\rho')\|_1 \ge  \|  {\cal N}  (\rho)  -   {\cal N}  (\rho') \|_1-1\,, 
\end{align}
valid for every probability $p$ and every pair of states $\rho$ and $\rho'$.
Finally, the definition of $\Cm  ({\cal N})$ implies    the bound  
\begin{align}\label{c}
\|  {\cal N}  (\rho)  -   {\cal N}  (\rho') \|_1  \ge   \Cm  ({\cal N})  \,  \|  \rho  - \rho'\|_1   =  2  \,  \Cm  ({\cal N})\,,
\end{align}
where the last equality holds for perfectly distinguishable $\rho$ and $\rho'$.  
Combining Eqs.  (\ref{a}), (\ref{b}), and (\ref{c}), we obtain the bound  ${\rm DP}_{\min}  ({\cal N})    \ge 2 \,  \Cm  ({\cal N})    -1 $. \qed

\subsection{Proof of Proposition \ref{prop:continuity_minimum}} \label{app:CE_continuity_min}
Here we show that the minimum causal effect $\Cm (\N)$ is continuous in the process $\N:  {\sf St}  (A)  \to {\sf St} (B)$ whenever the state space ${\sf St} (A) $ is compact and generated by perfectly distinguishable states.  

\medskip

  \proof
Let $\rho,  \rho'  \in  {\sf St} (A)$ be two perfectly distinguishable states such  that  
\begin{align}
\Cm  (\N)   =  
\frac  {  \|  \N ( \rho'-\rho)  \|_1 }2 \, .
\end{align}
Note that we have 
\begin{align}
\nonumber &\frac  {  \|  \N ( \rho'-\rho)  \|_1 }2   \\
\nonumber &\qquad =  \frac  {  \|  \N ( \rho')  -  \N'  (\rho')   +  \N'  (\rho'- \rho)  +  \N'(\rho) -  \N (\rho)  \|_1 }2 \\
\nonumber &  
\qquad \le  \frac  {  \|  \N ( \rho')  -  \N'  (\rho') \|_1   + \| \N'  (\rho'- \rho) \|_1 +  \|\N'(\rho) -  \N (\rho)  \|_1 }2\\
&\qquad \le 
 \Cm  (  \N')   +  \epsilon  \, ,
\end{align}
where we used the inequalities $ \|  \N ( \rho')  -  \N'  (\rho') \|_1 \le \epsilon$,   $\| \N'  (\rho'- \rho) \|_1   \le  2  \, \Cm(\N')$, and $ \|  \N ( \rho)  -  \N'  (\rho) \|_1 \le \epsilon$. 

Hence, we obtained the inequality  $\Cm (\N)  \le \Cm (\N') + \epsilon$.  Applying the same argument to $\Cm(\N')$, we then obtain the inequality $\Cm (\N')  \le \Cm (\N) + \epsilon$.  
  Combining the two inequality  we then obtain $|\Cm   ({\cal N})  -  \Cm   ({\cal N'})|  \le \epsilon.$ \qed

\section{Maximum and minimum causal effects in quantum theory}

\subsection{Proof of Proposition \ref{prop:quantumeffectclassicalchan}}\label{app:quantumeffectclassicalchan}

 For every quantum state $\rho\in  {\sf St} (A)$,    define  the probability distribution $P(a):  = \langle a| \rho |a\rangle$.

For every quantum state $\rho\in  {\sf St} (A)$,   define the diagonal density matrix $\rho_{\rm diag}:  =  \sum_a \langle a| \rho |a\rangle  \,  |a\rangle \langle a|$ and note that  $\N (\rho)   =  \N  (\rho_{\rm diag})$. 

Now, for every pair of  distinct density matrices $\rho$ and $\rho'$, the condition $\rho_{\rm diag} =  \rho'_{\rm diag} $ implies $\N (\rho)  =  \N(\rho')$. Hence,  the maximization in Eq.  (\ref{maxquantum}) can be restricted without loss of generality to orthogonal pure states $\rho$ and $\rho'$  such that $\rho_{\rm diag} \not =  \rho_{\rm diag}'$.  Using this fact, we obtain 
\begin{align}
\nonumber \CM (\N)  &  =    \sup_{\rho \not  =  \rho'}  \frac{\|  \N(\rho)  - \N(\rho')\|_1}{\|  \rho  -\rho'\|_1} \\
  \nonumber & = \sup_{\rho , \rho':~  \rho_{\rm diag}\not  =  \rho_{\rm diag}'}  \frac{\|  \N(\rho)  - \N(\rho')\|_1}{\|  \rho  -\rho'\|_1} \\  
\nonumber &  = \sup_{\rho , \rho':~  \rho_{\rm diag}\not  =  \rho_{\rm diag}'}  \frac{\|  \N(\rho_{\rm diag})  - \N(\rho_{\rm diag}')\|_1}{\|  \rho  -\rho'\|_1} \\  
  \nonumber &  \le \sup_{\rho , \rho':~  \rho_{\rm diag}\not  =   \rho_{\rm diag}'}  \frac{\|  \N(\rho_{\rm diag})  - \N(\rho_{\rm diag}')\|_1}{\|  \rho_{\rm diag}  -\rho_{\rm diag}'\|_1} \\ 
\nonumber   &=    \sup_{  P  \not  = P'}\frac{  \|  \N_{\rm c}  (P) -  \N_{\rm c}  (P') \|_1  }{\|  P-  P'\|_1  }\\
&  \equiv {\rm ACE}_{X\to Y} \, ,
\end{align}
where the inequality follows from the bound $\|  \rho  - \rho'\|_1 \ge \|  \rho_{\rm diag}  - \rho'_{\rm diag} \|_1$, and the last equality follows from Proposition \ref{prop:classicalACE} upon  defining  the probability distributions $P(a):  =  \langle a|  \rho  |a\rangle$ and $P'(a):  =  \langle a|  \rho'  |a\rangle$, and the classical channel $\N_{\rm c}$ with conditional probability distribution    $p(b|{\rm do } (a)):  =q(b|a)$.  

The converse inequality trivially holds:  
\begin{align}
\nonumber {\rm ACE}_{X\to Y}& =     \sup_{  P  \not  = P'}\frac{  \|  \N_{\rm c}  (P) -  \N_{\rm c}  (P') \|_1  }{\|  P-  P'\|_1  } \\
  \nonumber &  =    \sup_{  \rho_{\rm diag}  \not  = \rho_{\rm diag}'}\frac{  \|  \N  (\rho_{\rm diag}) -  \N  (\rho'_{\rm diag}) \|_1  }{\|  \rho_{\rm diag}-  \rho'_{\rm diag}\|_1  } \\
 \nonumber  & \le   \sup_{  \rho  \not  = \rho'}\frac{  \|  \N  (\rho) -  \N  (\rho') \|_1  }{\|  \rho-  \rho'  \|_1  } \\
 &\equiv  \CM (\N) \, .
\end{align}

\qed

\subsection{Proof of  Proposition~\ref{Lem:faithful}} \label{App:faithful_proof}

The proof of Proposition~\ref{Lem:faithful} uses two basic lemmas.  The first lemma states an elementary property of the induced trace norm,  defined as $\|{\cal M}\|_1  : =  \max_{  O   \in   L( \H _A) \,  O\not  = 0}  \,  \| {\cal M}  (O)\|_1/\|O\|_1$  for a linear map ${\cal M}:  L(\H _A)  \to L(\H _B)$. 

\begin{lemma}\label{lem:norm1}
For every linear  map ${\cal M}:  L(\H _A)  \to L(\H _B)$, one has the bound $\|  {\cal M}\|_1 \le   2 \max_{\rho  \in  {\sf St}  (A)}    \|  {\cal M}  (\rho)\|_1$.  
\end{lemma}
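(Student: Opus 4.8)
The plan is to reduce the general bound to the action of $\mathcal M$ on states by decomposing an arbitrary operator into positive pieces. First I would dispose of the trivial case $O=0$, and then split a generic $O\in L(\H_A)$ into its Hermitian and anti-Hermitian parts, $O=H_1+\iu H_2$ with $H_1:=(O+O^\dag)/2$ and $H_2:=(O-O^\dag)/(2\iu)$. Using the triangle inequality for the trace norm together with $\|O^\dag\|_1=\|O\|_1$, one gets $\|H_1\|_1\le\|O\|_1$ and $\|H_2\|_1\le\|O\|_1$, so it suffices to prove $\|\mathcal M(H)\|_1\le \|H\|_1\,\max_{\rho\in{\sf St}(A)}\|\mathcal M(\rho)\|_1$ for every Hermitian $H$; the factor $2$ in the statement then appears from the two-term bound $\|\mathcal M(O)\|_1\le\|\mathcal M(H_1)\|_1+\|\mathcal M(H_2)\|_1$.

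For Hermitian $H$, I would use the Jordan decomposition $H=H_+-H_-$ into positive and negative parts, with $\|H\|_1=\Tr[H_+]+\Tr[H_-]$. Writing $H_+=\Tr[H_+]\,\rho_+$ and $H_-=\Tr[H_-]\,\rho_-$ (whenever the corresponding trace is nonzero, $\rho_\pm$ being density matrices, i.e.\ elements of ${\sf St}(A)$), linearity of $\mathcal M$ and the triangle inequality give
\begin{align}
\|\mathcal M(H)\|_1 &\le \Tr[H_+]\,\|\mathcal M(\rho_+)\|_1+\Tr[H_-]\,\|\mathcal M(\rho_-)\|_1\nonumber\\
&\le \big(\Tr[H_+]+\Tr[H_-]\big)\,\max_{\rho\in{\sf St}(A)}\|\mathcal M(\rho)\|_1\nonumber\\
&= \|H\|_1\,\max_{\rho\in{\sf St}(A)}\|\mathcal M(\rho)\|_1\,.
\end{align}
Note that no assumption that $\mathcal M$ is Hermitian-preserving is needed: the states $\rho_\pm$ are genuine density matrices regardless, and the triangle inequality for $\|\cdot\|_1$ holds for arbitrary operators.

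Combining the two steps, $\|\mathcal M(O)\|_1\le (\|H_1\|_1+\|H_2\|_1)\,\max_{\rho}\|\mathcal M(\rho)\|_1\le 2\|O\|_1\,\max_{\rho}\|\mathcal M(\rho)\|_1$, and dividing by $\|O\|_1$ and taking the supremum over $O\neq 0$ yields $\|\mathcal M\|_1\le 2\max_{\rho\in{\sf St}(A)}\|\mathcal M(\rho)\|_1$, as claimed. I do not expect a genuine obstacle here; the only point requiring a little care is keeping the triangle-inequality steps valid for a map that need not preserve Hermiticity or positivity, which is why the argument is phrased entirely in terms of norms of images of states rather than in terms of positive/negative parts of $\mathcal M(H)$. \qed
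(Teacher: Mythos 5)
Your proposal is correct and follows essentially the same route as the paper's proof: decompose $O$ into Hermitian and anti-Hermitian parts, bound each part's trace norm by $\|O\|_1$, and then reduce the Hermitian case to states via the Jordan decomposition $H=\Tr[H_+]\,\rho_+-\Tr[H_-]\,\rho_-$. The only difference is cosmetic (you sum the two contributions with normalized parts, while the paper bounds them by a maximum of ratios), and your remark that no Hermiticity-preservation assumption is needed is consistent with the paper's statement for general linear maps.
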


{\bf Proof.} For every operator $O\in  L (\H _A)$,  one has 
\begin{align}
\nonumber \|  {\cal M}  (O)\|_1 &= \left\|  \frac 12 {\cal M}  (O+O^\dag) + \frac{1}{2i} {\cal M}  \left(i  O-i O^\dag \right)  \right\|_1   \\
\nonumber &\le  \frac 12 \|   {\cal M}  (O+O^\dag)\|_1  +   \frac1{2}   \|{\cal M}  (iO-iO^\dag)  \|_1   \\
&\le \max  \{\|   {\cal M}  (O+O^\dag)\|_1  \,  , \|{\cal M}  (iO-iO^\dag)  \|_1   \}\, .\label{D10}
\end{align}
Now, the operators  $O+ O^\dag$  and $iO  - iO^\dag$ are Hermitian, and their trace norm is upper bounded by $2\|O\|$, due to the triangle inequality.  Hence, we have the bounds  
\begin{align}
\max\left\{  \frac{\|   {\cal M}  (O +   O^\dag)\|_1}{  \| O+    O^\dag \|_1}  \,, \frac{\|   {\cal M}  (iO - i  O^\dag)\|_1}{  \| iO-    iO^\dag \|_1} \right\} \ge    \frac{  \|{\cal M}  (O)\|_1}{2 \, \|  O\|_1} \, ,  
\end{align}
\begin{align}
\nonumber \|{\cal M}\|_1  & =  \max_{  O   \in   L( \H _A) \, , O\not  = 0}  \,  \frac{\| {\cal M}  (O)\|_1}{\|O\|_1}\\  
\nonumber &\le 2 \max_{  O   \in   L( \H _A) \, , O\not  = 0}   \left\{  \frac{\|   {\cal M}  (O +   O^\dag)\|_1}{  \| O+    O^\dag \|_1}  \,, \frac{\|   {\cal M}  (iO - i  O^\dag)\|_1}{  \| iO-    iO^\dag \|_1} \right\} \\
&\le  2 \max_{  H   \in   L( \H _A) \,  , H^\dag =  H  \, ,  H\not  = 0}  \,   \frac{\| {\cal M}  (H)\|_1}{\|H\|_1} \,.   \label{DD}
\end{align}  
In turn, a Hermitian operator $H$ can be decomposed as $H  =  a \,  \rho_+  -  b  \, \rho_-$, where $\rho_{\pm}$ are density matrices with orthogonal support, while $a$ and $b$ are nonnegative real numbers. 
Hence,   we have  $\|H\|_1  =  a+  b$, and 
\begin{align}
\nonumber     \frac{\| {\cal M}  (H)\|_1}{\|H\|_1}    &=   \frac{\|{\cal M}   (a\rho_+  - b \rho_-)\|_1}{a+b}  \\
 \nonumber    &\le  \frac {a}{a+b}   \|{\cal M}   (\rho_+) \|_1 +  \frac{b}{a+b}   \|  {\cal M}  (  \rho_-)\|_1  \\
 &\le \max_{\rho \in {\sf St}  (A)  }  \, \|  {\cal M}  (\rho)\|_1  \, ,
\end{align}
the first inequality following from the triangle inequality for the trace norm.  
Inserting this bound into Eq. (\ref{DD})  we then obtain the desired inequality. \qed

 \medskip 
 
The second lemma states an elementary relation between the induced trace norm and the diamond norm. 

\begin{lemma}\label{lem:diamondtrace}
For any  Hermitian-preserving map $\cal M:  L(\H _A)  \to L(\H _B)$, one has the bound $\|  {\cal M}\|_{\diamond} \le   \,d_A\, \|  {\cal M}\|_1$.  
\end{lemma}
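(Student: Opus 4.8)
The plan is to estimate $\|(\mathcal{M}\otimes{\cal I}_E)(\rho)\|_1$ directly, for an arbitrary ancillary system $E$ and an arbitrary state $\rho\in{\sf St}(A\otimes E)$, and then take the supremum over $E$ and $\rho$ to recover the diamond norm. First I would use the convexity of the trace norm to reduce to pure input states: writing a generic state as a convex mixture $\rho=\sum_i q_i\,|\psi_i\rangle\langle\psi_i|$ and applying the triangle inequality, it suffices to prove the bound $\|(\mathcal{M}\otimes{\cal I}_E)(|\psi\rangle\langle\psi|)\|_1\le d_A\,\|\mathcal{M}\|_1$ for every pure state $|\psi\rangle\in\mathcal{H}_A\otimes\mathcal{H}_E$.

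Next I would invoke the Schmidt decomposition $|\psi\rangle=\sum_{k=1}^{r}\sqrt{p_k}\,|k\rangle_A\otimes|e_k\rangle_E$, where $r\le d_A$ is the Schmidt rank, $(p_k)_{k=1}^r$ is a probability distribution, and $\{|k\rangle_A\}$ and $\{|e_k\rangle_E\}$ are orthonormal systems in $\mathcal{H}_A$ and $\mathcal{H}_E$, respectively. Then
\[
(\mathcal{M}\otimes{\cal I}_E)\big(|\psi\rangle\langle\psi|\big)=\sum_{k,l}\sqrt{p_k p_l}\;\mathcal{M}\big(|k\rangle\langle l|\big)\otimes|e_k\rangle\langle e_l|\, ,
\]
and the triangle inequality for the trace norm, together with $\||e_k\rangle\langle e_l|\|_1=1$ and $\||k\rangle\langle l|\|_1=1$, gives
\[
\big\|(\mathcal{M}\otimes{\cal I}_E)(|\psi\rangle\langle\psi|)\big\|_1\le\sum_{k,l}\sqrt{p_k p_l}\,\big\|\mathcal{M}(|k\rangle\langle l|)\big\|_1\le\|\mathcal{M}\|_1\Big(\sum_{k=1}^{r}\sqrt{p_k}\Big)^2\, .
\]
By the Cauchy–Schwarz inequality, $\big(\sum_{k=1}^{r}\sqrt{p_k}\big)^2\le r\sum_{k=1}^{r}p_k=r\le d_A$, so $\|(\mathcal{M}\otimes{\cal I}_E)(|\psi\rangle\langle\psi|)\|_1\le d_A\,\|\mathcal{M}\|_1$. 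Averaging over the mixture $\rho=\sum_i q_i|\psi_i\rangle\langle\psi_i|$ and taking the supremum over all $E$ and all $\rho\in{\sf St}(A\otimes E)$ then yields $\|\mathcal{M}\|_\diamond\le d_A\,\|\mathcal{M}\|_1$, as claimed.

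I do not expect a serious obstacle here: the only point requiring a little care is that the Schmidt bases depend on the chosen pure state $|\psi_i\rangle$, but this is harmless because $\||k\rangle\langle l|\|_1=1$ for any orthonormal system, so the bound $\|\mathcal{M}(|k\rangle\langle l|)\|_1\le\|\mathcal{M}\|_1$ holds uniformly. The one subtlety worth flagging is bookkeeping: one must keep the contribution of the Schmidt coefficients in the form $\big(\sum_k\sqrt{p_k}\big)^2$ and bound it by $r$ via Cauchy–Schwarz, rather than crudely estimating the $r^2$ terms by $1$ each, which would only give the weaker factor $d_A^2$.
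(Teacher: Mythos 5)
Your proposal is correct and follows essentially the same route as the paper's proof: Schmidt decomposition of a pure input, triangle inequality together with $\|\mathcal{M}(|k\rangle\langle l|)\|_1\le\|\mathcal{M}\|_1$ and multiplicativity of the trace norm, and the Cauchy--Schwarz bound $\bigl(\sum_k\sqrt{p_k}\bigr)^2\le d_A$. The only cosmetic difference is that you reduce to pure states by convexity over an arbitrary ancilla $E$, whereas the paper directly invokes the fact that for Hermitian-preserving maps the diamond norm is attained on pure states with ancilla $\mathcal{H}_A$.
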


{\bf Proof.} For a Hermitian-preserving map ${\cal M}$ the diamond norm can be equivalently computed as  $\|  {\cal M}\|_{\diamond} =  \max_{|\Psi\rangle  \in  \H _A \otimes \H _A}  \,  \|   (  {\cal M} \otimes {\cal I}_A)  \,  (|\Psi\rangle\langle \Psi|)\|_1$, where the maximization is over all length-1 vectors $|\Psi\rangle$.    Let $|\Psi\rangle  \in  \H _A \otimes {\cal H_A} $ be a state such that 
\begin{align}\label{aa}
\|  {\cal M}\|_{\diamond}  =\|  ({\cal M} \otimes {\cal I}_A)  (  |\Psi\rangle \langle \Psi|)  \|_1 \,.
\end{align}
  Then, take a Schmidt decomposition
$\ket{\Psi}{=}\sum_{i=1}^{d_A} \sqrt{p_i} {\ket{\alpha_i}}  \otimes {\ket{\beta_i}}$, where $\{  |\alpha_i \rangle\}_{i=1}^{d_A}$ and  $\{  |\beta_i \rangle\}_{i=1}^{d_A}$ are orthonormal bases and $\sum_{i=1}^{d_A} p_i  =1$.   
Inserting this expression into Eq. (\ref{aa}), we obtain  
\begin{align}
    \norm{{\cal M}}_\diamond &=  \norm{\sum_{i,j}\sqrt{p_ip_j}  
    \,  {\cal M}  (\ketbra{\alpha_i}{\alpha_j} ) \otimes \ketbra{\beta_i}{\beta_j}}_1 \nonumber \\
    &\le\sum_{i,j} \sqrt{p_ip_j} \ \norm{{\cal M}\left(\ketbra{\alpha_i}{\alpha_j}\right) \otimes \ketbra{\beta_i}{\beta_j} }_1  \nonumber \\
    &  =\sum_{i,j} \sqrt{p_ip_j} \ \norm{{\cal M}\left(\ketbra{\alpha_i}{\alpha_j}\right)}_1  \,  \norm{\ketbra{\beta_i}{\beta_j}}_1  \nonumber \\
       &=   \|  {\cal M}\|_1 \,  \left(\sum_{i=1}^{d_A}  \sqrt{p_i}\right)^2 \nonumber\\ 
     & \le     d_A\,  \|  {\cal M}\|_1\, .
\end{align}
Here, the  inequality in the second line is due to the triangle inequality for the trace norm, and $\norm{\ \ketbra{\beta_i}{\beta_j}\ }_1=1$. The equality in the third line is due to the multiplicative property of the trace norm  $\|  A  \otimes B\|_1 =  \|  A\|_1 \,  \|B\|_1$,  and the inequality in the fourth line follows from the definition of the induced norm.  The equality in the fifth line follows from the relation $\|  |\alpha_i\rangle \langle \alpha_j| \|_1  = \|  |\beta_i\rangle \langle \beta_j| \|_1  = 1$.  Finally, the inequality in the sixth line follows from  the Cauchy-Schwarz inequality $  (\sum_{i=1}^{d_A}  \sqrt{p_i}  )^2\le (\sum_{i=1}^{d_A}  p_i  )  \, (\sum_{i=1}^{d_A}  1  )    =  d_A$.  The inequality is saturated when $(p_i)_{i=1}^{d_A}$ is  the uniform distribution, that is, when $|\Psi\rangle$ is maximally entangled.  \qed 

\medskip  

{\bf Proof of Proposition~\ref{Lem:faithful}.}  Let $\rho_0$ be an arbitrary state of system $A$, and define the  discard-and-reprepare channel $\mathcal{N}_0:= \sigma_0 \,  \Tr_A $, with $\sigma_0:  =  \N (\rho_0)$.   Then, for every state $\rho  \in  {\sf St}  (A)$, we have 
\begin{align}
\nonumber \|  \mathcal{N}  (\rho)-  \mathcal{N}_0  (\rho) \|_1  &  =   \|  \mathcal{N}  (\rho)-  \sigma_0 \|_1   \\
 \nonumber &  =   \|  \mathcal{N}  (\rho)-  \N  (\rho_0) \|_1   \\
   \nonumber &\le  \CM   (\N)  \,   \|  \rho-  \rho_0\|_1  \\
 &\le  2   \,  \CM   (\N)  \, .
\end{align}
Using Lemmas \ref{lem:diamondtrace}  and \ref{lem:norm1}, we then obtain 
\begin{align}  
\nonumber \| \N  -  {\cal N}_0 \|_{\diamond}  &  \le d_A  \,  \| \N  -  {\cal N}_0 \|_1 \\
\nonumber &\le  2 d_A\,  \max_{\rho \in {\sf St}   (A )} \|  (\N  -  {\cal N}_0) (\rho)\|_1  \\
&\le 4 d_A  \,  \CM  (\N)   \, .
\end{align}
 \qed

\subsection{Proof for Proposition~\ref{lem:classical_capacity}} \label{app:classical_capacity}

For the reader's convenience, we repeat here the proposition's statement: 
\begin{prop*}
For every quantum channel ${\cal N}$, the classical capacity $C(\mathcal{N})$ satisfies the lower bound 
\begin{align}\label{capacitybound}
C(\mathcal{N}) \ge 1 - h_2\left(\frac{1- \CM  (\cal N)  }2 \right)  \,, 
\end{align}
where  $h_2(x):=-x\log_2(x)-(1-x)\log_2(1-x)$ is the binary Shannon entropy.    
    
\end{prop*}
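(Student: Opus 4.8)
The plan is to exhibit an explicit single-shot coding scheme and then invoke Shannon's noisy-channel coding theorem. By Proposition~\ref{prop:pdmax} (valid in quantum theory, since all quantum state spaces are generated by perfectly distinguishable states), there exist orthogonal pure states $\psi$ and $\psi_\perp$ in $\mathcal{H}_A$ that achieve the maximum in Eq.~(\ref{maxquantum}), so that $\|\mathcal{N}(\psi)-\mathcal{N}(\psi_\perp)\|_1 = 2\,\CM(\mathcal{N})$. Use them to encode one classical bit $X\in\{0,1\}$ via $0\mapsto\psi$, $1\mapsto\psi_\perp$, and at the output let the receiver perform the Helstrom measurement $\{P_0,P_1=I-P_0\}$ that optimally discriminates $\mathcal{N}(\psi)$ from $\mathcal{N}(\psi_\perp)$, producing a bit $Y$. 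This fixed encoding and decoding turns $\mathcal{N}$ into a binary-input, binary-output classical channel $W(y|x)$.

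Next I would lower-bound the mutual information $I(X{:}Y)$ for the uniform input distribution on $X$. Under this prior the minimum-error probability of inferring $X$ from $Y$ is $p_{\rm err}=\tfrac12\bigl(1-\tfrac12\|W(\cdot|0)-W(\cdot|1)\|_1\bigr)$; since a two-outcome Helstrom measurement saturates the variational characterisation of the trace distance, $\|W(\cdot|0)-W(\cdot|1)\|_1=\|\mathcal{N}(\psi)-\mathcal{N}(\psi_\perp)\|_1=2\,\CM(\mathcal{N})$, hence $p_{\rm err}=\tfrac{1-\CM(\mathcal{N})}{2}$ (cf. Eq.~(\ref{perr})). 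Writing $I(X{:}Y)=1-H(X|Y)$ with $H(X|Y)=\sum_y p(y)\,h_2\bigl(P(X{=}0\,|\,Y{=}y)\bigr)$ and using $h_2(q)=h_2(1-q)$ to replace each term by $h_2$ of the conditional minimum-error probability $\min\{P(X{=}0|y),P(X{=}1|y)\}$, concavity of $h_2$ and Jensen's inequality give $H(X|Y)\le h_2\bigl(\sum_y p(y)\min\{P(X{=}0|y),P(X{=}1|y)\}\bigr)=h_2(p_{\rm err})$. Therefore $I(X{:}Y)\ge 1-h_2\bigl(\tfrac{1-\CM(\mathcal{N})}{2}\bigr)$.

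Finally I would conclude by the achievability direction of the coding theorem: $n$ independent uses of $\mathcal{N}$ together with the above product encoding/decoding realise $n$ uses of the discrete memoryless channel $W$, so Shannon's theorem guarantees a reliable rate of $I(X{:}Y)$ bits per channel use; equivalently, the single-letter Holevo quantity satisfies $\chi_1(\mathcal{N})\ge I(X{:}Y)$ by the Holevo bound, and $C(\mathcal{N})\ge\chi_1(\mathcal{N})$. In either formulation $C(\mathcal{N})\ge I(X{:}Y)\ge 1-h_2\bigl(\tfrac{1-\CM(\mathcal{N})}{2}\bigr)$, the desired bound (the extreme case $\CM(\mathcal{N})=1$ recovers $C(\mathcal{N})\ge 1$, as expected since then one bit is transmitted noiselessly). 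There is no real obstacle here; the only mild point of care is the Jensen step, which requires noting that $h_2$ is concave and symmetric about $1/2$, so that $H(X\,|\,Y{=}y)$ genuinely equals $h_2$ of the conditional minimum-error probability.
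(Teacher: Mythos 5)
Your proposal is correct and follows essentially the same route as the paper's proof: encode one bit into the orthogonal pure states achieving $\CM(\N)$, decode with the Helstrom measurement so that the induced binary channel has error probability $\tfrac{1-\CM(\N)}{2}$, bound the mutual information via a Fano-type inequality, and conclude by achievability of the mutual information (HSW/Shannon). If anything, your version is slightly more careful than the paper's on the information-theoretic step, since you apply the binary Fano bound to $H(X|Y)$ with the correct identity $I(X{:}Y)=H(X)-H(X|Y)$, whereas the paper works with $H(B|A)$.
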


\begin{proof}
Let us assume, to achieve $\CM (\mathcal{N})$ for the quantum channel, the optimal choice of POVM is given by $\{E_b\}$ with $b=\{0,1\}$, $0 \le E_b \le I_B$ and $\sum_b E_b = I_B$. Further, due to Proposition~\ref{prop:pdmax}, we optimize over uniformly distributed pure and orthogonal states. That is for $a=\{0,1\}$, we have $p_a=1/2$ for all $a$ and states $\psi_a\equiv \proj{\psi_a}$ at the system $A$ where $\Tr{\psi_a \psi_{a'}}=\delta_{aa'}$. This $\CM$-achieving setup allows us to define mutual information $I(A;B)_{\CM}$, and we have~\cite{Thomas_M_Cover2006-sa}

\begin{align}
    &I(A;B)_{\CM} {= }H(A) - H(B|A)_{\CM} {=} 1 - H(B|A)_{\CM}\\
    &H(B|A)_{\CM} {=} \frac{1}{2} \sum_{a,b} \Tr{E_b {\cal{N}}(\psi_a)}\log _2(\Tr{E_b {\cal{N}}(\psi_a)}). 
\end{align}
In the second equality in the first line, we note that $H(A)=1$ due to two uniformly distributed orthogonal state preparation at system $A$. Now, due to the definition of $\CM$, the total success probability of distinguishing two output states of the channel $\N$ for all uniformly distributed input states and all possible measurements at the output is given by $P_{\N}=\frac{1+\CM(\mathcal{N})}{2}$. In other words, $P_e = 1- P_{\N}$ gives the probability of error in the same distinguishability task. Now we can use Fano's inequality on $H(B|A)_{\CM}$ to note

\begin{align}
 &H(B|A)_{\CM} {\le} h_2(P_e) + P_e \log_2 (d-1) {=} h_2(1-P_{\N}), {\rm and } \nonumber \\
 &I(A;B)_{\CM} \ge 1- h_2(1-P_{\N})=h_2\left(\frac{1-\CM (\N)}{2}\right).
\end{align}
where $h_2(x):=-x\log_2(x)-(1-x)\log_2(1-x)$ is the binary Shannon entropy. Finally, we prove our theorem by noting the classical capacity $C(\N)$ of the channel $\N$ is more than $I(A;B)_{\CM}$\cite{Wilde13}, i.e.,
\begin{align}
 C(\N)\ge I(A;B)_{\CM} \ge  h_2\left(\frac{1-\CM (\N)}{2}\right). 
\end{align}

\end{proof}

\subsection{Proof of Theorem \ref{theo:correctability}}\label{app:correctability}

The proof is based on the universal recovery channel constructed by Junge {\em et al.} in Ref.~\cite{Junge_2018} and on an integral representation of the quantum relative entropy recently developed by Jen\v cova~\cite{jenčová2023recoverability}, building on work by   Frenkel~\cite{Frenkel_2023}.

Given two states $\rho,  \rho' \in  {\sf St}  (A)$, the quantum relative entropy is $D  (  \rho\|  \rho'):  =  \Tr[  \rho  (  \log \rho -  \log \rho')]$~\cite{Umegaki1962, Hiai1991}, where we take the logarithm to be in base 2.   A well-known fact~\cite{Lindblad1975, Uhlmann1977, Petz1986, PETZ1988}, is that the relative entropy is non-increasing under the application of quantum channels, namely $D  (  \N  (\rho)\|  \N (\rho') )  \le  D  (\rho\|  \rho')$.    

\begin{lemma}\label{lem:dp}
For every quantum channel $\N  :  {\sf St}  (A) \to {\sf St}  (B)$, the bound
\begin{align}
\nonumber   &    D (  \rho\| \rho')-  D(\N (\rho)\|  \N (\rho')) \\
& \qquad \le \frac{1-  {\rm DP}_{\min}  (\N)}2  \,  ( \ln \lambda  -\ln \mu   +\lambda  -  \mu )\, ,
\end{align}  
 holds for every pair of quantum states $\rho$ and $\rho'$ satisfying the condition $\mu  \, \rho'  \le \rho  \le \lambda \, \rho'$ for two nonnegative numbers $\mu$ and $\lambda$.
\end{lemma}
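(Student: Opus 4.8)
The plan is to derive the bound from an integral representation of the quantum relative entropy, combined with the definition of the minimum distinguishability preservation ${\rm DP}_{\min}(\N)$. The representation I will use is the one of Jen\v cov\'a~\cite{jenčová2023recoverability}, building on Frenkel~\cite{Frenkel_2023}: for any two states $\rho,\rho'$ it reads
\begin{align}
D(\rho\|\rho') = \int_0^\infty \Big(\Tr\big[(\rho - t\rho')_+\big] - (1-t)_+\Big)\,\frac{dt}{t}\,,
\end{align}
where $(\cdot)_+$ denotes positive part and the logarithm is natural (the base-$2$ version only inserts an overall factor $1/\ln 2$, immaterial to the structure below). Since $\rho$ and $\rho'$ are normalized, $\Tr[(\rho - t\rho')_+] = \tfrac12(\|\rho - t\rho'\|_1 + 1 - t)$, so the integrand equals $\tfrac12(\|\rho - t\rho'\|_1 - |1-t|)\ge 0$; crucially, $\|\rho - t\rho'\|_1$ is $(1+t)$ times the prior-weighted distinguishability $\|p\rho - (1-p)\rho'\|_1$ at $p = 1/(1+t)$, exactly the quantity that ${\rm DP}_{\min}$ controls [cf. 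Eq.~\eqref{perr} and Definition~\ref{def:dp}].

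First I would write the same formula for the pair $(\N(\rho),\N(\rho'))$ and subtract; the constants $(1-t)_+$ cancel because $\N$ is trace preserving, giving
\begin{align}
D(\rho\|\rho') - D(\N(\rho)\|\N(\rho')) = \int_0^\infty \frac{\|\rho - t\rho'\|_1 - \|\N(\rho) - t\N(\rho')\|_1}{2}\,\frac{dt}{t}\,,
\end{align}
whose integrand is nonnegative by the data-processing inequality for the operational norm~\cite{chiribella2010probabilistic}. Next I would localize the integral: the hypothesis $\mu\rho'\le\rho\le\lambda\rho'$ together with complete positivity of $\N$ yields $\mu\N(\rho')\le\N(\rho)\le\lambda\N(\rho')$, so for $t\ge\lambda$ both $\rho-t\rho'$ and $\N(\rho)-t\N(\rho')$ are negative semidefinite, while for $t\le\mu$ both are positive semidefinite; in either regime the two trace norms coincide (equal to $t-1$, resp. $1-t$) and the integrand vanishes. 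Hence only $t\in(\mu,\lambda)$ contributes (and if $\mu=0$ the claimed bound is $+\infty$, so there is nothing to prove).

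On $(\mu,\lambda)$ I would bound the integrand using ${\rm DP}_{\min}$: putting $p = 1/(1+t)\in(0,1)$ in Definition~\ref{def:dp} gives $\|\N(\rho)-t\N(\rho')\|_1 \ge {\rm DP}_{\min}(\N)\,\|\rho-t\rho'\|_1$ (the case $\rho=\rho'$ being trivial), so the integrand is at most $\tfrac12(1-{\rm DP}_{\min}(\N))\|\rho-t\rho'\|_1 \le \tfrac12(1-{\rm DP}_{\min}(\N))(1+t)$ by the triangle inequality and normalization. Integrating, $\int_\mu^\lambda \tfrac{1+t}{t}\,dt = \ln\lambda - \ln\mu + \lambda - \mu$, which is exactly the asserted inequality.

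The one substantive ingredient is the integral representation itself — that is the deep input, and everything after it is elementary manipulation. The only points requiring a little care are the normalization constant (hence the base of the logarithm) in that representation, and the degenerate cases $\mu = 0$ and $\rho = \rho'$, all of which are routine.
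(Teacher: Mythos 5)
Your proof is correct and follows essentially the same route as the paper's: the Jen\v cov\'a--Frenkel integral representation of the relative entropy, rewriting the positive/negative parts as trace norms, cancelling under subtraction, bounding the integrand via ${\rm DP}_{\min}$ and the triangle inequality, and integrating $(1+s)/s$ over $(\mu,\lambda)$. The only (cosmetic) difference is that the paper invokes Jen\v cov\'a's Corollary~1, which is already localized to $[\mu,\lambda]$ under the hypothesis $\mu\rho'\le\rho\le\lambda\rho'$, whereas you start from the global $[0,\infty)$ representation and localize by hand; your handling of the degenerate cases and of the logarithm base is, if anything, more careful than the paper's.
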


{\bf Proof.}  In Corollary 1 of Ref.~\cite{jenčová2023recoverability}, Jen\v cova  showed that, if two quantum states $\rho$ and $\rho'$ satisfy the condition $\mu  \, \rho'  \le \rho \le \lambda \,\rho'$ for nonnegative $\mu$ and $\lambda$,  then the following integral representation holds:  
\begin{align}
D (  \rho\| \rho')   =  \int_{\mu}^{\lambda}   {\rm d} s ~  \frac { \Tr  \left[  \Big( \rho  -  s  \, \rho' \Big)_- \right]   }s  +  \ln \lambda  + 1-\lambda   \, ,
\end{align}
where $\ln$ denotes the natural logarithm and, for a Hermitian operator $H$, $H_-$ denotes the negative part of $H$. Using the relation $\Tr [  H_-]  =  (\|  H\|_1   -  \Tr[H])/2$, we then obtain  
\begin{align}
D (  \rho\| \rho')   =  \int_{\mu}^{\lambda}   {\rm d} s ~  \frac {\|  \rho  -  s  \, \rho'\|_1   +  s  -1   }{2s} +  \ln  \lambda +1 -\lambda \, .
\end{align}
Using this equation, we obtain the bound 
\begin{align}
\nonumber   &    D (  \rho\| \rho')-  D(\N (\rho)\|  \N (\rho')) \\
\nonumber & \quad =  \int_{\mu}^{\lambda}   {\rm d} s ~  \frac {\|  \rho  -  s  \, \rho'\|_1   -  \| \N( \rho)  -  s  \,\N( \rho')\|_1    }{2s} \\
\nonumber &\quad \le \frac{1-  {\rm DP}_{\min}  (\N)}2 \,  \int_{\mu}^{\lambda}   {\rm d} s ~  \frac {\|  \rho  -  s  \, \rho'\|_1   }{s}\\
\nonumber &  \quad \le \frac{1-  {\rm DP}_{\min}  (\N)}2 \,  \int_{\mu}^{\lambda}   {\rm d} s ~  \frac {1 + s    }{s}\\
&\quad =  \frac{1-  {\rm DP}_{\min}  (\N)}2    \,  ( \ln \lambda  -\ln \mu   +\lambda  -  \mu )\, ,
\end{align}  
where the second inequality comes from the triangle inequality $\|  \rho  -  s  \, \rho'\|_1  \le \|\rho \|_1  +  \|  s  \,  \rho'\|_1  =  1+s$.    \qed  

\medskip  
{\bf Proof of Theorem \ref{theo:correctability}.}  Let $\rho$ be an arbitrary quantum state, and let $I_A/d_A$  be the maximally mixed state of system $A$.  Then, the states $ \sigma: = (\rho  +  I_A/d_A)/2$ and  $\chi:=  I_A/d_A$ satisfy the condition  $\chi/2  \le \sigma \le   (d_A+1)\, \chi/2$.  Applying Lemma \ref{lem:dp} to the states $\sigma$ and $\chi$, we obtain the bound   
\begin{align}
\nonumber   &    D (  \sigma\| \chi)-  D(\N (\sigma)\| \N  (\chi) ) \\
& \qquad \le \frac{1-  {\rm DP}_{\min}  (\N)}2  \,  \left( \ln (d_A+1)   +\frac{d_A}2  \right)\, .
\end{align} 
Now, a key result by Junge,  Renner,  Sutter,  Wilde, and  Winter~\cite{Junge_2018} guarantees the existence of a recovery channel $\cal R$  satisfying the conditions  
\begin{align}\label{uno}
{\cal R}  \circ  \N (\chi)  =  \chi
\end{align}
and 
\begin{align}\label{due}
-\log F   \Big({\cal R}  \circ  \N (\sigma),  \sigma  \Big)  \le   D (  \sigma\| \chi)  -  D(\N (\sigma)  \|  \N  (\chi) )   \, . 
\end{align}
Using the elementary inequality $ -\log  x  \ge  (1-x)/\ln 2$, we then obtain  
\begin{align}
\frac{1  - F   \Big({\cal R}  \circ  \N (\sigma),  \sigma  \Big) }{ \ln 2 }  \le   \,  D (  \sigma\| \chi)  -  D(\N (\sigma)  \|  \N  (\chi) )   \, . 
\end{align}  
Then, the Fuchs-van de Graaf inequality~\cite{fuchs_cryptographic_1999}  implies  
\begin{align}\label{last}
\|  {\cal R}  \circ  \N (\sigma) -   \sigma\|_1   \le  2  \sqrt{\ln 2} \, \sqrt{   D (  \sigma\| \chi)  -  D(\N (\sigma)  \|  \N  (\chi) ) }  \, . 
\end{align}  

Now, using Eq. (\ref{uno}), we obtain the relation ${\cal R}  \circ \N  (\sigma)  = {\cal R}  \circ \N  (\rho+ \chi)/2   =  ({\cal R}  \circ \N  (\rho)  +  \chi  )/2 $, which implies 
\begin{align}
\|  {\cal R}  \circ  \N (\sigma) -   \sigma\|_1   =  \frac{\|  {\cal R}  \circ  \N (\rho) -   \rho\|_1}{2}
\end{align}
Inserting this equality into Eq. (\ref{last}) and using Lemma \ref{lem:dp}, we obtain the bound
\begin{align}\label{last2}
\|  {\cal R}  \circ  \N (\rho) -   \rho\|_1   \le  4  \sqrt{\ln 2} \,  \sqrt{  \frac{1-  {\rm DP}_{\min}  (\N)}2} \,  \sqrt{ \ln (d_A+1)   +  \frac{d_A}2   } .
\end{align}  
\qed

\subsection{Proof of Proposition~\ref{prop:max_min_duality}}\label{App:max_min_duality}

\begin{proof}
\textbf{Statement 1:} From Proposition~\ref{Lem:faithful}, $\rm{CE}_{\max}(\N) \le \epsilon$ implies existence of a discard-and-reprepare channel $\mathcal{E}$ such that $\norm{\N - \mathcal{E}}_{\diamond} \le 4\epsilon d_A$. Then  from Theorem 3 of Kretschmann \emph{et al.}~\cite{kretschmann2008information}, there exists a recovery channel $\mathcal{R}:\St{E'}\to \St{A}$ acting on $\N ^c : \St{A}\to \St{E'}$ such that $\norm{\mathcal{R}\circ \N ^c - \mathcal{I}_A}_\diamond \le 4 \sqrt{\epsilon d_A}$. Now from the continuity of $\rm{CE}_{\min}$ in Proposition~\ref{prop:continuity_max}, and the fact $\rm{CE}_{\min}(\mathcal{I}_A)=1$, we have $\rm{CE}_{\min}(\mathcal{R}\circ \N ^c)\ge 1 - 4\sqrt{\epsilon d_A}$. Finally, Statement 1 follows from the data-processing inequality of  $\rm{CE}_{\min}$ as in Proposition~\ref{prop:propertiesmin}.

\textbf{Statement 2:} From Corollary~\ref{cor:correctability}, we note $\mathrm{CE}_{\min}(\N) \ge 1 -\epsilon$ implies there exists a recovery channel $\mathcal{R}:\St{B} \to \St{A}$ such that $\norm{\mathcal{R}\circ \N - \mathcal{I}_A}_\diamond \le f(\epsilon, d_A)$, where $f(\epsilon,d_A)=8d_A\sqrt{\epsilon \ln 2}\sqrt{d_A/2 + \ln (d_A+1)}$. Now from Theorem 3 of Kretschmann \emph{et al.}~\cite{kretschmann2008information}, there exists a discard and reprepare channel $\mathcal{E}:\St{A}\to \St{E'}$ such that $\norm{\N^c - \mathcal{E}}_\diamond \le 2 \sqrt{f(\epsilon,d_A)}$. Statement 2 now follows from continuity of $\mathrm{CE}_{\max}$ in Proposition~\ref{prop:continuity_max}, and from the fact that $\mathrm{CE}_{\max}(\mathcal{E})=0$.
\end{proof}

\section{Proof of Proposition~\ref{Lem:partial_swap}} \label{app:partial_swap}

{\bf Proof.}
First note the action of the channel $\mathcal{N}$ on a generic matrix $M$ is 
\begin{align}
    &\mathcal{N} (M)=\cos ^2(\theta) \,  \Tr[M] \,  \sigma_{\Lambda} + \sin ^2(\theta) \, M \nonumber \\
    &+ i\sin(\theta)\cos(\theta) \,\Tr_{\Lambda}\bigg[I\bigg(M \otimes \sigma_{\Lambda}\bigg)  {\tt SWAP}\bigg] \nonumber \\
    &- i\sin(\theta) \cos(\theta) \,  \Tr_{\Lambda}\bigg[  {\tt SWAP}\bigg(M \otimes \sigma_{\Lambda}\bigg)I\bigg], \nonumber \\
    &=\cos ^2(\theta)  \,\Tr [M] \,  \sigma_{\Lambda}  {+} \sin ^2(\theta) \,  M   {-}  i{\sin(\theta)\cos(\theta)} \, \left[\sigma_{\Lambda},M  \right] \, ,\label{commutator}  
\end{align}
where $[A,B]:=AB-BA$ is the commutator of two generic $d\times d$ matrices $A$ and $B$.

Hence, we have 
\begin{align}
\CM(\mathcal{N})&=\frac 12 \max_{\psi, \psi_\perp} \norm{\mathcal{N}(\psi )-\mathcal{N}(\psi_\perp )}_1  \nonumber \\
  & =\frac 12 \max_{U:  U^\dag   U  =I }  \norm{\mathcal{N} \left(   U |0\rangle\langle 0|  U^\dag \right)-\mathcal{N}\left(  U  |1 \rangle  \langle 1|  U^\dag \right)}_1  \nonumber \\
  & =\frac 12 \max_{U:  U^\dag   U  =I } \norm{\mathcal{N} \left(   U Z  U^\dag \right)}_1  \nonumber  \, ,
\end{align}
where the maximization is over all unitary operators $U$ and we used the notation 
\begin{align}Z :  =  |0\rangle \langle 0|  -  |1\rangle \langle 1|\,.
\end{align}

Using Eq. (\ref{commutator}) with $M=  UZU^\dag$, we then obtain 
\begin{align}
\CM (\N)  &=\frac 12 \max_{U:  \, U^\dag U  =  I } \Big\|\sin^2(\theta)\,   UZ   U^\dag \nonumber \\
 & \qquad -i \, {\sin(\theta)\cos(\theta)} \left[\sigma_{\Lambda},  U  Z U^\dag  \right] \Big\|_1
\end{align}
For states of the form   
$\sigma_{\Lambda}=(1-p)\frac Id + p \,  \phi$,  with $\phi: = \proj{\phi}$,    the above expression reduces to   \begin{align}
\CM(\mathcal{N})
&=\frac 12 \max_{U:  \, U^\dag U  =  I }\Big \|a \,  UZ   U^\dag -i  b   \left[  \phi,   U  Z U^\dag    \right] \Big\|_1 
\end{align} 
with $a  :  =  \sin^2 \theta$ and $b:  =    p  \sin \theta \, \cos \theta$. 
  Using the unitary invariance of the norm, we then obtain 
\begin{align}
 \nonumber\CM  (\N) &=\frac 12 \max_{U:  \, U^\dag U  =  I } \Big\|a   \, Z   -i b  \, \left[U^\dag\phi  U ,Z\right] \Big\|_1 \\
  &  = \frac 12  \max_{\psi}   \, \Big\|  a  \,  Z    -ib  \, \left[\psi   ,Z\right] \Big\|_1
 \, , 
 \end{align}
where the maximization is over all possible  pure states $\psi  := \proj{\psi}$. 

Now, for a given pure state $\psi$, define the operator $A_\psi:  =  a \, Z  -  i  b\,  [\psi, Z]$, we then have 
\begin{align}
\nonumber \CM(\N)  &   = \frac 12 \max_\psi  \,  \|  A_\psi\|_1  \\
&  = \frac 12  \max_\psi \,  \Tr \left[\sqrt{A_\psi^\dag A_\psi } \,\right] \,.  
\label{sqrtapsiapsi}
\end{align}
Note that one has  
\begin{align}\label{apsiapsi}
A_\psi^\dag A_\psi  =  a^2  \,  P   +    b^2 [Z,\psi]\,  [\psi, Z]  +  i  ab  \,  [P  ,  \psi]\, ,
\end{align}
where $P$ is the projector defined by 
\begin{align}P : =  |0\rangle \langle 0| +  |1\rangle \langle 1| \,.
\end{align}
Using the relation $  Z  =  ZP  =  P Z$,  we then obtain
\begin{align}
\nonumber [\psi, Z] & = |\psi \rangle \langle \psi|  Z  -  Z  |\psi \rangle \langle \psi|  \\
\nonumber  & = |\psi \rangle \langle \psi| P Z  -  Z P |\psi \rangle \langle \psi|  \\
    & = \sqrt F   ~  \Big( |\psi\rangle\langle \xi|Z  -  Z|\xi \rangle \langle \psi|  \Big)\, ,
\end{align}
having defined  $|\xi\rangle :  = P|\psi\rangle/\|  P|\psi\rangle\|$ and $F =  \|  P  |\psi\rangle   \|^2  =  \langle \psi| P |\psi \rangle$.  

To continue, we expand $|\psi\rangle$ as 
\begin{align}
|\psi\rangle =  \sqrt F  \,  |\xi \rangle  +  \sqrt{1-F} \,  |\gamma\rangle  \, ,    \label{psidecomp}
\end{align}
where $|\gamma\rangle$ is a unit vector  satisfying $P| \gamma \rangle =  0$. Furthermore, we expand $|\xi\rangle$ as 
\begin{align}\label{xidecomp}
|\xi\rangle = \alpha \,  Z  |\xi \rangle   +  \sqrt{1- \alpha^2}  \,  Z|\xi_\perp \rangle \, , 
\end{align}
with $\alpha  :=  \langle \xi  |  Z|\xi\rangle$,  and $|\xi_\perp\rangle$  is a suitable unit vector, orthogonal to $|\xi \rangle$ and satisfying $P  |\xi_\perp \rangle =  |\xi_\perp \rangle$.    
Combining the above decompositions, we have 
\begin{align}
|\psi \rangle   =  \alpha   \sqrt F  \,   Z |\xi\rangle   +  \sqrt{1-  \alpha^2  F}  \,  |\eta \rangle        
\end{align}
with 
\begin{align}\label{eta}
|\eta\rangle :  = \frac{  \sqrt{(1-\alpha^2)  F}  \,  Z|\xi_\perp \rangle  +   \sqrt{1- F}  \,  |\gamma \rangle    }{\sqrt{1-  \alpha^2  F}} \, .
\end{align}
 With this notation,  we  have  
\begin{align}
  [\psi, Z]    =   \sqrt F  \,   \sqrt{  1- \alpha^2 F}   \, \Big(      |\eta \rangle \langle \xi|  Z  -  Z|\xi\rangle \langle \eta| \Big)
\end{align}
and 
\begin{align}\label{above}
  [Z,\psi]   [\psi, Z]     =   F  \, \left( 1- \alpha^2 F\right)    \,  Q \, ,
\end{align}
with 
\begin{align}
Q:  = Z|\xi\rangle \langle \xi|  Z  + |\eta\rangle \langle \eta| \,.
\end{align}
Moreover, we have 
\begin{align}
\nonumber [  P, \psi ]  &  =    \sqrt F   \,  (  |\xi  \rangle \langle \psi|  -  |\psi  \rangle \langle \xi|  )\\
&   =     \sqrt {F(1-F)}   \,  (  |\xi  \rangle \langle \gamma|  -  |\gamma  \rangle \langle \xi|  )   \,. \label{above1}
\end{align}

 Inserting  Eqs. (\ref{above}) and (\ref{above1})  into Eq. (\ref{apsiapsi}), we then obtain
\begin{align}
   \nonumber A_\psi^\dag A_\psi &  =   a^2  \,  P    +  b^2   F  \, \left( 1-  \alpha^2 F   \right)    \,  Q \\
   \nonumber & \qquad +  i ab  \sqrt {F(1-F)}   \,  (  |\xi  \rangle \langle \gamma|  -  |\gamma  \rangle \langle \xi|  )    \\
   \nonumber &  =  \Big[a^2  +   b^2  F (1-  \alpha^2  F)\Big]  \,  Z  |\xi  \rangle \langle \xi |Z   +   \\  
  \nonumber  & \qquad +  a^2  Z  |\xi_\perp  \rangle \langle \xi_\perp |Z   +  b^2 F  (1-\alpha^2 F)  \, |\eta\rangle \langle \eta| \\
   &  \qquad +   i ab  \sqrt {F(1-F)}   \,  (  |\xi  \rangle \langle \gamma|  -  |\gamma  \rangle \langle \xi|  ) \, , 
\end{align}
where the second equality follows from the definition of $Q$ and from  the expansion   $P  =  Z |\xi  \rangle \langle \xi |Z  +  Z  |\xi_\perp  \rangle \langle \xi_\perp |Z .$  

Hence, we have 
\begin{align}
   \sqrt{A_\psi^\dag A_\psi} &  =  \sqrt{a^2  +   b^2 F  (1-\alpha^2  F  )}  \,  Z  |\xi  \rangle \langle \xi |Z   +   \sqrt X  
    \, ,  \label{commcomm}
   \end{align}
   with 
   \begin{align}
    \nonumber  X  &:  =  a^2  Z  |\xi_\perp  \rangle \langle \xi_\perp |Z   +  b^2 F  (1-\alpha^2  F)  \, |\eta\rangle \langle \eta|\\\
  &  \qquad + i ab  \sqrt {F(1-F)}   \,  (  |\xi  \rangle \langle \gamma|  -  |\gamma  \rangle \langle \xi|  )\, .   \label{operatorX}  
   \end{align}
The operator $X$ can be written as a matrix with respect to the orthonormal  basis $\{  Z |\xi_\perp\rangle,  |\gamma\rangle\}$.    In this basis, one has 
\begin{align}
 \nonumber 
   Z  |\xi_\perp  \rangle \langle \xi_\perp |Z         & = \begin{pmatrix}  1  &  0  \\
 0  &    0  \end{pmatrix}\\
\nonumber      |\eta\rangle \langle \eta| & = \frac{\begin{pmatrix} 
(1-\alpha^2) F  &  \sqrt{  (1-\alpha^2)  F (1-F)} 
 \\
 \sqrt{  (1-\alpha^2)  F (1-F)} 
  &    1-  F  \end{pmatrix} }{(1-\alpha^2  F) }\\
   |\xi  \rangle \langle \gamma|  -  |\gamma  \rangle \langle \xi|     
   &  = 
 \begin{pmatrix}  0  &  \alpha  \\
 -\alpha  &    0  \end{pmatrix}\,,
\end{align}
and therefore 
\begin{widetext}
\begin{align}
X   =     \begin{pmatrix}  a^2   +   b^2  F  (1-\alpha^2)  F  &    b^2  F  \sqrt{(1-\alpha^2)  F(1-F)}   +  i ab \sqrt{F (1-F)}  \, \alpha  \\
b^2  F  \sqrt{(1-\alpha^2)  F(1-F)}  -    i ab \sqrt{F (1-F)}  \, \alpha   &    b^2  F  (1-F)     \end{pmatrix}
\end{align}
\end{widetext}
Note that $X$ satisfies the relations 
\begin{align}
\nonumber  \Tr[X]  & =  a^2   +  b^2  F   (1-\alpha^2 F)\\
\nonumber  \det (X)   &  =  a^2  b^2  F (1-  F)    (1-\alpha^2) \, ,
\end{align}
where the last equality follows from Eq. (\ref{eta}).
Denoting by $x_+$ and $x_-$ the eigenvalues of $X$,  we then have
\begin{align}
    \nonumber  \left(  \Tr[\sqrt X] \right)^2     & =  (\sqrt{x_+}  +  \sqrt{x_-})^2 \\
  \nonumber  &    =   \Tr  [  X]   +  2  \sqrt{\det (X)}  \\
    &  =  a^2  +     b^2  F  (1- \alpha^2 F)     + 2  |ab | \sqrt{   F   (1-F)  (1-\alpha^2)} \, ,   
\end{align}
which, combined with Eq. (\ref{commcomm}),  implies
   \begin{align}
 \nonumber  & \Tr [\sqrt{  A_\psi^\dag A_\psi}] \\
 \nonumber &  =  \sqrt{a^2  +    b^2   F  (1-\alpha^2  F)}    +  \Tr[\sqrt X]  \\
 \nonumber &=  \sqrt{a^2  +  F  b^2} \\
\nonumber &\qquad + \sqrt{ a^2  +     b^2 F (1-\alpha^2F)    + 2  ab  \sqrt{   F    (1-F) (1-\alpha^2)} } \\
&\le   \sqrt{a^2  +  F b^2 }  +  \sqrt{  a^2  +  F  b^2  +  2ab \sqrt{F(1-F)}} \,,
\end{align}
where the equality holds if and only if $\alpha  =  0$. 
Maximizing the right-hand-side over $F$ and using Eq. (\ref{sqrtapsiapsi}), we then obtain  the bound
\begin{align}\label{f25}
\CM (\N)     \le \max_{F\in  [0,1]}  \frac{\sqrt{a^2  +  F b^2 }  +  \sqrt{  a^2  +  F  b^2  +  2ab \sqrt{F(1-F)}}}2 \, .
\end{align}
The bound can be attained  with the following construction: 
\begin{enumerate}
\item Pick a unit vector $|\gamma\rangle$ such that $\langle \gamma|\phi\rangle =  \sqrt{1-F_*}$, where $F_*$ is the value of $F$ that maximizes the right-hand-side of Eq. (\ref{f25}).  
\item Define the unit vector $|\xi  \rangle  :  =  \frac{|\phi\rangle -  \sqrt{1- F_*}    |\gamma \rangle }{\sqrt F_*} $  
\item Pick a unit vector $|\xi_\perp\rangle$ orthogonal to both $|\xi\rangle$ and $|\phi\rangle$.  
\item Define $|\psi\rangle :  = \frac{  |\xi\rangle +  |\xi_\perp\rangle }{\sqrt 2}$ and $|\psi_\perp\rangle :  = \frac{  |\xi\rangle -  |\xi_\perp\rangle}{\sqrt 2}$.
\end{enumerate}
Setting $\widetilde Z  :  = |\psi\rangle \langle \psi|  -  |\psi_\perp\rangle \langle \psi_\perp| $,   we then have  
\begin{align}
\nonumber    &\frac12  \| \N (|\psi\rangle \langle \psi|  -  |\psi_\perp\rangle \langle \psi_\perp|) \|_1  \\
\nonumber &  =  \frac 12  \|   \N (\widetilde Z)  \|_1  \\
\nonumber 
&  =  \frac12    \Big\|    \sin^2 \theta    \,  \widetilde Z -  i \sin \theta \cos \theta  \left[  p \, \phi  +  (1-p) \frac I d,  \widetilde Z  \right] \Big\|_1 \\  
\nonumber 
&  = \frac12    \Big\|    a \, \widetilde Z   - i   b  \left[     \phi ,  \widetilde Z  \right]\Big\|_1  \\
\nonumber  &     =  \frac 12  \| A \|_1 \qquad A:  =  a \,   \widetilde Z   -  i  b\,  [  \phi  ,  \widetilde Z]  \\
\nonumber &  = \frac 12  \, \Tr  [  \sqrt{A^\dag A}]\\
&  =  \frac 12\,  \Tr [  \sqrt{   a^2  \,  \widetilde P  +  b^2 F_*  \, \widetilde Q}] \, , 
 \end{align}
with 
\begin{align}
\widetilde P :  &= |\psi\rangle \langle \psi|  +  |\psi_\perp\rangle \langle \psi_\perp| =   |\xi\rangle \langle \xi|  +  |\xi_\perp\rangle \langle \xi_\perp| \\
\widetilde Q:  &= |\phi\rangle \langle \phi|  +  |\xi_\perp\rangle\langle \xi_\perp| \, .
\end{align}
We then have  
\begin{align}
\nonumber    \sqrt{   a^2  \,  \widetilde P  +  b^2 F_*  \, \widetilde Q}   &   =    \sqrt{  a^2  +  b^2  F_*} \,   |\xi_\perp\rangle\langle \xi_\perp|  \\
\nonumber & \qquad  +  \sqrt{ a^2  |\xi\rangle \langle \xi|   +  b^2 F_*  \, |\phi\rangle \langle \phi|  }\\
&  = \sqrt{  a^2  +  b^2  F_*} \,   |\xi_\perp\rangle\langle \xi_\perp|     +  \sqrt{\widetilde X} \, ,
\end{align}
with 
\begin{align}
\widetilde X :  = a^2  |\xi\rangle \langle \xi|   +  b^2 F_*  \, |\phi\rangle \langle \phi| 
\end{align}
The evaluation of the trace of the operator $\widetilde X$ is identical to the evaluation of the trace of the operator $X$ in Eq. (\ref{operatorX}), with $\alpha =  0$.  Hence, we obtain  
\begin{align}
\nonumber &  \frac12  \| \N (|\psi\rangle \langle \psi|  -  |\psi_\perp\rangle \langle \psi_\perp|) \|_1  \\
\nonumber &  =  \frac{\Tr [\sqrt{   a^2  \,  \widetilde P  +  b^2 F_*  \, \widetilde Q} ]}2  \\
\nonumber &  = \frac{ \sqrt{  a^2  +  b^2  F_*}  + \Tr [\sqrt{\widetilde X}] }2 \\
&  =  \frac{\sqrt{  a^2  +  b^2  F_*}  +  \sqrt{a^2  +  F_* b^2 }  +  \sqrt{  a^2  +  F_*  b^2  +  2ab \sqrt{F_*(1-F_*)}}}2 \,,
\end{align}
thereby proving the achievability of  the bound  (\ref{f25}).
\qed  

\section{Proof of  Eq. (\ref{minmaclean})} \label{app:minmaclean}

Using Eq.  (\ref{commutator}), the minimum causal effect can be written as 
\begin{align}
\nonumber \Cm(\N) &  =   \min_{\rho,  \rho_\perp}  \frac{  \|  \N  (\rho  -\rho_\perp )\|_1}2  \\
\nonumber &  =  \min_{\Delta}   \frac{\|    \sin ^2(\theta) \,  \Delta    {-}  i{\sin(\theta)\cos(\theta)} \, \left[\sigma_{\Lambda},\Delta \right]\|_1}2 \, ,
\end{align}
with $\Delta  :=  \rho - \rho_\perp$.  On the other hand, the trace-norm of any Hermitian operator $H$ is lower bounded as  
\begin{align}
\|  H\|_1  \ge  \Tr  [ H (P_+  -  P_-) ] \, ,
\end{align}
where $P_+$ and $P_-$ are two orthogonal projectors.  In particular, choosing $P_+$ and $P_-$ to be the projectors on the positive and negative parts of $\Delta$, respectively, yields the bound  
\begin{align}
\nonumber \Cm (\N)  &   \ge  \frac 12  \min_{\Delta}        \, \sin^2\theta \,  \Tr[\Delta  (P_+  -  P_-)]\\
\nonumber &   \qquad  - i{\sin(\theta)\cos(\theta)} \, \Tr\left\{\left[\sigma_{\Lambda},\Delta \right]   \,  (P_+  - P_-)\right\}    \\
\nonumber 
&   =  \sin^2\theta \, ,
\end{align}
where we used the relations  
\begin{align}
\Tr [  \Delta (P_+  -  P_-)]  &= \Tr  [| \Delta|]    =  \|  \Delta\|_1  =  2  
\end{align}
and 
\begin{align}
\nonumber \Tr\left\{\left[\sigma_{\Lambda},\Delta \right]   \,  (P_+  - P_-)\right\}    &   =  \Tr  [ \sigma_\Lambda   |\Delta|   -   |\Delta|  \sigma_\Lambda   ]  =  0 \, .
\end{align}

\section{Proof of Proposition~\ref{prop:superposition}} \label{app:superposition}

Applying the definition of  maximum causal effect to the  quantum channel $\N_\sigma$ in  Eq. (\ref{Nsigma}), we obtain
\begin{align}
\nonumber &\CM (\N_\sigma)  \\
\nonumber &=  \max_{|\psi\rangle  ,  |\psi_\perp\rangle : \, \langle \psi |\psi_\perp \rangle   = 0}    \frac{  \|   \N_\sigma  (|\psi\rangle \langle \psi| -|\psi_\perp\rangle \langle \psi_\perp|  )\,  \|_1  }2\\
 \nonumber &    =  \max_{|\psi\rangle  ,  |\psi_\perp\rangle : \, \langle \psi |\psi_\perp \rangle   = 0}   \left\| \frac{    {\cal C} (|\psi\rangle \langle \psi| -|\psi_\perp\rangle \langle \psi_\perp|  )    \otimes \sigma_{\rm diag}  }2  \right.  \\
 \nonumber &   \qquad   \qquad \qquad \qquad + \left.  \frac{    F (|\psi\rangle \langle \psi| -|\psi_\perp\rangle \langle \psi_\perp|  )F^\dag \otimes \sigma_{\rm off-diag}   }2 \right\|_1 \\
  \nonumber &\le  \max_{|\psi\rangle  ,  |\psi_\perp\rangle : \, \langle \psi |\psi_\perp \rangle   = 0}   \left\| \frac{    {\cal C} (|\psi\rangle \langle \psi| -|\psi_\perp\rangle \langle \psi_\perp|  )  \otimes \sigma_{\rm diag}  }2  \right\|_1 \\
 \nonumber &    \qquad + \max_{|\psi\rangle  ,  |\psi_\perp\rangle : \, \langle \psi |\psi_\perp \rangle   = 0}   \left\|  \frac{    F (|\psi\rangle \langle \psi| -|\psi_\perp\rangle \langle \psi_\perp|  )F^\dag    }2 \otimes \sigma_{\rm off-diag} \right\|_1 \\
 \nonumber &=    \max_{|\psi\rangle  ,  |\psi_\perp\rangle : \, \langle \psi |\psi_\perp \rangle   = 0}   \left\| \frac{    {\cal C} (|\psi\rangle \langle \psi| -|\psi_\perp\rangle \langle \psi_\perp|  )   }2  \right\|_1 \|  \sigma_{\rm diag}   \|_1\\
 \nonumber &    \qquad + \max_{|\psi\rangle  ,  |\psi_\perp\rangle : \, \langle \psi |\psi_\perp \rangle   = 0}   \left\|  \frac{    F (|\psi\rangle \langle \psi| -|\psi_\perp\rangle \langle \psi_\perp|  )F^\dag    }2  \right\|_1 \|  \sigma_{\rm off-diag} \|_1\\
 \nonumber &\le  \CM (\cal C)    \\
 \nonumber &    \qquad + \max_{|\psi\rangle  ,  |\psi_\perp\rangle : \, \langle \psi |\psi_\perp \rangle   = 0}   \frac{   \|  F (|\psi\rangle \langle \psi| )F^\dag\|_1  + \|   F(\psi_\perp\rangle \langle \psi_\perp|)   F^\dag \|_1   }2 \\
 \nonumber &  \qquad  \qquad \qquad \qquad \qquad  \qquad \qquad \qquad \qquad \times \|  \sigma_{\rm off-diag}\|_1 \\
 \nonumber &=    \CM({\cal C}) \\
 \nonumber &    \qquad + \max_{|\psi\rangle  ,  |\psi_\perp\rangle : \, \langle \psi |\psi_\perp \rangle   = 0}   \frac{      \langle \psi| F^\dag  F |\psi\rangle     +  \langle \psi_\perp| F^\dag    F(\psi_\perp\rangle   }2   \|  \sigma_{\rm off-diag}\|_1 \\
 &  = \CM({\cal C})   +  \frac{  \|  F^\dag F\|_{\rm Ky Fan 2} \, \|\sigma_{\rm off-diag} \|_1}2 \, ,   \label{wsx}   \end{align}
where the last equality follows from the definition of the Ky Fan $2$-norm.   

Now, suppose that the original channel $\cal C$ has zero $\CM$. In this case, the bound (\ref{wsx})  is attained by choosing $|\psi\rangle$ and $|\psi_\perp \rangle$ to be eigenvectors of $F^\dag F$ corresponding to the two largest eigenvalues.  
 \qed 
\medskip 

To conclude, we prove the bound $\|  \sigma_{\rm off-diag}\|_1 \le 2   (1-1/k)$, valid for every quantum state $\sigma$ in a $k$-dimensional Hilbert space.  To this purpose, we use the relation 
\begin{align}
\sigma_{\rm diag}   =  \frac 1k  \,    \sum_{j=0}^{k-1}   U^j \sigma  U^{-j}  \, ,
\end{align}
where $U$ is the unitary  operator defined by  $U: =   \sum_l    \exp [ 2\pi  i   jl/k]  \,  |l\rangle \langle l|$.

Hence, we have 
\begin{align}
\nonumber \|  \sigma_{\rm off-diag}\|_1  & = \|  \sigma  -  \sigma_{\rm diag}\|_1 \\
\nonumber &  =  \left\|    \frac{  k-1}k  \sigma  -  \frac 1k \sum_{j=1}^{k-1}  U^j  \sigma U^{-j}  \right\|_1\\
\nonumber &  \le \left\|  \frac{  k-1}k  \sigma \right\|_1  +  \frac 1k \sum_{j=1}^{k-1}    \left\|   U^j  \sigma U^{-j} \right\|_1\\
&   =  2   \left( 1  -\frac 1k   \right) \, .
\end{align}
The equality is attained if and only if the states $U^j\sigma U^{-j}$ have orthogonal supports, which happens if and only if $\sigma$ is pure and maximally coherent.  

\section{Proof of Proposition \ref{prop:superposition1}}\label{app:superposition1}  

Applying the definition of minimum  causal effect to the  quantum channel $\N_\sigma$ in  Eq. (\ref{Nsigma}), we obtain
\begin{align}
\nonumber &\Cm (\N_\sigma)  \\
\nonumber &=  \min_{\rho  , \,  \rho_\perp :\, \Tr[\rho \rho_\perp]  =  0}    \frac{  \|   \N_\sigma  (\rho-\rho_\perp  )\,  \|_1  }2\\
 \nonumber &    =  \min_{\rho  , \,  \rho_\perp :\, \Tr[\rho \rho_\perp]  =  0}    \left\| \frac{    {\cal C} (\rho-\rho_\perp)    \otimes \sigma_{\rm diag}  }2  \right.  \\
 \nonumber &   \qquad   \qquad \qquad \qquad + \left.  \frac{    F (\rho-\rho_\perp )F^\dag \otimes \sigma_{\rm off-diag}   }2 \right\|_1 \\
  \nonumber &=  \min_{\rho  , \,  \rho_\perp :\, \Tr[\rho \rho_\perp]  =  0}       \left\|  \frac{    F (\rho-\rho_\perp )F^\dag    }2 \otimes \sigma_{\rm off-diag} \right\|_1 \\
 \nonumber &=   
 \min_{\rho  , \,  \rho_\perp :\, \Tr[\rho \rho_\perp]  =  0}  
 \frac{\|  F (\rho-\rho_\perp)F^\dag \|_1}  2  \,
\|  \sigma_{\rm off-diag} \|_1\\
 \nonumber &=   
 \min_{\rho  , \,  \rho_\perp :\, \Tr[\rho \rho_\perp]  =  0}  
 \frac{\Tr[\sqrt{F (\rho-\rho_\perp)F^\dag F  
 (\rho-\rho_\perp)F^\dag}]}  2  \,
\|  \sigma_{\rm off-diag} \|_1\\
 \nonumber &\ge  \sqrt{\min{\rm eigv}  (F^\dag F)}\, \|  \sigma_{\rm off-diag} \|_1 \, \\
 \nonumber & \qquad  \qquad  \times \min_{\rho  , \,  \rho_\perp :\, \Tr[\rho \rho_\perp]  =  0}    \frac{    \Tr[\sqrt{F (\rho-\rho_\perp)^2 F^\dag}  ]  }2  \\
    \nonumber &=  \sqrt{\min{\rm eigv}  (F^\dag F)}\, \|  \sigma_{\rm off-diag} \|_1  \,  \\
    \nonumber  & \qquad \qquad \times \min_{\rho  , \,  \rho_\perp :\, \Tr[\rho \rho_\perp]  =  0}    \frac{    \Tr[ \sqrt{(\rho-\rho_\perp) F^\dag  F  (\rho-\rho_\perp)}  ]}2  \\
    \nonumber  &\ge   {\min{\rm eigv}  (F^\dag F)}\,  \|  \sigma_{\rm off-diag} \|_1  \,  \min_{\rho  , \,  \rho_\perp :\, \Tr[\rho \rho_\perp]  =  0}    \frac{    \Tr[ \sqrt{(\rho-\rho_\perp)^2}  ]}2\\
     &=   {\min{\rm eigv}  (F^\dag F)}\,  \|  \sigma_{\rm off-diag} \|_1    \,,
   \end{align}
   where we used the relation $F^\dag F  \ge  {\min{\rm eigv}  (F^\dag F)} \,  I$ and the operator monotonicity of the square root. 
\qed

\section{Measurement realization }\label{App:strongly_entangling}

  A basic fact used by our algorithm is that  any  two-outcome measurement on $k$ qubits can be realized by letting the $k$ qubits interact with an auxiliary qubit $F$, and by measuring the auxiliary qubit in the computational basis.  This fact is a simple consequence of Naimark's theorem. Consider a generic quantum measurement,  described by two positive operators $P_0$ and $P_1$ satisfying the normalization condition $P_0  +  P_1  =  I^{\otimes k}$.    The probability distribution of obtaining outcome $j\in  \{0,1\}$ when  this measurement is performed on an arbitrary state $\rho$ is given by  
\begin{align}
\nonumber p(j| \rho)  & =    \Tr  [ P_j \, \rho ]\\
&  =  \Tr  [  V   \rho  V^\dag  (I^{\otimes k} \otimes |j\rangle \langle j|)] \qquad \forall j\in  \{0,1\} \, , 
\end{align}
 where $V:   {\mathbb C}^{\otimes k}  \to  {\mathbb  C}^{\otimes (k+1)}$ is an isometry from the Hilbert space of $k$ qubits to the Hilbert space of $k+1$ qubits, given by 
 \begin{align}
 V:=   \sqrt{P_0} \otimes |0\rangle +  \sqrt{  P_1} \otimes |1\rangle \, .       
 \end{align}
In turn,  the isometry $V$ can be extended to a unitary gate $U$ acting on $k+1$ qubit, and satisfying the relation 
\begin{align}
V  =  U(  I^{\otimes k}  \otimes |0\rangle)\, .
\end{align}
Physically, this construction implies that every binary measurement on $k$ qubit can be implemented by preparing an auxiliary qubit in the state $|0\rangle$,  applying a unitary gate on $k+1$ qubits, and measuring the auxiliary qubit in the computational basis.

\end{document}